\DeclareUrlCommand\emailUrl{\urlstyle{rm}}
\newcommand{\email}[1]{\href{mailto:#1}{\protect\emailUrl{#1}}}
\newcommand{\emailVierling}{\email{jannik.vierling@tuwien.ac.at}}
\newcommand{\emailHetzl}{\email{stefan.hetzl@tuwien.ac.at}}
\newtheorem{theorem}{Theorem}
\newtheorem*{theorem*}{Theorem}
\newtheorem{proposition}[theorem]{Proposition}
\newtheorem{definition}[theorem]{Definition}
\newtheorem{lemma}[theorem]{Lemma}
\newtheorem{corollary}[theorem]{Corollary}
\newtheorem{conjecture}[theorem]{Conjecture}
\newtheorem*{conjecture*}{Conjecture}
\newtheorem{remark}[theorem]{Remark}
\newtheorem{example}[theorem]{Example}
\newcommand{\Open}{\mathrm{Open}}
\newcommand{\IND}[1]{{#1}\text{-}\mathrm{IND}}
\newcommand{\INDParameterFree}[1]{{#1}\text{-}\mathrm{IND}^{-}}
\newcommand{\RIND}[1]{{#1}\text{-}\mathrm{IND}^{R}}
\newcommand{\RINDParameterFree}[1]{{#1}\text{-}\mathrm{IND}^{R-}}
\newcommand{\RINDParameterFreeEta}[1]{{#1}\text{-}\mathrm{IND}_{\eta}^{R-}}
\newcommand{\LanguageOf}[1]{L(#1)}
\newcommand{\cnf}{\mathit{cls}}
\newcommand{\LA}{\mathrm{LA}}
\newcommand{\LLA}{L_{\LA}}
\newcommand{\BLA}{B}
\newcommand{\BLAPrime}{\BLA^{\prime}}
\newcommand{\theoryV}{\mathcal{V}}
\newcommand{\Var}{\mathit{Var}}
\newcommand{\dotminus}{\mathop{\dot{-}}}
\newcommand{\axiomA}[1]{\mathrm{A{#1}}}
\newcommand{\axiomB}[1]{\mathrm{B{#1}}}
\newcommand{\axiomV}[1]{\mathrm{V_{#1}}}
\newcommand{\formulaE}[3]{\mathrm{E}_{{#1},{#2},{#3}}}
\newcommand{\numeral}[1]{\overline{#1}}
\newcommand{\systemCSC}{\mathrm{CSC}}
\newcommand{\systemNCC}{\mathrm{NCC}}
\newcommand{\UU}{{\uparrow}{\uparrow}}
\newcommand{\UD}{{\uparrow}{\downarrow}}
\newcommand{\DD}{{\downarrow}{\downarrow}}
\newcommand{\complex}{\mathrm{complex}}
\newcommand{\QuantifierParenthesis}[3]{({#1}{#2}){#3}}
\newcommand{\Quantifier}[3]{\QuantifierParenthesis{#1}{#2}{#3}}
\newcommand{\Forall}[2]{\Quantifier{\forall}{#1}{#2}}
\newcommand{\Exists}[2]{\Quantifier{\exists}{#1}{#2}}
\newcommand{\ExistsExOne}[2]{\Quantifier{\exists!}{#1}{#2}}
\newcommand{\NaturalNumbers}{\mathbb{N}}
\newcommand{\DocumentTitle}{
  Unprovability results for clause set cycles
}
\author[1,2]{Stefan Hetzl}
\author[1,3]{Jannik Vierling}
\affil[1]{Vienna University of Technology\protect\\Institute of Discrete Mathematics and Geometry}
\affil[2]{\emailHetzl}
\affil[3]{\emailVierling}
\date{}
\title{\DocumentTitle}
\begin{document}

\maketitle

\begin{abstract}
  The notion of clause set cycle abstracts a family of methods for automated inductive theorem proving based on the detection of cyclic dependencies between clause sets.
  By discerning the underlying logical features of clause set cycles, we are able to characterize clause set cycles by a logical theory.
  We make use of this characterization to provide practically relevant unprovability results for clause set cycles that exploit different logical features.
\end{abstract}


\section{Introduction}



The subject of \ac{AITP} is a subfield of automated theorem proving, that aims at automating the process of finding proofs that involve mathematical induction.
The most prominent application of automated inductive theorem proving is the formal verification of hardware and software.
Another field of application of automated inductive theorem proving is the formalization of mathematical statements, where \ac{AITP} systems assist humans in formalizing statements by discharging lemmas automatically, suggest inductions \cite{nagashima2019}, or explore the theory \cite{johansson2014,valbuena2015}.


Finding a proof by mathematical induction essentially amounts to finding suitable induction formulas \cite{wong2018}.
This is a challenging task, because induction formulas have in general a higher syntactic complexity than the formula one wants to prove.
This phenomenon is commonly known as the \emph{non-analyticity} of induction formulas and can for example manifest itself in the number of free variables as well as the number of quantifier alternations of the induction formula.
Indeed, in the language of primitive recursive arithmetic there is a sequence of quantifier-free formulas whose proofs require induction formulas of unbounded quantifier complexity.
We refer to \cite{wong2018} for a precise exposition of the non-analyticity phenomenon.


A large variety of methods for automating mathematical induction has been developed.
Methods usually differ in the type of induction formulas they generate, the calculus they are integrated in, and other more technical parameters such as the degree of automation, the input encoding, semantics of datatypes, and so on.
For example there are methods based on term rewriting \cite{reddy1990}, theory exploration \cite{claessen2013a}, and integration into saturation-based provers \cite{kersani2013,kersani2014}, \cite{cruanes2015,cruanes2017}, \cite{echenim2020}, \cite{reger2019,hajdu2020}.

The current methodology in automated inductive theorem proving focuses on empirical evaluations of its methods.
A given method is usually evaluated on a set of benchmark problems such as \cite{claessen2015}.
Such an evaluation provides evidence about the strengths and weaknesses of a method but does not result in a systematic understanding of the underlying principles.
In particular, it is difficult to compare the methods with each other in terms of their logical strength and to provide explanations of the failures of a given method.


The work in this article is part of a research program that addresses this problem by formally analyzing \ac{AITP} systems in order to discern their underlying logical principles.
The analysis of an \ac{AITP} system typically begins by developing a suitable abstraction.
After that, the abstraction is simulated by a logical theory, whose properties can be investigated by applying powerful results and techniques from mathematical logic.
Analyzing families of methods in the uniform formalism of logic allows us not only to understand the strength of individual methods but also to compare the methods with each other.
Furthermore, approximating an \ac{AITP} system by a logical theory is a prerequisite to providing concrete and practically meaningful unprovability results.
These results are especially valuable because they allow us to determine the logical features that a given method lacks.
Thus, negative results drive the development of new and more powerful methods.


In \cite{hetzl2020} the authors of this article have introduced the notion of clause set cycle as an abstraction of the n-clause calculus \cite{kersani2013,kersani2014}---an extension of the superposition calculus by a cycle detection mechanism.
In particular, we have shown an upper bound on the strength of clause set cycles in terms of induction for \(\exists_{1}\) formulas and moreover that this bound is optimal with respect to the quantifier complexity of induction formulas.

In this article we continue this analysis of clause set cycles.
By discerning the logical features underlying the formalism of clause set cycles more precisely, we are able to provide an exact characterization of refutation by a clause set cycle in terms of a logical theory.
After that, we make use of the characterization of clause set cycles to provide practically meaningful clause sets that are not refutable by clause set cycles, but that are refutable by induction on quantifier-free formulas.
Hence, the results in this article settle in particular Conjecture~4.7 of \cite{hetzl2020}.
We provide unrefutability results that exploit different logical features of clause set cycles.
This allows to recognize features that are particularly restrictive.

In Section~\ref{sec:preliminaries} we will first introduce general notions and results about the logical setting that we use in this article.
In Section~\ref{sec:clause-set-cycles} we carry out the analysis of clause set cycles which culminates in Section~\ref{sec:unpr-clause-set} with two unprovability results for clause set cycles.
The two unprovability results exploit different logical features of clause set cycles.
The proof of the first unprovability result is straightforward, whereas the second unprovability result relies on a more involved independence result in the setting of linear arithmetic whose proof is carried out in Section~\ref{sec:idemp-line-arithm}.
\section{Preliminary definitions}
\label{sec:preliminaries}
In this section we introduce some definitions that we will use throughout the article.
In Section \ref{sec:formulas-theories} we will briefly describe the logical formalism.
In Section \ref{sec:linear-arithmetic} we describe the setting of formal linear arithmetic, in which we will formulate in Section~\ref{sec:unpr-clause-set} a family of clause sets that are refutable by open induction but that are not refutable by clause set cycles.
\subsection{Formulas, theories, and clauses}
\label{sec:formulas-theories}
We work in a setting of classical logic with equality, that is, the logic provides besides the usual logical symbols a binary infix predicate symbol \({=}\) representing equality.
A first-order language \(L\) is a set of predicate symbols and function symbols together with their respective arities.
Let \(S\) be a predicate or function symbol, then we write \(S/n\) to indicate that \(S\) has arity \(n\).
Terms, atoms, and formulas are constructed as usual from function symbols, variable symbols, the logical connectives \(\neg\), \(\wedge\), \(\vee\), \(\rightarrow\), \(\leftrightarrow\), and the quantifiers \(\exists\) and \(\forall\).
A ground term is a term that does not contain variables.
The set of all \(L\) formulas is denoted by \(\mathcal{F}(L)\).
A sentence is a formula that does not contain free variables.
Let \(x_{1}, \dots, x_{n}\) be variables, \(t_{1}, \dots, t_{n}\) a terms, and \(\varphi\) a formula, then \(\varphi[x_{1}/t_{1}, \dots, x_{n}/t_{n}]\) denotes the simultaneous substitution of \(x_{i}\) by \(t_{i}\) for \(i = 1, \dots, n\) in \(\varphi\).

In this article we are more interested in the axioms of a theory, rather than the deductive closure of these axioms.
Hence, we define a theory as a set axioms and manipulate the deductive closure by means of the first-order provability relation.
\begin{definition}[Theories and provability]
  \label{def:15}
  A theory \(T\) is a set of sentences called the axioms of \(T\).
  Let \(T, U\) be theories, then by \(T + U\) we denote the theory axiomatized by \(T \cup U\).
  Let \(\varphi\) be a formula, then we write \(T \vdash \varphi\) if \(\varphi\) is provable in first-order logic from the axioms \(T\).
  Let \(\Gamma\) be a set of formulas, then we write \(T \vdash \Gamma\) if \(T \vdash \gamma\) for each \(\gamma \in \Gamma\).
  Furthermore, we write \(T \equiv U\), if \(T \vdash U\) and \(U \vdash T\).
\end{definition}
Let \(T\) be a theory and \(\varphi\) a formula, then we write \(T + \varphi\) to denote the theory axiomatized by the axioms of \(T\) and the universal closure of \(\varphi\).
In this article we will be particularly interested in formulas with a restricted number of quantifier alternations.
\begin{definition}
  \label{def:18}
  We say that a formula \(\varphi\) (possibly containing free variables) is \(\forall_{0}\) or \(\exists_{0}\) if \(\varphi\) is quantifier-free.
  Moreover, we say that a formula \(\varphi(\vec{z})\) is \(\forall_{k+1}\) (\(\exists_{k+1}\)) if it is of the form \(\Forall{\vec{x}}{\psi(\vec{x},\vec{z})}\) (\(\Exists{\vec{x}}{\psi(\vec{x},\vec{z})}\)) and \(\psi\) is \(\exists_{k}\) (\(\forall_{k}\)).
  Let \(L\) be a first-order language, then by \(\Open(L)\), \(\exists_{k}(L)\), and \(\forall_{k}(L)\), we denote the quantifier-free formulas, \(\exists_{k}\) formulas, and the \(\forall_{k}\) formulas of the language \(L\).
A theory is said to be \(\exists_{k}\) (\(\forall_{k}\)) if all of its axioms are \(\exists_{k}\) (\(\forall_{k}\)) sentences.
\end{definition}
Clause sets are an alternative representation of \(\forall_{1}\) formulas, that is preferred by automated theorem provers because of its uniformity.
\begin{definition}[Literals, clauses, clause sets]
  \label{def:11}
  Let \(L\) be a first-order language.
  By an \(L\) literal we understand an \(L\) atom or the negation of an \(L\) atom.
  An \(L\) clause is a finite set of \(L\) literals.
  An \(L\) clause set is a set of \(L\) clauses.
  Whenever the language \(L\) is clear from the context, we simply speak of atoms, literals, clauses and clause sets.
\end{definition}
We will now recall some basic model-theoretic concepts.
\begin{definition}
  \label{def:19}
  Let \(L\) be a first-order language, then \(L\) structures and the first-order satisfaction relation \(\models\) are defined as usual.
  Let \(L' \subseteq L\) be a first-order language and \(M\) an \(L\) structure, then by \(M|_{L'}\) we denote the \(L'\) reduct of \(M\).
Let \(M\) be an \(L\) structure, then we write \(b \in M\) to express that \(b\) is an element of the domain of \(M\).
Formulas and clauses are interpreted as usual.
In particular, a clause is interpreted as the universal closure of the disjunction of its literals.
Let \(\Delta\) be a set of \(L\) formulas and \(L\) clauses, then \(M \models \Delta\) if \(M \models \delta\) for each \(\delta \in \Delta\).
\end{definition}
Let us conclude this section by introducing some notation to manipulate clauses and clause sets.
\begin{definition}
  \label{def:20}
  By \(\cnf\) we denote a fixed function that assigns to every \(\forall_{1}\) sentence \(\varphi\), a clause set \(\cnf(\varphi)\) over the language of \(\varphi\) such that \(\varphi\) and \(\cnf(\varphi)\) are logically equivalent.
  Let \(\Gamma\) be a set of \(\forall_{1}\) sentences, then we define \(\cnf(\Gamma) \coloneqq \bigcup_{\gamma \in \Gamma}\cnf(\gamma)\).
  Furthermore, by \(\cnf^{-1}\) we denote a fixed function that assigns to every clause set \(\mathcal{C}\) a \(\forall_{1}\) sentence \(\cnf^{-1}(\mathcal{C})\) over the language of \(\mathcal{C}\) such that \(\mathcal{C}\) and \(\cnf^{-1}(\mathcal{C})\) are logically equivalent.
\end{definition}
\begin{lemma}
  \label{lem:disjunction_of_clause_sets}
  Let \(\mathfrak{C}\) be a finite set of clause sets, then there exists a clause set \(\mathcal{C}'\) such that \(M \models \mathcal{C}'\) if and only if there exists \(\mathcal{C} \in \mathfrak{C}\) such that \(M \models \mathcal{C}\).
\end{lemma}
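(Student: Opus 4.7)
The plan is to reduce to the binary case by induction on $|\mathfrak{C}|$ and then handle the binary case by a ``cross-product'' construction on clauses. For the base cases, if $\mathfrak{C} = \emptyset$ the right-hand side is vacuously false, so take $\mathcal{C}'$ to be any unsatisfiable clause set, e.g.\ the one containing only the empty clause; if $|\mathfrak{C}| = 1$, take $\mathcal{C}'$ to be the unique element. The inductive step is immediate once the binary case is settled: pick $\mathcal{C}_{0} \in \mathfrak{C}$, apply the induction hypothesis to $\mathfrak{C} \setminus \{\mathcal{C}_{0}\}$ to obtain a clause set equivalent to the disjunction over the remaining elements, and combine it with $\mathcal{C}_{0}$ using the binary construction.

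For the binary case, let $\mathfrak{C} = \{\mathcal{C}_{1}, \mathcal{C}_{2}\}$. First I would rename the variables occurring in $\mathcal{C}_{2}$ (which is permissible since each clause is implicitly universally closed over its own variables) so that no clause of $\mathcal{C}_{1}$ shares a variable with any clause of $\mathcal{C}_{2}$. Then I would define
\[
  \mathcal{C}' \coloneqq \{\, C_{1} \cup C_{2} \mid C_{1} \in \mathcal{C}_{1},\ C_{2} \in \mathcal{C}_{2} \,\}.
\]
The correctness of this construction rests on two standard equivalences: the distributive law $(A_{1} \wedge A_{2}) \vee (B_{1} \wedge B_{2}) \equiv \bigwedge_{i,j}(A_{i} \vee B_{j})$, applied with clauses in place of the conjuncts, and the quantifier equivalence $\Forall{\vec{x}}{\varphi} \vee \Forall{\vec{y}}{\psi} \equiv \Forall{\vec{x}\vec{y}}{(\varphi \vee \psi)}$, valid when $\vec{x}$ and $\vec{y}$ are disjoint and $\vec{x}$ does not occur free in $\psi$ (and vice versa), which is precisely what the renaming ensures. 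Reading $\mathcal{C}_{i}$ as $\bigwedge_{C \in \mathcal{C}_{i}} \Forall{\vec{x}_{C}}{\bigvee_{\ell \in C} \ell}$, these two equivalences together show that $\cnf^{-1}(\mathcal{C}_{1}) \vee \cnf^{-1}(\mathcal{C}_{2})$ is logically equivalent to $\cnf^{-1}(\mathcal{C}')$.

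There is no real obstacle here, since all the steps are classical propositional distributivity plus the pulling of universal quantifiers through a disjunction. The one point requiring a small amount of care is the variable renaming: one must observe that clauses in a clause set are interpreted with their own local universal closure, so renaming the variables of clauses in $\mathcal{C}_{2}$ to be disjoint from those of $\mathcal{C}_{1}$ does not change the models of $\mathcal{C}_{2}$, and only after this renaming can one legitimately merge the two universal prefixes when combining $C_{1}$ and $C_{2}$ into the single clause $C_{1} \cup C_{2}$.
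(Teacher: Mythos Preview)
Your proof is correct. The paper takes a different, more abstract route: it writes \(\mathfrak{C} = \{\mathcal{C}_{1},\dots,\mathcal{C}_{n}\}\), forms the sentence \(\psi \coloneqq \bigvee_{i=1}^{n}\cnf^{-1}(\mathcal{C}_{i})\), observes that \(\psi\) is logically equivalent to a \(\forall_{1}\) sentence \(\psi'\) (using exactly the prenex manoeuvre you spell out), and sets \(\mathcal{C}' \coloneqq \cnf(\psi')\). So the paper leans on the \(\cnf\)/\(\cnf^{-1}\) translations already fixed in Definition~\ref{def:20} and treats the passage from \(\psi\) to \(\psi'\) as a one-liner, whereas you unfold that passage explicitly at the clause level via the cross-product \(\{C_{1}\cup C_{2}\mid C_{1}\in\mathcal{C}_{1},\,C_{2}\in\mathcal{C}_{2}\}\) together with variable renaming. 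Your version is more self-contained and makes the construction concrete; the paper's version is shorter because the machinery has already been set up. The logical core---distributing \(\vee\) over the outer \(\wedge\) and pulling disjoint universal quantifiers across \(\vee\)---is the same in both.
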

\begin{proof}
  Let \(\mathfrak{C} = \{\mathcal{C}_{1}, \dots, \mathcal{C}_{n}\}\), now let \(\psi \coloneqq \bigvee_{i = 1}^{n}\cnf^{-1}(\mathcal{C}_{i})\).
  Then \(\psi\) is logically equivalent to a \(\forall_{1}\) sentence \(\psi'\).
  Now we define \(\mathcal{C}' \coloneqq \cnf(\psi')\).
  It is clear that \(M \models \mathcal{C}'\) if and only if there exists \(i \in \{ 1, \dots, n\}\) such that \(M \models \mathcal{C}_{i}\).
\end{proof}
\subsection{Induction and formal arithmetic}
\label{sec:linear-arithmetic}
In this section we introduce some basic notions about induction and formal arithmetic.
In particular we introduce the setting of linear arithmetic in which we formulate an unrefutability result for clause set cycles in Section~\ref{sec:unpr-clause-set}.

Inductive theorem provers customarily work in a many-sorted setting with a notion of inductive datatypes encompassing at least the natural numbers, lists, trees, and sometimes even more complicated types such as mutually recursive datatypes.
However, working in such a general setting is notationally tedious.
Moreover, all the phenomena we are interested in can already be observed over the natural numbers.
Hence, we restrict ourselves in this article to a one-sorted setting over the natural numbers.
By \(0/0\) and \(s/1\) we denote function symbols that represent the number zero and the successor function on natural numbers, respectively.
We fix some abbreviations.
Let \(n\) be a natural number and \(t\) a term, then \(s^{n}(t)\) denotes the term \(\underbrace{s(\cdots s}_{\text{\(n\) times}}(t) \cdots)\) and \(\numeral{n}\) denotes the term \(s^{n}(0)\).
Furthermore, let \(+\) be a binary infix function symbol representing addition of natural numbers, then the notation \(n \cdot t\) for the multiplication of the term \(t\) by the constant \(n\) is defined inductively by \(0 \cdot t = 0\) and \((i + 1) \cdot t = t + (i \cdot t)\).
\begin{definition}
  \label{def:10}
  Let \(\varphi(x, \vec{z})\) be a formula, then \(I_{x}\varphi\) denotes the formula
  \[
    \varphi(0, \vec{z}) \wedge \Forall{x}{(\varphi(x,\vec{z}) \rightarrow \varphi(s(x), \vec{z}))} \rightarrow \Forall{x}{\varphi(x, \vec{z})}.
  \]
  In the definition above, we call \(\varphi\) the induction formula, \(x\) the induction variable, and \(\vec{z}\) the induction parameters.
  Let \(\Gamma\) be a set of formulas, then the theory \(\IND{\Gamma}\) is axiomatized by the universal closure of the formulas \(I_{x}\gamma\) with \(\gamma \in \Gamma\).
\end{definition}
If induction is carried out on formulas without induction parameters, we speak of parameter-free induction.
A notion related to the induction scheme is that of inductivity in a theory.
\begin{definition}
  \label{def:22}
  Let \(T\) be a theory.
  A formula \(\varphi(x,\vec{z})\) is \(T\)-inductive in \(x\) if \(T \vdash \varphi(0, \vec{z})\) and \(T \vdash \varphi(x, \vec{z}) \rightarrow \varphi(s(x),\vec{z})\).
  Whenever the induction variable \(x\) is clear from the context we simply say that \(\varphi\) is inductive in \(T\).
\end{definition}
Let us now introduce the setting of linear arithmetic.
This setting has the advantage of being sufficiently complex to provide interesting independence results while still having straightforward model theoretic properties.
\begin{definition}[Language of linear arithmetic]
  \label{def:17}
  The function symbol \(p/1\) represents the predecessor function on natural numbers and the infix function symbol \(+/2\) represents the addition of natural numbers.
  The language \(\LLA\) of linear arithmetic is \(\{ 0,s,p,+ \}\).
\end{definition}
By \(\mathbb{N}\) we denote the set of natural numbers as well as the \(\LLA\) structure whose domain is the set of natural numbers and that interprets the symbols \(0\), \(s\), \(+\) naturally and interprets the symbol \(p\) by \(p^{\mathbb{N}}(0) = 0\) and \(p^{\mathbb{N}}(n + 1) = n\) for all \(n \in \mathbb{N}\).
Analogously, we denote by \(\mathbb{Z}\) the set of integers and the \(\LLA\) structure whose domain consists of the integers and that interprets all symbols naturally.
In particular, \(\mathbb{Z}\) interprets the symbol \(p\) as the function \(x \mapsto x - 1\).
All the theories of linear arithmetic that we will work with are extensions of the following base theory.
\begin{definition}
  \label{def:minimal_arithmetic}
  The \(\LLA\) theory \(\BLA\) is axiomatized by the universal closure of the formulas
  \begin{align}
    s(0) & \neq 0, \label{Q_1} \tag{A1} \\
    p(0) & = 0, \label{ax:p:0} \tag{A2} \\
    p(s(x)) & = x, \label{ax:p:s} \tag{A3} \\
    x + 0 & = x, \label{Q_3}\tag{A4} \\
    x + s(y) & = s(x + y). \label{Q_4}\tag{A5}
  \end{align}
\end{definition}
In the following we will recall some basic properties of the theory \(\BLA\) and its extension by induction for quantifier-free formulas.
Clearly, we have \(\mathbb{N} \models \BLA\) and \(\mathbb{Z} \models \axiomA{1} + \axiomA{3} + \axiomA{4} + \axiomA{5}\), but \(\mathbb{Z} \not \models \axiomA{2}\), since \(\mathbb{Z} \models p(0) = -1\).
\begin{lemma}
  \label{lem:3}
  \begin{enumerate}[label=(\roman*),ref =\ref{lem:3}{.}{\it (\roman*)}]
  \item \(\axiomA{1} + \axiomA{2} + \axiomA{3} \vdash s(x) \neq 0\). \label{lem:3:1}
  \item \(\axiomA{3} \vdash s(x) = s(y) \rightarrow x = y\). \label{lem:3:2}
  \end{enumerate}
\end{lemma}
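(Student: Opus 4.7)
The plan for part (ii) is completely routine: from $s(x) = s(y)$ I would apply $p$ to both sides and use \axiomA{3} twice to rewrite $p(s(x))$ as $x$ and $p(s(y))$ as $y$, yielding $x = y$. This only uses \axiomA{3} and the functionality of $p$, as claimed.

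For part (i) I would argue by contradiction, deriving a falsehood from $s(x) = 0$. Applying $p$ to both sides gives $p(s(x)) = p(0)$. By \axiomA{3} the left-hand side equals $x$, and by \axiomA{2} the right-hand side equals $0$, so $x = 0$. Substituting back into the assumption $s(x) = 0$ gives $s(0) = 0$, which directly contradicts \axiomA{1}.

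Both arguments are a few equational steps in first-order logic with equality; there is no real obstacle, and the lemma essentially just records that the injectivity of $s$ and the distinctness of successors from $0$ follow from the minimal axioms for $p$ together with \axiomA{1}. I would present both items as short syllogisms without any auxiliary lemma.
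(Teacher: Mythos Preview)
Your proposal is correct and matches the paper's proof essentially step for step: for (i) the paper also assumes $s(x)=0$, derives $x = p(s(x)) = p(0) = 0$ via \axiomA{3} and \axiomA{2}, and then contradicts \axiomA{1}; for (ii) it likewise applies $p$ to both sides and uses \axiomA{3}.
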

\begin{proof}
  For {\it (i)} assume that there exists \(x\) such that \(s(x) = 0\), then by \ref{ax:p:0} and \ref{ax:p:s} we have \(x = p(s(x)) = p(0) = 0\). Thus, \(s(0) = 0\) which contradicts \ref{Q_1}.
  For {\it (ii)} assume \(s(x) = s(y)\), then we have \(p(s(x)) = p(s(y))\) and by \eqref{ax:p:s} we obtain \(x = y\).
\end{proof}
\begin{definition}
  \label{def:21}
  Let \(T\) be an \(\LLA\) theory.
  We say \(T\) is sound if \(\mathbb{N} \models T\).
  Furthermore, \(T\) is \(\exists_{1}\)-complete if \(\mathbb{N} \models \sigma\) implies \(T \vdash \sigma\) for all \(\exists_{1}\) \(\LLA\) sentences \(\sigma\).
\end{definition}
\begin{lemma}
  \label{lem:33}
  Let \(t\) be an \(\LLA\) ground term, then there exists \(k \in \mathbb{N}\) such that \(\BLA \vdash t = \numeral{k}\).
\end{lemma}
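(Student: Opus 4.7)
The plan is to proceed by structural induction on the ground term $t$, using the axioms \ref{ax:p:0}, \ref{ax:p:s}, \ref{Q_3}, and \ref{Q_4} to reduce each term-forming operation to a numeral. Since $\LLA = \{0, s, p, +\}$, there are four cases for the outermost symbol of $t$.

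For the base case $t = 0$, we have $t = \numeral{0}$ by reflexivity, so $\BLA \vdash t = \numeral{0}$. For $t = s(t')$, the induction hypothesis provides some $k$ with $\BLA \vdash t' = \numeral{k}$, and hence $\BLA \vdash s(t') = s(\numeral{k}) = \numeral{k+1}$ by congruence. For $t = p(t')$, the induction hypothesis again yields some $k$ with $\BLA \vdash t' = \numeral{k}$; if $k = 0$, then \ref{ax:p:0} gives $\BLA \vdash p(t') = p(0) = 0 = \numeral{0}$, while if $k = m+1$, then $\numeral{k} = s(\numeral{m})$ and \ref{ax:p:s} yields $\BLA \vdash p(t') = p(s(\numeral{m})) = \numeral{m}$.

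The addition case $t = t_{1} + t_{2}$ is the only one requiring genuine work. By the outer induction hypothesis, we obtain $k_{1}, k_{2} \in \mathbb{N}$ with $\BLA \vdash t_{i} = \numeral{k_{i}}$, so it suffices to show $\BLA \vdash \numeral{k_{1}} + \numeral{k_{2}} = \numeral{k_{1} + k_{2}}$. This is proved by a secondary (meta-level) induction on $k_{2}$: the base case $k_{2} = 0$ is immediate from \ref{Q_3}, and for $k_{2} = m+1$ we compute
\[
  \numeral{k_{1}} + \numeral{m+1} = \numeral{k_{1}} + s(\numeral{m}) = s(\numeral{k_{1}} + \numeral{m}) = s(\numeral{k_{1}+m}) = \numeral{k_{1}+m+1},
\]
using \ref{Q_4} and the inner induction hypothesis, all within $\BLA$.

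I do not expect any real obstacle: the argument is a routine structural induction, with the only subtlety being that the addition case must be handled by a separate sub-induction on the right-hand numeral because \ref{Q_4} recurses on its second argument. Note that the inductions here are all in the meta-theory on the natural-number length of the term and on the natural numbers $k_{1}, k_{2}$; no object-level induction scheme is invoked, which is essential since $\BLA$ itself contains no induction axioms.
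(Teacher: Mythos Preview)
Your proof is correct and follows exactly the approach indicated in the paper, which simply says to ``proceed by induction on the structure of the term.'' You have merely spelled out the four cases and the inner meta-level induction for addition that the paper leaves implicit.
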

\begin{proof}
  Proceed by induction on the structure of the term.
\end{proof}
\begin{lemma}
  \label{lem:34}
  Let \(\varphi\) be a quantifier-free \(\LLA\) sentence, then either \(\BLA \vdash \varphi\) or \(\BLA \vdash \neg \varphi\).
\end{lemma}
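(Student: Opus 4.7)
The plan is to show that $\BLA$ decides every quantifier-free $\LLA$ sentence by reducing to the atomic case and then using a straightforward induction on formula structure. Since $\varphi$ is a sentence, every term occurring in it is ground, so every atomic subformula has the form $t_1 = t_2$ with $t_1, t_2$ ground.

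First I would handle the atomic case. By Lemma~\ref{lem:33}, for any ground terms $t_1, t_2$ there exist natural numbers $k_1, k_2$ with $\BLA \vdash t_1 = \numeral{k_1}$ and $\BLA \vdash t_2 = \numeral{k_2}$. If $k_1 = k_2$, then $\BLA \vdash t_1 = t_2$ follows by transitivity of equality. If $k_1 \neq k_2$, the goal reduces to showing $\BLA \vdash \numeral{k_1} \neq \numeral{k_2}$. Without loss of generality $k_1 < k_2$; applying Lemma~\ref{lem:3:2} (injectivity of $s$) $k_1$ many times, $\numeral{k_1} = \numeral{k_2}$ implies $\numeral{0} = \numeral{k_2 - k_1}$, i.e.\ $\numeral{k_2 - k_1} = 0$ with $k_2 - k_1 \geq 1$. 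Writing $\numeral{k_2 - k_1} = s(\numeral{k_2 - k_1 - 1})$ and applying Lemma~\ref{lem:3:1}, this contradicts $s(x) \neq 0$. Hence $\BLA \vdash t_1 \neq t_2$.

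Next I would proceed by induction on the structure of $\varphi$. The atomic case is handled above. For the negation and propositional connectives, the induction step is routine: if $\BLA$ decides $\psi$ and $\chi$, then $\BLA$ decides $\neg \psi$, $\psi \wedge \chi$, $\psi \vee \chi$, $\psi \rightarrow \chi$, and $\psi \leftrightarrow \chi$ by a trivial case distinction on which of $\psi, \neg\psi$ and which of $\chi, \neg\chi$ is provable in $\BLA$, together with propositional logic. Since $\varphi$ is quantifier-free, no quantifier cases arise.

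I do not expect any real obstacle. The only mildly delicate step is the proof of $\BLA \vdash \numeral{k_1} \neq \numeral{k_2}$ when $k_1 \neq k_2$, which requires combining the two parts of Lemma~\ref{lem:3} in the right order; everything else is either a direct appeal to Lemma~\ref{lem:33} or a propositional bookkeeping argument.
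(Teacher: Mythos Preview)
The proposal is correct and follows essentially the same approach as the paper: reduce atoms to numeral equations via Lemma~\ref{lem:33}, decide these using reflexivity or repeated applications of Lemma~\ref{lem:3:2} followed by Lemma~\ref{lem:3:1}, and handle the propositional connectives by a routine structural induction. Your write-up simply spells out in more detail what the paper compresses into a couple of sentences.
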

\begin{proof}
  By a straightforward induction on the structure of the sentence \(\varphi\).
  The only interesting case is the case where \(\varphi\) is an atom \(t_{1} = t_{2}\).
  By Lemma~\ref{lem:33} there exist \(k_{1}, k_{2} \in \mathbb{N}\) such that \(\BLA \vdash t_{1} = t_{2} \leftrightarrow \numeral{k_{1}} = \numeral{k_{2}}\).
  If \(k_{1} = k_{2}\) we apply reflexivity.
  Otherwise, we apply Lemma~\ref{lem:3:2} repeatedly and finally we use \ref{lem:3:1}.
\end{proof}
\begin{lemma}
  \label{lem:14}
  The theory \(\BLA\) is \(\exists_{1}\)-complete.
\end{lemma}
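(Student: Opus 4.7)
The plan is to reduce $\exists_{1}$-completeness to the quantifier-free completeness already established in Lemma~\ref{lem:34}, using the fact that $\mathbb{N} \models \BLA$ as a filter to exclude the wrong side of the dichotomy.

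First I would unfold the definition: a generic $\exists_{1}$ $\LLA$-sentence has the shape $\sigma \equiv \Exists{\vec{x}}{\psi(\vec{x})}$ with $\psi$ quantifier-free. Assuming $\mathbb{N} \models \sigma$, pick natural number witnesses $n_{1}, \dots, n_{k}$ so that $\mathbb{N} \models \psi(\numeral{n_{1}}, \dots, \numeral{n_{k}})$. The key observation is that $\psi(\numeral{n_{1}}, \dots, \numeral{n_{k}})$ is now a quantifier-free \emph{sentence} (the numerals are ground terms), so Lemma~\ref{lem:34} applies and gives us the dichotomy that either $\BLA \vdash \psi(\numeral{n_{1}}, \dots, \numeral{n_{k}})$ or $\BLA \vdash \neg \psi(\numeral{n_{1}}, \dots, \numeral{n_{k}})$.

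Next I would rule out the negative horn of this dichotomy by soundness. A direct check of the axioms \ref{Q_1}--\ref{Q_4} shows $\mathbb{N} \models \BLA$, so $\BLA$ is sound in the sense of Definition~\ref{def:21}. If $\BLA \vdash \neg \psi(\numeral{n_{1}}, \dots, \numeral{n_{k}})$ held, then by soundness $\mathbb{N} \models \neg \psi(\numeral{n_{1}}, \dots, \numeral{n_{k}})$, contradicting the choice of the witnesses. Hence $\BLA \vdash \psi(\numeral{n_{1}}, \dots, \numeral{n_{k}})$, and applying existential generalization in first-order logic yields $\BLA \vdash \Exists{\vec{x}}{\psi(\vec{x})}$, i.e.\ $\BLA \vdash \sigma$.

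There is no real obstacle here; the argument is a routine combination of quantifier-free completeness and soundness. The only pitfall to avoid in the write-up is being sloppy about the transition from the semantic statement $\mathbb{N} \models \psi(\numeral{n_{1}}, \dots, \numeral{n_{k}})$ to the syntactic one $\BLA \vdash \psi(\numeral{n_{1}}, \dots, \numeral{n_{k}})$: this step rests entirely on Lemma~\ref{lem:34}, which in turn rests on Lemma~\ref{lem:33} evaluating ground terms. The rest is just packaging.
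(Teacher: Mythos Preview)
Your proposal is correct and follows essentially the same route as the paper: pick numeral witnesses for the existential, invoke Lemma~\ref{lem:34} on the resulting quantifier-free sentence, and existentially generalize. The paper's write-up is terser (it jumps directly from $\mathbb{N} \models \psi(\numeral{n_{1}}, \dots, \numeral{n_{k}})$ to $\BLA \vdash \psi(\numeral{n_{1}}, \dots, \numeral{n_{k}})$ without spelling out the soundness step that excludes the negative horn), but your more explicit version is a faithful expansion of the same argument.
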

\begin{proof}
  By Lemma~\ref{lem:34}, \(\BLA\) is also complete for quantifier-free sentences.
  Assume that \(\mathbb{N} \models \Exists{\vec{x}}{\varphi(\vec{x})}\), where \(\varphi\) is quantifier-free.
  Then there are \(n_{1},\dots, n_{k}\) such that \(\mathbb{N} \models \varphi(\numeral{n}_{1}, \dots, \numeral{n_{k}})\).
  Therefore, \(\BLA \vdash \varphi(\numeral{n_{1}}, \dots, \numeral{n_{k}})\), thus, \(\BLA \vdash \Exists{\vec{x}}{\varphi(\vec{x})}\).
\end{proof}
\begin{lemma}
  \label{lem:29}
  The theory \(\BLA + \IND{\Open(\LLA)}\) proves the following formulas
  \begin{gather*}
    x = 0 \vee x = s(p(x)), \tag{\(\axiomB{1}\)} \label{B1} \\
    x + y = y + x, \tag{\(\axiomB{2}\)} \label{B2} \\
    x + (y + z) = (x + y) + z, \tag{\(\axiomB{3}\)} \label{B3} \\
    x + y = x + z \rightarrow y = z. \tag{\(\axiomB{4}\)} \label{B4}
  \end{gather*}
\end{lemma}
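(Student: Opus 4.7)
The plan is to prove \eqref{B1}--\eqref{B4} by a chain of standard quantifier-free inductions, in the order $\axiomB{1}$, then a left-zero and left-successor lemma for $+$, then $\axiomB{2}$, then $\axiomB{3}$, then $\axiomB{4}$. All induction formulas are quantifier-free in $\LLA$, so they are legal inputs for $\IND{\Open(\LLA)}$. I will freely use Lemma~\ref{lem:3:2} (injectivity of $s$), which is available in $\BLA$.

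First I would establish \eqref{B1} by induction on $x$ applied to $\varphi(x) \equiv x = 0 \vee x = s(p(x))$. The base case is immediate. For the step, given $\varphi(x)$ the goal $\varphi(s(x))$ reduces by \ref{ax:p:s} to $s(x) = s(p(s(x)))$, which is just $s(x) = s(x)$. Next I would prove the two auxiliary facts $0 + y = y$ and $s(x) + y = s(x + y)$, each by induction on $y$, using \ref{Q_3} and \ref{Q_4}. With these in hand, $\axiomB{2}$ follows by induction on $y$ applied to $\varphi(y) \equiv x + y = y + x$: the base uses $\axiomA{4}$ and $0 + x = x$, and the step uses $\axiomA{5}$, the induction hypothesis, and $s(y) + x = s(y + x)$. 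Associativity $\axiomB{3}$ is a direct induction on $z$ via $\varphi(z) \equiv x + (y + z) = (x + y) + z$, using only \ref{Q_3} and \ref{Q_4} in base and step.

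For $\axiomB{4}$ I would induct on $x$ on the quantifier-free formula $\varphi(x) \equiv x + y = x + z \rightarrow y = z$. The base $0 + y = 0 + z \rightarrow y = z$ reduces to $y = z \rightarrow y = z$ using the auxiliary $0 + y = y$ (and $0 + z = z$). For the step, assume $\varphi(x)$ and $s(x) + y = s(x) + z$; rewrite both sides via $s(x) + y = s(x + y)$ (and likewise for $z$), then use Lemma~\ref{lem:3:2} to strip the successors, and finally apply the induction hypothesis.

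There is no real obstacle here beyond getting the ordering right: each induction formula must already be provable as an $\LLA$ quantifier-free formula, so the auxiliary lemmas $0 + y = y$ and $s(x) + y = s(x + y)$ must be proved before they are used to commute, associate, or cancel. Note in particular that \eqref{B1} is not used in proving $\axiomB{2}$--$\axiomB{4}$; it is stated alongside them as an independent consequence of open induction. The whole argument is well within $\BLA + \IND{\Open(\LLA)}$ because every induction formula employed is quantifier-free over $\{0, s, p, +\}$ and contains at most the free variables displayed.
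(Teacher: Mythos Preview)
Your proposal is correct and is precisely the kind of routine argument the paper has in mind; the paper's own proof is literally just the word ``Routine.'' Each induction formula you use is quantifier-free over $\LLA$ with parameters, which is permitted by Definition~\ref{def:10}, and the ordering of auxiliary lemmas is right.
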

\begin{proof}
  Routine.
\end{proof}
\begin{definition}
  \label{def:9}
  The theory \(\BLAPrime\) is axiomatized by \eqref{Q_1}--\eqref{Q_4} and \eqref{B1}--\eqref{B4}.
\end{definition}
\begin{theorem}[{\cite{shoenfield1958}}]
  \label{thm:shoenfields_theorem}
  \(\BLA + \IND{\Open(\LLA)} \equiv \BLAPrime\).
\end{theorem}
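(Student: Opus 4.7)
The theorem is a two-way equivalence between theories. The inclusion $\BLA + \IND{\Open(\LLA)} \vdash \BLAPrime$ is immediate from Lemma~\ref{lem:29}, which already derives each of B1-B4 from $\BLA + \IND{\Open(\LLA)}$, so only the converse $\BLAPrime \vdash \BLA + \IND{\Open(\LLA)}$ requires work. Since A1-A5 belong to $\BLAPrime$ by Definition~\ref{def:9}, the task reduces to deriving $I_{x}\varphi$ in $\BLAPrime$ for every open $\LLA$-formula $\varphi(x,\vec{z})$.

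The plan is to reduce open formulas to a normal form in $\BLAPrime$ until induction becomes essentially trivial. First, by structural induction on terms, using commutativity and associativity of $+$ (B2, B3) to flatten sums, A3 to cancel $p(s(\cdot))$, and B1 to case-split on whether a variable equals $0$ (which exposes a successor above which $p$ can be eliminated), every $\LLA$-term $t(x,\vec{z})$ is provably equivalent in $\BLAPrime$ to a finite case distinction whose branches are of the form $k \cdot x + r(\vec{z})$ with $k \in \NaturalNumbers$ fixed and $r$ not containing $x$. Cancellation (B4) then reduces any atomic equation to either an $x$-free equation or an equation $m \cdot x = r(\vec{z})$ with $m \geq 1$, the latter pinning $x$ to at most one value depending on $\vec{z}$.

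Consequently, every open $\varphi(x,\vec{z})$ becomes a Boolean combination of atoms whose dependence on $x$ is highly restricted: each atom either ignores $x$ entirely or is satisfied by at most one $x$. The induction principle $I_{x}\varphi$ is then derived in $\BLAPrime$ by a direct case analysis on the Boolean structure: $x$-free atoms are trivially preserved by the step, and atoms of the form $m\cdot x = r(\vec{z})$ cannot simultaneously hold at $x$ and $s(x)$, for otherwise $m \cdot s(x) = m \cdot x$, contradicting Lemma~\ref{lem:3:1} via B4. The main obstacle is the term normalization step, because $p$ does not interact additively with $+$ (since $p(0) = 0$, not $-1$) and hence cannot be eliminated purely syntactically; it disappears only through repeated case-splits supplied by B1, and bookkeeping these splits across a compound term while preserving $\BLAPrime$-provable equivalence is the technical heart of Shoenfield's argument.
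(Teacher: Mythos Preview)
The paper does not give its own proof of this theorem; it is stated with a citation to Shoenfield's 1958 paper and used as a black box thereafter. So there is nothing to compare against, and the only question is whether your sketch stands on its own.

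Your overall strategy is the right one and is essentially Shoenfield's: the easy direction is Lemma~\ref{lem:29}, and for the converse one normalizes terms modulo $\BLAPrime$ (using \eqref{B1}--\eqref{B4} and repeated case splits to eliminate $p$) until every atom is either $x$-free or equivalent to $m\cdot x = r(\vec{z})$ with $m\geq 1$. You also correctly identify the term-normalization step as the technical heart.

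There is, however, a genuine gap in your final step. You argue that $I_x\varphi$ follows because ``atoms of the form $m\cdot x = r(\vec{z})$ cannot simultaneously hold at $x$ and $s(x)$.'' That observation is true but does not yield the induction principle: for a Boolean combination of such atoms, knowing that each individual atom flips between consecutive points says nothing about why $\varphi(0)\wedge\forall x(\varphi(x)\to\varphi(s(x)))$ forces $\forall x\,\varphi(x)$. The fact you actually need (and which you state earlier but do not exploit) is that each $x$-dependent atom has \emph{at most one} solution, provably in $\BLAPrime$ via \eqref{B4}. From this, if $\varphi$ involves $k$ such atoms, then $\varphi(x)$ agrees with a fixed $x$-free value $\varphi_0$ (obtained by setting every $E_i$ to false) for all but at most $k$ values of $x$. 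Inside $\BLAPrime$ one then runs a \emph{bounded} descent using \eqref{B1}: assuming $\neg\varphi(a)$, the contrapositive of the step gives $\neg\varphi(p^j(a))$ for $j=0,\dots,k$; these $k{+}1$ elements are pairwise distinct (since $s^n(x)=x$ would give $\numeral{n}=0$ via \eqref{B4}, contradicting Lemma~\ref{lem:3:1}), so by pigeonhole two of them would have to witness the same atom $E_i$, contradicting uniqueness of solutions. The case where $\varphi_0$ is false is dual: the hypotheses produce $\varphi(\numeral{0}),\dots,\varphi(\numeral{k})$, again $k{+}1$ exceptional points against at most $k$ atoms. This pigeonhole-plus-bounded-descent argument, carried out as a finite case analysis in $\BLAPrime$, is the missing ingredient; your consecutive-value remark does not substitute for it.
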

\section{Analysis of clause set cycles}
\label{sec:clause-set-cycles}
In this section we carry out an analysis of the formalism of refutation by a clause set cycle.
In Section~\ref{sec:definition-bound} we define clause set cycles and recall some basic properties as well as some results from \cite{hetzl2020}.
After that, we will provide in Section~\ref{sec:refin-bound-theory} a characterization of clause set cycles in terms of a logical theory with induction.
Finally, in Section~\ref{sec:unpr-clause-set} we will use this characterization and an independence result, that will be proved in Section~\ref{sec:idemp-line-arithm}, to obtain concrete and practically meaningful unrefutability results for clause set cycles.

\subsection{Clause set cycles}
\label{sec:definition-bound}
Refutation by a clause set cycle is a formalism introduced in \cite{hetzl2020} by the authors of this article to describe abstractly the inductive arguments that take place in the n-clause calculus \cite{kersani2013,kersani2014}.
The n-clause calculus is an extension of the superposition calculus by a mechanism that detects cyclic dependencies between the derived clauses.
These cyclic dependencies correspond to arguments by infinite descent and thus establish the inductive unsatisfiability of a set of clauses.
The notion of refutation by a clause set cycle abstracts the underlying superposition calculus and the detection of the cycle in that proof system and therefore extracts the essence of the arguments by infinite descent that may appear in refutations by the n-clause calculus.

Since all the variables occurring in clauses are implicitly universally quantified, a clause set does not have a free variable on which we can carry out an argument by induction.
Instead we will rely on a special constant symbol \(\eta\), on which arguments by infinite descent will take place.
This is in analogy to the special constant \(\mathrm{n}\) that is used by the n-clause calculus for the same purpose, see \cite{kersani2013}.
The constant \(\eta\) can be thought of as a Skolem constant, that is selected before a refutation is attempted.
In particular, clauses may of course contain other Skolem symbols besides \(\eta\).

Carrying out arguments by infinite descent (or induction) only on positions of constants is unsurprisingly very restricting (see Corollary~\ref{cor:3}).
Since clause set cycles are used as an abstraction of the inductive cycles of the n-clause calculus, we did not extend the formalism to allow arguments to take place in more varied positions.
The logical characterization that we give in Section~\ref{sec:refin-bound-theory} makes considering such extensions easier.
In particular, the main unprovability result of this article, Corollary~\ref{cor:6}, does not rely on this restriction.
A method that lifts this restriction has been proposed in \cite{echenim2020}.

\begin{remark}
  \label{rem:1}
  In the literature \cite{kersani2013,kersani2014,hetzl2020} a constant such as \(\eta\) is usually called a parameter. In order to avoid confusion with induction parameters in the sense of Definition~\ref{def:10} we will not use this designation.
\end{remark}
Let \(\mathcal{C}\) be a clause set possibly containing \(\eta\), then we write \(\mathcal{C}(\eta)\) to indicate all the occurrences of \(\eta\) in \(\mathcal{C}\).
Let furthermore \(t\) be a term, then \(\mathcal{C}(t)\) denotes the clause set obtained by replacing all the occurrences of \(\eta\) in \(\mathcal{C}\) by \(t\).
\begin{definition}[Refutation by a clause set cycle]
  \label{def:1}
  Let \(L\) be a first-order language. A finite \(L \cup \{ \eta \}\) clause set \(\mathcal{C}(\eta)\) is called an \(L\) clause set cycle if it satisfies the following conditions
  \begin{align}
    \mathcal{C}(s(\eta)) & \models \mathcal{C}(\eta), \label{C1} \tag{C1} \\
    \mathcal{C}(0)  & \models \bot. \label{C2} \tag{C2}
  \end{align}
  Let \(\mathcal{D}(\eta)\) be an \(L \cup \{ \eta \}\) clause set, then \(\mathcal{D}(\eta)\) is refuted by an \(L\) clause set cycle \(\mathcal{C}(\eta)\) if
  \begin{equation}
    \mathcal{D}(\eta) \models \mathcal{C}(\eta). \label{RC} \tag{C3}
  \end{equation}
\end{definition}
A clause set cycle represents an argument by infinite descent in the following sense.
Suppose there is an \(L \cup \{ \eta\}\) structure \(M\) with \(D(M) = \mathbb{N}\) such that \(M \models \mathcal{C}(\eta)\).
By \eqref{C2} we have \(\eta^{M} > 0\).
Now let \(m \in \mathbb{N}\), then we denote by \(M[\eta \mapsto m]\) the \(L \cup \{ \eta \}\) structure with the same domain as \(M\), that interprets all non-logical symbols except \(\eta\) as \(M\), and interprets \(\eta\) as \(m\).
Then we have \(M[\eta \mapsto \eta^{M} - 1] \models \mathcal{C}(s(\eta))\) and by \eqref{C1} we now obtain \(M[\eta \mapsto \eta^{M} - 1] \models \mathcal{C}(\eta)\).
Hence, we obtain an finite strictly descending sequence of natural numbers \(m\) such that \(M[\eta \mapsto m] \models \mathcal{C}(\eta)\).
This is impossible, hence \(M \not \models \mathcal{C}(\eta)\).

\begin{remark}
  \label{rem:2}
  In the literature cycles on clause sets are usually equipped with parameters that control the offset and the descent step size and thus permit a more flexible usage of the cycles (see for example \cite{kersani2013}).
  In Definition~\ref{def:2} we shall consider clause set cycles with parameters inspired by the parameters found in the cycles of the n-clause calculus.
  After that, we show in Proposition~\ref{pro:refutation_by_csc_with_parameters_is_not_more_powerful} that such parameters do not make the system more powerful.
  In particular, \cite{hetzl2020} uses a slightly different notation.
  A refutation by a clause set cycle in \cite{hetzl2020} corresponds to a refutation by a \((1,0)\)-clause set cycle with external offset \(i \in \mathbb{N}\) in the sense of Definition~\ref{def:2}.
  Hence, by Proposition~\ref{pro:refutation_by_csc_with_parameters_is_not_more_powerful} the notion of refutation by clause set cycle used in \cite{hetzl2020} is exactly as powerful as the more elegant notion of refutation by a clause set cycle used in this article.
\end{remark}

Clause set cycles could be integrated into a saturation-based prover by carrying out the saturation process as usual and by detecting a clause set cycles among the clauses derived so far, thus satisfying Condition~\ref{RC} with respect to the set of generated clauses.
The detection of a clause set cycle could for example make use of the derivation relation generated by the prover in order to detect the Conditions~\ref{C1} and \ref{C2}.
The detection of a clause set cycle, then provides the inductive unsatisfiability of the clauses generated and therefore ends the refutation.
This is essentially how the n-clause calculus described in \cite{kersani2013,kersani2014} operates.

Let us now consider an example of a refutation by a clause set cycle.
\begin{definition}
  \label{def:26}
  By \(\mathsf{C}(\eta)\) we denote the \(\LLA \cup \{\eta\}\) clause set
  \begin{align*}
    \cnf(\BLA + \axiomB{2}) \cup \{ \{ \eta \neq x + x \}, \{ \eta \neq s(x + x)\} \}.
  \end{align*}
\end{definition}
\begin{example}
  \label{ex:1}
  Intuitively, the clause set \(\mathsf{C}(\eta)\) asserts the existence of an element \(\eta\), which is neither even nor odd.
  We will now show that \(\mathsf{C}(\eta)\) is a clause set cycle.
  
  We start by showing that \(\mathsf{C}(\eta)\) satisfies Condition~\eqref{C2}.
  Suppose that \(\mathsf{C}(0)\) has a model \(M\), then we have in particular \(M \models 0 \neq 0 + 0 = 0\).
  This is a contradiction, and therefore \(\mathsf{C}(0) \models \bot\).
  
  For Condition~\eqref{C1}, let \(M\) be a model of \(\mathsf{C}(s(\eta))\).
  Clearly, we have \(M \models \cnf(B + \axiomB{2})\), hence we only have to show that \(M \models \eta \neq x + x\) and \(M \models \eta \neq s(x + x)\).
  Suppose that \(M \models \eta = d + d\) for some \(d \in M\), then we have \(M \models s(\eta) = s(d + d)\).
  Since \(M \models \mathsf{C}(s(\eta))\), we also have \(M \models s(\eta) \neq s(d + d)\), a contradiction.
  Now suppose that \(M \models \eta = s(d + d)\) for some \(d \in M\).
  Since \(M \models \mathsf{C}(s(\eta))\), we also have \(M \models s(\eta) = s(s(d + d)) = s(d) + s(d)\).
  Thus \(M \models \mathsf{C}(\eta)\), that is, \(\mathsf{C}(s(\eta)) \models \mathsf{C}(\eta)\).
  
  Hence, \(\mathsf{C}(\eta)\) is a clause set cycle and therefore refutes itself.
\end{example}
The induction argument contained in a refutation by a clause set cycle is peculiar in the sense that it does not take place in an explicit background theory.
Instead of a background theory clause set cycles may contain clauses free of \(\eta\) that act as a background theory.
In the example above the clause set cycle contains the clauses \(\cnf(B + \axiomB{2})\), that correspond to the background theory.

The cycles detected by practical methods such as the n-clause calculus differ from clause set cycles in that they can be controlled by three parameters: An external offset, an internal offset, and the step size of the descent.
In the following we will show that these parameters do not increase the overall strength of the system.
\begin{definition}
  \label{def:2}
  Let \(L\) be a first-order language and \(j, k \in \mathbb{N}\) with \(j \geq 1\).
  A finite \(L \cup \{ \eta \}\) clause set \(\mathcal{C}(\eta)\) is called an \(L\) \((j,k)\)-clause set cycle if
  \begin{align*}
    \mathcal{C}(s^{j+k}(\eta))
    & \models \mathcal{C}(s^{k}(\eta)), \tag{C1'} \label{RC4'} \\
      \mathcal{C}(\numeral{m + k})
    & \models \bot, \ \text{for \(m = 0, \dots, j - 1\)}. \tag{C2'} \label{RC3'}
  \end{align*}
  We call the parameter \(j\) the descent step size and \(k\) the internal offset.
  Let \(i \in \mathbb{N}\) and \(\mathcal{D}(\eta)\) an \(L \cup \{\eta\}\) clause set, then \(\mathcal{D}(\eta)\) is refuted by the \((j, k)\)-clause set cycle \(\mathcal{C}(\eta)\) with external offset \(i\), if
  \begin{align*}
      \mathcal{D}(s^{i}(\eta))
      & \models \mathcal{C}(s^{k}(\eta)), \tag{C3'} \label{RC2'} \\
      \mathcal{D}(\numeral{m})
      & \models \bot, \ \text{for \(m = 0, \dots, i - 1\)}. \tag{C3''} \label{RC1'}
  \end{align*}
\end{definition}
Clearly, clause set cycles in the sense of Definition~\ref{def:1} are exactly the \((1,0)\)-clause set cycles and a refutation by a clause set cycle in the sense of Definition~\ref{def:1} is a refutation by a \((1,0)\)-clause set cycle with external offset~\(0\).

We start by showing that \((j,k)\)-clause set cycles with \(j, k \in \mathbb{N}\) and \(j \geq 1\) can be simulated by clause set cycles.
\begin{lemma}
  \label{lem:csc_with_parameters_are_simulated_by_csc}
  Let \(L\) be a first-order language, \(j, k \in \mathbb{N}\) with \(j \geq 1\), and \(\mathcal{C}(\eta)\) an \(L\) \((j,k)\)-clause set cycle.
  Then there exists a clause set cycle \(\mathcal{C}'(\eta)\) such that \(\mathcal{C}(s^{k}(\eta)) \models \mathcal{C}'(\eta)\).
\end{lemma}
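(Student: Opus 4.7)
The construction is to take $\mathcal{C}'(\eta)$ to be the clause set obtained by applying Lemma~\ref{lem:disjunction_of_clause_sets} to the finite family $\{\mathcal{C}(s^{m+k}(\eta)) : m = 0, \ldots, j-1\}$, with $\eta$ treated as an ordinary constant symbol of the language. This produces a clause set $\mathcal{C}'(\eta)$ such that, for every $L \cup \{\eta\}$ structure $M$, one has $M \models \mathcal{C}'(\eta)$ if and only if $M \models \mathcal{C}(s^{m+k}(\eta))$ for some $m \in \{0, \ldots, j-1\}$. Because Lemma~\ref{lem:disjunction_of_clause_sets} operates uniformly in $\eta$, the same equivalence is preserved when $\eta$ is replaced by any term $t$.

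With this definition, the required entailment $\mathcal{C}(s^{k}(\eta)) \models \mathcal{C}'(\eta)$ holds immediately, witnessed by the disjunct $m = 0$. I then verify that $\mathcal{C}'(\eta)$ is a clause set cycle in the sense of Definition~\ref{def:1}. For condition~\eqref{C1}, suppose $M \models \mathcal{C}'(s(\eta))$; then $M \models \mathcal{C}(s^{m+k+1}(\eta))$ for some $m \in \{0, \ldots, j-1\}$. If $m \leq j-2$, the index $m+1 \in \{1, \ldots, j-1\}$ already yields $M \models \mathcal{C}'(\eta)$. If $m = j-1$, then $M \models \mathcal{C}(s^{j+k}(\eta))$, and the cycle condition~\eqref{RC4'} of $\mathcal{C}$ gives $M \models \mathcal{C}(s^{k}(\eta))$, so the disjunct $m = 0$ works. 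For condition~\eqref{C2}, any model of $\mathcal{C}'(0)$ would satisfy $\mathcal{C}(\numeral{m+k})$ for some $m \in \{0, \ldots, j-1\}$, directly contradicting~\eqref{RC3'}.

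The only mild delicacy is justifying that Lemma~\ref{lem:disjunction_of_clause_sets} produces a clause set whose dependence on $\eta$ is the intended one, so that substitution of $\eta$ by an arbitrary term $t$ commutes with the existential disjunction over $m$. This is routine: the construction in the proof of that lemma proceeds by $\cnf^{-1}$, Boolean combination, and reclausification, none of which interacts with the distinguished constant $\eta$, so $\eta$ appears in $\mathcal{C}'$ only through the positions inherited from the input clause sets. The remainder of the argument is pure bookkeeping with the descent step $j$ and the internal offset $k$.
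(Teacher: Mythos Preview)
Your proof is correct and follows essentially the same construction as the paper: the paper first sets $\mathcal{C}_0(\eta) \coloneqq \mathcal{C}(s^{k}(\eta))$ to absorb the internal offset and then applies Lemma~\ref{lem:disjunction_of_clause_sets} to $\{\mathcal{C}_0(s^{m}(\eta)) \mid m = 0,\dots,j-1\}$, which is exactly your family $\{\mathcal{C}(s^{m+k}(\eta)) \mid m = 0,\dots,j-1\}$. Your third paragraph, justifying that the disjunction construction is uniform in $\eta$ (so that substituting $s(\eta)$ or $0$ behaves as expected), is a point the paper leaves implicit.
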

\begin{proof}
  We start by eliminating the internal offset of the \((j,k)\)-clause set cycle, by letting \(\mathcal{C}'(\eta) \coloneqq \mathcal{C}(s^{k}(\eta))\).
  It is clear that \(\mathcal{C}'\) is a \((j,0)\)-clause set cycle.
  Moreover by the definition of \(\mathcal{C}'\) we have \(\mathcal{C}(s^{k}(\eta)) \models \mathcal{C}'(\eta)\).
  Let \(\mathcal{C}^{\prime\prime}(\eta)\) be the clause set obtained by applying Lemma~\ref{lem:disjunction_of_clause_sets} to the set \(\mathfrak{C} \coloneqq \{\mathcal{C}'(s^{m}(\eta)) \mid m = 0, \dots, j - 1\}\).
  We will now show that \(\mathcal{C}^{\prime\prime}(\eta)\) is a clause set cycle.
  Suppose that \(M \models \mathcal{C}^{\prime\prime}(0)\), then \(M \models \mathcal{C}^{\prime}(\numeral{m})\) for some \(m \in \{ 0, \dots, j - 1\}\), which is impossible and therefore \(\mathcal{C}^{\prime\prime}(0) \models \bot\).
  Now suppose that \(M \models \mathcal{C}^{\prime\prime}(s(\eta))\).
  Then we have \(M \models \mathcal{C}^{\prime}(s^{m + 1}(\eta))\) for some \(m \in \{0, \dots, j -1\}\).
  If \(m+1 \leq j - 1\), then \(\mathcal{C}(s^{m + 1}(\eta)) \in \mathfrak{C}\) and therefore \(M \models \mathcal{C}^{\prime\prime}(\eta)\).
  Otherwise we have \(m + 1 = j\) and therefore by \ref{RC4'} we obtain  \(M \models \mathcal{C}'(\eta)\) and since \(\mathcal{C}^{\prime}(\eta) \in \mathfrak{C}\), we have \(\mathcal{C}^{\prime}(\eta) \models \mathcal{C}^{\prime\prime}(\eta)\).
\end{proof}
Now we can show that a refutation by a \((j,k)\)-clause set cycle with internal offset \(i\), where \(i, j, k \in \mathbb{N}\) with \(j \geq 1\) can be reduced to a refutation by a clause set cycle.
\begin{proposition}
  \label{pro:7}
  \label{pro:refutation_by_csc_with_parameters_is_not_more_powerful}
  Let \(L\) be a first-order language, \(\mathcal{D}(\eta)\) an \(L \cup \{\eta\}\) clause set, and \(i,j, k \in \mathbb{N}\) with \(j \geq 1\) such that \(\mathcal{D}\) is refuted by an \(L\) \((j,k)\)-clause set cycle with external offset \(i\).
  Then \(\mathcal{D}(\eta)\) is refuted by a clause set cycle.
\end{proposition}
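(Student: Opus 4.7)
The plan is to eliminate the three parameters $j$, $k$, $i$ in turn, reusing as much as possible the machinery already developed. Since Lemma~\ref{lem:csc_with_parameters_are_simulated_by_csc} already handles the descent step size and internal offset, my first move is to apply it to the $(j,k)$-clause set cycle $\mathcal{C}(\eta)$ to obtain a plain clause set cycle $\mathcal{C}'(\eta)$ with $\mathcal{C}(s^{k}(\eta)) \models \mathcal{C}'(\eta)$. Chaining this with the refutation premise \ref{RC2'} yields $\mathcal{D}(s^{i}(\eta)) \models \mathcal{C}'(\eta)$, and from \ref{RC1'} we retain $\mathcal{D}(\numeral{m}) \models \bot$ for $m = 0, \dots, i-1$. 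The sole remaining task is thus to absorb the external offset $s^{i}$ into a new clause set cycle that directly refutes $\mathcal{D}(\eta)$.

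For this step I would imitate the disjunction trick used in the previous lemma. Apply Lemma~\ref{lem:disjunction_of_clause_sets} to the finite family
\[
\mathfrak{C} \coloneqq \{\mathcal{C}'(\eta)\} \cup \{\mathcal{D}(s^{m}(\eta)) \mid m = 0, \dots, i - 1\}
\]
to obtain a clause set $\mathcal{C}^{*}(\eta)$ whose models are exactly those satisfying some member of $\mathfrak{C}$. The intuition is that $\mathcal{C}^{*}$ simulates $i$ explicit unrollings of the external offset inside the cycle before handing off to $\mathcal{C}'$.

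It then remains to verify the three conditions. For \ref{C2}, a model of $\mathcal{C}^{*}(0)$ would satisfy either $\mathcal{C}'(0)$, contradicting the cycle property of $\mathcal{C}'$, or $\mathcal{D}(\numeral{m})$ for some $m < i$, contradicting the retained \ref{RC1'}. For \ref{C1}, suppose $M \models \mathcal{C}^{*}(s(\eta))$: if $M \models \mathcal{C}'(s(\eta))$ then $M \models \mathcal{C}'(\eta)$ by the descent step of $\mathcal{C}'$, and $\mathcal{C}'(\eta) \in \mathfrak{C}$; otherwise $M \models \mathcal{D}(s^{m+1}(\eta))$ for some $m < i$, and either $m+1 < i$ so that $\mathcal{D}(s^{m+1}(\eta)) \in \mathfrak{C}$, or $m+1 = i$ in which case $\mathcal{D}(s^{i}(\eta)) \models \mathcal{C}'(\eta) \in \mathfrak{C}$. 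Finally, $\mathcal{D}(\eta) \models \mathcal{C}^{*}(\eta)$ holds because when $i \geq 1$ we have $\mathcal{D}(\eta) = \mathcal{D}(s^{0}(\eta)) \in \mathfrak{C}$, while for $i = 0$ the chain $\mathcal{D}(\eta) \models \mathcal{C}'(\eta) \in \mathfrak{C}$ suffices.

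The main obstacle, such as it is, lies in the boundary case $m + 1 = i$ of the descent step, where one must deliberately route through the absorbed refutation condition $\mathcal{D}(s^{i}(\eta)) \models \mathcal{C}'(\eta)$; this is the only place where the construction needs the link between the $\mathcal{D}$-layer and the $\mathcal{C}'$-layer of $\mathfrak{C}$. The single technical point to keep in mind is that the substitution $\eta \mapsto t$ must commute with the disjunction construction of Lemma~\ref{lem:disjunction_of_clause_sets}, but since $\eta$ is a constant symbol and the construction is purely syntactic (via $\cnf^{-1}$, boolean disjunction, and $\cnf$), this commutation is immediate and is already tacitly used in the proof of Lemma~\ref{lem:csc_with_parameters_are_simulated_by_csc}.
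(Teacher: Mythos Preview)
Your proposal is correct and follows essentially the same approach as the paper: apply Lemma~\ref{lem:csc_with_parameters_are_simulated_by_csc} to eliminate $j$ and $k$, then absorb the external offset by forming the disjunction $\mathfrak{C} = \{\mathcal{C}'(\eta)\} \cup \{\mathcal{D}(s^{m}(\eta)) \mid m < i\}$ via Lemma~\ref{lem:disjunction_of_clause_sets} and verifying the cycle conditions by the same case split. Your treatment is in fact slightly more careful than the paper's, which tacitly assumes $i \geq 1$ when asserting $\mathcal{D}(\eta) \in \mathfrak{C}$, whereas you handle the $i = 0$ case explicitly.
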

\begin{proof}
  Let \(\mathcal{C}(\eta)\) be an \(L\) \((j,k)\)-clause set cycle such that \(\mathcal{D}(\eta)\) is refuted by \(\mathcal{C}\) with external offset \(i\).
  By Lemma \ref{lem:csc_with_parameters_are_simulated_by_csc} there exists a clause set cycle \(\mathcal{C}^{\prime}(\eta)\) such that \(\mathcal{C}(s^{k}(\eta)) \models \mathcal{C}^{\prime}(\eta)\).
  Hence \(\mathcal{D}\) is refuted by a \((1,0)\)-clause set cycle with external offset \(i\).
  In the next step we will eliminate the external offset.
  Let \(\mathfrak{C} \coloneqq \{ \mathcal{D}(s^{m}(\eta)) \mid m = 0, \dots, i - 1\} \cup \{ \mathcal{C}'(\eta)\}\) and apply Lemma \ref{lem:disjunction_of_clause_sets} in order to obtain a clause set \(\mathcal{C}^{\prime\prime}(\eta)\) corresponding to the disjunction of the clause sets in \(\mathfrak{C}\).
  We will now show that \(\mathcal{C}^{\prime\prime}(\eta)\) is a clause set cycle.
  Suppose that \(M \models \mathcal{C}^{\prime\prime}(0)\), then either \(M \models \mathcal{D}(s^{m}(\eta))\) for some \(m \in \{ 0, \dots, i - 1\}\) or \(M \models \mathcal{C}^{\prime}(0)\).
  The first case is impossible because of Condition~\ref{RC1'} and the second case is impossible because \(\mathcal{C}'(\eta)\) is a clause set cycle and therefore \(\mathcal{C}'(0) \models \bot\).
  Hence we have \(\mathcal{C}^{\prime\prime}(0) \models \bot\).
  Now suppose that \(M \models \mathcal{C}^{\prime\prime}(s(\eta))\).
  If \(M \models \mathcal{C}^{\prime}(s(\eta))\), then we have \(M \models \mathcal{C}^{\prime}(\eta)\) because \(\mathcal{C}^{\prime}\) is a clause set cycle and therefore \(M \models \mathcal{C}^{\prime\prime}(\eta)\).
  If \(M \models \mathcal{D}(s^{m + 1}(\eta))\) for some \(m \in \{ 0, \dots, i - 1\}\), we need to consider two cases.
  If \(m + 1 < i\), then we have \(\mathcal{D}(s^{m+1}(\eta)) \in \mathfrak{C}\) and therefore \(M \models \mathcal{C}^{\prime\prime}(\eta)\).
  Otherwise we have \(m + 1 = i\), and therefore we obtain \(M \models \mathcal{C}'(\eta)\) by Condition~\ref{RC2'}.
  Again we obtain \(M \models \mathcal{C}^{\prime\prime}(\eta)\).
  Hence \(\mathcal{C}^{\prime\prime}(\eta)\) is a clause set cycle.
  We complete the proof by observing that \(\mathcal{D}(\eta) \models \mathcal{C}^{\prime\prime}(\eta)\), since \(\mathcal{D}(\eta) \in \mathfrak{C}\).
  Hence, \(\mathcal{D}(\eta)\) is refuted by the clause set cycle \(\mathcal{C}^{\prime\prime}(\eta)\).
\end{proof}
As already mentioned earlier, the notion of refutation by a clause set cycle is a useful intermediary abstraction of the induction mechanism of a family of \ac{AITP} systems including in particular the n-clause calculus \cite{kersani2013,kersani2014}.
Since our goal is to develop a uniform logical representation of methods for \ac{AITP}, we thus use the notion of refutation by a clause set cycle as a starting point to provide logical abstractions of \ac{AITP} systems such as the n-clause calculus.
In particular, we want, for a fixed language \(L\), to provide a logical \(L \cup \{ \eta \}\) theory \(T\)
that simulates refutation by a clause set cycle in the following sense: Let \(\mathcal{D}(\eta)\) be an \(L \cup \{\eta\}\) clause set that is refuted by an \(L\) clause set cycle, then \(T + \mathcal{D}(\eta)\) is inconsistent.
The authors of this article have shown in \cite{hetzl2020} that refutation by \(L\) clause set cycles can be simulated by the theory \(\IND{\exists_{1}(L)}\) (see Theorem \ref{thm:4}) and moreover that \(\IND{\Open(L)}\) does not simulate refutations by a clause set cycle.
\begin{theorem}[{\cite[Theorem~4.6]{hetzl2020}}]
  \label{thm:ref_csc_lower_bound_open_induction}
  There exists a language \(L\) and an \(L \cup \{\eta\}\) clause set \(\mathcal{D}(\eta)\) such that \(\mathcal{D}(\eta)\) is refuted by an \(L\) clause set cycle, but \(\IND{\Open(L)} + \mathcal{D}(\eta)\) is consistent.
\end{theorem}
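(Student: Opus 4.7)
The plan is to take $L = \LLA$ and $\mathcal{D}(\eta) = \mathsf{C}(\eta)$, the clause set from Definition~\ref{def:26}. Example~\ref{ex:1} already establishes that $\mathsf{C}(\eta)$ is itself an $\LLA$ clause set cycle, so it is trivially refuted by one (namely itself, since $\mathsf{C}(\eta) \models \mathsf{C}(\eta)$). It therefore remains to establish that $\IND{\Open(\LLA)} + \mathsf{C}(\eta)$ is consistent.

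Because $\mathsf{C}(\eta)$ already contains (the clausification of) the axioms of $\BLA$, Shoenfield's theorem (Theorem~\ref{thm:shoenfields_theorem}) reduces this to exhibiting an $\LLA \cup \{\eta\}$-model of $\BLAPrime + \mathsf{C}(\eta)$, that is, a model of $\BLAPrime$ in which $\eta$ is interpreted by an element that is neither of the form $m + m$ nor of the form $s(m + m)$. I will take $M$ to be the positive cone of the lexicographically ordered abelian group $\mathbb{Z} \oplus \mathbb{Z}$: concretely, $M = \mathbb{N} \sqcup \{(k, n) \mid k \geq 1,\ n \in \mathbb{Z}\}$, with $0^M = 0$; $s^M$ and $p^M$ the standard successor and predecessor (setting $p^M(0) = 0$); and $+^M$ inherited from the group, so that $n + (k, m) = (k, n + m)$ and $(k, n) + (k', m) = (k + k', n + m)$. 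The constant $\eta$ is interpreted as $(1, 0)$.

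Checking $M \models \BLAPrime$ is routine: axioms A1--A5 and B2--B4 follow directly from the abelian group structure of $\mathbb{Z} \oplus \mathbb{Z}$, while B1 holds because every non-zero element of $M$ has a predecessor in $M$. The crucial step is $M \models \mathsf{C}(\eta)$: for any $m \in M$, either $m \in \mathbb{N}$, in which case $m + m$ and $s(m + m)$ both lie in $\mathbb{N}$ (hence cannot equal $(1, 0)$), or $m = (k, n)$ with $k \geq 1$, in which case $m + m = (2k, 2n)$ and $s(m + m) = (2k, 2n + 1)$ both have first coordinate $2k \geq 2$, again not equal to $(1, 0)$. Hence $\eta^M$ is neither even nor odd in $M$.

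The main obstacle is identifying the right model: any model of $\BLAPrime$ whose Grothendieck group embeds into $\mathbb{Z}$ satisfies ``every element is even or odd'', so one must enlarge the divisibility structure beyond $\mathbb{Z}$, and $\mathbb{Z} \oplus \mathbb{Z}$ with lexicographic order is a minimal such enlargement that still has a positive cone closed under the $\LLA$ operations. Once this model is fixed, all remaining verifications are mechanical computations in the group, and Shoenfield's theorem delivers the desired consistency statement.
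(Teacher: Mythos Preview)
Your proposal is correct and takes essentially the same approach as the paper: the paper also uses \(\mathsf{C}(\eta)\) together with Example~\ref{ex:1}, invokes Shoenfield's theorem, and builds exactly the same model (pairs \((m,n)\in\mathbb{N}\times\mathbb{Z}\) with \(m=0\Rightarrow n\in\mathbb{N}\), witnessed by \((1,0)\)). The only organizational difference is that the paper factors the model construction into a separate Lemma~\ref{lem:5} stating \(\BLA+\IND{\Open(\LLA)}\not\vdash\Exists{y}{(x=y+y\vee x=s(y+y))}\) and then derives the theorem by contraposition, whereas you build the model of \(\BLAPrime+\mathsf{C}(\eta)\) directly.
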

In the following section, we will give a proof of Theorem~\ref{thm:ref_csc_lower_bound_open_induction} that is simpler, shorter, and more elegant than the proof given in \cite{hetzl2020}.
\begin{definition}
  \label{def:truncated_subtraction}
  Let \(n,m\) be natural numbers, then by \(n \dotminus m\) we denote the truncated subtraction of \(m\) from \(n\) given by
  \[
    n \dotminus m \coloneqq
    \begin{cases}
      n - m & \text{if \(n \geq m\)}
      \\
      0 & \text{otherwise}
    \end{cases}.
  \]
\end{definition}
\begin{lemma}
  \label{lem:5}
  \(\BLA + \IND{\Open(\LLA)} \not \vdash \Exists{y}{( x = y + y \vee x = s(y + y))}\)
\end{lemma}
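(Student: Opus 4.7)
The plan is to invoke Shoenfield's theorem (\Cref{thm:shoenfields_theorem}) to reduce the task to showing $\BLAPrime \not\vdash \Exists{y}{(x = y + y \vee x = s(y+y))}$, and then exhibit an explicit countermodel. The idea is that $\BLAPrime$ is essentially a discrete ordered additive structure with cancellation, and such a structure cannot force ``every element is even or odd''---we just need to realize this by adjoining to $\mathbb{N}$ an infinite element $\omega$ whose parity is undetermined.

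Concretely, I would take a formal symbol $\omega$ and let
\[
  M \;\coloneqq\; \mathbb{N} \;\cup\; \{\, a + b\omega \,:\, a \in \mathbb{Z},\ b \in \mathbb{N},\ b \geq 1\,\},
\]
interpreting $0$ as $0 \in \mathbb{N}$ and defining $s$, $p$, $+$ by restricting the natural operations on $\mathbb{Z}[\omega]$: for the ``infinite'' part, $s(a+b\omega) = (a{+}1)+b\omega$, $p(a+b\omega) = (a{-}1)+b\omega$, and $(a+b\omega)+(c+d\omega) = (a+c) + (b+d)\omega$, with the analogous rules when one operand lies in $\mathbb{N}$. The constraint $b \geq 1$ on the infinite elements guarantees closure under $s$, $p$, and~$+$ (note that the sum of two infinite elements has $\omega$-coefficient $\geq 2$, still $\geq 1$), and ensures that no element with $b \geq 1$ collapses to a natural number under $p$.

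The next step is a routine verification that $M \models \BLAPrime$. Axioms \eqref{Q_1}--\eqref{Q_4} are immediate: $s(0) = 1 \neq 0$, $p(0) = 0$, and the recursion equations for $p$ and $+$ hold by construction. For \eqref{B1}, if $x = a + b\omega$ with $b \geq 1$ then $s(p(x)) = s((a{-}1)+b\omega) = a + b\omega = x$, and for $x \in \mathbb{N}$ the claim is standard. Commutativity~\eqref{B2}, associativity~\eqref{B3}, and cancellation~\eqref{B4} are inherited from the corresponding properties of $\mathbb{Z}[\omega]$.

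It remains to show $\omega$ witnesses the failure of the formula. If $\omega = y + y$ or $\omega = s(y+y)$ for some $y \in M$, write $y = c + d\omega$ (allowing $d = 0$ when $y \in \mathbb{N}$); then $y+y = 2c + 2d\omega$ and $s(y+y) = (2c{+}1) + 2d\omega$. In either case the coefficient of $\omega$ is even, but $\omega = 0 + 1\cdot\omega$ has coefficient $1$, a contradiction. Hence $M \not\models \Exists{y}{(\omega = y+y \vee \omega = s(y+y))}$, so $\BLAPrime$---and therefore $\BLA + \IND{\Open(\LLA)}$ by \Cref{thm:shoenfields_theorem}---does not prove the formula. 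I do not anticipate a substantial obstacle: the only delicate point is choosing the domain $M$ so that the infinite part is closed under $p$ and $s$ (which forces the ``$b \geq 1$'' constraint) without being closed under halving.
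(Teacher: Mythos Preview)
Your proposal is correct and follows essentially the same approach as the paper: reduce via Shoenfield's theorem to \(\BLAPrime\), then build a countermodel by gluing a copy of \(\mathbb{N}\) to \(\mathbb{Z}\)-chains indexed by positive integers, and observe that an element with odd index cannot be a double or a successor of a double. Your model \(M\) is in fact isomorphic to the paper's model via \(a + b\omega \leftrightarrow (b,a)\), with your witness \(\omega\) corresponding to the paper's \((1,0)\); the only difference is the polynomial notation versus the pair notation.
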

\begin{proof}
  By Theorem~\ref{thm:shoenfields_theorem} it suffices to show that \(\BLAPrime \not \vdash \Exists{y}{( x = y + y \vee x = s( y + y ))}\).
  Consider the \(\LLA\) structure \(M\) whose domain consists of the pairs of the form \((m,n) \in \mathbb{N} \times \mathbb{Z}\) such that \(m = 0\) implies \(n \in \mathbb{N}\) and that interprets the non-logical symbols as follows:
  \begin{gather*}
    0^{M} = (0,0), \\
    s^{M}((m,n)) = (m, n + 1), \\
    p^{M}((m,n)) =
    \begin{cases}
      (m,n \dotminus 1) & \text{if \(m = 0\)}, \\
      (m,n - 1) & \text{otherwise}
    \end{cases}, \\
    (m_{1}, n_{1}) +^{M} (m_{2},n_{2}) = (m_{1} + m_{2}, n_{1} + n_{2}).
  \end{gather*}
  It is routine to verify that \(M \models \BLAPrime\).
  Consider the element \((1,0)\), then clearly there is no element \((m,n)\) of \(M\) such that \((1,0) = (m,n) +^{M} (m,n) = (2m,2n)\) or \((1,0) = s^{M}((m,n) +^{M} (m,n)) = (2m,2n+1)\).
\end{proof}
\begin{proof}[Proof of Theorem~\ref{thm:ref_csc_lower_bound_open_induction}]
  Consider the clause set \(\mathsf{C}(\eta)\).
  In Example~\ref{ex:1} we have shown that \(\mathsf{C}(\eta)\) is refuted by an \(\LLA\) clause set cycle.
  We will now show that \(\IND{\Open(\LLA)} + \mathsf{C}(\eta)\) is consistent.
  We proceed indirectly and assume that \(\IND{\Open(\LLA)} + \mathsf{C}(\eta)\) is inconsistent.
  Hence \(\BLA + \axiomB{2} + \IND{\Open(\LLA)} \vdash \Exists{y}{( \eta = y + y )} \vee \Exists{y}{( \eta = s( y + y))}\).
  Thus, \(\BLA + \IND{\Open(\LLA)} \vdash \Exists{y}{(x = y + y \vee x = s(y + y))}\), which contradicts Lemma~\ref{lem:5}.
\end{proof}

However, empirical evidence suggests that clause set cycles are not strictly stronger than open induction.
This has given rise to the following conjecture.
\begin{conjecture}[{\cite[Conjecture~4.7]{hetzl2020}}]
  \label{con:csc_and_open_induction_incomparable}
  \label{con:1}
  There exists a language \(L\) and an \(L \cup \{ \eta \}\) clause set \(\mathcal{D}(\eta)\) such that \(\IND{\Open(L)} \cup \mathcal{D}(\eta)\) is inconsistent, but \(\mathcal{D}(\eta)\) is not refuted by an \(L\) clause set cycle.
\end{conjecture}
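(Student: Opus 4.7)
The plan is to exhibit a concrete clause set over the language $\LLA$ of linear arithmetic encoding the existence of a non-zero additive idempotent, namely
\[
\mathcal{D}(\eta) \coloneqq \cnf(\BLA) \cup \bigl\{\{\eta + \eta = \eta\},\ \{\eta \neq 0\}\bigr\}.
\]
That $\IND{\Open(\LLA)} + \mathcal{D}(\eta)$ is inconsistent is straightforward: by Lemma~\ref{lem:29} the cancellation law $\axiomB{4}$ is derivable in $\BLA + \IND{\Open(\LLA)}$, so combined with \ref{Q_3} one concludes from $x + x = x$ first $x + x = x + 0$ and then $x = 0$. Hence $\BLA + \IND{\Open(\LLA)} \vdash \Forall{x}{(x + x = x \rightarrow x = 0)}$, and instantiation at $\eta$ contradicts the two non-background clauses of $\mathcal{D}(\eta)$.

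The harder half is showing that no $\LLA$ clause set cycle refutes $\mathcal{D}(\eta)$. For this I rely on the characterisation to be established in Section~\ref{sec:refin-bound-theory}, which reduces cycle-refutability of $\mathcal{D}(\eta)$ to its inconsistency with a specific characterising theory $T$ — essentially a form of parameter-free $\exists_1$-induction localised at the constant $\eta$. The task then becomes constructing an $\LLA \cup \{\eta\}$-model of $T + \mathcal{D}(\eta)$. A natural candidate is the structure $M$ with domain $\NaturalNumbers \sqcup \{\infty\}$ in which $\infty$ absorbs addition, $s^{M}(\infty) = p^{M}(\infty) = \infty$, and $\eta^{M} = \infty$; it is routine to check $M \models \BLA$ and $M \models \{\eta + \eta = \eta,\ \eta \neq 0\}$.

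The main obstacle is verifying that $M$ validates the induction schema of $T$, and this is precisely the idempotency independence result promised for Section~\ref{sec:idemp-line-arithm}. I expect this to require a careful quantifier-elimination or witness analysis for parameter-free $\exists_1$ formulas over $\LLA$, showing that any such $\varphi(x)$ which is inductive in $\NaturalNumbers$ remains inductive in $M$, and in particular holds at the absorbing element $\infty$. The intuition is that an $\exists_1$ witness for $\varphi(\infty)$ can always be obtained either by taking $\infty$ itself (exploiting its absorbing behaviour) or by a uniform term in the standard witnesses for $\varphi(n)$ along $n \in \NaturalNumbers$, so no new unsoundness is introduced by adjoining $\infty$. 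Once this model-theoretic lemma is in hand, $T + \mathcal{D}(\eta)$ is consistent, so $\mathcal{D}(\eta)$ is not refuted by any $\LLA$ clause set cycle; together with the first half this settles Conjecture~\ref{con:1}.
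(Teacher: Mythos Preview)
Your overall strategy mirrors the paper's second unprovability result (Corollary~\ref{cor:6} and Section~\ref{sec:idemp-line-arithm}): exhibit a clause set in linear arithmetic asserting the existence of a non-zero additive idempotent, prove the easy half via the cancellation law~\eqref{B4}, and use the logical characterisation from Section~\ref{sec:refin-bound-theory} to reduce the hard half to a model construction. The easy half of your argument is fine.

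However, your proposed model $M = \NaturalNumbers \sqcup \{\infty\}$ with $s^{M}(\infty) = \infty$ does \emph{not} satisfy the characterising theory, and the gap is not a matter of missing detail but of a wrong choice. Consider the open (hence $\exists_{1}$) formula $\psi(x) \coloneqq s(x) \neq x$. By~\ref{Q_1} we have $\BLA \vdash \psi(0)$, and by Lemma~\ref{lem:3:2} (injectivity of $s$) we have $\BLA \vdash \psi(x) \rightarrow \psi(s(x))$. Thus $\psi$ is $\BLA$-inductive, so the characterising theory (via Lemma~\ref{lem:10} and Theorem~\ref{thm:ref_csc_logical_characterization_clause_sets}) forces $M \models s(\eta) \neq \eta$. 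But your model has $s^{M}(\infty) = \infty$, so it fails already for this trivial open induction. An absorbing fixed point of the successor cannot sit inside a model of even the weakest fragment of the induction rule.

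The fix is exactly what the paper does in Section~\ref{sec:model-theor-constr}: replace the single point $\infty$ by a full $\mathbb{Z}$-chain (the structure $M_{1}$ of Definition~\ref{def:3}), so that $s$ remains free of fixed points. In $M_{1}$ the element $0^{[1]}$ is a non-zero idempotent, and the substantial work of Sections~\ref{sec:syntatic_simplifications}--\ref{sec:model-theor-constr} (normalising $\exists_{1}$ formulas, then showing via Proposition~\ref{pro:2} that sufficiently many solutions in $\NaturalNumbers$ yield arbitrarily small solutions in $\mathbb{Z}$, transferred into the non-standard chain by Lemma~\ref{lem:30}) establishes that $M_{1}$ validates the parameter-free $\exists_{1}$ induction rule. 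Your heuristic sketch (``take $\infty$ itself as witness, exploiting absorption'') does not survive the counterexample $\psi$ above; the actual argument requires the $\mathbb{Z}$-embedding machinery. Note also that the paper strengthens the background theory to $\BLA + \axiomB{2} + \axiomB{3}$ in the clause set $\mathcal{E}_{k,n,m}$; this makes the unprovability result sharper, though for the bare conjecture your smaller background $\BLA$ would suffice once the correct model is in place.
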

In the following section we will give a characterization of refutation by a clause set cycle in terms of a logical theory. In Section~\ref{sec:unpr-clause-set} we will make use of this characterization to give a positive answer to Conjecture~\ref{con:csc_and_open_induction_incomparable}.
\subsection{Logical characterization}
\label{sec:refin-bound-theory}
In the previous section we have introduced the notion of refutation by a clause set cycle and we have shown that certain practically motivated generalizations of refutation by a clause set cycle do not result in stronger systems.
In this section we will give a characterization of refutation by a clause set cycle in terms of a logical theory.

We start by converting clause set cycles into formulas.  
\begin{lemma}
  \label{lem:7}
  Let \(\mathcal{C}(\eta)\) be an \(L\) clause set cycle, then the formula \(\neg \cnf^{-1}(\mathcal{C})[\eta/x]\) is \(\varnothing\)-inductive.
  Let \(\mathcal{D}(\eta)\) be an \(L \cup \{ \eta \}\) clause set that is refuted by the clause set cycle \(\mathcal{C}(\eta)\), then \(\neg \cnf^{-1}(\mathcal{C}) + \mathcal{D}(\eta)\) is inconsistent.
\end{lemma}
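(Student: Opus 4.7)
The plan is to verify the two inductivity conditions directly from (C1) and (C2), and then read off the second claim from (C3) by a one-line model-theoretic argument. Writing $\varphi(x) \coloneqq \cnf^{-1}(\mathcal{C})[\eta/x]$, I must establish $\vdash \neg \varphi(0)$ and $\vdash \neg \varphi(x) \rightarrow \neg \varphi(s(x))$.

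For the base case, $\neg \varphi(0)$ is logically equivalent to $\neg \cnf^{-1}(\mathcal{C}(0))$. Condition \eqref{C2} gives $\mathcal{C}(0) \models \bot$, hence $\cnf^{-1}(\mathcal{C}(0))$ is unsatisfiable, and by completeness of first-order logic $\vdash \neg \cnf^{-1}(\mathcal{C}(0))$. For the step, by contraposition it suffices to prove $\vdash \cnf^{-1}(\mathcal{C}(s(x))) \rightarrow \cnf^{-1}(\mathcal{C}(x))$. Condition \eqref{C1} states that $\mathcal{C}(s(\eta)) \models \mathcal{C}(\eta)$, which after passing through the logical equivalence between a clause set and its $\cnf^{-1}$ and applying completeness yields $\vdash \cnf^{-1}(\mathcal{C}(s(\eta))) \rightarrow \cnf^{-1}(\mathcal{C}(\eta))$. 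Since $\eta$ occurs here as a constant not constrained by any axioms, I can generalize it to a fresh variable $x$, obtaining the desired implication and hence the inductive step for $\neg\varphi$.

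For the second claim, let $\mathcal{D}(\eta)$ be refuted by $\mathcal{C}(\eta)$. Condition \eqref{RC} gives $\mathcal{D}(\eta) \models \mathcal{C}(\eta)$, so every model of $\mathcal{D}(\eta)$ satisfies $\cnf^{-1}(\mathcal{C}(\eta))$, and in particular no model can satisfy both $\mathcal{D}(\eta)$ and $\neg \cnf^{-1}(\mathcal{C}(\eta))$ simultaneously. By completeness, $\neg \cnf^{-1}(\mathcal{C}) + \mathcal{D}(\eta)$ is inconsistent.

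The only delicate step is the generalization of the constant $\eta$ to a variable in the inductive step; everything else is a routine application of completeness together with the equivalence $\mathcal{C}(t) \equiv \cnf^{-1}(\mathcal{C})[\eta/t]$. Because the lemma asserts $\varnothing$-inductivity, no arithmetical axioms enter the argument, which is precisely what makes the resulting theory a clean logical abstraction of the cycle.
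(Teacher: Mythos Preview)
Your proof is correct and follows essentially the same route as the paper's: both arguments pass from the semantic conditions \eqref{C1}, \eqref{C2}, \eqref{RC} to provability via completeness, and both rely on generalizing the constant \(\eta\) to a fresh variable (the paper leaves this step implicit, whereas you flag it explicitly). The only cosmetic difference is that the paper works directly with the negated formulas while you take the contrapositive first; the content is identical.
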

\begin{proof}
  Clearly, we have \(M \models \neg \cnf^{-1}(\mathcal{C})\) if and only if \(M \not \models \mathcal{C}\).
  Hence, \(\models \neg \cnf^{-1}(\mathcal{C}(0))\) and \(\neg \cnf^{-1}(\mathcal{C}(\eta)) \models \neg \cnf^{-1}(\mathcal{C}(s(\eta)))\).
  Therefore, by the completeness theorem and the deduction theorem for first-order logic we have
  \begin{gather*}
    \vdash \neg \cnf^{-1}(\mathcal{C}(0)), \\
    \vdash \neg \cnf^{-1}(\mathcal{C}(\eta)) \rightarrow \neg \cnf^{-1}(\mathcal{C}(s(\eta))).
  \end{gather*}
  Thus, \(\vdash \neg \cnf^{-1}(\mathcal{C})[\eta/0]\) and  \(\vdash \neg \cnf^{-1}(\mathcal{C})[\eta/x] \rightarrow \neg \cnf^{-1}(\mathcal{C})[\eta/s(x)]\).
  The second part of the lemma is obvious.
\end{proof}
Let \(\mathcal{C}\) be a clause set cycle, then the formula \(\neg \cnf^{-1}(\mathcal{C})[\eta/x]\) is the formula that corresponds to the induction argument contained in a refutation by a clause set cycle.
Clearly, this formula is logically equivalent to an \(\exists_{1}\) formula.
In the following we will make three further important observations about this argument by induction.

The first observation is that the formula \(\neg \cnf^{-1}(\mathcal{C})[\eta/x]\) has only one free variable, that is, the variable on which the argument by induction takes places. 
Hence the induction captured by clause set cycles is essentially parameter-free induction.
In this article we use a notation for parameter-free induction that is inspired by the notation used in the literature from mathematical logic on parameter-free induction \cite{adamowicz1987,kaye1988,beklemishev1997b,beklemishev1999,cordon2011}.
\begin{definition}
  \label{def:parameter_free_induction_schema}
  Let \(\Gamma\) be a set of formulas, then \(\INDParameterFree{\Gamma}\) is axiomatized by the universal closure of the formulas \(I_{x}\varphi\) for \(\varphi(x) \in \Gamma\).
\end{definition}
When the set of induction formulas is unrestricted, induction without parameters is just as powerful as induction with parameters.
\begin{lemma}
  \label{lem:16}
  Let \(L\) be a first-order language, then we have
  \[
    \IND{\mathcal{F}(L)} \equiv \INDParameterFree{\mathcal{F}(L)}.
  \]
\end{lemma}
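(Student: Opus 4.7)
The plan is to prove the two inclusions separately. The direction $\IND{\mathcal{F}(L)} \vdash \INDParameterFree{\mathcal{F}(L)}$ is trivial: every parameter-free induction axiom $I_x\psi(x)$ is the special case of a parametric induction axiom in which the parameter list is empty, so this inclusion requires no work at all.

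For the converse I would fix an arbitrary $\varphi(x, \vec{z}) \in \mathcal{F}(L)$ and aim to derive its parametric induction axiom $\Forall{\vec{z}}{I_x\varphi}$ from $\INDParameterFree{\mathcal{F}(L)}$. The key idea is to internalize the parameters inside an auxiliary induction formula by setting
\[
  \psi(x) \coloneqq \Forall{\vec{z}}{\bigl(\varphi(0, \vec{z}) \wedge \Forall{u}{(\varphi(u, \vec{z}) \rightarrow \varphi(s(u), \vec{z}))} \rightarrow \varphi(x, \vec{z})\bigr)},
\]
where $u$ is chosen fresh to avoid variable capture. By construction $\psi$ has $x$ as its only free variable, hence $I_x\psi$ is an axiom of $\INDParameterFree{\mathcal{F}(L)}$.

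The remaining task is to verify that $\psi$ is inductive using only pure first-order logic. The base $\psi(0)$ is a tautology because its consequent $\varphi(0, \vec{z})$ is already the first conjunct of its antecedent. For the step, one assumes $\psi(x)$, fixes $\vec{z}$, and grants the two induction hypotheses for $\varphi$ at $\vec{z}$; then $\psi(x)$ supplies $\varphi(x, \vec{z})$, and instantiating the universal step hypothesis at $u := x$ yields $\varphi(s(x), \vec{z})$, establishing $\psi(s(x))$. Applying $I_x\psi$ produces $\Forall{x}{\psi(x)}$, and commuting the two leading universal quantifiers and reshuffling the implication converts this directly into $\Forall{\vec{z}}{I_x\varphi}$. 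I anticipate no substantive obstacle, since this parameter-internalization trick is well known from the literature on parameter-free induction cited earlier in the excerpt; the only subtlety is the routine variable-capture bookkeeping when writing down $\psi$.
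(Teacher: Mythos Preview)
Your proposal is correct and follows essentially the same approach as the paper: both define the auxiliary formula \(\psi(x)\) by universally quantifying the parameters over the implication ``induction hypotheses for \(\varphi\) imply \(\varphi(x,\vec{z})\)'', verify in pure logic that \(\psi\) is inductive, and then unwind \(\Forall{x}{\psi(x)}\) into \(\Forall{\vec{z}}{I_x\varphi}\) by a quantifier shift. Your use of a fresh bound variable \(u\) in the step hypothesis is in fact slightly cleaner than the paper's version, which reuses \(x\).
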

\begin{proof}
  We only show \(\INDParameterFree{\mathcal{F}(L)} \vdash \IND{\mathcal{F}(L)}\), the other direction is trivial.
  Let \(\varphi(x,\vec{z})\) be an \(L\) formula, \(x\) a variable, and \(\vec{z}\) a vector of variables.
  We let the formula \(\psi(x)\) be given by \[
    \Forall{\vec{z}}{(\varphi(0,\vec{z}) \wedge \Forall{x}{(\varphi(x, \vec{z}) \rightarrow \varphi(s(x), \vec{z}))} \rightarrow \varphi(x, \vec{z}))}.
  \]
  By a straightforward quantifier shift we obtain \(\vdash \Forall{x}{\psi(x)} \leftrightarrow \Forall{\vec{z}}{I_{x}\varphi(x,\vec{z})}\).
  Furthermore, it is straightforward to check that \(\vdash \psi(0)\) and \(\vdash \psi(x) \rightarrow \psi(s(x))\).
  Hence, \(\vdash I_{x}\psi \rightarrow \Forall{x}{\psi}\).
  Therefore \(\vdash I_{x}\psi \rightarrow \Forall{\vec{z}}{I_{x}\varphi}\).
\end{proof}
However, when we are dealing with restricted induction schemes such as \(\IND{\exists_{k}(L)}\), then its parameter-free counterpart \(\INDParameterFree{\exists_{k}(L)}\) may be a weaker theory \cite{kaye1988}.

Another remarkable property of the formula \(\neg \cnf^{-1}(\mathcal{C})[\eta/x]\) is its \(\varnothing\)-inductivity.
In a refutation by a clause set cycle, there is no explicit induction axiom. Instead, whenever a clause set \(\mathcal{C}(\eta)\) is shown to be a clause set cycle, it can be used in a refutation.
This is reminiscent of a Hilbert-style induction rule that allows us to deduce \(\cnf^{-1}(\mathcal{C})[\eta/x]\) if \(\cnf^{-1}(\mathcal{C})[\eta/x]\) is \(\varnothing\)-inductive.
The idea of Hilbert-style inference rules and in particular of induction rules is made explicit in the following two definitions.
\begin{definition}
  \label{def:6}
  An inference rule \(R\) is a set of tuples of the form \(\Gamma/\gamma_{0}\) called the instances of \(R\), where \(\Gamma = \{ \gamma_{1}, \dots, \gamma_{n}\}\) is a finite set of sentences and \(\gamma_{0}\) is a sentence.
  Let \(T\) be a theory, then the theory of unnested applications \([T,R]\) of the inference rule \(R\) over the theory \(T\) is axiomatized by
  \[
    T + \{ \varphi \mid T \vdash \Gamma, \Gamma/\varphi \in R\}.
  \]
  Let \([T, R]_{0} \coloneqq T\) and \([T,R]_{n + 1} = [[T,R]_{n},R]\), then we define \(T + R \coloneqq \bigcup_{n \geq 0}[T,R]_{n}\).
\end{definition}
Let \(R\) be an inference rule and \(\Gamma/\gamma_{0} \in R\), then the intended meaning of the rule instance \(\Gamma/\gamma_{0}\) is that whenever all the sentences in \(\Gamma\) are derived, then we can derive \(\gamma_{0}\).
The instance \(\Gamma/\gamma_{0}\) will also be written as
\[
  \begin{prooftree}
    \hypo{\gamma_{1}}
    \hypo{\dots}
    \hypo{\gamma_{n}}
    \infer3{\gamma_{0}}
  \end{prooftree}
\]
\begin{definition}
  \label{def:16}
  Let \(\Gamma\) be a set of formulas, then the rule \(\RIND{\Gamma}\) consists of the instances of the form
  \begin{equation*}
    \begin{prooftree}
      \hypo{\Forall{\vec{z}}{\gamma(0,\vec{z})}}
      \hypo{\Forall{\vec{z}}{\Forall{x}{(\gamma(x,\vec{z}) \rightarrow \gamma(s(x),\vec{z}))}}}
      \infer2{\Forall{\vec{z}}{\Forall{x}{\gamma(x,\vec{z})}}}
    \end{prooftree},
  \end{equation*}
  with \(\gamma \in \Gamma\) and where the variable \(x\) is called the induction variable and the variables \(\vec{z}\) are called the induction parameters.
  The induction rule \(\RINDParameterFree{\Gamma}\) consists of these instances of \(\RIND{\Gamma}\) where the induction variable is the only free variable of the induction formula.
\end{definition}
Let \(T\) be a theory and \(\Gamma\) a set of formulas, then we can make use of Definition~\ref{def:22} to reformulate the theory \([T, \RIND{\Gamma}]\) as follows
\[
  [T, \RIND{\Gamma}] \equiv T + \{ \varphi \mid \varphi(x,\vec{z}) \in \Gamma, \text{\(\varphi\) is \(T\)-inductive in \(x\)}\}.
\]
In other words the theory \([T,\RIND{\Gamma}]\) provides induction only for \(T\)-inductive formulas from \(\Gamma\), whereas \(T + \IND{\Gamma}\) provides induction for all formulas in \(\Gamma\).
It is obvious that \(T + \IND{\Gamma} \vdash [T, \RIND{\Gamma}]\).
However, \([T, \RIND{\Gamma}]\) is in general not as strong as \(T + \IND{\Gamma}\), see~\cite{parsons1972}.
For further literature on induction rules, see for example \cite{shoenfield1958,shepherdson1963,parsons1972,beklemishev1997,jerabek2020}.

We will now make a last observation about the argument by induction contained in a refutation by a clause set cycle.
The previous observations show that clause set cycles are simulated by unnested applications of the parameter-free \(\exists_{1}\) induction rule over the theory \(\varnothing\).
A sentence derived by an induction rule is the universal closure of an inductive formula.
Hence, once a formula is derived by an induction rule it can be instantiated freely.
Similarly, a clause set cycle \(\mathcal{C}(\eta)\) acts, roughly speaking, as the lemma \(\neg \cnf^{-1}(\mathcal{C})[\eta/x]\) of which, however, only the instance \(\neg \cnf^{-1}(\mathcal{C})\) is used.
In other words, a clause set cycle allows us to derive properties of \(\eta\) only.
We will informally refer to this restriction as the instance restriction.
We can capture this restriction in the following restricted induction rule.
\begin{definition}
  \label{def:23}
  Let \(\Gamma\) be a set of formulas, then the rule \(\RIND{\Gamma}_{\eta}\) consists of the instances of the form
    \begin{equation*}
    \begin{prooftree}
      \hypo{\Forall{\vec{z}}{\gamma(0,\vec{z})}}
      \hypo{\Forall{\vec{z}}{\Forall{x}{(\gamma(x,\vec{z}) \rightarrow \gamma(s(x),\vec{z}))}}}
      \infer2{\Forall{\vec{z}}{\gamma(\eta,\vec{z})}}
    \end{prooftree}, \ \text{with \(\gamma \in \Gamma\).}
  \end{equation*}
  The rule \(\RINDParameterFreeEta{\Gamma}\) consists of those instances of \(\RIND{\Gamma}_{\eta}\) where the induction variable is the only free variable of the induction formula.
\end{definition}
By combining the above observations we obtain the following proposition, that allows us to simulate clause set cycles in a logical theory.
\begin{proposition}
  \label{pro:9}
    Let \(\mathcal{D}(\eta)\) be an \(L \cup \{ \eta \}\) clause set.
  If \(\mathcal{D}(\eta)\) is refuted by an \(L\) clause set cycle, then \([\varnothing, \RINDParameterFreeEta{\exists_{1}(L)}] \cup \mathcal{D}(\eta)\) is inconsistent.
\end{proposition}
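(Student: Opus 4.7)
The plan is to build the inconsistency directly from a witnessing clause set cycle, using Lemma~\ref{lem:7} to produce the induction premises and the rule in Definition~\ref{def:23} to discharge them. By assumption there is an \(L\) clause set cycle \(\mathcal{C}(\eta)\) with \(\mathcal{D}(\eta) \models \mathcal{C}(\eta)\). I would set \(\varphi(x) \coloneqq \neg \cnf^{-1}(\mathcal{C})[\eta/x]\). Since \(\cnf^{-1}(\mathcal{C})\) is a \(\forall_{1}\) sentence over \(L \cup \{\eta\}\) and \(\eta\) is a constant (not a variable), the only free variable of \(\varphi(x)\) is \(x\), and \(\varphi(x)\) is logically equivalent to an \(\exists_{1}(L)\) formula.

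Next, by Lemma~\ref{lem:7} the formula \(\varphi(x)\) is \(\varnothing\)-inductive, that is, \(\vdash \varphi(0)\) and \(\vdash \varphi(x) \to \varphi(s(x))\), so \(\vdash \Forall{x}{(\varphi(x) \to \varphi(s(x)))}\). To match the syntactic shape required by the rule \(\RINDParameterFreeEta{\exists_{1}(L)}\), I would pass from \(\varphi(x)\) to a logically equivalent formula \(\hat{\varphi}(x) \in \exists_{1}(L)\) (obtained from \(\neg \cnf^{-1}(\mathcal{C})[\eta/x]\) by the usual de Morgan manipulations that drive the negation inside the universal prefix); this rewriting preserves both the sole free variable and \(\varnothing\)-inductivity.

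Now the instance of \(\RINDParameterFreeEta{\exists_{1}(L)}\) with induction formula \(\hat{\varphi}(x)\) (and empty parameter list) has both premises derivable from \(\varnothing\), so by Definition~\ref{def:6} its conclusion \(\hat{\varphi}(\eta)\) lies in \([\varnothing, \RINDParameterFreeEta{\exists_{1}(L)}]\). By construction \(\hat{\varphi}(\eta)\) is logically equivalent to \(\neg \cnf^{-1}(\mathcal{C})\). On the other hand \(\mathcal{D}(\eta) \models \mathcal{C}(\eta)\), and \(\mathcal{C}(\eta)\) is logically equivalent to \(\cnf^{-1}(\mathcal{C})\), so \(\mathcal{D}(\eta) \vdash \cnf^{-1}(\mathcal{C})\) by completeness. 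Therefore \([\varnothing, \RINDParameterFreeEta{\exists_{1}(L)}] + \mathcal{D}(\eta)\) proves both \(\cnf^{-1}(\mathcal{C})\) and its negation, and is inconsistent.

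There is no real obstacle beyond careful bookkeeping: the only point that needs a line or two of justification is the reduction of \(\neg \cnf^{-1}(\mathcal{C})[\eta/x]\) to a syntactically \(\exists_{1}(L)\) formula whose only free variable is the induction variable \(x\), so that it is admissible for the parameter-free rule. Everything else is a direct unfolding of Definitions~\ref{def:6} and~\ref{def:23} together with Lemma~\ref{lem:7}.
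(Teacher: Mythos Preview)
Your proposal is correct and follows essentially the same approach as the paper: both arguments take the witnessing cycle \(\mathcal{C}(\eta)\), use Lemma~\ref{lem:7} to obtain \(\varnothing\)-inductivity of \(\neg\cnf^{-1}(\mathcal{C})[\eta/x]\), pass to a logically equivalent \(\exists_{1}(L)\) formula in order to apply the parameter-free \(\eta\)-rule, and then derive the contradiction from \(\mathcal{D}(\eta)\models\mathcal{C}(\eta)\). The only difference is cosmetic naming (the paper reserves \(\varphi\) for \(\cnf^{-1}(\mathcal{D})[\eta/x]\) and calls the induction formula \(\psi\)), and you are slightly more explicit about why the formula has \(x\) as its sole free variable.
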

\begin{proof}
  Since \(\mathcal{D}\) is refuted by a clause set cycle, there exists an \(L\) clause set cycle \(\mathcal{C}(\eta)\) such that
  \begin{equation}
    \label{eq:3}
    \mathcal{D}(\eta) \models \mathcal{C}(\eta). \tag{*}
  \end{equation}
  Let \(\varphi(x) \coloneqq \cnf^{-1}(\mathcal{D})[\eta/x]\) then \(\varphi(\eta)\) is clearly logically equivalent to \(\mathcal{D}(\eta)\).
  By the soundness of first-order logic it thus suffices to show that
  \[
    [\varnothing, \RINDParameterFree{\exists_{1}(L)}] \vdash \neg \varphi(\eta).
  \]
  Let \(\psi(x)\) be an \(\exists_{1}\) formula that is logically equivalent to \(\neg \cnf^{-1}(\mathcal{C})[\eta/x]\).
  Then, by applying the completeness theorem and the deduction theorem to \eqref{eq:3}, we obtain
  \begin{equation}
    \label{eq:7}
    \tag{\(\dagger\)}
    \vdash \varphi(\eta) \rightarrow \neg \psi(\eta).
  \end{equation}
  By Lemma \ref{lem:7} we know that \(\psi(x)\) is \(\varnothing\)-inductive, and therefore we have \([\varnothing, \RINDParameterFreeEta{\exists_{1}(L)}] \vdash \psi(\eta)\).
  Hence, by considering the contrapositive of \eqref{eq:7} we clearly obtain \([\varnothing, \RINDParameterFreeEta{\exists_{1}(L)}] \vdash \neg \varphi(\eta)\).
\end{proof}
We will now show that we even have the converse and thus obtain a characterization of refutation by a clause set cycle by a logical theory.
We start by observing that finitely many inductive formulas can be fused into a single inductive formula.
\begin{lemma}
  \label{lem:18}
  Let \(T\) be a theory and let \(\varphi_{1}(x, \vec{z})\), \dots, \(\varphi_{n}(x, \vec{z})\) be formulas.
  If \(\varphi_{i}\) is  \(T\)-inductive in \(x\) for \(i = 1, \dots, n\), then \(\psi \coloneqq \bigwedge_{i = 1, \dots n}\varphi_{i}\) is \(T\)-inductive in \(x\).
\end{lemma}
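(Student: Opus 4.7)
The plan is a direct verification from \Cref{def:22}. We must show two things: (a) $T \vdash \psi(0, \vec{z})$, and (b) $T \vdash \psi(x, \vec{z}) \rightarrow \psi(s(x), \vec{z})$.

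For (a), the $T$-inductivity of each $\varphi_i$ gives $T \vdash \varphi_i(0, \vec{z})$ for $i = 1, \dots, n$. Applying conjunction introduction $n-1$ times, we obtain $T \vdash \bigwedge_{i=1}^{n} \varphi_i(0, \vec{z})$, which is precisely $T \vdash \psi(0, \vec{z})$.

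For (b), fix $i \in \{1, \dots, n\}$. By $T$-inductivity of $\varphi_i$, we have $T \vdash \varphi_i(x, \vec{z}) \rightarrow \varphi_i(s(x), \vec{z})$. Since $\psi(x, \vec{z}) \rightarrow \varphi_i(x, \vec{z})$ is a propositional tautology, we get $T \vdash \psi(x, \vec{z}) \rightarrow \varphi_i(s(x), \vec{z})$ for every $i$. Combining these $n$ consequences yields $T \vdash \psi(x, \vec{z}) \rightarrow \bigwedge_{i=1}^{n} \varphi_i(s(x), \vec{z})$, i.e., $T \vdash \psi(x, \vec{z}) \rightarrow \psi(s(x), \vec{z})$.

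There is no real obstacle here: the lemma amounts to the observation that conjunction preserves both the base and the step condition of $T$-inductivity, using only elementary first-order reasoning. The argument extends uniformly over the free variables $\vec{z}$ since no quantifier manipulation beyond propositional combination is involved.
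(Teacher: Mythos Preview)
Your proof is correct and follows essentially the same approach as the paper: both directly verify the base case and the step case of \Cref{def:22} by elementary propositional reasoning (conjunction introduction for the base, and weakening each step hypothesis via $\psi \rightarrow \varphi_i$ for the step).
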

\begin{proof}
  We start by showing that \(T \vdash \psi(0,\vec{z})\).
  Let \(j \in \{ 1, \dots, n\}\), then since \(\varphi_{j}\) is \(T\)-inductive in \(x\), we have \(T \vdash \varphi_{j}(0, \vec{z})\) and we are done.
  Now let us show that \(T \vdash \psi(x,\vec{z}) \rightarrow \psi(s(x),\vec{z})\).
  Work in \(T\), assume \(\bigwedge_{i = 1}^{n}\varphi_{i}(x,\vec{z})\), and let \(j \in \{ 1, \dots, n\}\).
  Since \(\varphi_{j}\) is \(T\)-inductive in \(x\), we have \(\varphi_{j}(x,\vec{z}) \rightarrow \varphi_{j}(s(x),\vec{z})\).
  Hence we obtain \(\varphi_{j}(s(x),\vec{z})\) and therefore \(\psi\) is \(T\)-inductive in \(x\).
\end{proof}
This simple result is particularly interesting because fusing inductive formulas neither introduces more induction parameters and when fusing \(\exists_{k}\) induction formulas, the fused induction formula is also logically equivalent to an \(\exists_{k}\) formula.
Similar techniques exist for fusing a finite number of induction axioms into a single induction axiom \cite{wong2018,gentzen1954}.
However, these either introduce a new induction parameter or increase the quantifier complexity of the resulting induction formula.
\begin{proposition}
  \label{pro:8}
  Let \(\mathcal{D}(\eta)\) be an \(L \cup \{ \eta \}\) clause set. If \([\varnothing, \RINDParameterFreeEta{\exists_{1}(L)}] + \mathcal{D}(\eta)\) is inconsistent, then \(\mathcal{D}(\eta)\) is refuted by an \(L\) clause set cycle.
\end{proposition}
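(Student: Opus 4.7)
The plan is to reduce the inconsistency of $T + \mathcal{D}(\eta)$, where $T := [\varnothing, \RINDParameterFreeEta{\exists_{1}(L)}]$, to a single $\varnothing$-inductive $\exists_{1}$ formula $\gamma(x)$ with $\mathcal{D}(\eta) \vdash \neg \gamma(\eta)$, and then convert $\gamma$ into a clause set cycle by inverting the construction of Lemma~\ref{lem:7}.

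The crux of the argument is a collapse of the iterated rule: writing $T_{n} := [\varnothing, \RINDParameterFreeEta{\exists_{1}(L)}]_{n}$, I will show that $T_{n} \equiv T_{1}$ for every $n \geq 1$, so that $T \equiv T_{1}$. Let $\gamma(x) \in \exists_{1}(L)$ have $x$ as its only free variable and be $T_{1}$-inductive in $x$. By compactness there exist $\varnothing$-inductive formulas $\delta_{1}, \dots, \delta_{k} \in \exists_{1}(L)$ (each with only $x$ free) such that
\[
  \vdash \bigwedge_{i}\delta_{i}(\eta) \rightarrow \gamma(0)
  \quad\text{and}\quad
  \vdash \bigwedge_{i}\delta_{i}(\eta) \rightarrow (\gamma(x) \rightarrow \gamma(s(x))).
\]
Since neither $\gamma$ nor any $\delta_{i}$ contains $\eta$, and $\eta$ is constrained by no axioms, $\eta$ may be replaced by a fresh variable $w$ and universally generalized, yielding $\vdash \Exists{w}{\bigwedge_{i}\delta_{i}(w)} \rightarrow \gamma(0)$ and likewise for the step case. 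Lemma~\ref{lem:18} guarantees that $\bigwedge_{i}\delta_{i}$ is $\varnothing$-inductive, hence $\vdash \bigwedge_{i}\delta_{i}(0)$ and therefore $\vdash \Exists{w}{\bigwedge_{i}\delta_{i}(w)}$. Combining these, $\vdash \gamma(0)$ and $\vdash \gamma(x) \rightarrow \gamma(s(x))$, so $\gamma$ is itself $\varnothing$-inductive and $\gamma(\eta) \in T_{1}$. Iterating the argument gives $T_{n} \subseteq T_{1}$ for all $n$.

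Given the collapse, compactness applied to $T_{1} + \mathcal{D}(\eta) \vdash \bot$ yields finitely many $\varnothing$-inductive formulas $\gamma_{1}, \dots, \gamma_{n} \in \exists_{1}(L)$ (each with only $x$ free) satisfying $\mathcal{D}(\eta) \cup \{\gamma_{i}(\eta)\}_{i} \vdash \bot$. By Lemma~\ref{lem:18}, $\gamma := \bigwedge_{i}\gamma_{i}$ is a single $\varnothing$-inductive formula, equivalent to an $\exists_{1}(L)$ formula with only $x$ free, and $\mathcal{D}(\eta) \vdash \neg \gamma(\eta)$. Letting $\mathcal{C}(\eta)$ be the $\cnf$ clausification of the $\forall_{1}$ form of $\neg \gamma(\eta)$, every $\eta$-occurrence in $\mathcal{C}(\eta)$ arises from substituting $x \mapsto \eta$ in $\gamma$, so $\mathcal{C}(t) \equiv \cnf(\neg \gamma(t))$ for every term $t$. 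Conditions \eqref{C2} and \eqref{C1} then follow respectively from $\vdash \gamma(0)$ and the contrapositive of $\vdash \gamma(x) \rightarrow \gamma(s(x))$, while \eqref{RC} is just $\mathcal{D}(\eta) \vdash \neg \gamma(\eta)$. Hence $\mathcal{D}(\eta)$ is refuted by the $L$ clause set cycle $\mathcal{C}(\eta)$.

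The main obstacle is the collapse. Its validity depends crucially on the restriction that induction formulas belong to $\exists_{1}(L)$ rather than $\exists_{1}(L \cup \{\eta\})$: the freshness of $\eta$ in every induction formula is precisely what allows a premise $\bigwedge_{i}\delta_{i}(\eta)$ of a nested rule application to be generalized on $\eta$ and then discharged via the $\varnothing$-inductivity of $\bigwedge_{i}\delta_{i}$.
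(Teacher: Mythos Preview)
Your core argument is correct and matches the paper's proof exactly: compactness yields finitely many $\varnothing$-inductive $\exists_{1}(L)$ formulas whose $\eta$-instances refute $\mathcal{D}(\eta)$, Lemma~\ref{lem:18} fuses them into a single $\varnothing$-inductive $\gamma$, and the clausification of $\neg\gamma(\eta)$ is the desired clause set cycle.

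The collapse argument, however, is unnecessary. By Definition~\ref{def:6}, the notation $[\varnothing, R]$ already denotes the theory of \emph{unnested} applications, i.e.\ $[\varnothing, R] = [\varnothing, R]_{1}$. So your $T$ is literally $T_{1}$, and compactness can be applied to it directly---which is precisely what the paper does. You appear to have read $[\varnothing, R]$ as the full closure $\varnothing + R = \bigcup_{n} T_{n}$; that theory is denoted differently in the paper.

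That said, your collapse argument is correct and is an interesting observation in its own right: it shows that for the $\eta$-restricted rule $\RINDParameterFreeEta{\exists_{1}(L)}$, nesting buys nothing over $\varnothing$, because the freshness of $\eta$ in all induction formulas lets one discharge the premises of a nested application via the base case of the previous layer. This contrasts with the unrestricted rule $\RINDParameterFree{\exists_{1}(L)}$, where the paper later shows (Theorem~\ref{thm:separation_nesting_of_induction_rule}) that nesting depth does yield a strict hierarchy. So you have proved a bit more than required, but for the proposition at hand the detour should simply be removed.
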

\begin{proof}
  Let \(\varphi(x) \coloneqq \cnf^{-1}(\mathcal{D})[\eta/x]\), then by the completeness theorem and the deduction theorem we obtain \([\varnothing, \RINDParameterFreeEta{\exists_{1}(L)}] \vdash \neg \varphi(\eta)\).
  By the compactness theorem there exist \(\exists_{1}\) \(L\) formulas \(\psi_{1}(x)\), \dots, \(\psi_{k}(x)\) such that \(\psi_{i}\) is \(\varnothing\)-inductive for \(i = 1, \dots, k\) and
  \[
    \psi_{1}(\eta) + \dots + \psi_{k}(\eta) \vdash \neg \varphi(\eta).
  \]
  By Lemma~\ref{lem:18}, the formula \(\Psi(x) \coloneqq \bigwedge_{i = 1}^{k}\psi_{i}\) is \(\varnothing\)-inductive.
  Moreover we have \(\Psi(\eta) \vdash \neg \varphi(\eta)\).
  Clearly, \(\Psi\) is logically equivalent to an \(\exists_{1}\) formula, hence there exists a \(\forall_{1}\) formula \(\Theta\) that is logically equivalent to \(\neg \Psi\).
  Since \(\vdash \Psi(0)\) and \(\vdash \Psi(x) \rightarrow \Psi(s(x))\), we have \(\Theta(0) \models \bot\) and \(\Theta(s(x)) \models \Theta(x)\).
  Therefore, \(\mathcal{C} \coloneqq \cnf(\Theta(\eta))\) is a clause set cycle.
  Finally, since \(\Psi(\eta) \vdash \neg \varphi(\eta)\), we obtain \(\varphi(\eta) \models \neg \Psi(\eta)\), that is, \(\mathcal{D}(\eta) \models \mathcal{C}(\eta)\).
  In other words, \(\mathcal{D}\) is refuted by the clause set cycle \(\mathcal{C}\).
\end{proof}
We thus obtain a characterization of  refutation by a clause set cycle in terms of induction rules.
\begin{theorem}
  \label{thm:7}
  \label{thm:ref_csc_logical_characterization_clause_sets}
  Let \(\mathcal{D}(\eta)\) be an \(L \cup \{ \eta \}\) clause set, then \(\mathcal{D}\) is refuted by an \(L\) clause set cycle if and only if \([\varnothing,\RINDParameterFreeEta{\exists_{1}(L)}] + \mathcal{D}(\eta)\) is inconsistent.
\end{theorem}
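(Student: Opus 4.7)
The plan is to observe that this theorem is precisely the conjunction of the two propositions immediately preceding it, so the proof reduces to a pair of citations and a short packaging argument. The equivalence has a forward direction (refutability by a clause set cycle implies inconsistency of \([\varnothing,\RINDParameterFreeEta{\exists_{1}(L)}] + \mathcal{D}(\eta)\)) and a backward direction (inconsistency implies refutability by a clause set cycle), and each has been handled separately.

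For the forward direction I would simply apply Proposition~\ref{pro:9}: given any \(L\) clause set cycle \(\mathcal{C}(\eta)\) refuting \(\mathcal{D}(\eta)\), the \(\exists_{1}\)-formula \(\psi(x)\) logically equivalent to \(\neg \cnf^{-1}(\mathcal{C})[\eta/x]\) is \(\varnothing\)-inductive by Lemma~\ref{lem:7}, so a single unnested application of \(\RINDParameterFreeEta{\exists_{1}(L)}\) derives \(\psi(\eta)\), which together with the logical consequence \(\mathcal{D}(\eta) \models \neg \psi(\eta)\) yields inconsistency.

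For the backward direction I would invoke Proposition~\ref{pro:8}: if \([\varnothing,\RINDParameterFreeEta{\exists_{1}(L)}] + \mathcal{D}(\eta)\) is inconsistent, then by compactness finitely many \(\varnothing\)-inductive \(\exists_{1}\)-formulas \(\psi_{1}(x), \dots, \psi_{k}(x)\) together with \(\mathcal{D}(\eta)\) prove \(\bot\); Lemma~\ref{lem:18} fuses them into a single \(\varnothing\)-inductive \(\exists_{1}\)-formula \(\Psi(x)\), whose negation (equivalent to a \(\forall_{1}\)-formula \(\Theta(x)\)) satisfies \(\Theta(0) \models \bot\) and \(\Theta(s(x)) \models \Theta(x)\), so \(\cnf(\Theta(\eta))\) is an \(L\) clause set cycle refuting \(\mathcal{D}(\eta)\).

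There is no real obstacle here since both halves have already been proved; the only thing the proof of the theorem itself needs to do is to explicitly combine the two propositions. Accordingly, I would write the proof as the single sentence: the forward direction is Proposition~\ref{pro:9} and the backward direction is Proposition~\ref{pro:8}.
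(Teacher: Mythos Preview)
Your proposal is correct and matches the paper's own proof exactly: the paper writes simply ``An immediate consequence of Propositions~\ref{pro:9} and~\ref{pro:8},'' which is precisely the one-sentence packaging you describe. Your additional recap of the content of those two propositions is accurate but not needed in the proof of the theorem itself.
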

\begin{proof}
  An immediate consequence of Propositions \ref{pro:9} and \ref{pro:8}.
\end{proof}
\begin{remark}
  \label{rem:3}
  In a refutation by a clause set cycle the constant \(\eta\) plays essentially two roles: On the one hand, it can be thought of as a Skolem symbol and, on the other hand, it plays the role of an induction variable.
  The characterization of Theorem~\ref{thm:ref_csc_logical_characterization_clause_sets} clarifies this situation by allowing us to distinguish between induction variables and the Skolem symbol \(\eta\).
\end{remark}
As a corollary we obtain Theorem~2.10 of \cite{hetzl2020}.
\begin{theorem}[{\cite[Theorem~2.10]{hetzl2020}}]
  \label{thm:4}
  Let \(L\) be a first-order language and \(\mathcal{D}(\eta)\) an \(L \cup \{\eta\}\) clause set.
  If \(\mathcal{D}(\eta)\) is refuted by an \(L\) clause set cycle, then \(\IND{\exists_{1}(L)} + \mathcal{D}(\eta)\) is inconsistent.
\end{theorem}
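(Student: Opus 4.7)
The plan is to derive this as an immediate corollary of Proposition~\ref{pro:9}, which already establishes the inconsistency of \([\varnothing, \RINDParameterFreeEta{\exists_{1}(L)}] + \mathcal{D}(\eta)\) under the hypothesis. It therefore suffices to show that \(\IND{\exists_{1}(L)}\) proves every theorem of \([\varnothing, \RINDParameterFreeEta{\exists_{1}(L)}]\): from this subsumption, the inconsistency of \(\IND{\exists_{1}(L)} + \mathcal{D}(\eta)\) follows immediately.

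To establish the subsumption I would proceed by induction on \(n\), showing that each axiom of the stage \([\varnothing, \RINDParameterFreeEta{\exists_{1}(L)}]_{n}\) (in the sense of Definition~\ref{def:6}) is provable in \(\IND{\exists_{1}(L)}\). The base case \(n = 0\) is trivial, since the theory is empty. For the inductive step, a new axiom introduced at stage \(n+1\) has the form \(\gamma(\eta)\), where \(\gamma(x) \in \exists_{1}(L)\) has \(x\) as its only free variable and the two hypotheses \(\gamma(0)\) and \(\Forall{x}{(\gamma(x) \rightarrow \gamma(s(x)))}\) of the rule instance are provable in \([\varnothing, \RINDParameterFreeEta{\exists_{1}(L)}]_{n}\). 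By the inductive hypothesis both of these hypotheses are already provable in \(\IND{\exists_{1}(L)}\), so \(\gamma\) is \(\IND{\exists_{1}(L)}\)-inductive in \(x\). Since \(\IND{\exists_{1}(L)}\) contains the axiom \(I_{x}\gamma\), we obtain \(\Forall{x}{\gamma(x)}\) and then instantiate \(x\) by \(\eta\) to conclude \(\gamma(\eta)\). Taking the union over all \(n\) gives \(\IND{\exists_{1}(L)} \vdash [\varnothing, \RINDParameterFreeEta{\exists_{1}(L)}]\), as required.

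I anticipate no real obstacle: the substantive content of the theorem has already been packaged into Proposition~\ref{pro:9}, and what remains is the routine observation that the full \(\exists_{1}\) induction schema dominates the parameter-free \(\exists_{1}\) induction rule specialized to \(\eta\). The only small subtlety is the passage from an axiom \(\gamma(\eta)\) produced by the rule to a consequence of \(I_{x}\gamma\), which is handled by deriving the universal closure \(\Forall{x}{\gamma(x)}\) first and then instantiating.
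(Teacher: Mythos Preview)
Your proposal is correct and follows the same route as the paper, which simply notes that \(\IND{\exists_{1}(L)} \vdash [\varnothing, \RINDParameterFreeEta{\exists_{1}(L)}]\) and invokes Proposition~\ref{pro:9} (equivalently, the forward direction of Theorem~\ref{thm:ref_csc_logical_characterization_clause_sets}). One small remark: the notation \([\varnothing, \RINDParameterFreeEta{\exists_{1}(L)}]\) in Proposition~\ref{pro:9} denotes only the theory of \emph{unnested} applications (Definition~\ref{def:6}), so your induction on the nesting depth \(n\) is more than is needed---the single step \(n=1\) already suffices---but this does no harm.
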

\begin{proof}
  Obvious, since \(\IND{\exists_{1}(L)} \vdash [\varnothing, \RINDParameterFreeEta{\exists_{1}(L)}]\).
\end{proof}
In the following section we will make use of the characterization of Theorem~\ref{thm:ref_csc_logical_characterization_clause_sets} to construct clause sets that are refutable by open induction but which are not refutable by clause set cycles.
In particular the unrefutability results that we provide exploit different logical features of clause set cycles.
\subsection{Unprovability by clause set cycles}
\label{sec:unpr-clause-set}
In the previous sections we have introduced the notion of refutation by a clause set cycle for which we have shown a characterization in terms of a logical theory.
We have shown this characterization by discerning four main logical features of refutation by a clause set cycle: the quantifier-complexity, the absence of induction parameters, the similarity with induction rules, and the restriction on instances of derived formulas.
In this section we will make use of this characterization in order to provide practically relevant clause sets that are not refutable by clause set cycles, but that are refutable by induction on quantifier-free formulas.
The unrefutability results in this section will exploit different logical features of clause set cycles.
In particular we will show that restricting the instances of the conclusion of the induction rule can be very drastic.

Let us now briefly discuss the practical applicability of the unprovability results given in this section.
The unprovability results apply to any sound (for first-order logic) saturation prover that detects clause set cycles over the language of the initial clause set.
Hence, our unprovability results apply, in particular, to all sound saturation provers that do not extend the language of the initial clause set and detect cycles among the derived clauses such as for example the n-clause calculus (see \cite{kersani2013,kersani2014}).
On the other hand systems that extend the language are also of practical importance, since such extensions can be used to organize the refutation process, see for example \cite{voronkov2014}.
In particular, the extension of the language by definitions can be expected to have interesting effects.
However, investigating the interaction between clause set cycles and various language extending mechanisms would go beyond the scope of this article and should be investigated separately.
Observe, furthermore, that our setting does not rule out the presence of Skolem symbols other than \(\eta\) in clause set cycles.

We start by slightly reformulating Theorem~\ref{thm:ref_csc_logical_characterization_clause_sets} so that we can work with formulas and theories instead of clause sets.
\begin{corollary}
  \label{cor:1}
  Let \(L\) be a first-order language, \(T\) a \(\forall_{1}\) \(L\) theory, and \(\varphi(x,\vec{y})\) a quantifier-free \(L\) formula, then \(T + [\varnothing, \RINDParameterFreeEta{\exists_{1}(L)}] \vdash \Exists{\vec{y}}{\varphi(\eta, \vec{y})}\) if and only if \(\cnf(T + \Forall{\vec{y}}{\neg \varphi(\eta, \vec{y})})\) is refuted by an \(L\) clause set cycle.
\end{corollary}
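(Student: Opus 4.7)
The plan is to directly apply Theorem~\ref{thm:ref_csc_logical_characterization_clause_sets} to the specific clause set
\[
  \mathcal{D}(\eta) \coloneqq \cnf(T + \Forall{\vec{y}}{\neg \varphi(\eta,\vec{y})}),
\]
and then unfold the resulting inconsistency statement into the desired provability statement via standard logical maneuvers.

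First, I would check that Theorem~\ref{thm:ref_csc_logical_characterization_clause_sets} is applicable. Since \(T\) is a \(\forall_{1}\) \(L\) theory and \(\Forall{\vec{y}}{\neg \varphi(\eta, \vec{y})}\) is a \(\forall_{1}\) \(L \cup \{\eta\}\) sentence, their union is a set of \(\forall_{1}\) sentences over \(L \cup \{\eta\}\), so \(\mathcal{D}(\eta)\) is a bona fide \(L \cup \{\eta\}\) clause set. Moreover, by Definition~\ref{def:20}, \(\mathcal{D}(\eta)\) is logically equivalent to \(T \cup \{ \Forall{\vec{y}}{\neg \varphi(\eta,\vec{y})}\}\).

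Next, I would apply Theorem~\ref{thm:ref_csc_logical_characterization_clause_sets} to \(\mathcal{D}(\eta)\), which yields
\[
  \mathcal{D}(\eta) \text{ is refuted by an } L \text{ clause set cycle} \iff [\varnothing, \RINDParameterFreeEta{\exists_{1}(L)}] + \mathcal{D}(\eta) \text{ is inconsistent}.
\]
Using the logical equivalence established in the previous step, the right-hand side can be rewritten as
\[
  T + [\varnothing, \RINDParameterFreeEta{\exists_{1}(L)}] + \Forall{\vec{y}}{\neg \varphi(\eta, \vec{y})} \text{ is inconsistent}.
\]

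Finally, since \(\eta\) is a constant symbol and \(\Forall{\vec{y}}{\neg \varphi(\eta,\vec{y})}\) is a sentence, I would invoke the deduction theorem for first-order logic to turn the above inconsistency into the provability statement
\[
  T + [\varnothing, \RINDParameterFreeEta{\exists_{1}(L)}] \vdash \neg \Forall{\vec{y}}{\neg \varphi(\eta, \vec{y})},
\]
which is propositionally equivalent to \(T + [\varnothing, \RINDParameterFreeEta{\exists_{1}(L)}] \vdash \Exists{\vec{y}}{\varphi(\eta, \vec{y})}\). Since every step in the chain is an equivalence, both directions of the corollary follow simultaneously.

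There is no real obstacle here: the corollary is a cosmetic repackaging of Theorem~\ref{thm:ref_csc_logical_characterization_clause_sets} that trades the contradictory clause set \(\mathcal{D}(\eta)\) for the pair consisting of a \(\forall_{1}\) background theory \(T\) together with an \(\exists_{1}\) goal formula \(\Exists{\vec{y}}{\varphi(\eta,\vec{y})}\). The only point where one needs to be mildly careful is the application of the deduction theorem, where the fact that \(\eta\) is a constant (and not a free variable) ensures that no universal closure over \(\eta\) is inadvertently introduced when moving the negated goal across the turnstile.
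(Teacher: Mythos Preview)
Your proposal is correct and follows essentially the same approach as the paper: both arguments apply Theorem~\ref{thm:ref_csc_logical_characterization_clause_sets} to the clause set \(\cnf(T + \Forall{\vec{y}}{\neg \varphi(\eta,\vec{y})})\) and then convert between the inconsistency statement and the provability statement via the deduction theorem and the logical equivalence guaranteed by \(\cnf\). The paper's proof is terser and reverses the order of the two steps, but the chain of equivalences is identical.
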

\begin{proof}
  Clearly, \(T + [\varnothing, \RINDParameterFreeEta{\exists_{1}(L)}] \vdash \Exists{\vec{y}}{\varphi(\eta, \vec{y})}\) if and only if \([\varnothing, \RINDParameterFreeEta{\exists_{1}(L)}] + \cnf(T + \Forall{\vec{y}}{\neg \varphi(\eta, \vec{y})})\) is inconsistent.
  By Theorem~\ref{thm:ref_csc_logical_characterization_clause_sets}, \([\varnothing, \RINDParameterFreeEta{\exists_{1}(L)}] + \cnf(T + \Forall{\vec{y}}{\neg \varphi(\eta, \vec{y})})\) is inconsistent if and only if \(\cnf(T + \Forall{\vec{y}}{\neg \varphi(\eta, \vec{y})})\) is refuted by an \(L\) clause set cycle.
\end{proof}
In Section~\ref{sec:definition-bound} we have informally observed that clause set cycles do not take place in some explicit background theory but instead clause set cycles contain the clauses corresponding to the background theory.
In the following we will make this informal observation more precise.
\begin{lemma}
  \label{lem:10}
  Let \(L\) be a first-order language, \(T\) a \(\forall_{1}\) \(L\) theory, \(U\) an \(L\) theory, then
  \[
    T + [U, \RINDParameterFreeEta{\exists_{1}(L)}] \equiv [T + U, \RINDParameterFreeEta{\exists_{1}(L)}].
  \]
  Furthermore, \(T + [U, \RINDParameterFree{\exists_{1}(L)}] \equiv [T + U, \RINDParameterFree{\exists_{1}(L)}]\).
\end{lemma}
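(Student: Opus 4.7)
The inclusion $[T+U,\RINDParameterFreeEta{\exists_{1}(L)}] \vdash T + [U,\RINDParameterFreeEta{\exists_{1}(L)}]$ is routine: $T$ and $U$ both lie in $[T+U,R]$, and every rule instance permitted in the right-hand theory uses premises derivable from $U$, hence a fortiori from $T+U$, so it is already among the axioms of $[T+U,R]$.

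For the converse $T + [U,R] \vdash [T+U,R]$, the nontrivial task is to show that every rule instance whose premises are derivable from $T+U$ is derivable from $T + [U,R]$. So let $\gamma(x)$ be an $\exists_{1}(L)$ formula with $x$ as the only free variable, and suppose $T+U \vdash \gamma(0)$ and $T+U \vdash \gamma(x) \to \gamma(s(x))$. I would use the fact that $T$ is $\forall_{1}$ to absorb the needed axioms of $T$ into the induction formula itself. By compactness there are finitely many axioms of $T$ used, and since $T$ is $\forall_{1}$ their conjunction can be written as a single sentence $\Forall{\vec{w}}{\tau(\vec{w})}$ with $\tau$ quantifier-free. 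Writing $\gamma(x) = \Exists{\vec{y}}{\delta(x,\vec{y})}$ with $\delta$ quantifier-free, define
\[
  \gamma^{*}(x) \;\coloneqq\; \Exists{\vec{w}}{\Exists{\vec{y}}{(\tau(\vec{w}) \to \delta(x,\vec{y}))}},
\]
which is classically equivalent to $(\Forall{\vec{w}}{\tau(\vec{w})}) \to \gamma(x)$, is syntactically $\exists_{1}$, and has only $x$ free.

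Then $\gamma^{*}$ is $U$-inductive: by the deduction theorem the two hypotheses on $\gamma$ over $T+U$ yield $U \vdash (\Forall{\vec{w}}{\tau(\vec{w})}) \to \gamma(0)$ and $U \vdash (\Forall{\vec{w}}{\tau(\vec{w})}) \to (\gamma(x) \to \gamma(s(x)))$, which are precisely $U \vdash \gamma^{*}(0)$ and $U \vdash \gamma^{*}(x) \to \gamma^{*}(s(x))$ up to the equivalence above. Hence the instance of $\RINDParameterFreeEta{\exists_{1}(L)}$ with induction formula $\gamma^{*}$ is applicable over $U$, so $[U,R] \vdash \gamma^{*}(\eta)$. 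Since $T \vdash \Forall{\vec{w}}{\tau(\vec{w})}$, combining these in $T + [U,R]$ yields $\gamma(\eta)$, as required. The same construction works verbatim for $\RINDParameterFree{\exists_{1}(L)}$: one derives $[U,R] \vdash \Forall{x}{\gamma^{*}(x)}$ and then obtains $\Forall{x}{\gamma(x)}$ in $T+[U,R]$ using $\Forall{\vec{w}}{\tau(\vec{w})}$.

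The main point that requires care is making sure the fused formula $\gamma^{*}$ is genuinely $\exists_{1}$ and still has no parameters; both properties rely essentially on $T$ being $\forall_{1}$, so that negating its axioms produces purely existential material that can be absorbed into the prefix of $\gamma$ without either introducing a universal quantifier or adding free variables. Everything else is bookkeeping with the deduction theorem.
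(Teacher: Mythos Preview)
Your proof is correct and follows essentially the same approach as the paper: absorb finitely many $\forall_{1}$ axioms of $T$ into the induction formula via compactness and the deduction theorem, so that the resulting formula is still $\exists_{1}$ with $x$ as its only free variable and is $U$-inductive. The paper writes this more tersely, simply noting that $\tau \to \gamma(x)$ is logically equivalent to an $\exists_{1}$ formula $\gamma'(x)$, whereas you make the prenexing explicit by exhibiting $\gamma^{*}(x) = \Exists{\vec{w}}{\Exists{\vec{y}}{(\tau(\vec{w}) \to \delta(x,\vec{y}))}}$; the content is the same.
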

\begin{proof}
  The direction \([T + U, \RINDParameterFreeEta{\exists_{1}(L)}] \vdash T + [U, \RINDParameterFreeEta{\exists_{1}(L)}]\) is immediate.
  For the other direction let \(\gamma(x)\) be an \(\exists_{1}\) \(L\) formula and assume that \(T + U \vdash \gamma(0)\) and \(T + U \vdash \gamma(x) \rightarrow \gamma(s(x))\).
  By the compactness theorem and the deduction theorem there exist \(\tau, \tau_{1}, \dots, \tau_{n} \in T\) such that \(\tau = \bigwedge_{i = 1}^{n}\tau_{i}\) and \(U \vdash \tau \rightarrow \gamma(0)\) and \(U \vdash \tau \rightarrow \gamma(x) \rightarrow \gamma(s(x))\).
  By straightforward propositional equivalences we obtain
  \[
    U \vdash \left(\tau \rightarrow \gamma(x)\right) \rightarrow \left(\tau \rightarrow \gamma(s(x))\right).
  \]
  Clearly, \(\tau\) is logically equivalent to a \(\forall_{1}\) sentence, hence \(\tau \rightarrow \gamma(x)\) is logically equivalent to an \(\exists_{1}\) formula \(\gamma'(x)\).
  Hence, \([U, \RINDParameterFreeEta{\exists_{1}(L)}] \vdash \gamma'(\eta)\) and therefore \([U, \RINDParameterFreeEta{\exists_{1}(L)}] \vdash \tau \rightarrow \gamma(\eta)\).
  Thus, \(T + [U, \RINDParameterFreeEta{\exists_{1}(L)}] \vdash \gamma(\eta)\).
  We show \(T + [U, \RINDParameterFree{\exists_{1}(L)}] \equiv [T + U, \RINDParameterFree{\exists_{1}(L)}]\) analogously, with the exception that in the last part of the argument we have to shift the universal quantifier in \(\Forall{x}{(\tau \rightarrow \gamma(x))}\) inwards.
\end{proof}
Lemma~\ref{lem:10} allows us to move \(\forall_{1}\) axioms in and out of the induction rule and thus to consider the \(\eta\)-free clauses of a clause set cycle as the background theory.
As an immediate consequence of Corollary~\ref{cor:1} and Lemma~\ref{lem:10} we now obtain a general pattern to reduce unrefutability problems for clause set cycles to independence problems.
\begin{proposition}
  \label{pro:10}
  Let \(L\) be a first-order language, \(T\) a \(\forall_{1}\) \(L\) theory, and let \(\varphi(x, \vec{y})\) be a quantifier-free \(L\) formula.
  Then \([T, \RINDParameterFreeEta{\exists_{1}(L)}] \vdash \Exists{\vec{y}}{\varphi(\eta, \vec{y})}\) if and only if the clause set \(\cnf(T + \Forall{\vec{y}}{\neg \varphi(\eta, \vec{y})})\) is refuted by an \(L\) clause set cycle.
\end{proposition}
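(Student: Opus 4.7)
The plan is to obtain the result as a direct combination of Lemma~\ref{lem:10} and Corollary~\ref{cor:1}, with no new ideas required. The key observation is that Lemma~\ref{lem:10} is stated for an arbitrary \(L\) theory \(U\), so it may be instantiated with \(U = \varnothing\), which exactly bridges the gap between the two halves of the intended biconditional.

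Concretely, I would first invoke Lemma~\ref{lem:10} with \(U \coloneqq \varnothing\) to deduce the equivalence
\[
  [T, \RINDParameterFreeEta{\exists_{1}(L)}] \equiv T + [\varnothing, \RINDParameterFreeEta{\exists_{1}(L)}].
\]
This rewrites the hypothesis \([T, \RINDParameterFreeEta{\exists_{1}(L)}] \vdash \Exists{\vec{y}}{\varphi(\eta,\vec{y})}\) as the equivalent statement \(T + [\varnothing, \RINDParameterFreeEta{\exists_{1}(L)}] \vdash \Exists{\vec{y}}{\varphi(\eta,\vec{y})}\). Next, I would apply Corollary~\ref{cor:1}, which asserts precisely that this provability holds if and only if \(\cnf(T + \Forall{\vec{y}}{\neg\varphi(\eta,\vec{y})})\) is refuted by an \(L\) clause set cycle. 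Chaining the two equivalences yields the claimed biconditional.

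Since both Lemma~\ref{lem:10} and Corollary~\ref{cor:1} have already been established, there is no genuine obstacle here; the only thing to note is that \(T\) is a \(\forall_{1}\) \(L\) theory, so the hypotheses of both Lemma~\ref{lem:10} (which requires \(T\) to be \(\forall_{1}\) in order to move its axioms across the induction rule) and Corollary~\ref{cor:1} are satisfied, and the instantiation \(U = \varnothing\) is admissible.
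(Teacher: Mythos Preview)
Your proposal is correct and matches the paper's own proof essentially line for line: the paper also applies Lemma~\ref{lem:10} with \(U = \varnothing\) to pass from \([T, \RINDParameterFreeEta{\exists_{1}(L)}]\) to \(T + [\varnothing, \RINDParameterFreeEta{\exists_{1}(L)}]\), and then invokes Corollary~\ref{cor:1} to obtain the refutation by a clause set cycle. The only cosmetic difference is that the paper inserts an explicit intermediate inconsistency step, which is already absorbed into the statement of Corollary~\ref{cor:1}.
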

\begin{proof}
  We have the following chain of equivalences:
  \begin{align*}
    [T,&\RINDParameterFreeEta{\exists_{1}(L)}] \vdash \Exists{\vec{y}}{\varphi(\eta, \vec{y})}, \\
    & \Leftrightarrow_{\text{Lem.~\ref{lem:10}}} T + [\varnothing, \RINDParameterFreeEta{\exists_{1}(L)}] \vdash \Exists{\vec{y}}{\varphi(\eta, \vec{y})} \\
    & \Leftrightarrow \text{\([\varnothing, \RINDParameterFreeEta{\exists_{1}(L)}] \cup \cnf(T + \Forall{\vec{y}}{\neg \varphi(\eta, \vec{y})})\) is inconsistent} \\
    & \Leftrightarrow_{\text{Cor.~\ref{cor:1}}} \text{\(\cnf(T + \Forall{\vec{y}}{\neg \varphi(\eta, \vec{y})})\) is refuted by an \(L\) clause set cycle.} \qedhere
  \end{align*}
\end{proof}
We can now consider some theories and formulas that will yield clause sets that are unrefutable by clause set cycles.
By the characterization of clause set cycles by a logical theory we have discerned several restrictions of the induction principle that corresponds to clause set cycles.
In the following two subsections we will formulate unprovability results that attack different restrictions of the induction principle that is contained in the notion of refutation by a clause set cycle.

\subsubsection{Instance restriction}
\label{sec:unrefutability_instance_restriction}
In Section~\ref{sec:refin-bound-theory} we have observed that a refutation by a clause set cycle only permits a single instance of a clause set cycle to appear in a refutation.
In this section we will formulate an unprovability result for clause set cycles that exploits this restriction.
In particular, we will base this unprovability result on a stronger independence result that shows how drastic the instance restriction is.
\begin{definition}
  \label{def:8}
  Let \(f/1\) be a function symbol and \(P/1\) be a predicate symbol.
  The theory \(\mathcal{P}\) is axiomatized by the universal closure of the following formulas
  \begin{gather*}
    0 \neq s(x), \\
    s(x) = s(y) \rightarrow x = y, \\
    P(0), \\
    P(x) \rightarrow P(s(x)).
  \end{gather*}
\end{definition}
\begin{definition}
  \label{def:25}
  Let \(\varphi(x, \vec{z})\) be a formula, then \(I_{x}^{\eta}\varphi\) denotes the formula
  \[
    \varphi(0, \vec{z}) \wedge \Forall{x}{(\varphi(x,\vec{z}) \rightarrow \varphi(s(x),\vec{z}))} \rightarrow \varphi(\eta, \vec{z}).
  \]
  Let \(\Gamma\) be a set of formulas, then the theory \(\IND{\Gamma}_{\eta}\) is axiomatized by the universal closure of the formulas \(I_{x}^{\eta}\gamma\) with \(\gamma \in \Gamma\).
\end{definition}
We have the following independence.
\begin{proposition}
  \label{pro:12}
  \(\mathcal{P} + \IND{\mathcal{F}(\{0, s, P, f\})}_{\eta} \not \vdash P(f(\eta))\).
\end{proposition}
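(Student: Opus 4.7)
My plan is to exhibit a concrete countermodel \(M\) of \(\mathcal{P} + \IND{\mathcal{F}(\{0,s,P,f\})}_{\eta}\) in which \(P(f(\eta))\) fails; by soundness of first-order logic this yields the desired non-derivability.

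I would take \(M\) to have domain \(\mathbb{N} \cup \{\star\}\) for a single fresh element \(\star \notin \mathbb{N}\), and interpret the symbols as follows: \(0\) and \(s\) as their standard counterparts on \(\mathbb{N}\), extended by \(s^{M}(\star) := \star\); \(P^{M} := \mathbb{N}\); \(f^{M}\) as the constant function with value \(\star\); and \(\eta^{M} := 0\). The axioms of \(\mathcal{P}\) then hold essentially by inspection: the only one that requires thought is \(P(x) \to P(s(x))\), which holds on \(\mathbb{N}\) because \(P^{M} = \mathbb{N}\) is closed under successor, and it holds vacuously at \(\star\) since \(P^{M}(\star)\) is false.

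The main step, and the one that makes the instance restriction visible, is the verification of the induction axioms. For any formula \(\varphi(x, \vec{z}) \in \mathcal{F}(\{0,s,P,f\})\) and any tuple \(\vec{a}\) of parameters from the domain, set \(A_{\vec{a}} := \{m \in M : M \models \varphi(m, \vec{a})\}\). If \(A_{\vec{a}}\) contains \(0\) and is closed under \(s^{M}\), a routine meta-level induction yields \(\mathbb{N} \subseteq A_{\vec{a}}\), and in particular \(\eta^{M} = 0 \in A_{\vec{a}}\); hence \(M \models I_{x}^{\eta}\varphi\) under every parameter assignment. This uniform argument succeeds precisely because induction only needs to be discharged at the single element \(\eta^{M}\), which we have chosen to be standard; the set \(A_{\vec{a}}\) is free to omit \(\star\), and nothing in the induction schema forces it to include \(\star\).

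Finally, \(P(f(\eta))^{M}\) evaluates to \(P^{M}(\star)\), which is false, so \(M \not\models P(f(\eta))\). I do not anticipate any significant technical obstacle: the conceptually interesting point is that interpreting \(\eta\) as a standard element makes every instance-restricted induction axiom trivially satisfied while leaving \(f(\eta)\) completely unconstrained, which is exactly what permits \(P\) to fail at \(f(\eta)\) and illustrates how drastic the instance restriction is.
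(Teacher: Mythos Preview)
Your proof is correct and uses essentially the same idea as the paper: interpret \(\eta\) as \(0\) so that every instance-restricted induction axiom \(I_{x}^{\eta}\varphi\) is satisfied trivially (the conclusion \(\varphi(\eta,\vec{c})\) coincides with the base hypothesis \(\varphi(0,\vec{c})\)), while sending \(f(\eta)\) into a non-standard region where \(P\) fails. Your countermodel is in fact a bit more economical than the paper's---a single fixed point \(\star\) of \(s\) in place of a full \(\mathbb{Z}\)-chain---and your meta-level induction showing \(\mathbb{N}\subseteq A_{\vec{a}}\) is unnecessary (you only need \(0\in A_{\vec{a}}\), which is already the base hypothesis), but none of this affects correctness.
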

\begin{proof}
  Let \(M\) be the \(\{0, s, P, f\}\) structure with domain consisting of pairs \((m,n) \in \{ 0 , 1 \} \times \mathbb{Z}\) such that if \(m = 0\), then \(n \in \mathbb{N}\).
  Let \(M\) interpret the non-logical symbols as follows
  \begin{gather*}
    0^{M} = \eta^{M} = (0,0), \\
    s^{M}((m,n)) = (m, n+1), \\
    f^{M}((m,n)) = (1, n), \\
    P^{M} = \{ (0,n) \mid n \in \mathbb{N} \}.
  \end{gather*}
  It is clear that \(M\) is a \(\{0, s, P, f\}\) structure and moreover it is straightforward to verify that \(M\) is a model of \(\mathcal{P}\).
  Now let us show that \(M \models \IND{\mathcal{F}(\{0, s, P, f\})}_{\eta}\).
  Let \(\psi(x, \vec{z})\) be a \(\{0, s, P, f\}\) formula, \(\vec{c}\) a vector of elements of \(M\).
  Assume that \(M \models \psi(0,\vec{c})\) and \(M \models \psi(x, \vec{c}) \rightarrow \psi(s(x), \vec{c})\).
  Since \(\eta^{M} = 0^{M}\), we already have \(M \models \psi(\eta, \vec{c})\) and therefore \(M \models I_{x}^{\eta}\psi(x, \vec{z})\).
  Finally, observe that \(f^{M}(\eta^{M}) = (1,0) \notin P^{M}\), hence \(\mathcal{P} + \IND{\mathcal{F}(\{0, s, P, f\})}_{\eta} \not \vdash P(f(\eta))\).
\end{proof}
The above independence result is remarkable in the sense that it imposes no restriction whatsoever on the induction formulas, only the conclusion of the induction axioms is restricted.
Hence the result shows that this restriction is extremely strong.
As a corollary we obtain the following unrefutability result for clause set cycles.
\begin{corollary}
  \label{cor:3}
  The \(\{0, s, P, f, \eta\}\) clause set \(\cnf(\mathcal{P} + \neg P(f(\eta)))\) is not refuted by a \(\{0, s, P, f\}\) clause set cycle.
\end{corollary}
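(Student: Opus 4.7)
The plan is to invoke Proposition~\ref{pro:10} to convert the unrefutability claim into an independence problem about an induction rule, and then to dominate that rule by the (much stronger) axiomatic theory for which independence has just been established in Proposition~\ref{pro:12}.

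Concretely, I would first apply Proposition~\ref{pro:10} with \(L = \{0, s, P, f\}\), with \(T \coloneqq \mathcal{P}\) (which is \(\forall_{1}\)), and with the quantifier-free formula \(\varphi(x) \coloneqq P(f(x))\) (the vector \(\vec{y}\) being empty). The proposition then reduces the claim to showing
\[
  [\mathcal{P}, \RINDParameterFreeEta{\exists_{1}(L)}] \not\vdash P(f(\eta)).
\]

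Next, I would argue that this theory is contained in \(\mathcal{P} + \IND{\mathcal{F}(L)}_{\eta}\). The key observation is that \(\mathcal{P} + \IND{\mathcal{F}(L)}_{\eta}\) is already closed under the inference rule \(\RINDParameterFreeEta{\exists_{1}(L)}\): whenever \(\psi(x)\) is a parameter-free \(\exists_{1}\) formula with \(\mathcal{P} + \IND{\mathcal{F}(L)}_{\eta} \vdash \psi(0)\) and \(\mathcal{P} + \IND{\mathcal{F}(L)}_{\eta} \vdash \psi(x)\rightarrow\psi(s(x))\), then from the corresponding instance \(I^{\eta}_{x}\psi\) of \(\IND{\mathcal{F}(L)}_{\eta}\) we deduce \(\psi(\eta)\). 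Hence, by induction on the number of unnested-rule iterations in the definition of \([\cdot,\cdot]\), every sentence provable from \([\mathcal{P}, \RINDParameterFreeEta{\exists_{1}(L)}]\) is already provable from \(\mathcal{P} + \IND{\mathcal{F}(L)}_{\eta}\).

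Finally, Proposition~\ref{pro:12} provides \(\mathcal{P} + \IND{\mathcal{F}(L)}_{\eta} \not\vdash P(f(\eta))\), which combined with the previous inclusion yields \([\mathcal{P}, \RINDParameterFreeEta{\exists_{1}(L)}] \not\vdash P(f(\eta))\). Invoking Proposition~\ref{pro:10} in the contrapositive direction then gives the desired unrefutability of \(\cnf(\mathcal{P} + \neg P(f(\eta)))\) by any \(\{0,s,P,f\}\)-clause set cycle. The only point requiring a moment's care is the passage from the rule-based theory to the axiomatic one, and that is immediate from the definition of \([T, R]\) together with the fact that \(\IND{\mathcal{F}(L)}_{\eta}\) contains the relevant instances outright, with no restriction on the induction formula.
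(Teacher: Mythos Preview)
Your proposal is correct and follows essentially the same approach as the paper: reduce via Proposition~\ref{pro:10} to an unprovability in \([\mathcal{P},\RINDParameterFreeEta{\exists_{1}(L)}]\), dominate that theory by \(\mathcal{P}+\IND{\mathcal{F}(L)}_{\eta}\), and invoke Proposition~\ref{pro:12}. One minor remark: since \([T,R]\) is by definition only a \emph{single} layer of unnested rule applications (Definition~\ref{def:6}), your induction on the number of iterations is unnecessary---the containment \(\mathcal{P}+\IND{\mathcal{F}(L)}_{\eta}\vdash[\mathcal{P},\RINDParameterFreeEta{\exists_{1}(L)}]\) follows directly from the fact that every axiom of the latter is either in \(\mathcal{P}\) or is \(\psi(\eta)\) for some \(\mathcal{P}\)-inductive \(\psi\), and the instance \(I_{x}^{\eta}\psi\) yields \(\psi(\eta)\) immediately.
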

\begin{proof}
  Suppose that \(\cnf(\mathcal{P} + \neg P(f(\eta)))\) is refuted by a \(\{0, s, P, f\}\) clause set cycle.
  Then, by Proposition~\ref{pro:10} we have \([\mathcal{P}, \RINDParameterFreeEta{\exists_{1}(\{0, s, P, f\})}] \vdash P(f(\eta))\).
  However, since \(\mathcal{P} + \IND{\mathcal{F}(\{0, s, P, f\})}_{\eta} \vdash [\mathcal{P}, \RINDParameterFreeEta{\exists_{1}(\{0, s, P, f\})}]\), this contradicts Proposition~\ref{pro:12}.
\end{proof}
\begin{lemma}
  \label{lem:6}
  \([\mathcal{P},\RINDParameterFree{\Open(\{0, s, P, f\})}] \vdash P(f(\eta))\).
\end{lemma}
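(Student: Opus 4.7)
The plan is to exhibit a single application of the parameter-free open induction rule with the obvious induction formula $P(x)$, which is quantifier-free and has $x$ as its only free variable. First I would check that $P(x)$ is $\mathcal{P}$-inductive in $x$: the base case $\mathcal{P} \vdash P(0)$ and the step case $\mathcal{P} \vdash P(x) \rightarrow P(s(x))$ are literally two of the axioms of $\mathcal{P}$ (Definition~\ref{def:8}), so no work is required.

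Then I would apply the rule $\RINDParameterFree{\Open(\{0,s,P,f\})}$ with induction formula $P(x)$ to obtain
\[
  [\mathcal{P}, \RINDParameterFree{\Open(\{0,s,P,f\})}] \vdash \Forall{x}{P(x)},
\]
which is legitimate because $x$ is the only free variable of $P(x)$. A first-order universal instantiation with the term $f(\eta)$ then yields $P(f(\eta))$, as required.

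There is no real obstacle here; the point of the lemma is to contrast with Corollary~\ref{cor:3} and highlight how drastic the instance restriction (the subscript $\eta$ on the induction rule) is. Exactly the same induction formula $P(x)$ is used in both settings: under $\RINDParameterFree{\Open}$ we obtain the universal conclusion $\Forall{x}{P(x)}$ and may substitute $f(\eta)$, whereas under $\RINDParameterFreeEta{\Open}$ we would only obtain $P(\eta)$ and could not reach $P(f(\eta))$.
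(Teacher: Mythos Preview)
Your proposal is correct and matches the paper's approach exactly: the paper's entire proof is the single sentence ``The formula \(P(x)\) is inductive in \(\mathcal{P}\),'' which is precisely what you spell out in more detail (base and step are axioms, apply the rule to get \(\Forall{x}{P(x)}\), instantiate at \(f(\eta)\)).
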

\begin{proof}
  The formula \(P(x)\) is inductive in \(\mathcal{P}\).
\end{proof}
Proposition~\ref{pro:12}, Corollary~\ref{cor:3}, and Lemma~\ref{lem:6} together show that the \(\eta\)-restriction as encountered in the n-clause calculus is drastic and can result in pathological unrefutability phenomena.
On the one hand, without the \(\eta\)-restriction a very simple argument by induction suffices to prove \(P(f(\eta))\) and on the other hand in presence of the \(\eta\)-restriction even induction for all \(\{ 0, s, P, f \}\) formulas does not allow us to prove the formula \(P(f(\eta))\).
However, because of this the unrefutability result of Corollary~\ref{cor:3} does not tell us anything about the other restrictions of the induction principle contained in refutations by a clause set cycle.

Hence, it would be interesting to have a similar result for linear arithmetic.
In particular we conjecture the following.
\begin{conjecture}
  \([\BLA, \RINDParameterFreeEta{\exists_{1}(\LLA)}] \not \vdash 0 + (\eta + \eta) = (\eta + \eta)\).
\end{conjecture}

\subsubsection{Induction rule and absence of parameters}
\label{sec:unrefutability_rule_and_parameters}
In the following we will consider another unprovability result for clause set cycles that does not make use of the instance restriction, but instead exploits the absence of induction parameters and the induction rule.
This time we work in the setting of linear arithmetic described in Section~\ref{sec:linear-arithmetic}.
The unprovability result developed in this section is based on the following weak cancellation property of the addition of natural numbers.
\begin{definition}
  \label{def:14}
  Let \(k,n,m \in \mathbb{N}\) with \(0 < n < m\), then we define
  \begin{equation}
    \label{eq:4}
    n \cdot x + \numeral{(m - n)k} = m \cdot x \rightarrow x = \numeral{k}. \tag{\(\formulaE{k}{n}{m}\)}
  \end{equation}
\end{definition}
The formula $\formulaE{k}{n}{m}$ is a generalization of
\begin{equation}
 x + 0 = x + x \rightarrow x = 0. \tag{\(\formulaE{0}{1}{2}\)}
\end{equation}
Most of the upcoming Section~\ref{sec:idemp-line-arithm} is devoted to proving the following independence result.
\begin{theorem}
  \label{thm:5}
  Let \(n,m,k \in \mathbb{N}\) with \(0 < n < m\), then
  \[
    (\BLA + \axiomB{2} + \axiomB{3}) + \RINDParameterFree{\exists_{1}(\LLA)} \not \vdash \formulaE{k}{n}{m}.
  \]
\end{theorem}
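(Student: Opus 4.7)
The plan is to establish this independence semantically by constructing a single $\LLA$-structure $M$ that models $\BLA + \axiomB{2} + \axiomB{3}$, is closed under the inference rule $\RINDParameterFree{\exists_{1}(\LLA)}$, and yet falsifies $\formulaE{k}{n}{m}$. Since by Definition~\ref{def:6} we have $T + \RINDParameterFree{\exists_{1}(\LLA)} = \bigcup_{n}[T,\RINDParameterFree{\exists_{1}(\LLA)}]_{n}$, semantic closure of $M$ under the rule suffices to obtain $M \models [T_{0},\RINDParameterFree{\exists_{1}(\LLA)}]_{n}$ for every $n$ by induction on $n$, where $T_{0} \coloneqq \BLA + \axiomB{2} + \axiomB{3}$.

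The failure of $\formulaE{k}{n}{m}$ relies on breaking the cancellation axiom $\axiomB{4}$, which is precisely what $T_{0}$ lacks. A natural recipe is therefore to extend $\mathbb{N}$ by absorbing elements: one takes $M$ with domain $\mathbb{N} \sqcup A$ for a carefully chosen set $A$ of ``idempotent-type'' elements, where each $a \in A$ satisfies $s^{M}(a) = p^{M}(a) = a$ and $a + d = d + a = a$ for every $d \in M$, while the symbols act naturally on the standard part $\mathbb{N}$. The verification $M \models T_{0}$ is then a routine case analysis on $\axiomA{1}$--$\axiomA{5}$, $\axiomB{2}$, $\axiomB{3}$. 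At any $a \in A$, both sides of $n \cdot x + \numeral{(m-n)k} = m \cdot x$ collapse to $a$ (since $n, m \geq 1$ and $a$ absorbs), while the consequent $x = \numeral{k}$ fails because $a \notin \mathbb{N}$; hence $M \not\models \formulaE{k}{n}{m}$.

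The main technical obstacle is to verify that $M$ is closed under $\RINDParameterFree{\exists_{1}(\LLA)}$: for every parameter-free $\exists_{1}(\LLA)$ formula $\psi(x) = \Exists{\vec{y}}{\varphi(x,\vec{y})}$ with quantifier-free $\varphi$, if $M \models \psi(0)$ and $M \models \Forall{x}{(\psi(x) \rightarrow \psi(s(x)))}$ then $M \models \Forall{x}{\psi(x)}$. The only nontrivial case is $x \in A$, since the standard part satisfies ordinary induction. The strategy is a syntactic analysis of $\LLA$-terms modulo $T_{0}$: each term $t(x,\vec{y})$ can be brought into a normal form that makes explicit how its value depends on each variable, thereby determining when $t$ evaluates into $A$ versus into $\mathbb{N}$ under a given assignment. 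From this one extracts witnesses $\vec{b} \in M^{|\vec{y}|}$ certifying $M \models \varphi(a, \vec{b})$ by starting from standard witnesses (obtained by applying the inductive hypothesis along $0, s(0), s(s(0)), \dots$) and replacing selected coordinates by elements of $A$, exploiting the absorbing behaviour to collapse the equations that $\varphi$ demands.

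The hardest step will be making this witness-lifting lemma precise and robust: the predecessor $p$ resists normalisation in $T_{0}$, and the quantifier-free matrix $\varphi$ may mix $s$, $p$, and $+$ in arbitrary ways, so the set $A$ likely needs internal structure (possibly depending on $n$, $m$, $k$) chosen so that predecessors and successors of absorbing elements do not interfere with the case distinctions required by $\varphi$. The detailed execution of this analysis is the content of the forthcoming Section~\ref{sec:idemp-line-arithm}; once the closure lemma is established, Theorem~\ref{thm:5} is immediate from the failure of $\formulaE{k}{n}{m}$ at any $a \in A$.
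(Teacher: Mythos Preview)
Your proposed structure cannot model $(\BLA + \axiomB{2} + \axiomB{3}) + \RINDParameterFree{\exists_{1}(\LLA)}$: the absorbing elements immediately violate a formula that this theory proves. Consider the quantifier-free (hence $\exists_{1}$) formula $\psi(x) \coloneqq s(x) \neq x$. It is $\BLA$-inductive: $\BLA \vdash s(0) \neq 0$ by \eqref{Q_1}, and from \eqref{ax:p:s} one gets injectivity of $s$, so $s(s(x)) = s(x)$ implies $s(x) = x$, giving the step. Hence $[\BLA, \RINDParameterFree{\exists_{1}(\LLA)}] \vdash \Forall{x}{s(x) \neq x}$, and a fortiori the target theory proves it. But in your $M$ every $a \in A$ satisfies $s^{M}(a) = a$, so $M \not\models \Forall{x}{\psi(x)}$.

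This is not a minor defect of the construction but a structural obstruction to the absorbing-element idea. Your refutation of $\formulaE{k}{n}{m}$ needs $a + \numeral{(m-n)k} = a$, and already absorption of a single standard successor is fatal: by \eqref{Q_3} and \eqref{Q_4} one has $a + s(0) = s(a + 0) = s(a)$, so $a + s(0) = a$ forces $s(a) = a$ and the counterexample above applies. Any nonstandard part must therefore be a genuine $\mathbb{Z}$-chain under $s$, and the addition on it must respect \eqref{Q_3}--\eqref{Q_4}; absorbing behaviour is incompatible with the base theory.

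The paper's construction in Section~\ref{sec:idemp-line-arithm} follows exactly this route. The model $M_{1}$ adjoins a single $\mathbb{Z}$-chain to $\mathbb{N}$ with a non-cancellative addition (the ``type'' of a sum is determined via a left-biased operation $\upharpoonleft$, which happens to be commutative when there is only one nonstandard chain). Closure under $\RINDParameterFree{\exists_{1}(\LLA)}$ is obtained not by a syntactic witness-lifting argument of the kind you sketch, but by first normalising $\exists_{1}$ formulas to $p$-free, $0$-free form (Section~\ref{sec:syntatic_simplifications}) and then using elementary linear algebra over $\mathbb{Z}$ (Section~\ref{sec:linear-systems}): a $T$-inductive $\exists_{1}$ formula holds at every standard point, hence has infinitely many solutions in $\mathbb{N}$, and Lemma~\ref{lem:17} produces an unbounded descending sequence of solutions in $\mathbb{Z}$ which embed into the nonstandard chain via Lemma~\ref{lem:30}. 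The failure of $\formulaE{k}{n}{m}$ then comes from the element $k^{[1]}$, which is not absorbing but merely lives in the wrong chain.
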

By making use of the above independence result and the characterization of refutation by a clause set cycle in Proposition~\ref{pro:10}, we straightforwardly obtain an unrefutability result.
\begin{definition}
  \label{def:24}
  Let \(k, n, m \in\mathbb{N}\) with \(0 < n < m\), then we define the clause set \(\mathcal{E}_{k,n,m}(\eta)\) by \(\cnf(\BLA + \axiomB{2} + \axiomB{3} + \neg \formulaE{k}{n}{m}(\eta))\).

\end{definition}
\begin{corollary}
  \label{cor:6}
  Let \(k, n, m \in \mathbb{N}\) with \(0 < n < m\), then the clause set \(\mathcal{E}_{k,n,m}(\eta)\) is not refuted by an \(\LLA\) clause set cycle.
\end{corollary}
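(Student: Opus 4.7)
The proof follows the template of Corollary~\ref{cor:3}, combining Proposition~\ref{pro:10} with the independence stated in Theorem~\ref{thm:5}. I would proceed indirectly: assume that \(\mathcal{E}_{k,n,m}(\eta)\) is refuted by an \(\LLA\) clause set cycle and derive a contradiction with Theorem~\ref{thm:5}.

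First, I would align the data of the clause set with the hypotheses of Proposition~\ref{pro:10}. The formula \(\formulaE{k}{n}{m}(x)\) is quantifier-free in \(\LLA\) with sole free variable \(x\) and no auxiliary existentially bound variables, and \(T = \BLA + \axiomB{2} + \axiomB{3}\) is a \(\forall_{1}\) \(\LLA\) theory. Setting \(\varphi(x) = \formulaE{k}{n}{m}(x)\) (with empty \(\vec{y}\)), the clause set \(\mathcal{E}_{k,n,m}(\eta)\) is by definition exactly \(\cnf(T + \neg \varphi(\eta))\), so Proposition~\ref{pro:10} yields
\[
  [T,\, \RINDParameterFreeEta{\exists_{1}(\LLA)}] \vdash \formulaE{k}{n}{m}(\eta).
\]

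Next I would lift this derivation into the setting of Theorem~\ref{thm:5}. Any conclusion \(\gamma(\eta)\) obtained from \(\RINDParameterFreeEta{\exists_{1}(\LLA)}\) is also obtainable from \(\RINDParameterFree{\exists_{1}(\LLA)}\), since the latter produces \(\Forall{x}{\gamma(x)}\) from the same premises and \(\gamma(\eta)\) follows by universal instantiation; thus \([T,\, \RINDParameterFree{\exists_{1}(\LLA)}] \vdash \formulaE{k}{n}{m}(\eta)\). Because \(\eta\) occurs neither in the axioms of \(T\) nor in any \(\exists_{1}(\LLA)\) formula eligible for induction, a standard renaming of \(\eta\) to a fresh variable followed by universal generalization produces a proof of the universal closure \(\formulaE{k}{n}{m}\) in \(T + \RINDParameterFree{\exists_{1}(\LLA)}\), contradicting Theorem~\ref{thm:5}.

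I do not anticipate any genuine obstacle in the corollary itself: the argument is a short bookkeeping exercise on top of Proposition~\ref{pro:10}, exploiting the routine observation that \(\RINDParameterFreeEta\) is absorbed by \(\RINDParameterFree\) once we are allowed to generalize over the constant \(\eta\). All the substantive work is concentrated in Theorem~\ref{thm:5}, whose proof is developed in Section~\ref{sec:idemp-line-arithm}.
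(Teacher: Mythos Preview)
Your proposal is correct and follows essentially the same route as the paper: assume refutability, apply Proposition~\ref{pro:10} with \(T = \BLA + \axiomB{2} + \axiomB{3}\) and \(\varphi = \formulaE{k}{n}{m}\), then pass from the \(\eta\)-restricted rule to the unrestricted parameter-free rule and contradict Theorem~\ref{thm:5}. The paper compresses your two lifting steps into the single observation that \((\BLA + \axiomB{2} + \axiomB{3}) + \RINDParameterFree{\exists_{1}(\LLA)} \vdash [\BLA + \axiomB{2} + \axiomB{3}, \RINDParameterFreeEta{\exists_{1}(\LLA)}]\), leaving the generalization over \(\eta\) implicit; your version makes that bookkeeping explicit but is otherwise identical.
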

\begin{proof}
  Assume that \(\cnf(\BLA + \axiomB{2} + \axiomB{3} + \neg \formulaE{k}{n}{m}(\eta))\) is refuted by a clause set cycle.
  By Proposition \ref{pro:10} we have \([\BLA + \axiomB{2} + \axiomB{3}, \RINDParameterFreeEta{\exists_{1}(\LLA)}] \vdash \formulaE{k}{n}{m}(\eta)\).
  Since \((\BLA + \axiomB{2} + \axiomB{3} + \RINDParameterFree{\exists_{1}(\LLA)}) \vdash [\BLA + \axiomB{2} + \axiomB{3}, \RINDParameterFreeEta{\exists_{1}(\LLA)}]\), this contradicts Theorem~\ref{thm:5}.
\end{proof}
Let us now discuss this unprovability result.
The clause sets \(\mathcal{E}_{k, n, m}(\eta)\) with \(k,n,m \in \mathbb{N}\) and \(0 < n < m\) are refuted by open induction.
\begin{proposition}
  \label{pro:6}
  \(\IND{\Open(\LLA)} \cup \mathcal{E}_{k,n,m}(\eta)\) is unsatisfiable.
\end{proposition}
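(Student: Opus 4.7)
The plan is to show that $\BLA + \axiomB{2} + \axiomB{3} + \IND{\Open(\LLA)} \vdash \formulaE{k}{n}{m}(\eta)$, since the clauses of $\mathcal{E}_{k,n,m}(\eta)$ are logically equivalent to the conjunction of $\BLA + \axiomB{2} + \axiomB{3}$ with the negation of this formula. By Lemma~\ref{lem:29} and Theorem~\ref{thm:shoenfields_theorem} the theory in question coincides with $\BLAPrime$, so I may freely use $\axiomB{1}$--$\axiomB{4}$.

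My first move is an algebraic reduction. Set $d \coloneqq m - n \geq 1$. A routine meta-level induction on $n$ using $\axiomA{4}, \axiomA{5}, \axiomB{2}, \axiomB{3}$ yields $m \cdot x = n \cdot x + d \cdot x$ in $\BLAPrime$, so the hypothesis $n \cdot x + \numeral{dk} = m \cdot x$ of $\formulaE{k}{n}{m}$ rewrites as $n \cdot x + \numeral{dk} = n \cdot x + d \cdot x$. Cancellation ($\axiomB{4}$) then reduces the problem to establishing
\[
  d \cdot x = \numeral{dk} \rightarrow x = \numeral{k}
\]
in $\BLAPrime$.

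For this implication I would apply $\IND{\Open(\LLA)}$ to the strengthened formula
\[
  \varphi(x) \;\coloneqq\; \bigwedge_{j = 0}^{k} \bigl( d \cdot x = \numeral{dj} \rightarrow x = \numeral{j} \bigr),
\]
which remains quantifier-free, being a finite conjunction. In the base $\varphi(0)$, each conjunct is either trivial (for $j = 0$ both sides reduce to $0$) or has a false antecedent by Lemma~\ref{lem:3:1} (for $j \geq 1$ the antecedent is $0 = \numeral{dj}$ with $dj \geq 1$). For the induction step, fix $j$ and assume $d \cdot s(x) = \numeral{dj}$. Using the identity $d \cdot s(x) = d \cdot x + \numeral{d}$ (provable in $\BLAPrime$ by a routine meta-induction on $d$), the hypothesis becomes $d \cdot x + \numeral{d} = \numeral{dj}$. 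If $j = 0$ this contradicts Lemma~\ref{lem:3:1}; otherwise $\numeral{dj} = \numeral{d(j-1)} + \numeral{d}$ (another meta-induction), and $\axiomB{2}$ together with $\axiomB{4}$ gives $d \cdot x = \numeral{d(j-1)}$. The $(j-1)$-th conjunct of the induction hypothesis $\varphi(x)$ then yields $x = \numeral{j-1}$, so $s(x) = \numeral{j}$. Instantiating the $k$-th conjunct of $\forall x\, \varphi(x)$ at $\eta$ finally yields $\formulaE{k}{n}{m}(\eta)$.

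The hard part will be the choice of induction formula: naive induction on $d \cdot x = \numeral{dk} \rightarrow x = \numeral{k}$ fails, because in the step case cancellation of $\numeral{d}$ produces a hypothesis involving $\numeral{d(k-1)}$ that an induction hypothesis stated only at index $k$ cannot discharge. Strengthening to a finite conjunction indexed by all $j \leq k$ supplies the necessary hypotheses simultaneously while staying within $\Open(\LLA)$, and is the essential trick of the proof.
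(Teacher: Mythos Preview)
Your proof is correct, but it takes a different route from the paper's argument. After the same algebraic reduction to \(d \cdot x = \numeral{dk}\) with \(d = m-n\), the paper does \emph{not} perform an explicit induction at all: it instead applies the case-analysis axiom \(\axiomB{1}\) (which is already available in \(\BLAPrime\) by Theorem~\ref{thm:shoenfields_theorem}) \(k+1\) times to split on whether \(x = \numeral{k'}\) for some \(k' \leq k\) or \(x = s^{k+1}(p^{k+1}(x))\), and in each case derives either \(x = \numeral{k}\) or a contradiction with Lemma~\ref{lem:3:1}. Your approach, by contrast, packages the required intermediate facts into the strengthened open formula \(\bigwedge_{j \leq k}(d \cdot x = \numeral{dj} \rightarrow x = \numeral{j})\) and proves it by a single explicit application of \(\IND{\Open(\LLA)}\). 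Both arguments are sound and of comparable length; the paper's version is slightly slicker in that it hides the induction inside the already-proved axiom \(\axiomB{1}\), while yours has the pedagogical virtue of exhibiting the open induction formula directly and thereby making visible exactly why the naive induction on the unstrengthened implication fails.
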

\begin{proof}
  Clearly, it suffices to show that \(B + \IND{\Open(\LLA)} \vdash \formulaE{k}{n}{m}(x)\).
  Work in \(B + \IND{\Open(\LLA)}\) and assume \(n\cdot x + \numeral{(m-n)k} = m\cdot x\).
  Then by \eqref{B2}, \eqref{B3}, and \eqref{B4} we obtain \(\numeral{(m-n)k} = (m-n)\cdot x\).
  Now we use \eqref{B1} to proceed by case analysis on \(x\).
  If \(x = \numeral{k'}\) with \(k' < k\), then we have \(\numeral{(m-n)(k - k')} = 0\).
  Since \(m - n > 0\) and \(k - k' > 0\) this contradicts Lemma~\ref{lem:3:1}.
  If \(x = \numeral{k}\), then we are done.
  If \(x = s^{k + 1}(p^{k + 1}(x))\), then \(0 = (m-n) + p^{k + 1}(x)\), which contradicts Lemma~\ref{lem:3:1}.
\end{proof}
Hence, Corollary~\ref{cor:6} together with Proposition~\ref{pro:6} give a positive answer to Conjecture~\ref{con:1}.
We conclude this section with some remarks on this result and possible improvements.

The formula \(E_{0,1,2}(x)\) is particularly interesting, because it can be proven by a comparatively straightforward induction.
\begin{lemma}
  \label{lem:13}
  \([\BLA, \RIND{\Open(\LLA)}] \vdash E_{0,1,2}\).
\end{lemma}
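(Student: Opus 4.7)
The plan is to derive $\formulaE{0}{1}{2}$, namely $x + 0 = x + x \to x = 0$, by two unnested applications of $\RIND{\Open(\LLA)}$ over $\BLA$: one to obtain a case-analysis principle, and one for a suitably parametrised auxiliary formula. Since $\BLA$ proves $x + 0 = x$ by \eqref{Q_3}, it suffices to derive $x = x + x \to x = 0$.

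First I would establish \eqref{B1}, namely $x = 0 \lor x = s(p(x))$. This formula is open and $\BLA$-inductive: the base case holds via $0 = 0$, and in the step case $p(s(x)) = x$ from \eqref{ax:p:s} gives $s(x) = s(p(s(x)))$ outright, so the second disjunct holds. A single application of $\RIND{\Open(\LLA)}$ then yields $\axiomB{1}$ in $[\BLA, \RIND{\Open(\LLA)}]$.

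Next I would introduce the parametrised open formula
\[
  \varphi(x, z) \ :=\ s(z) + x \neq x,
\]
and show it is $\BLA$-inductive in $x$. For the base, \eqref{Q_3} gives $s(z) + 0 = s(z)$, and $s(z) \neq 0$ by Lemma~\ref{lem:3:1}. For the step, assume $\varphi(x,z)$ and suppose towards a contradiction that $s(z) + s(x) = s(x)$; by \eqref{Q_4} this rewrites to $s(s(z) + x) = s(x)$, so Lemma~\ref{lem:3:2} yields $s(z) + x = x$, contradicting the induction hypothesis. A second application of $\RIND{\Open(\LLA)}$ produces $\forall x \forall z.\, \varphi(x, z)$ in $[\BLA, \RIND{\Open(\LLA)}]$. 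Finally, assume $x + x = x$. By $\axiomB{1}$ either $x = 0$ (done) or $x = s(p(x))$; in the latter case, instantiating $\varphi$ at induction variable $s(p(x))$ and parameter $p(x)$ yields $s(p(x)) + s(p(x)) \neq s(p(x))$, which, under $x = s(p(x))$, contradicts $x + x = x$. Hence $x = 0$.

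The main obstacle is the choice of generalisation. A direct open induction on $x + x = x \to x = 0$ stalls in the step case: the hypothesis $s(x) + s(x) = s(x)$ reduces via \eqref{Q_4} and Lemma~\ref{lem:3:2} to $s(x) + x = x$, an equation of shape $s(a) + b = b$ rather than $c + c = c$, so it cannot be fed back to the induction hypothesis. The passage to the parametric formula $\varphi(x, z) = s(z) + x \neq x$ is what preserves the shape of the equation under the successor step, and this step genuinely exploits the availability of induction parameters in $\RIND{\Open(\LLA)}$.
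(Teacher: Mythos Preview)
Your proof is correct, but it takes a different route from the paper's. The paper uses a single application of the induction rule: it shows that the open formula \(\varphi(x,y) \coloneqq x + 0 = y + x \rightarrow y = 0\) is \(\BLA\)-inductive in \(x\) (base case by \eqref{Q_3}, step case by \eqref{Q_3}, \eqref{Q_4}, and Lemma~\ref{lem:3:2}), and then instantiates the parameter \(y\) with \(x\) to obtain \(\formulaE{0}{1}{2}\) directly. Your argument instead needs two unnested applications of the rule over \(\BLA\)---one to derive \(\axiomB{1}\) and one for \(s(z) + x \neq x\)---followed by a case analysis. Both approaches exploit induction parameters in the same essential way (generalising one occurrence of \(x\) to a fresh variable so that the step case goes through), but the paper's choice of generalisation \(\varphi(x,y)\) is tailored so that the final instantiation \(y \mapsto x\) lands on the target formula without any auxiliary case split, whereas your generalisation \(s(z) + x \neq x\) requires the predecessor-based case analysis \(\axiomB{1}\) to bridge from \(x + x\) to \(s(p(x)) + x\). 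Your version is still within \([\BLA,\RIND{\Open(\LLA)}]\) since both inductions have \(\BLA\)-provable premises, so the result is fine; the paper's proof is simply shorter.
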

\begin{proof}
  Clearly it suffices to show that the formula \(\varphi(x,y) \coloneqq  x + 0 = y + x \rightarrow y = 0\) is \(\BLA\)-inductive in \(x\).
  It is obvious that \(\BLA \vdash \varphi(0,y)\).
  Now work in \(\BLA\) and assume \(\varphi(x,y)\) and \(s(x) + 0 = y + s(x)\).
  By \(\eqref{Q_3}\) and \(\eqref{Q_4}\) we obtain \(s(x + 0) = s(x) = s(x) + 0 = y + s(x) = s(y + x)\).
  By Lemma \ref{lem:3} we obtain \(x + 0 = y + x\), hence by the assumption we obtain \(y = 0\).
\end{proof}
This demonstrates that clause set cycles are a very weak induction mechanism in the sense that they are unable to deal even with simple generalizations and therefore fail to refute relatively simple clause sets.
The unprovability results in Corollaries~\ref{cor:3} and \ref{cor:6} were constructed so that only one Skolem constant \(\eta\) appears in the language of the considered clause sets.
Consider now the clause set \(\mathcal{C}\) given by
\[
  \cnf(\BLA) \cup \{ \{ \eta + 0 = \nu + \eta \} \} \cup \{ \{ \nu \neq 0 \} \},
\]
where \(\nu\) is a Skolem constant distinct from \(\eta\).
It is straightforward to check that \(\mathcal{C}(\eta)\) is an \(\LLA \cup \{ \nu \}\) clause set cycle.
Hence, if clause set cycles are detected on the languages obtained by Skolemization of the given property and its background theory, then clause set cycles allow us to prove the property \(x + 0 = y + x \rightarrow y = 0\) from \(B\) but fail to prove the weaker property \(x + 0 = x + x \rightarrow x = 0\) from \(B\).
Thus, clause set cycles are sensitive to the syntactic material present in a given set clauses.
In particular, Skolem constants other than \(\eta\) may act similar to induction parameters.

The independence result of Theorem~\ref{thm:5} also shows that the unrefutability result of Corollary~\ref{cor:6} does neither rely on the \(\eta\)-restriction nor on the absence of nesting in clause set cycles.
Moreover, in the light of Lemma \ref{lem:13} we conjecture that the unrefutability of Corollary~\ref{cor:6} is entirely due to the absence of induction parameters from induction captured by clause set cycles.
\begin{conjecture}
  \label{con:2}
  Let \(k,n, m \in \mathbb{N}\) with \(0 < n < m\), then
  \[
    B + \axiomB{2} + \axiomB{3} + \INDParameterFree{\exists_{1}(\LLA)} \not \vdash \formulaE{k}{n}{m}.
  \]
\end{conjecture}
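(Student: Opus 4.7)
The conjecture strengthens Theorem~\ref{thm:5} by replacing the parameter-free \(\exists_{1}\) induction rule with its schematic counterpart. The gap between the two is that the schema can be applied to formulas whose induction premises merely hold in a model, whereas the rule requires them to be derivable in first-order logic from the base theory. The natural plan is therefore to adapt the countermodel used in the proof of Theorem~\ref{thm:5} so that it also validates the stronger schema \(\INDParameterFree{\exists_{1}(\LLA)}\).

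The direct, model-theoretic route I would pursue is to start from a model \(M\) of \(\BLA + \axiomB{2} + \axiomB{3} + \RINDParameterFree{\exists_{1}(\LLA)}\) falsifying \(\formulaE{k}{n}{m}\) of the kind constructed in Section~\ref{sec:idemp-line-arithm}, and then analyse the sets definable in \(M\) by \(\exists_{1}\) \(\LLA\)-formulas with a single free variable. The first step is to obtain a normal form: using \eqref{B2}, \eqref{B3}, and the axioms of \(\BLA\), the matrix of such a formula can be reduced to a disjunction of systems of linear (in)equations with nonnegative integer coefficients and ground constants, after which the inner existential block can be eliminated by a parameter-free Presburger-style case analysis on residues. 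The second step is to show that any subset of \(M\) definable in this way, provided it contains \(0\) and is closed under \(s\), already exhausts the domain of \(M\). Combining these two facts yields \(M \models \INDParameterFree{\exists_{1}(\LLA)}\), and since \(M \not\models \formulaE{k}{n}{m}\) this establishes the conjecture.

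A complementary, meta-theoretical route would be to prove a conservation theorem asserting that \(\BLA + \axiomB{2} + \axiomB{3} + \INDParameterFree{\exists_{1}(\LLA)}\) is conservative over \(\BLA + \axiomB{2} + \axiomB{3} + \RINDParameterFree{\exists_{1}(\LLA)}\) for sentences of the shape \(\Forall{x}{\psi(x)}\) with \(\psi\) quantifier-free; since \(\formulaE{k}{n}{m}\) has precisely this shape, such a result would reduce the conjecture directly to Theorem~\ref{thm:5}. Conservation theorems relating parameter-free induction schemata and the corresponding rules are known in stronger arithmetic settings from the work of Parsons, Beklemishev, and others, and the task would be to transport the underlying proof-normalisation argument down to the linear base theory.

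The main obstacle, in both routes, lies in the anomalous elements of \(M\) that witness the failure of \(\formulaE{k}{n}{m}\). In the model-theoretic route, these elements are \(\exists_{1}\)-definable by formulas in which the ground term \(\numeral{k}\) occurs, and the delicate point is to rule out parameter-free \(\exists_{1}\)-definable subsets of \(M\) that contain \(0\), are closed under \(s\), yet still miss some of these anomalous elements. In the conservation route, the difficulty is that the usual cut-elimination or Herbrand-style analyses for parameter-free induction rely on coding devices such as multiplication or bounded sequence coding, which are unavailable in the linear fragment, so the proof-theoretic machinery would have to be reworked to cope with this expressive weakness.
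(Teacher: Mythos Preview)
The statement is left as an open conjecture in the paper; there is no proof to compare against. What the paper \emph{does} contain, however, directly refutes your primary route. Lemma~\ref{lem:15} shows that \(M_{I} \not\models \INDParameterFree{\exists_{1}(\LLA)}\) for every \(I \geq 1\), by exhibiting the parameter-free \(\exists_{1}\) formula
\[
\chi(x) \;=\; \Exists{y_{1}}{\Exists{y_{2}}{\Exists{y_{3}}{\bigl(x + y_{1} \neq x + y_{2} \;\wedge\; x + (y_{3} + y_{1}) = x + (y_{3} + y_{2})\bigr)}}}
\]
which holds at every standard element of \(M_{I}\) and fails at every non-standard one. This is precisely a parameter-free \(\exists_{1}\)-definable set that contains \(0\), is closed under \(s\), and yet misses all non-standard elements---exactly the kind of set your ``second step'' claims to rule out. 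So for the models ``of the kind constructed in Section~\ref{sec:idemp-line-arithm}'' your plan provably fails; a genuinely different model would be needed.

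Your normal-form step is also too optimistic. The base theory here is \(\BLA + \axiomB{2} + \axiomB{3}\), \emph{without} the cancellation law \(\axiomB{4}\), and cancellation fails in \(M_{I}\) (indeed \(\formulaE{k}{n}{m}\) is a cancellation instance). That is exactly what makes a formula like \(\chi\) above capable of defining the standard cut: it detects the failure of cancellation witnessed by the non-standard chains. A Presburger-style elimination of the inner existential block presupposes cancellation and so cannot be carried out over this base theory. Your conservation route is not excluded by anything in the paper, but the paper gives no evidence for it either; the conjecture is genuinely open.
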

Furthermore, we believe that an independence similar to the one in Conjecture~\ref{con:2} also holds for the atomic formula \(x + (x + x) = (x + x) + x\), which is a well-known challenging formula for inductive theorem provers \cite{baker1992,beeson2006,hajdu2020}.
\begin{conjecture}
  \label{con:ind_ex_1_param_free_not_proves_3x_eq_x3}
  \(\BLA + \INDParameterFree{\exists_{1}(\LLA)} \not \vdash x + (x + x) = (x + x) + x\).
\end{conjecture}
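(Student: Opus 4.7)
The plan is to establish the conjecture by a model-theoretic argument: construct a nonstandard model $M \models \BLA + \INDParameterFree{\exists_{1}(\LLA)}$ in which $x + (x+x) = (x+x) + x$ fails at some element. The model will be $M = \mathbb{N} \cup \{z_{n} : n \in \mathbb{Z}\}$, with $s$ and $p$ acting as the natural shift and inverse-shift on the nonstandard chain. Since the axioms \ref{Q_3} and \ref{Q_4} of $\BLA$ recursively determine every sum from its base case on the right, the entire addition is fixed by specifying $x + z_{0}$ for each $x \in M$.

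Concretely, I would set $n + z_{0} := z_{n}$ for $n \in \mathbb{N}$ and $z_{m} + z_{0} := z_{f(m)}$ for $m \in \mathbb{Z}$, with the single-point perturbation $f(0) = 1$ and $f(m) = m$ otherwise. A computation then gives $z_{n} + z_{m} = z_{f(n) + m}$, so the formula $x + (x + x) = (x + x) + x$ at $x = z_{n}$ reduces to the arithmetic condition $2 f(n) = f(f(n) + n)$, which holds for every $n \neq 0$ but fails at $n = 0$, where the left side is~$2$ and the right is $f(1) = 1$. Verifying \ref{Q_1}--\ref{Q_4} in~$M$ is then routine.

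The heart of the proof is to show $M \models I_{x}\varphi$ for every $\exists_{1}(\LLA)$ formula $\varphi(x)$ whose only free variable is~$x$. The axiom $I_{x}\varphi$ can fail in $M$ only if $\varphi$ is satisfied on all of $\mathbb{N}$ and the satisfaction set $S_{\varphi} := \{b \in M : M \models \varphi(b)\}$ is closed under $s$ yet omits some $z_{k}$, which forces $S_{\varphi} \cap \{z_{n} : n \in \mathbb{Z}\}$ to be an upward-closed proper subset of the chain. The main lemma I would need is that no parameter-free $\exists_{1}$ formula true on $\mathbb{N}$ produces this pathology in $M$. The argument is to reduce $\varphi$ to a disjunction of existentially quantified conjunctions of atomic equalities, translate each disjunct into an integer-linear condition on the chain index of $x$ and the witness indices, and show that the local defect $f(0) \neq 0$ at a single chain coordinate can obstruct witnesses only at $z_{0}$ (together with finitely many collateral chain positions whose witness computations happen to pass through $z_{0}$), so that $S_{\varphi} \cap \{z_{n} : n \in \mathbb{Z}\}$ is always cofinite rather than merely upward-closed.

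The main obstacle is exactly this uniformity step: one has to control all the ways in which an $\exists_{1}$ formula's witness computation can interact with the anomaly at $z_{0}$ and rule out that any such interaction produces an infinite upward tail of failures on the chain. Should the single-chain model prove insufficient for the most intricate $\exists_{1}$ formulas, a natural fallback, following the strategy of Section~\ref{sec:idemp-line-arithm} used to prove Theorem~\ref{thm:5}, is to enrich $M$ with additional chains carrying a transitive group of $\LLA$-automorphisms, so that automorphism-invariance forces any parameter-free definable subset of the nonstandard part to be either empty or the whole orbit, killing the pathological upward-closed case directly.
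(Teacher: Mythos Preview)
This statement is a \emph{conjecture} in the paper; the paper gives no proof, so there is nothing to compare your proposal against. Let me comment on the proposal on its own merits.

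Your single-chain model with a local defect at $z_{0}$ does fail $x+(x+x)=(x+x)+x$ and does satisfy $\BLA$, and your reduction of the problem to the ``main lemma'' is correct. But that lemma --- that no parameter-free $\exists_{1}$ formula true on $\mathbb{N}$ has a proper upward-closed subset of the chain as its nonstandard solution set --- is essentially the entire content of the conjecture, and you do not prove it. The difficulty is real: because your addition on the chain is governed by the piecewise map $f$, a term $t(x,\vec{y})$ evaluated at nonstandard inputs does not reduce to a single linear expression in the chain indices but to a piecewise-linear one, with a case split at every subterm whose left summand lands at index $0$. Controlling how these case splits interact across all $\exists_{1}$ formulas is exactly what separates this conjecture from Theorem~\ref{thm:5}.

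Your fallback is actually worse than your main approach. If $M$ carries $\LLA$-automorphisms acting transitively on the nonstandard part, then the quantifier-free formula $\psi(x) \coloneqq x+(x+x)=(x+x)+x$ is automorphism-invariant, so it holds either on all nonstandard elements or on none. In the first case $M$ does not witness the conjecture. In the second case $\psi(0)$ holds and $\psi(x)\to\psi(s(x))$ holds everywhere (true on standard elements, vacuous on nonstandard ones), yet $\Forall{x}{\psi(x)}$ fails; hence $M \not\models I_{x}\psi$ and so $M \not\models \INDParameterFree{\exists_{1}(\LLA)}$. This is precisely the obstruction the paper records in Lemma~\ref{lem:15}: the homogeneous models $M_{I}$ of Section~\ref{sec:idemp-line-arithm} already fail $\INDParameterFree{\exists_{1}(\LLA)}$, which is why the paper can prove Theorem~\ref{thm:5} only for the rule and leaves the schema version open as Conjecture~\ref{con:2}. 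The local defect in your primary construction is not an inconvenience you can smooth away with symmetry; something like it is forced.
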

\subsubsection{Nesting of the induction rule}
In this section we briefly consider the role of the depth of the nesting of applications of the induction rule.
The idea underlying the results developed in this section was brought to our attention by one of the anonymous reviewers. 
We will show that a formalism that extends clause set cycles to achieve a fixed finite depth of the nesting of the corresponding induction rule will have an unprovable clause set, that becomes provable when the nesting depth is increased by one.
Moreover, the result remains valid in extensions of clause set cycles that allow for induction parameters.
However, the unprovability results in this section are more abstract than in the previous sections in the sense that we work over a much stronger background theory.
We expect that providing more elementary unprovability results is not difficult but is left as future work.

In the remainder of the section we will show the following result.
\begin{theorem}
  \label{thm:separation_nesting_of_induction_rule}
  Let \(k \in \NaturalNumbers\), then there is a language \(L\) and an \(L \cup \{ \eta \}\) clause set \(\mathcal{C}(\eta)\) such that \(\mathcal{C}\) is consistent with \([\varnothing, \RINDParameterFree{\exists_{1}(L)}]_{k}\) but inconsistent with \([\varnothing, \RINDParameterFree{\exists_{1}(L)}]_{k + 1}\).
\end{theorem}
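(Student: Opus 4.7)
The plan is to reduce the statement to a known strict-hierarchy result for the nested parameter-free \(\Sigma_{1}\) induction rule in arithmetic. Throughout, write \(R \coloneqq \RINDParameterFree{\exists_{1}(L)}\).

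First I would generalise Lemma~\ref{lem:10} to arbitrary depth of nesting by induction on \(n\): for every \(\forall_{1}\) \(L\)-theory \(T\), one has \(T + [\varnothing, R]_{n} \equiv [T, R]_{n}\). The induction step uses the same trick as in the proof of Lemma~\ref{lem:10}: any rule conclusion \(\Forall{x}{\varphi(x)}\) in \([T, R]_{n+1}\) whose premises rely (via compactness) on a finite \(\forall_{1}\) sentence \(\tau \in T\) can be re-derived by applying the rule over \([\varnothing, R]_{n}\) to the induction formula \(\tau \to \varphi(x)\), which is still logically equivalent to an \(\exists_{1}\) formula with only \(x\) free; combining the conclusion \(\Forall{x}{(\tau \to \varphi(x))} \equiv \tau \to \Forall{x}{\varphi(x)}\) with \(\tau \in T\) yields \(\Forall{x}{\varphi(x)}\) in \(T + [\varnothing, R]_{n+1}\). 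The other inclusion is immediate.

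Next I would invoke a strict-hierarchy result for nested applications of the parameter-free \(\Sigma_{1}\) induction rule over a suitable \(\forall_{1}\)-axiomatised arithmetic base theory \(T\) in a language \(L\), due to Beklemishev~\cite{beklemishev1997,beklemishev1999}: for each \(k \in \NaturalNumbers\) there exists an \(\exists_{1}\) \(L\)-formula \(\sigma_{k}(x)\) with only \(x\) free such that \([T, R]_{k+1} \vdash \Forall{x}{\sigma_{k}(x)}\) while \([T, R]_{k} \not\vdash \Forall{x}{\sigma_{k}(x)}\). The separating formulas \(\sigma_{k}\) can for instance express the totality of fast-growing recursive functions (via their Kleene graphs) stratified by the nesting depth of the rule.

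Define \(\mathcal{C}(\eta) \coloneqq \cnf(T) \cup \cnf(\neg \sigma_{k}(\eta))\), which is an \(L \cup \{ \eta \}\) clause set. By the commutation of the first step applied to the \(\forall_{1}\) \(L\)-theory \(T\), we have \(\mathcal{C}(\eta) + [\varnothing, R]_{n} \equiv [T, R]_{n} + \neg \sigma_{k}(\eta)\) for every \(n\). Since \(\eta\) does not occur in \([T, R]_{n}\), standard eigenvariable reasoning shows that the latter is consistent precisely when \([T, R]_{n} \not\vdash \Forall{x}{\sigma_{k}(x)}\). Instantiating at \(n = k\) and \(n = k+1\) and invoking the separation of the previous step yields the required consistency and inconsistency. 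The principal obstacle is the separation step: exhibiting the \(\Sigma_{1}\) formulas \(\sigma_{k}\) requires non-trivial proof-theoretic analysis of nested parameter-free induction rules, which matches the authors' remark that this unprovability result lives over a much stronger background theory than those of Corollaries~\ref{cor:3} and~\ref{cor:6}.
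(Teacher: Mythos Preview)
Your proposal is correct and follows essentially the same route as the paper: commute the $\forall_1$ base theory through the nested rule (your inductive generalisation of Lemma~\ref{lem:10} is precisely Lemma~\ref{lem:20}), invoke a strict hierarchy for nested parameter-free $\Sigma_1$ induction, and package the separating totality statement as a clause set.

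One point you leave implicit under ``suitable'' is where the paper does real work. The hierarchy results in the literature are stated for the arithmetic class $\Sigma_1$ (bounded matrix), not for the syntactic class $\exists_1(L)$ appearing in the theorem. The paper therefore constructs an explicit base theory $\mathrm{EA}$: a $\forall_1$-axiomatised conservative extension of $I\Delta_0 + \mathrm{EXP}$ obtained by adjoining Skolem function symbols for all bounded existential quantifiers (Definitions~\ref{def:delta_zero_skolem_symbol}--\ref{def:skolemization_functions}) plus a symbol for exponentiation, so that over $\mathrm{EA}$ every $\Sigma_1$ formula is equivalent to an $\exists_1(L(\mathrm{EA}))$ formula and vice versa (Lemma~\ref{lem:idelta_0_plus_exponentation_has_a_suitable_formulation}). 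Only with this alignment can one chain Sieg's identification of the provably total functions of $[\mathrm{EA},\RIND{\Pi_2}]_k$ with the Grzegorczyk level $\mathcal{E}_{3+k}$ together with Beklemishev's reformulations $[\cdot,\RIND{\Pi_2}] \equiv [\cdot,\RIND{\Sigma_1}] \equiv [\cdot,\RINDParameterFree{\Sigma_1}]$ to obtain the separation for $\RINDParameterFree{\exists_1(L)}$ itself. Your attribution of the hierarchy directly to Beklemishev is slightly off; the paper gets strictness from the Grzegorczyk hierarchy via Sieg, using Beklemishev only for the rule reformulations.
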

\newcommand{\LanguagePeanoArithmetic}{L_{\mathrm{PA}}}
The language of Peano arithmetic \(\LanguagePeanoArithmetic\) consists of the function symbols \(0/0\), \(s/1\), the infix function symbols \(+/2\), \(*/2\), and the infix predicate symbol \(\leq/2\).
Let \(x\) and \(y\) be distinct variables, then we write \(\Exists{x \leq y}{\varphi}\) as an abbreviation for \(\Exists{x}{\left(x \leq y \wedge \varphi\right)}\) and similarly we write \(\Forall{x \leq y}{\varphi}\) as an abbreviation for the formula \(\Forall{x}{\left(x \leq y \rightarrow \varphi\right)}\).
A \(\LanguagePeanoArithmetic\) formula is said to be bounded if all the quantifiers occurring in it are bounded as above.
The \(\Sigma_{0}\), \(\Pi_{0}\) and \(\Delta_{0}\) formulas are the bounded formulas.
The \(\Sigma_{n + 1}\) (\(\Pi_{n + 1}\)) formulas are the formulas of the form \(\Exists{\vec{x}}{\varphi}\) (\(\Forall{\vec{x}}{\varphi}\)) where \(\vec{x}\) is a possibly empty finite sequence of variables and \(\varphi\) is a \(\Pi_{n}\) (\(\Sigma_{n}\)) formula.

We will prove the theorem above by providing a sequence of theories \(T_{0}, T_{1}, \dots\) with \(L(T_{i}) \supseteq \LanguagePeanoArithmetic\), \(T_{i + 1} = [T_{i}, \RINDParameterFree{\exists_{1}(L(T_{0}))}]\) such that the provably total recursive functions of \(T_{i}\) are exactly those of the level \(3 + i\) of the Grzegorczyk hierarchy, for \(i \in \NaturalNumbers\), and over \(T_{0}\) the \(\Sigma_{1}\) formulas are exactly the \(\exists_{1}(\LanguageOf{T_{0}})\) formulas.
Since, the Grzegorczyk hierarchy is a strict hierarchy (see for example \cite{rose1984}), we obtain for each level \(i \in \NaturalNumbers\) a quantifier-free \(\LanguageOf{T_{0}}\) formula \(\varphi(x,y)\), such that \(\Exists{y}{\varphi(x,y)}\) is provable in \(T_{i + 1}\) but not in \(T_{i}\).

For a definition of the Grzegorczyk hiearchy we refer the reader to \cite{rose1984}.
\newcommand{\GrzegorczykClass}[1]{\mathcal{E}_{#1}}
\begin{definition}
  \label{def:grzegorczyk_levels}
  Let \(n \in \NaturalNumbers\), then we denote by \(\GrzegorczykClass{n}\) the \(n\)-th level of the Grzegorczyk hiearchy.
\end{definition}
\newcommand{\RobinsonQ}{Q}
\begin{definition}
  \label{def:robinson_arithmetic}
  The theory \(Q\) is axiomatized by the universal closure of the following axioms
  \begin{align*}
    s(x) & \neq 0, \tag{Q1} \label{ax:Q:1}
    \\
    s(x) = s(y) & \rightarrow x = y, \tag{Q2} \label{ax:Q:2}
    \\
    x \neq 0 & \rightarrow \Exists{y}{\left( x = s(y)\right)}, \tag{Q3} \label{ax:Q:3}
    \\
    x + 0 & = x, \tag{Q4} \label{ax:Q:4}
    \\
    x + s(y) & = s(x + y), \tag{Q5} \label{ax:Q:5}
    \\
    x * 0 & = 0, \tag{Q6} \label{ax:Q:6}
    \\
    x * s(y) & = (x * y) + x, \tag{Q7} \label{ax:Q:7}
    \\
    x \leq y & \leftrightarrow \Exists{z}{\left(z + x = y\right)}. \tag{Q8} \label{ax:Q:8}
  \end{align*}
\end{definition}
\begin{definition}
  \label{def:theory_isigma_n}
  Let \(n \in \mathbb{N}\), then the theory \(Q + \IND{\Sigma_{n}}\) is called \(I\Sigma_{n}\).
  The theory \(I\Sigma_{0}\) is also called \(I\Delta_{0}\).
\end{definition}
There is a \(\Delta_{0}\) definition of the exponential function such that the theory \(I\Delta_{0}\) proves the inductive properties of the definition of the exponential function, but \(I\Delta_{0}\) does not prove the totality of such a definition.
\newcommand{\ExpDF}[3]{\mathrm{Exp}(#1,#2,#3)}
\begin{lemma}
  \label{lem:exponentiation_relation_in_idelta_0}
  There is a \(\Delta_{0}\) formula \(\ExpDF{x}{y}{z}\) such that \(I\Delta_{0}\) proves
  \begin{align}
    \ExpDF{x}{0}{z} & \leftrightarrow z = \numeral{1}, \tag{E1} \label{exp:1}
    \\
    \ExpDF{x}{s(y)}{z} & \leftrightarrow \Exists{v}{\left(\ExpDF{x}{y}{v} \wedge z = v * x\right)}. \tag{E2} \label{exp:2}
  \end{align}
  In particular \(I\Delta_{0}\) proves \(\ExpDF{x}{y}{z_{1}} \wedge \ExpDF{x}{y}{z_{2}} \rightarrow z_{1} = z_{2}\).
\end{lemma}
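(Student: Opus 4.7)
The plan is to use the standard $\Delta_{0}$-definition of exponentiation going back to Bennett. I would define $\ExpDF{x}{y}{z}$ so as to assert, by means of Gödel-style coding, that there is a sequence $(v_{0}, v_{1}, \dots, v_{y})$ with $v_{0} = 1$, $v_{y} = z$, and $v_{i + 1} = v_{i} \cdot x$ for all $i < y$. The key observation is that, in the intended interpretation, every entry $v_{i}$ is bounded by $z$ (provided $x \geq 2$), and that the coding parameters needed to encode such a sequence can be bounded by a $\Delta_{0}$ term in $x, y, z$.

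Concretely, I would take a $\Delta_{0}$ sequence-coding predicate $\beta(a,b,i,v)$ for which $I\Delta_{0}$ proves functionality, the coding lemma (with suitable bounds), and a one-step extension lemma, and define
\[
  \ExpDF{x}{y}{z} \coloneqq \Exists{a \leq B}{\Exists{b \leq B}{\Bigl(\beta(a,b,0,\numeral{1}) \wedge \beta(a,b,y,z) \wedge \ForallBounded{i}{<}{y}{\Forall{u \leq z}{\Forall{v \leq z}{\bigl(\beta(a,b,i,u) \wedge \beta(a,b,s(i),v) \rightarrow v = u * x\bigr)}}}\Bigr)}}
\]
for a suitable $\Delta_{0}$ bound $B = B(x,y,z)$ supplied by the coding lemma, with the degenerate cases $x \leq 1$ handled by an explicit disjunction (for instance, $\ExpDF{0}{y}{z} \leftrightarrow (y = 0 \wedge z = \numeral{1}) \vee (y \neq 0 \wedge z = 0)$, and analogously for $x = 1$). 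All quantifiers are bounded, so the resulting formula is $\Delta_{0}$.

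Property \eqref{exp:1} follows because the functionality of $\beta$ combined with the conjuncts $\beta(a,b,0,\numeral{1})$ and $\beta(a,b,0,z)$ forces $z = \numeral{1}$, while the converse is witnessed by any code of the one-element sequence $(1)$. For \eqref{exp:2}, the forward direction extracts $v$ as the element at position $y$ via $\beta(a,b,y,v)$ and applies the successor clause at $i = y$ to conclude $z = v \cdot x$; the backward direction uses the extension lemma to append $z = v \cdot x$ to a code of a sequence ending in $v$. Uniqueness of $z$ given $x$ and $y$ is then a routine bounded induction on $y$ using \eqref{exp:1} and \eqref{exp:2}.

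The main obstacle is setting up the $\beta$-function so that its coding and extension lemmas are provable in $I\Delta_{0}$ with bounds tight enough that the ambient $B(x,y,z)$ above can really be chosen to be a $\Delta_{0}$ term: naive Gödel coding via the Chinese remainder theorem yields factorial-type bounds, and obtaining polynomial-size codes requires the Bennett–Paris–Wilkie analysis of bounded arithmetic. I would appeal to this classical construction rather than reprove it here.
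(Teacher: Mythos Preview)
Your proposal is correct and amounts to the same thing the paper does: the paper's entire proof is a bare citation to \cite[Section~V.3]{hajek1993}, and you likewise sketch the intended shape of the definition before deferring the genuinely hard part (getting a $\Delta_{0}$ bound $B$ on the sequence code) to the classical Bennett--Paris--Wilkie construction. Your caveat about the bound is exactly right --- the displayed formula with a naive $\beta$-coding would not be $\Delta_{0}$ for any polynomial $B$, since coding a length-$y$ sequence with entries up to $z$ via CRT needs a modulus far exceeding any term in $x,y,z$ --- so the real content sits entirely in the cited work, just as in the paper.
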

\begin{proof}
  See \cite[Section~V.3]{hajek1993}
\end{proof}
In the following we will mainly work a theory that extends \(I\Delta_{0}\) by a statement asserting the totality of the exponential function.
\begin{definition}
  \label{def:idelta_0_plus_exponentiation}
  By \(I\Delta_{0} + \mathrm{EXP}\) we denote the theory that extends \(I\Delta_{0}\) by the axiom \(\Forall{x}{\Forall{y}{\Exists{z}{\ExpDF{x}{y}{z}}}}\).
\end{definition}
The theory \(I\Delta_{0} + \mathrm{EXP}\) is also called elementary arithmetic and has various equivalent formulations, see \cite[Section~1.1]{beklemishev2005}.
In the following we will develop a particular formulation with a \(\forall_{1}\) axiomatization and in which the \(\exists_{1}\) formulas of the extended language are exactly the \(\Sigma_{1}\) formulas.
\begin{lemma}
  \label{lem:idelta_0_has_a_pi_1_axiomatization}
  \(I\Delta_{0}\) has a \(\Pi_{1}\) axiomatization.
\end{lemma}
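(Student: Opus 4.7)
The plan is to exhibit a $\Pi_{1}$ set of axioms in $\LanguagePeanoArithmetic$ that is deductively equivalent to $I\Delta_{0}$, without extending the language, by exploiting that the paper's $\Pi_{1}$ class allows arbitrary bounded quantifiers in the matrix. Among the axioms of $\RobinsonQ$, only \ref{ax:Q:3} and \ref{ax:Q:8} fail to be $\forall_{1}$ after universal closure, and $\IND{\Sigma_{0}}$ has an obvious $\Pi_{2}$ form, so these are the three items to rewrite.

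First I would handle $\RobinsonQ$. Replace \ref{ax:Q:3} by the bounded variant $\Forall{x}{\left(x = 0 \vee \Exists{y \leq x}{x = s(y)}\right)}$, whose matrix is bounded and hence $\Pi_{1}$ as a whole. For \ref{ax:Q:8} split the biconditional: one direction is the $\forall_{1}$ formula $\Forall{x, y, z}{\left(z + x = y \to x \leq y\right)}$, the other becomes the $\Pi_{1}$ formula $\Forall{x, y}{\left(x \leq y \to \Exists{z \leq y}{z + x = y}\right)}$. Both bounded variants are provable in $I\Delta_{0}$ by routine $\Delta_{0}$-induction on $x$ (for the bounded \ref{ax:Q:3}) and by first proving $z + x = y \to z \leq y$ by induction (for the bounded \ref{ax:Q:8}); this uses only $Q_1$--$Q_7$ plus $\Delta_{0}$-induction.

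Second, for $\Delta_{0}$-induction I would replace each $I_{x}\varphi$, with $\varphi(x, \vec{z})$ a $\Delta_{0}$ formula, by the logically equivalent schema
\[
  \Forall{\vec{z}}{\Forall{x_{0}}{\left(\varphi(0, \vec{z}) \wedge \Forall{x_{1} < x_{0}}{\left(\varphi(x_{1}, \vec{z}) \to \varphi(s(x_{1}), \vec{z})\right)} \to \varphi(x_{0}, \vec{z})\right)}}.
\]
Its matrix is a Boolean combination of the $\Delta_{0}$ formula $\varphi$ and the bounded universal $\Forall{x_{1} < x_{0}}(\dots)$, hence $\Delta_{0}$, so the whole formula is $\Pi_{1}$. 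In plain first-order logic this is equivalent to $I_{x}\varphi$: from $I_{x}\varphi$ one obtains the bounded form by weakening $\Forall{x_{1}}$ to $\Forall{x_{1} < x_{0}}$, and conversely, given $\varphi(0)$ and the unbounded induction step, one instantiates the bounded axiom at each $x_{0}$ to conclude $\varphi(x_{0})$.

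Finally I would collect these three groups into a theory $T \subseteq \Pi_{1}$ and check $T \equiv I\Delta_{0}$. One direction uses the derivations just sketched; the other direction is immediate, since the bounded variants trivially imply the original unbounded \ref{ax:Q:3} and \ref{ax:Q:8}, and the bounded induction schema implies the usual $I_{x}\varphi$ by the argument above. The only real obstacle is verifying that the bounded forms of \ref{ax:Q:3} and \ref{ax:Q:8} are in fact provable in $I\Delta_{0}$; this is a standard but genuine use of $\Delta_{0}$-induction to establish basic ordering and successor facts, and should be stated as a small preliminary sublemma before assembling the final axiomatization.
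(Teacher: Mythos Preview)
Your proposal is correct and follows essentially the same approach as the paper: split \ref{ax:Q:8} into a $\forall_1$ direction and a bounded-existential direction, and replace each $I_x\varphi$ by the $\Pi_1$ form with a bounded induction-step hypothesis. The only difference is that the paper simply drops \ref{ax:Q:3} rather than replacing it with a bounded variant, since it is already derivable from the bounded $\Delta_0$-induction schema (applied to the $\Delta_0$ formula $x = 0 \vee \Exists{y \leq x}{x = s(y)}$).
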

\begin{proof}
  Drop axiom \ref{ax:Q:3}, replace axiom \ref{ax:Q:8} by the universal closure of the formulas \(x \leq y \rightarrow \Exists{z \leq y}{z + x = y}\) and \(z + x = y \rightarrow x \leq y\), and replace the induction axioms \(I_{x}\varphi\), where \(\varphi(x,\vec{z})\) is \(\Delta_{0}\) by
  \[
    \left(\varphi(0,\vec{z}) \wedge \Forall{y < x}{\left(\varphi(y,\vec{z}) \rightarrow \varphi(s(y),\vec{z}) \right)} \right) \rightarrow \varphi(x,\vec{z}).
  \]
  It is routine to check that the resulting theory is equivalent to \(I\Delta_{0}\).
\end{proof}
Now we will show that \(I\Delta_{0}\) has \(\Delta_{0}\) definitions of Skolem functions of all \(\Delta_{0}\) formulas.
Later on we will introduce the corresponding Skolem functions in order to get rid of bounded quantifiers.
\begin{definition}[Least number principle]
  \label{def:4}
  Let \(\varphi(x,\vec{z})\) be a formula, then the least number principle for \(\varphi\) is given by
  \[
    \Exists{x}{\varphi(x,\vec{z})} \rightarrow \Exists{x}{\left(\varphi(x,\vec{z}) \wedge \Forall{y < x}{\neg \varphi(y,\vec{z})}\right)}.
  \]
\end{definition}
\begin{lemma}
  \label{lem:idelta_0_proves_least_number_principle_for_delta_0_formulas}
  \(I\Delta_{0}\) proves the least number principle for \(\Delta_{0}\) formulas.
\end{lemma}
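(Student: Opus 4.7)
The plan is to derive the least number principle for a $\Delta_0$ formula $\varphi(x,\vec{z})$ as a direct consequence of $\Delta_0$-induction applied to the bounded-universal closure of $\neg \varphi$. First I would argue by contradiction: assuming both $\Exists{x}{\varphi(x,\vec{z})}$ and the failure of the least number principle, namely $\Forall{x}{(\varphi(x,\vec{z}) \rightarrow \Exists{y < x}{\varphi(y,\vec{z})})}$, I aim to derive $\Forall{x}{\neg \varphi(x,\vec{z})}$, which then contradicts the first assumption.

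The key step is to consider the auxiliary formula $\chi(x,\vec{z}) \coloneqq \Forall{y \leq x}{\neg \varphi(y,\vec{z})}$. Since the class of $\Delta_0$ formulas is closed under negation and bounded quantification, $\chi$ is $\Delta_0$, so induction on $\chi$ is available in $I\Delta_0$. For the base case $x = 0$, if $\varphi(0,\vec{z})$ held, then the failure assumption would provide some $y < 0$ with $\varphi(y,\vec{z})$, which is ruled out by the Robinson axioms. For the induction step, assuming $\chi(x,\vec{z})$ and aiming at $\chi(s(x),\vec{z})$, I take an arbitrary $y \leq s(x)$ and argue by cases: either $y \leq x$, in which case $\chi(x,\vec{z})$ gives $\neg \varphi(y,\vec{z})$ directly, or $y = s(x)$, in which case if $\varphi(s(x),\vec{z})$ held the failure assumption would yield $y' < s(x)$ with $\varphi(y',\vec{z})$, and hence $y' \leq x$ together with $\varphi(y',\vec{z})$ would again contradict $\chi(x,\vec{z})$. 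Concluding $\Forall{x}{\chi(x,\vec{z})}$ and instantiating with any witness of $\Exists{x}{\varphi(x,\vec{z})}$ yields the required contradiction.

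The main technical point is bookkeeping around the interaction of $\leq$ and $<$ with $s$, and the impossibility of $y < 0$; these facts are all readily available in $I\Delta_0$ via $\RobinsonQ$ together with the revised axiomatization of $\leq$ from Lemma~\ref{lem:idelta_0_has_a_pi_1_axiomatization}. No genuine obstacle arises, since the argument is essentially the textbook equivalence between induction and the least number principle restricted to a class closed under bounded quantification and boolean connectives.
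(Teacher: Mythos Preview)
Your argument is correct and is precisely the standard textbook derivation of the $\Delta_0$ least number principle from $\Delta_0$-induction. The paper itself does not give a proof but simply cites \cite[Theorem~1.22]{hajek1993}, so your proposal in fact supplies the content behind that reference.
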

\begin{proof}
  See \cite[Theorem~1.22]{hajek1993}.
\end{proof}
\begin{definition}
  \label{def:delta_0_skolem_function_delta_0_definition}
  Let \(\varphi(\vec{x},y,z)\) be a \(\Delta_{0}\) formula, then the formula \(D_{\Exists{z \leq y}{\varphi}}(\vec{x},y,z)\) is given by
  \[
    \left(z \leq y \wedge \varphi(x,y,z) \wedge \Forall{z' < z}{\neg \varphi(\vec{x},y,z')}\right)
    \vee \left(\Forall{z \leq y}{\neg \varphi(\vec{x},y,z)} \wedge z = 0 \right).
  \]
\end{definition}
\begin{lemma}
  \label{lem:properties_of_definition_of_skolem_function}
  Let \(\varphi(\vec{x},y,z)\) be a \(\Delta_{0}\) formula, then \(I\Delta_{0}\) proves
  \begin{enumerate}[label=(\roman*),ref =\ref{lem:properties_of_definition_of_skolem_function}{.}{\it (\roman*)}]
  \item \(\Exists{z \leq y}{\varphi(\vec{x},y,z)} \rightarrow \Exists{z \leq y}{(D_{\Exists{z \leq y}{\varphi}}(\vec{x},y,z) \wedge \varphi(\vec{x},y,z))}\) \label{lem:properties_of_definition_of_skolem_function:1}
  \item \(\Forall{z \leq y}{\neg \varphi(\vec{x},y,z)} \rightarrow D_{\Exists{z \leq y}{\varphi}}(\vec{x},y,0)\) \label{lem:properties_of_definition_of_skolem_function:2}
  \item \(\ExistsExOne{z}{D_{\Exists{z \leq y}{\varphi}}(\vec{x},y,z)}\), \label{lem:properties_of_definition_of_skolem_function:3}
  \item \(D_{\Exists{z \leq y}{\varphi}}(\vec{x},y,z) \rightarrow z \leq y\). \label{lem:properties_of_definition_of_skolem_function:4}
  \end{enumerate}
\end{lemma}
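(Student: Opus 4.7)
My plan is to prove the four items essentially directly from the definition of $D_{\Exists{z \leq y}{\varphi}}$, with the least number principle for $\Delta_{0}$ formulas (Lemma~\ref{lem:idelta_0_proves_least_number_principle_for_delta_0_formulas}) doing the only real work, namely for item \emph{(i)}. All reasoning takes place inside $I\Delta_{0}$; I fix arbitrary $\vec{x},y$ throughout and write $D(z)$ and $\psi(z)$ for $D_{\Exists{z \leq y}{\varphi}}(\vec{x},y,z)$ and $z \leq y \wedge \varphi(\vec{x},y,z)$ respectively. Note that $\psi$ is $\Delta_{0}$, so the least number principle applies to it.

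For \emph{(i)}, I assume $\Exists{z \leq y}{\varphi(\vec{x},y,z)}$, equivalently $\Exists{z}{\psi(z)}$. By Lemma~\ref{lem:idelta_0_proves_least_number_principle_for_delta_0_formulas} I obtain some $z_{0}$ with $\psi(z_{0}) \wedge \Forall{z' < z_{0}}{\neg \psi(z')}$. Since the bound $z \leq y$ already holds for $z_{0}$, and any $z' < z_{0}$ automatically satisfies $z' \leq y$, the conjunct $\Forall{z' < z_{0}}{\neg \psi(z')}$ simplifies to $\Forall{z' < z_{0}}{\neg \varphi(\vec{x},y,z')}$. Hence the first disjunct in $D(z_{0})$ is witnessed, and this $z_{0}$ certifies both $D(z_{0})$ and $\varphi(\vec{x},y,z_{0})$ while remaining $\leq y$. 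Item \emph{(ii)} is immediate: if $\Forall{z \leq y}{\neg \varphi(\vec{x},y,z)}$ then the second disjunct of $D$ is satisfied at $z = 0$.

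For item \emph{(iv)}, I inspect the two disjuncts of $D(z)$: the first explicitly contains $z \leq y$, while the second forces $z = 0$, and $0 \leq y$ follows from axioms \ref{ax:Q:4} and \ref{ax:Q:8} of $Q$. For the uniqueness part of \emph{(iii)}, I suppose $D(z_{1})$ and $D(z_{2})$ and do a case split on which disjunct each satisfies. If both satisfy the first, then $\varphi(\vec{x},y,z_{1}) \wedge \varphi(\vec{x},y,z_{2})$ together with the two minimality clauses $\Forall{z' < z_{i}}{\neg \varphi(\vec{x},y,z')}$ rule out $z_{1} < z_{2}$ and $z_{2} < z_{1}$, forcing $z_{1} = z_{2}$ by trichotomy (provable in $I\Delta_{0}$). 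If both satisfy the second, $z_{1} = 0 = z_{2}$. The mixed case is excluded: if, say, $z_{1}$ satisfies the first disjunct, then $z_{1} \leq y \wedge \varphi(\vec{x},y,z_{1})$, contradicting the universal clause $\Forall{z \leq y}{\neg \varphi(\vec{x},y,z)}$ in the second disjunct held by $z_{2}$.

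For existence in \emph{(iii)}, I use the classical tautology $\Exists{z \leq y}{\varphi(\vec{x},y,z)} \vee \Forall{z \leq y}{\neg \varphi(\vec{x},y,z)}$, and dispatch the two cases by \emph{(i)} and \emph{(ii)} respectively. The only non-routine ingredient in the entire argument is the invocation of the least number principle in \emph{(i)}, everything else being propositional manipulation of the definition of $D$ together with basic arithmetic consequences of $Q$; I therefore do not anticipate any obstacles.
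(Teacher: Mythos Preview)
Your proposal is correct and follows exactly the approach of the paper, which simply remarks that item \emph{(i)} follows easily from the least number principle (Lemma~\ref{lem:idelta_0_proves_least_number_principle_for_delta_0_formulas}) and that items \emph{(ii)}--\emph{(iv)} are straightforward. You have merely spelled out the details that the paper leaves implicit.
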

\begin{proof}
  The formula \ref{lem:properties_of_definition_of_skolem_function:1} follows easily from Lemma~\ref{lem:idelta_0_proves_least_number_principle_for_delta_0_formulas}.
  The formulas \ref{lem:properties_of_definition_of_skolem_function:2}--\ref{lem:properties_of_definition_of_skolem_function:4} are straightforward.
\end{proof}
We will now define the way in which we Skolemize \(\Delta_{0}\) formulas.
\begin{definition}
  \label{def:delta_zero_skolem_symbol}
  Let \(\varphi(\vec{x},y,z)\) be a \(\Delta_{0}\) formula, then \(F_{\Exists{z \leq y}{\varphi}}\) is a function symbol of arity \(|\vec{x}| + 1\).
\end{definition}
\begin{definition}
  \label{def:skolemization_functions}
  The formula translations \((\cdot)^{\exists}\) and \((\cdot)^{\forall}\) are defined mutually recursively by
  \begin{gather*}
    (\theta)^{Q} = \theta, \text{if \(\theta\) is quantifier-free},
    \\
    (\varphi_{1} \wedge \varphi_{2})^{Q} = \varphi_{1}^{Q} \wedge \varphi_{2}^{Q},
    \\
    (\varphi_{1} \vee \varphi_{2})^{Q} = \varphi_{1}^{Q} \vee \varphi_{2}^{Q},
    \\
    (\neg \varphi)^{Q} = \neg \varphi^{\overline{Q}},
    \\
    (\Quantifier{\overline{Q}}{y \leq x}{\varphi})^{Q} = \Quantifier{\overline{Q}}{y \leq x}{\varphi^{Q}}
    \\
    (\Exists{y \leq x}{\varphi(x,y,\vec{z})})^{\exists} = \left(y \leq x \wedge \varphi^{\exists}\right)[y/F_{\Exists{y \leq x}{\varphi}}(x,\vec{z})], \tag{\(\ast_{1}\)} \label{sk:exists}
    \\
    (\Forall{y \leq x}{\varphi(x,y,\vec{z})})^{\forall} = \left(y \leq x \rightarrow \varphi^{\forall}\right)[y/F_{\Exists{y \leq x}{\neg \varphi}}(x,\vec{z})], \tag{\(\ast_{2}\)} \label{sk:forall}
  \end{gather*}
  where \(Q \in \{ \forall, \exists \}\), \(\overline{\forall} = \exists\), \(\overline{\exists} = \forall\), and in Equations~\eqref{sk:exists} and \eqref{sk:forall} the variables \(\vec{z}\) all appear freely in the formula \(\varphi\).
\end{definition}
We can now obtain a suitable formulation of \(I\Delta_{0} + \mathrm{EXP}\).
\begin{lemma}
  \label{lem:idelta_0_plus_exponentation_has_a_suitable_formulation}
  There exists a \(\forall_{1}\) axiomatized conservative extension \(T\) of \(I\Delta_{0} + \mathrm{EXP}\) such that every \(\exists_{1}(L(T))\) formula is equivalent over \(T\) to a \(\Sigma_{1}\) formula and every \(\Sigma_{1}\) formula is equivalent over \(T\) to an \(\exists_{1}(L(T))\) formula.
\end{lemma}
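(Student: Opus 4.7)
The plan is to take the language $L$ obtained from $\LanguagePeanoArithmetic$ by adding a binary function symbol $E$ and all Skolem symbols $F_{\Exists{z \leq y}{\varphi}}$ from Definition~\ref{def:delta_zero_skolem_symbol} (one for each $\Delta_0$ formula $\varphi$). The theory $T$ is axiomatized over $L$ by (i) the fully Skolemized forms of the $\Pi_1$ axioms of $I\Delta_0$ given by Lemma~\ref{lem:idelta_0_has_a_pi_1_axiomatization}, (ii) the axiom $\ExpDF{x}{y}{E(x,y)}$, which Skolemizes the $\Forall{x,y}{\Exists{z}{\ExpDF{x}{y}{z}}}$-statement of $\mathrm{EXP}$, and (iii) for each $\Delta_0$ formula $\varphi(\vec{x},y,z)$, the Skolem defining axiom obtained by applying the translations of Definition~\ref{def:skolemization_functions} to the quantifier-free core $D_{\Exists{z \leq y}{\varphi}}(\vec{x},y,F_{\Exists{z \leq y}{\varphi}}(\vec{x},y))$. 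By iterating or combining $(\cdot)^\exists$ and $(\cdot)^\forall$ so as to Skolemize both polarities of bounded quantifiers, every axiom becomes a universal closure of a quantifier-free $L$-formula, so $T$ is $\forall_1$-axiomatized.

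Conservativity follows from a standard model-expansion argument: any $M \models I\Delta_0 + \mathrm{EXP}$ expands to an $L$-structure $M^*$ by interpreting $E$ as the exponential function (total by $\mathrm{EXP}$, unique by Lemma~\ref{lem:exponentiation_relation_in_idelta_0}) and each $F_{\Exists{z \leq y}{\varphi}}$ by the unique witness provided by Lemma~\ref{lem:properties_of_definition_of_skolem_function:3}. The structure $M^*$ then satisfies the Skolem axioms by construction and the Skolemized $\Pi_1$ axioms because the chosen Skolem functions witness the corresponding existentials. Hence every $\LanguagePeanoArithmetic$-sentence derivable from $T$ holds in every model of $I\Delta_0 + \mathrm{EXP}$, and the claim follows by completeness.

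For the equivalence between $\Sigma_1$ and $\exists_1(L(T))$, the direction $\Sigma_1 \to \exists_1(L(T))$ is handled by Skolemizing the $\Delta_0$ matrix $\theta$ of a $\Sigma_1$ formula $\Exists{\vec{y}}{\theta(\vec{x},\vec{y})}$: the fully Skolemized $\theta^*$ is quantifier-free in $L$, and $T \vdash \theta \leftrightarrow \theta^*$ is verified by structural induction on $\theta$ using the Skolem axioms, yielding the equivalent $\exists_1(L(T))$-formula $\Exists{\vec{y}}{\theta^*}$. For the converse $\exists_1(L(T)) \to \Sigma_1$, given a quantifier-free $L$-formula $\psi(\vec{x})$, I would eliminate the Skolem terms innermost-first: each occurrence $F_{\Exists{z \leq y}{\varphi}}(\vec{t},r)$ is replaced by a fresh variable $w$ under a bounded existential $\Exists{w \leq r}{\bigl(D_{\Exists{z \leq y}{\varphi}}(\vec{t},r,w) \wedge \psi[F_{\Exists{z \leq y}{\varphi}}(\vec{t},r)/w]\bigr)}$ (bounded by Lemma~\ref{lem:properties_of_definition_of_skolem_function:4} and justified by Lemma~\ref{lem:properties_of_definition_of_skolem_function:3}), and each occurrence $E(t_1,t_2)$ by a fresh variable $w$ under an unbounded existential with the $\Delta_0$-constraint $\ExpDF{t_1}{t_2}{w}$. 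Prenexing the introduced existentials yields a $\Sigma_1$ formula that is provably equivalent to $\Exists{\vec{y}}{\psi}$ over $T$.

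The main obstacle is ensuring that the Skolemization actually eliminates every bounded quantifier: the translations of Definition~\ref{def:skolemization_functions} only Skolemize the matching-polarity quantifier at each recursive step, so obtaining quantifier-free Skolemized axioms requires either iterating the procedure or using a combined translation that treats $\Forall{z \leq y}{\psi}$ in $(\cdot)^\exists$-polarity via the Skolem symbol $F_{\Exists{z \leq y}{\neg \psi}}$ (and symmetrically for $\Exists$ in $(\cdot)^\forall$-polarity). Once this bookkeeping is handled, the $\forall_1$-axiomatization, the conservativity argument, and the two translations between $\Sigma_1$ and $\exists_1(L(T))$ all follow from routine inductions on formula structure.
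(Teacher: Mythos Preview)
Your proposal is correct and follows essentially the same approach as the paper: add a function symbol for exponentiation and the Skolem symbols $F_{\Exists{z\leq y}{\varphi}}$, axiomatize $T$ by Skolemized versions of the $\Pi_1$ axioms together with the defining axioms for the new symbols, prove conservativity by a model-expansion argument, and establish the two directions of the $\Sigma_1$/$\exists_1(L(T))$ correspondence by Skolemizing respectively de-Skolemizing.

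There are two minor technical differences worth noting. First, for the axiomatization you worry about fully eliminating bounded quantifiers so that each axiom has a quantifier-free matrix; the paper does something simpler. It applies only the $(\cdot)^\exists$ translation to each $\Delta_0$ matrix, which Skolemizes the ``existential-like'' quantifier occurrences but leaves the ``universal-like'' ones (positive bounded $\forall$ and negative bounded $\exists$) in place, and then simply moves these remaining quantifiers outward to obtain a $\forall_1$ sentence. So your ``main obstacle'' is not actually an obstacle: there is no need to iterate or combine the two translations to get the axiomatization. Second, for the direction $\exists_1(L(T))\to\Sigma_1$, the paper first passes to an unnested $\exists_1$ formula (citing Hodges), so that the new symbols occur only in atoms of the form $f(\vec{u})=y$, and then replaces each such atom by its $\Delta_0$ defining formula. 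Your innermost-first elimination with bounded existentials achieves the same thing and is equally valid; the unnesting trick just streamlines the bookkeeping.
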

\begin{proof}
  We consider a \(\Pi_{1}\) formulation \(U\) of \(I\Delta_{0}\).
  For each axiom \(\Forall{\vec{x}}{\varphi}\) of \(U\) where \(\varphi\) is \(\Delta_{0}\), \(T\) contains the axiom \(\Forall{\vec{x}}{\varphi^{\exists}}\).
  Furthermore, \(T\) contains the axiom \(\mathrm{Exp}^{\exists}[z/e(x,y)]\).
  Finally, for each \(\Delta_{0}\) formula \(\varphi(\vec{x},y,z)\), \(T\) contains the axiom \((D_{\Exists{z \leq y}{\varphi}})^{\exists}[z/F_{\Exists{z \leq y}{\varphi}}(\vec{x},y)]\).
  Now obtain a \(\forall_{1}\) axiomatization by moving the remaining quantifiers outwards.
  By a model-theoretic argument it is straightforward to see that the resulting theory is conservative over \(I\Delta_{0} + \mathrm{EXP}\).

  It is straightforward to check that every \(\Delta_{0}\) formula \(\varphi\) is equivalent in \(T\) to a quantifier-free \(L(T)\) formula.
  Let \(\psi\) be a \(\Sigma_{1}\) formula, then \(\psi = \Exists{\vec{x}}{\varphi}\) where \(\varphi\) is \(\Delta_{0}\).
  Hence, \(\psi\) is equivalent over \(T\) to the formula \(\Exists{\vec{x}}{\varphi'}\) where \(\varphi'\) is a quantifier-free formula that is equivalent over \(T\) to \(\varphi\).
  Now let \(\psi\) be an \(\exists_{1}(L(T))\) formula, then by \cite[pp.\ 51--52]{hodges1997} there exists an equivalent unnested \(\exists_{1}(L(T))\) formula of the form \(\Exists{\vec{x}}{\varphi}\) where \(\varphi\) is quantifier-free.
  Now we simply replace atoms of the form \(f(\vec{u}) = y\) where \(f\) is either a Skolem symbol of a \(\Delta_{0}\) formula or \(e\) by the corresponding defining \(\Delta_{0}\) formula.
  Hence, the resulting formula is a \(\Sigma_{1}\) formula.
\end{proof}
In the following we fix one such extension of \(I\Delta_{0} + \mathrm{EXP}\) and call it \(\mathrm{EA}\).
\begin{definition}
  \label{def:grzegorczyk_arithmetics}
  Let \(k \in \NaturalNumbers\), then \(\mathrm{EA}_{k}\) denotes the theory \([\mathrm{EA}, \RIND{\Pi_{2}}]_{k}\).
\end{definition}
\begin{theorem}[\cite{sieg1991}]
  \label{thm:1}
  The provably total recursive functions of the theory \(\mathrm{EA}_{k}\) are precisely those of the class \(\mathcal{E}_{3 + k}\) of the Grzegorczyk hierarchy.
\end{theorem}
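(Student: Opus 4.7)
The plan is to follow the standard proof-theoretic route for Sieg's theorem, treating the two inclusions separately and inducting on \(k\). The base case \(k = 0\) is the classical characterization of the provably total recursive functions of elementary arithmetic as precisely the functions of \(\GrzegorczykClass{3}\), which I would invoke as known.

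For the lower inclusion, that every \(\GrzegorczykClass{3+k}\) function is provably total in \(\mathrm{EA}_{k}\), I would proceed by induction on \(k\). Given \(f \in \GrzegorczykClass{3+k+1}\), the function \(f\) is definable from \(\GrzegorczykClass{3+k}\) functions by composition, limited recursion, and one further iteration step. To internalize this iteration in \(\mathrm{EA}_{k+1} = [\mathrm{EA}_{k}, \RIND{\Pi_{2}}]\), express the totality of the iterate as a parameter-free \(\Pi_{2}\) formula \(\gamma(x)\) of the form \(\Forall{\vec{u}}{\Exists{y}{\psi(x,\vec{u},y)}}\) asserting that the \(x\)-fold iterate is total; verify \(\gamma(0)\) and \(\gamma(x) \to \gamma(s(x))\) in \(\mathrm{EA}_{k}\) using the induction hypothesis (which provides provable totality of the underlying \(\GrzegorczykClass{3+k}\) functions); and apply \(\RIND{\Pi_{2}}\) once to obtain \(\Forall{x}{\gamma(x)}\) in \(\mathrm{EA}_{k+1}\).

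For the upper inclusion I would peel off applications of the induction rule one layer at a time. Given an \(\mathrm{EA}_{k}\)-proof of \(\Forall{x}{\Exists{y}{\varphi(x,y)}}\) with \(\varphi\) quantifier-free, decompose it as a proof over \(\mathrm{EA}_{k-1}\) of finitely many \(\Pi_{2}\) conclusions of \(\RIND{\Pi_{2}}\) instances, together with a quantifier-free deduction of the target. By a Herbrand-style witnessing analysis of each \(\Pi_{2}\) conclusion, exploiting the absence of parameters to keep the Herbrand terms unary in the induction variable, one obtains witness functions definable by a single iteration of functions already witnessed at level \(\GrzegorczykClass{3+k-1}\), hence lying in \(\GrzegorczykClass{3+k}\). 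Iterating the peeling reduces the proof to one over \(\mathrm{EA}\) with witness terms built from at most \(k\) nested iterations of elementary functions, which bound the original witness by a \(\GrzegorczykClass{3+k}\) function.

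The main obstacle is the bookkeeping in the witness-extraction, namely ensuring that each nested application of \(\RIND{\Pi_{2}}\) contributes exactly one Grzegorczyk level, neither more nor less. The parameter-free restriction is crucial here: an induction formula with free parameters would permit simultaneous inductions parameterized by already-iterated functions and thus collapse the strict hierarchy, while conversely the lower bound requires enough strength in the rule to realize one genuine new iteration per layer. A full proof would reconstruct the cut-elimination and Herbrand analysis carried out in \cite{sieg1991}.
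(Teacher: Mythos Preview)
The paper does not prove this theorem at all: it cites \cite{sieg1991} in the statement and, for a proof, simply writes ``See also the proof Corollary~7.5 of \cite{beklemishev1997}.'' So there is no paper-internal argument to compare against; your outline goes well beyond what the paper supplies, and the broad shape (lower bound by formalizing one iteration per rule application, upper bound by Herbrand/witness extraction peeling off one level at a time) is indeed the standard Sieg--Parsons route.

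There is, however, a genuine conceptual error in your final paragraph. You claim that ``the parameter-free restriction is crucial'' and that allowing induction parameters would ``collapse the strict hierarchy.'' But \(\mathrm{EA}_{k}\) is defined in the paper as \([\mathrm{EA}, \RIND{\Pi_{2}}]_{k}\), using the \(\Pi_{2}\) induction rule \emph{with} parameters, and the hierarchy is nonetheless strict. What prevents collapse is not the absence of parameters but the rule-versus-schema distinction: the hypotheses of each rule instance must already be provable at the previous level, so one cannot feed the conclusion back into the inductivity proof at the same depth. Indeed, the paper's next lemma (Lemma~\ref{thm:reformulation_of_grzegorczyk_arithmetics}) shows explicitly, via \cite[Lemma~4.6]{beklemishev2005}, that over \(\mathrm{EA}\) the \(\Sigma_{1}\) rule with parameters and the parameter-free \(\exists_{1}\) rule coincide level by level. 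Your upper-bound sketch therefore rests on a mistaken premise: the Herbrand analysis must handle parameters, and the bound on growth comes from the nesting depth of rule applications, not from unarity of the induction formula.
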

\begin{proof}
  See also the proof Corollary~7.5 of \cite{beklemishev1997}.
\end{proof}
We can reformulate the theories \(\mathrm{EA}_{k}\) as follows.
\begin{lemma}
  \label{thm:reformulation_of_grzegorczyk_arithmetics}
  Let \(k \in \NaturalNumbers\), then \(\mathrm{EA}_{k} \equiv [\mathrm{EA}, \RINDParameterFree{\exists_{1}(L(\mathrm{EA}))}]_{k}.\)
\end{lemma}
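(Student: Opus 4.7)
The plan is to prove the equivalence by induction on $k$. The base case $k = 0$ is immediate since both sides reduce to $\mathrm{EA}$. For the inductive step, by the induction hypothesis it suffices to show that over any common base theory $T \supseteq \mathrm{EA}$, one application of $\RIND{\Pi_2}$ yields the same extension as one application of $\RINDParameterFree{\exists_{1}(L(\mathrm{EA}))}$.

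First I would eliminate the discrepancy between $\exists_{1}(L(\mathrm{EA}))$ and $\Sigma_{1}$. By Lemma~\ref{lem:idelta_0_plus_exponentation_has_a_suitable_formulation}, the classes of $\exists_{1}(L(\mathrm{EA}))$ and $\Sigma_{1}$ formulas coincide modulo provable equivalence in $\mathrm{EA}$. Since an induction rule is preserved when the induction formula is replaced by a $T$-provably equivalent one, the rules $\RINDParameterFree{\exists_{1}(L(\mathrm{EA}))}$ and $\RINDParameterFree{\Sigma_{1}}$ have the same instances over any $T \supseteq \mathrm{EA}$. Hence the problem reduces to showing that $\RIND{\Pi_{2}}$ and $\RINDParameterFree{\Sigma_{1}}$ are interderivable over $\mathrm{EA}$.

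The containment $\RINDParameterFree{\Sigma_{1}} \subseteq \RIND{\Pi_{2}}$ is the easy direction: a parameter-free $\Sigma_{1}$ formula $\gamma(x)$ is, up to a vacuous universal quantifier, a $\Pi_{2}$ formula, and the premises of the parameter-free rule for $\gamma$ match those of $\RIND{\Pi_{2}}$ with empty parameter list. The reverse containment $\RIND{\Pi_{2}} \subseteq \RINDParameterFree{\Sigma_{1}}$ is the classical result due to Beklemishev~\cite{beklemishev1997}, established via two ingredients: \emph{parameter elimination}, absorbing the parameters $\vec{z}$ of a $\Pi_{2}$ induction formula into a single coded parameter by means of the pairing and sequence coding available in $\mathrm{EA}$; and \emph{quantifier complexity reduction}, transforming the $\Pi_{2}$ induction statement into a $\Sigma_{1}$ induction statement by exploiting the provably total $\Delta_{0}$ Skolem functions of $\mathrm{EA}$ together with the $\Sigma_{1}$ partial truth definition.

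The main obstacle is the reverse containment: the parameter elimination must be carried out at the level of rules (not axiom schemas), so that provability of the premises of the new rule instance follows from provability of the original ones, and the conclusion of the new instance implies the conclusion of the original. This is where the coding strength of $\mathrm{EA}$ is essential and where the argument of \cite{beklemishev1997} applies. Once the interderivability of the two rules over $\mathrm{EA}$ is established, the inductive step of the main claim follows immediately, completing the proof.
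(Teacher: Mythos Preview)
Your approach is essentially the same as the paper's: induction on $k$, with the inductive step reducing to the interderivability over $\mathrm{EA}_{k}$ of $\RIND{\Pi_{2}}$, $\RIND{\Sigma_{1}}$, and $\RINDParameterFree{\Sigma_{1}}$, together with the $\Sigma_{1}$/$\exists_{1}(L(\mathrm{EA}))$ correspondence from Lemma~\ref{lem:idelta_0_plus_exponentation_has_a_suitable_formulation}. The paper separates the two reductions, citing \cite[Corollary~7.4]{beklemishev1997} for $[\mathrm{EA}_{k},\RIND{\Pi_{2}}] \equiv [\mathrm{EA}_{k},\RIND{\Sigma_{1}}]$ and \cite[Lemma~4.6]{beklemishev2005} for $[\mathrm{EA}_{k},\RIND{\Sigma_{1}}] \equiv [\mathrm{EA}_{k},\RINDParameterFree{\Sigma_{1}}]$, whereas you bundle both under a single appeal to \cite{beklemishev1997}.

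One point you gloss over: you claim the rule equivalence holds ``over any common base theory $T \supseteq \mathrm{EA}$'', but the cited result (Corollary~7.4 of \cite{beklemishev1997}) is stated for $\Pi_{2}$-axiomatized base theories. The paper explicitly verifies that $\mathrm{EA}_{k}$ is $\Pi_{2}$-axiomatized before invoking it. This is easy to see (each application of $\RIND{\Pi_{2}}$ adds a $\Pi_{2}$ sentence), but it is a hypothesis you should state rather than absorb into an overly general claim.
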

\begin{proof}
  We proceed by induction on \(k\) and show
    \[
    \mathrm{EA}_{k} \equiv [\mathrm{EA},\RINDParameterFree{\exists_{1}(L(\mathrm{EA}))}]_{k}.
  \]
  If \(k = 0\), then the claim follows trivially.
  Now assume the claim for \(k\), then \(\mathrm{EA}_{k}\) is \(\Pi_{2}\) axiomatized, hence by \cite[Corollary~7.4]{beklemishev1997}
  \[
    \mathrm{EA}_{k + 1} \equiv [\mathrm{EA}_{k}, \RIND{\Pi_{2}}] \equiv [\mathrm{EA}_{k}, \RIND{\Sigma_{1}}].
  \]
  Furthermore, by \cite[Lemma~4.6]{beklemishev2005} we have
  \[
    [\mathrm{EA}_{k}, \RIND{\Sigma_{1}}] \equiv [\mathrm{EA}_{k}, \RINDParameterFree{\Sigma_{1}}].
  \]
  Since over \(\mathrm{EA}\) the \(\Sigma_{1}\) formulas are exactly the \(\exists_{1}(L(\mathrm{EA}))\) formulas, we obtain
  \[
    [\mathrm{EA}_{k}, \RINDParameterFree{\Sigma_{1}}] \equiv [\mathrm{EA}_{k}, \RINDParameterFree{\exists_{1}(L(\mathrm{EA}))}].
  \]
  By the induction hypothesis we readily obtain \[
    [\mathrm{EA},\RIND{\Pi_{2}}]_{k + 1} \equiv [\mathrm{EA}, \RINDParameterFree{\exists_{1}(L(\mathrm{EA}))}]_{k + 1}. \qedhere
  \]
\end{proof}
Since \(\mathcal{E}_{k} \subsetneq \mathcal{E}_{k + 1}\) for all \(k \in \NaturalNumbers\), we can now provide a proof of Theorem~\ref{thm:separation_nesting_of_induction_rule}.
\begin{proof}[Proof of Theorem~\ref{thm:separation_nesting_of_induction_rule}]
  Let \(k \in \NaturalNumbers\), then there exists a function \(f: \NaturalNumbers \to \NaturalNumbers\) such that \(f \in \mathcal{E}_{k + 4} \setminus \mathcal{E}_{k + 3}\).
  Hence, there exists a \(\Sigma_{1}\) formula \(\varphi(x,y)\) such that \(f(n) = m\) if and only if \(\NaturalNumbers \models \varphi(\numeral{n},\numeral{m})\) and
  \begin{gather*}
    [\mathrm{EA},\RINDParameterFree{\exists_{1}(L(\mathrm{EA}))}]_{k + 1} \vdash \Exists{y}{\varphi(x,y)},
    \\
    [\mathrm{EA},\RINDParameterFree{\exists_{1}(L(\mathrm{EA}))}]_{k} \not \vdash \Exists{y}{\varphi(x,y)}.
  \end{gather*}
  Thus, by the construction of \(\mathrm{EA}\), there exists a quantifier-free \(L(\mathrm{EA})\) formula \(\varphi'(x,y,\vec{z})\) such that \(\mathrm{EA} \vdash \varphi \leftrightarrow \Exists{\vec{z}}{\varphi'} \).
  Since \(\mathrm{EA}\) is \(\forall_{1}\) axiomatized we furthermore have \(\mathrm{EA} + [\varnothing,\RINDParameterFree{\exists_{1}(L(\mathrm{EA}))}] \equiv [\mathrm{EA},\RINDParameterFree{\exists_{1}(L(\mathrm{EA}))}]\).
  Hence, \(\mathcal{C} \coloneqq \cnf(\mathrm{EA} + \Forall{y}{\Forall{\vec{z}}{\neg \varphi'(\eta,y,\vec{z})}})\) is a suitable clause set.
\end{proof}
This result tells us that a mechanism that extends refutation by a clause set cycle so as to allow at most \(k\)-fold nested \(\exists_{1}\) parameter-free induction rule is strictly weaker than a mechanism that allows \((k + 1)\)-fold nested applications of the \(\exists_{1}\) parameter-free induction rule.
This naturally gives rise to the question whether we can separate a system that provides arbitrary nestings of the parameter-free \(\exists_{1}\) induction rule from a system that provides the parameter-free \(\exists_{1}\) induction schema.
The following lemma shows that we need a different approach to resolve this question.
\begin{lemma}[\cite{parsons1972}]
  \label{lem:i_sigma_1_pi_2_conservative_over_ea_sigma_1_ir}
  \(I\Sigma_{1}\) is \(\Pi_{2}\) conservative over \(\mathrm{EA} + \RIND{\Sigma_{1}}\).
\end{lemma}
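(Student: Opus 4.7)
The plan is to follow Parsons' classical argument, exploiting the Grzegorczyk stratification of $\mathrm{EA} + \RIND{\Sigma_1}$ that has been developed earlier in this section. By Definition~\ref{def:6} together with Lemma~\ref{thm:reformulation_of_grzegorczyk_arithmetics} and Theorem~\ref{thm:1}, the provably total recursive functions of $\mathrm{EA} + \RIND{\Sigma_1} = \bigcup_{k \in \NaturalNumbers}[\mathrm{EA},\RIND{\Sigma_1}]_k$ are exactly $\bigcup_{k \in \NaturalNumbers}\mathcal{E}_{3+k}$, i.e., the primitive recursive functions. So the target result is morally the statement that $I\Sigma_1$ is $\Pi_{2}$-conservative over primitive recursive arithmetic.

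Suppose $I\Sigma_1 \vdash \Forall{x}{\Exists{y}{\varphi(x,y)}}$ with $\varphi$ a $\Delta_0$ formula. The first main step is to extract a primitive recursive witness function. This is carried out by a partial cut-elimination argument on a sequent-calculus formalization of $I\Sigma_1$ in which the $\Sigma_1$ induction schema is represented as an induction rule; cuts are reduced to $\Sigma_1$ complexity, and the resulting proof is analyzed to produce a primitive recursive term $t(x)$ such that the (appropriately expanded) theory proves $\varphi(x,t(x))$. Equivalently one may use a $\Sigma_1$ Dialectica / no-counterexample interpretation whose realizers are primitive recursive functionals of type $0$, together with the standard fact that $\Sigma_1$ induction is interpreted by bounded primitive recursion of type $0$.

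The second main step is to verify this witness inside $\mathrm{EA} + \RIND{\Sigma_{1}}$. Each primitive recursive function lies in $\mathcal{E}_{3+k}$ for some $k \in \NaturalNumbers$ and is therefore provably total in $\mathrm{EA}_{k}$ by Theorem~\ref{thm:1}; by Lemma~\ref{thm:reformulation_of_grzegorczyk_arithmetics} the totality and the defining equations of $t$ are available already in $[\mathrm{EA},\RIND{\Sigma_{1}}]_{k}$ via a $\Sigma_{1}$ graph formula $\psi_t(x,y)$. It then remains to verify that $\Forall{x}{\varphi(x,t(x))}$ is itself provable at some finite nesting level $[\mathrm{EA},\RIND{\Sigma_{1}}]_{m}$. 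This follows by tracking the extraction: the verification proof of the witness uses only $\Sigma_{1}$ induction on formulas that are subformulas of the original cut-eliminated proof, hence a bounded nesting of the $\Sigma_{1}$ induction rule over $\mathrm{EA}$ suffices. Taking the union over all $m$ yields $\mathrm{EA} + \RIND{\Sigma_{1}} \vdash \Forall{x}{\Exists{y}{\varphi(x,y)}}$, as required.

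The main obstacle is the passage from the schematic $\Sigma_1$-induction axioms of $I\Sigma_1$, in which induction parameters are freely instantiable, to a finite nesting of the parameter-carrying induction rule $\RIND{\Sigma_1}$. The delicate point is that instantiating the parameters after the induction step must be simulated by first fixing the parameters and only then applying the rule; this is where Beklemishev's reduction $[T,\RIND{\Sigma_{1}}] \equiv [T,\RINDParameterFree{\Sigma_{1}}]$ (already invoked in the proof of Lemma~\ref{thm:reformulation_of_grzegorczyk_arithmetics}) does the crucial work, together with the fact that $\mathrm{EA}$ is $\forall_{1}$-axiomatized so that hypotheses of the induction rule can be absorbed into the induction formula without increasing its quantifier complexity. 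Everything else is a careful bookkeeping of the witness extraction.
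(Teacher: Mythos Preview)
The paper does not give a proof of this lemma at all; it is stated with a bare citation to \cite{parsons1972} and then used as a known result. So there is nothing to compare against on the paper's side.

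Your sketch is a faithful outline of the classical Parsons--Mints--Takeuti argument, and the way you tie it back into the paper's own apparatus (Theorem~\ref{thm:1} and Lemma~\ref{thm:reformulation_of_grzegorczyk_arithmetics}) is sensible. One point worth tightening: in your second step you invoke Theorem~\ref{thm:1} to get that the witness $t$ is provably total in some $\mathrm{EA}_k$, but that only gives $\mathrm{EA}_k \vdash \Forall{x}{\Exists{y}{\psi_t(x,y)}}$; the genuinely nontrivial part is that the \emph{verification} $\Forall{x}{\varphi(x,t(x))}$ is also provable there. You do flag this (``tracking the extraction''), but note that this is precisely the content of Parsons' theorem and cannot be read off from Theorem~\ref{thm:1} alone --- the latter characterizes provably total functions, not arbitrary $\Pi_2$ consequences. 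The cut-elimination or Dialectica extraction must produce both the witness and its verification proof simultaneously, and it is the structure of that extraction (each $\Sigma_1$-induction in the original proof becoming a primitive recursion whose correctness is itself verified by a $\Sigma_1$-induction) that bounds the nesting depth.
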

Hence the theory \(\mathrm{EA} + \IND{\exists_{1}(L(\mathrm{EA}))}\) is also \(\Pi_{2}\) conservative over \(\mathrm{EA} + \RINDParameterFree{L(\mathrm{EA})}\).
Thus the technique used above does not provide us with a clause set that separates both systems.

Nevertheless, we conjecture that the parameter-free \(\exists_{1}\) induction schema is in general stronger than the parameter-free \(\exists_{1}\) induction rule.
\begin{conjecture}
  \label{con:axiom_stronger_than_rule}
  There exists a language \(L\) and an \(L \cup \{ \eta \}\) clause set \(\mathcal{D}(\eta)\) such that \(\INDParameterFree{\exists_{1}(L)} + \mathcal{D}(\eta)\) is inconsistent, but \((\varnothing + \RINDParameterFree{\exists_{1}(L)}) + \mathcal{D}(\eta)\) is consistent.
\end{conjecture}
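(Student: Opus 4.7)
The plan is to exploit the structural difference between the schema and the iterated rule over the empty base theory. The schema $\INDParameterFree{\exists_{1}(L)}$ makes each $I_{x}\varphi$ available as an axiom, so its antecedent may be discharged from $\mathcal{D}(\eta)$ within a first-order proof; by contrast, the theorems of $\varnothing + \RINDParameterFree{\exists_{1}(L)}$ are closed off by rule applications \emph{before} $\mathcal{D}(\eta)$ is added, so each application must be justified by first-order valid premises and can never ``see'' the clause set. This suggests picking an $\exists_{1}(L)$ formula $\gamma(x)$ whose inductivity is entailed by $\mathcal{D}(\eta)$ but not by pure logic, and arranging $\mathcal{D}(\eta)$ to refute $\gamma(\eta)$.

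A concrete candidate is $L \coloneqq \{ 0, s, P, f, c \}$ (with $P$ binary, $f$ binary, $c$ a constant) together with
\[
  \mathcal{D}(\eta) \coloneqq \cnf\!\left( P(0, c) \wedge \Forall{x, y}{\left( P(x, y) \rightarrow P(s(x), f(x, y)) \right)} \wedge \neg \Exists{y}{P(\eta, y)} \right)
\]
and induction formula $\gamma(x) \coloneqq \Exists{y}{P(x, y)}$. Then $\mathcal{D}$ entails $\gamma(0)$ and $\Forall{x}{(\gamma(x) \rightarrow \gamma(s(x)))}$, so $I_{x}\gamma$ yields $\gamma(\eta)$, contradicting $\neg \Exists{y}{P(\eta, y)}$. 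This handles the schema half of the conjecture. For the consistency half I would build the ``loop'' model $M$ with domain $\mathbb{N} \cup \{ \eta \}$, interpretations $s^{M}(\eta) = \eta$, $c^{M} = 0$, $P^{M} = \{ (n, n) \mid n \in \mathbb{N}\}$, and $f^{M}(n, n) = n + 1$ (with the other values of $f^{M}$ arbitrary), and routinely check that $M \models \mathcal{D}(\eta)$.

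The main obstacle is then showing that every theorem of $\varnothing + \RINDParameterFree{\exists_{1}(L)}$ holds in $M$. Unwinding the iterated rule, this is the statement that whenever $\gamma'(0)$ and $\Forall{x}{(\gamma'(x) \rightarrow \gamma'(s(x)))}$ are first-order valid for an $\exists_{1}(L)$ formula $\gamma'(x)$ with $x$ as its only free variable, also $M \models \gamma'(\eta)$---essentially a conservativity claim for $\varnothing + \RINDParameterFree{\exists_{1}(L)}$ over first-order logic on loop-expansions of standard models. I expect to prove it via a Herbrand-style analysis: the witnesses extracted from Herbrand's theorem for $\gamma'(0)$, built from the ground terms of $L$, should propagate along the inductive step and be adaptable to witnesses at the $s$-fixed point $\eta$ using $\eta$ itself as a term. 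Should this direct approach prove unwieldy, the fallback is to refine $L$, $M$ and $f^{M}$ so that $\eta$ becomes $\exists_{1}(L)$-indistinguishable from a standard element while the constraint $\neg \Exists{y}{P(\eta, y)}$ is preserved.
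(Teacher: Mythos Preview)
The paper does \emph{not} prove this statement: it is explicitly left as an open conjecture (see the discussion surrounding Lemma~\ref{lem:i_sigma_1_pi_2_conservative_over_ea_sigma_1_ir}, where the authors note that the provably-total-function technique used for Theorem~\ref{thm:separation_nesting_of_induction_rule} cannot separate the schema from the fully iterated rule). So you are attempting to settle something the paper leaves open.

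Your framing is correct, and the inconsistency half is fine: with your $L$, $\mathcal{D}(\eta)$, and $\gamma(x)=\Exists{y}{P(x,y)}$, the schema instance $I_{x}\gamma$ together with $\mathcal{D}(\eta)$ yields a contradiction. The consistency half, however, has genuine gaps.

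First, your ``unwinding'' of the iterated rule is only the first level. For $M\models\varnothing+\RINDParameterFree{\exists_{1}(L)}$ you need $M\models\Forall{x}{\gamma'(x)}$ whenever $\gamma'$ is $[\varnothing,\RINDParameterFree{\exists_{1}(L)}]_{n}$-inductive for \emph{some} $n$, not merely when the premises are logically valid. Second, and more seriously, the Herbrand idea does not bridge the gap to $\eta$: Herbrand witnesses for $\gamma'(0)$ are closed $L$-terms, and the step yields terms $s_{j}(x,y)$ so that $\vdash\varphi(x,y)\rightarrow\bigvee_{j}\varphi(s(x),s_{j}(x,y))$. In $M$ this gives a self-map on putative witnesses at $\eta$ (since $s^{M}(\eta)=\eta$), but it produces a witness at $\eta$ only if one already exists---and $\eta$ is unreachable from $0^{M}$ along $s$, so the base-case witnesses never arrive there. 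Third, your fallback is self-defeating as stated: $\Exists{y}{P(x,y)}$ is itself an $\exists_{1}(L)$ formula that, by the very design of $\mathcal{D}(\eta)$, must distinguish $\eta$ from every standard element, so $\eta$ cannot be made $\exists_{1}(L)$-indistinguishable from a standard element in any model of $\mathcal{D}(\eta)$.

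Whether your loop model $M$ actually satisfies $\varnothing+\RINDParameterFree{\exists_{1}(L)}$ is therefore unresolved by your argument; I do not see an easy counterexample, but neither do I see a proof. If you want a starting point already in the paper, note that Lemma~\ref{lem:15} and Corollary~\ref{cor:5} together show $M_{1}\models(\BLA+\axiomB{2}+\axiomB{3})+\RINDParameterFree{\exists_{1}(\LLA)}$ while $M_{1}\not\models\INDParameterFree{\exists_{1}(\LLA)}$, so the schema/rule separation is already witnessed for $\LLA$. The difficulty in converting this into a clause set $\mathcal{D}(\eta)$ is that the hypotheses of the failing induction instance $I_{x}\chi$ are $\Pi_{2}$, not $\forall_{1}$; Skolemizing them changes the language, and one would then have to re-establish the analogue of Corollary~\ref{cor:5} for the enlarged language---which is exactly the kind of obstacle your proposal runs into as well.
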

The results in this section are less elementary than the results of Sections~\ref{sec:unrefutability_instance_restriction} and \ref{sec:unrefutability_rule_and_parameters} in the sense that we work over the comparatively strong \(\mathrm{EA}\) and the separation involves clause sets that express totality assertions.
However, totality assertions are an important class of problems for AITP systems.
In this sense the connection with the Grzegorczyk hierarchy is remarkable.
\section{Idempotents in linear arithmetic}
\label{sec:idemp-line-arithm}
In the previous section we have introduced clause set cycles and we have given a characterization of refutation by a clause set cycle in terms of a logical theory.
Moreover, we have shown two unrefutability results for clause set cycles.
We have shown the second unrefutability result by anticipating the independence result of Theorem~\ref{thm:5} for which a proof will be provided in this section.
In Section~\ref{sec:syntatic_simplifications} we introduce some preliminary notions and we carry out some syntactic simplifications on \(\exists_{1}\) formulas.
In Section~\ref{sec:linear-systems} we consider some properties of \(\exists_{1}\) formulas in the structures \(\mathbb{N}\) and \(\mathbb{Z}\).
Finally, in Section~\ref{sec:model-theor-constr} we carry out the model theoretic construction.

We work in the setting of linear arithmetic, hence, unless stated otherwise, whenever we speak of a formula (sentence) we mean an \(\LLA\) formula (sentence).
\subsection{Preliminaries}
\label{sec:syntatic_simplifications}
In this section we mainly carry out some syntactic transformations that allow us to eliminate the function symbols \(p\) and \(0\) from \(\exists_{1}\) formulas.
The absence of these symbols allows us to carry out certain embeddings of structures in Sections~\ref{sec:linear-systems} and \ref{sec:model-theor-constr}.
\begin{definition}
  \label{def:7}
  The theory \(\theoryV\) is axiomatized by the universal closure of the formulas
    \begin{equation}
    \label{eq:9}
    \numeral{k} + x = x + \numeral{k}, \tag{\(\axiomV{k}\)}
  \end{equation}
  where \(k \in \mathbb{N}\).
\end{definition}
\begin{lemma}
  \label{lem:1}
  \([B, \RINDParameterFree{\Open(\{ 0, s, +\})}] \vdash \theoryV\).
\end{lemma}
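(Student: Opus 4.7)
The plan is to reduce proving $\theoryV$ to proving each of its axioms $\axiomV{k}$ separately, and for each fixed $k \in \mathbb{N}$ to verify that the quantifier-free formula
\[
  \varphi_k(x) \;\coloneqq\; \numeral{k} + x = x + \numeral{k}
\]
is $\BLA$-inductive in $x$. Since $k$ is a concrete metanumber, $\numeral{k} = s^k(0)$ is a closed term, so $\varphi_k$ is an $\Open(\{0,s,+\})$ formula whose only free variable is $x$. Hence a successful verification of $\BLA$-inductivity combined with the rule $\RINDParameterFree{\Open(\{0,s,+\})}$ yields $\Forall{x}{\varphi_k}$, which is exactly $\axiomV{k}$.

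Before handling base and step, I would record two routine identities, each proved by external (meta-level) induction on $k$ from \ref{Q_3} and \ref{Q_4}:
\begin{gather*}
  \BLA \vdash 0 + \numeral{k} = \numeral{k}, \tag{a}\\
  \BLA \vdash x + \numeral{k} = s^k(x). \tag{b}
\end{gather*}
For (a), the base $0 + 0 = 0$ is \ref{Q_3}, and the step uses \ref{Q_4}: $0 + \numeral{k+1} = 0 + s(\numeral{k}) = s(0 + \numeral{k}) = s(\numeral{k}) = \numeral{k+1}$. The proof of (b) is analogous.

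With (a) and (b) in hand, the base case is immediate: \ref{Q_3} gives $\numeral{k} + 0 = \numeral{k}$ and (a) gives $0 + \numeral{k} = \numeral{k}$, so $\BLA \vdash \varphi_k(0)$. For the induction step I would assume $\varphi_k(x)$ and compute in $\BLA$
\[
  \numeral{k} + s(x) = s(\numeral{k} + x) = s(x + \numeral{k})
\]
using \ref{Q_4} and the hypothesis; then apply (b) twice to obtain $s(x + \numeral{k}) = s(s^k(x)) = s^k(s(x)) = s(x) + \numeral{k}$, which is $\varphi_k(s(x))$.

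There is no genuine obstacle, only two points that require a small comment. First, it is essential that $\varphi_k$ contains no free variable besides $x$, since the rule available is \emph{parameter-free}; this is why one fixes $k$ in the metatheory rather than trying to induct simultaneously on $k$ and $x$ with $k$ as a parameter. Second, although the auxiliary identities (a) and (b) are themselves universally quantified statements, we never need to prove them as theorems of $\BLA$ uniformly in $k$ — only for each individual numeral, where a finite unfolding into $s$'s suffices. The entire argument therefore stays within a single application of the open parameter-free induction rule per axiom $\axiomV{k}$.
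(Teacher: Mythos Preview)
Your proof is correct and essentially the same as the paper's. The only difference is cosmetic: the paper inducts on the auxiliary formula $\numeral{k}+x = s^{k}(x)$ (whose step case needs only \ref{Q_4}) and then combines with your fact (b), whereas you induct directly on $\numeral{k}+x = x+\numeral{k}$ and invoke (b) inside the step; both rest on the same observation that $x+\numeral{k}=s^{k}(x)$ is provable in $\BLA$ without induction.
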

\begin{proof}
  The formula \(\numeral{k} + x = s^{k}(x)\) is \(\BLA\)-inductive and furthermore \(\BLA \vdash s^{k}(x) = x + \numeral{k}\).
\end{proof}
We will carry out these transformations in the very weak theory \(\BLA + \axiomB{1} + \theoryV\).
In a first step we will show that we can eliminate the symbol \(p\) from \(\exists_{1}\) formulas without increasing the quantifier complexity of \(\exists_{1}\) formulas.
After that, we show that we can moreover eliminate to a certain extent the symbol \(0\) from \(\exists_{1}\) formulas, again without increasing the quantifier complexity.

In order to eliminate the symbol \(p\) from \(\exists_{1}\) formulas we proceed by replacing all the occurrences of the symbol \(p\) by the following definition of the predecessor function.
\begin{definition}
  \label{def:5}
  We define the formula \(\mathrm{D}(x,y)\) by
  \begin{gather*}
    (x = 0 \wedge y = 0) \vee s(y) = x.
  \end{gather*}
\end{definition}
\begin{lemma}
  \label{lem:11}
  \(\BLA + \mathrm{B1} \vdash p(x) = y \leftrightarrow \mathrm{D}(x, y)\).
\end{lemma}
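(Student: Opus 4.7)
The plan is to prove both directions of the biconditional by a direct case analysis, using only the axioms \ref{ax:p:0}, \ref{ax:p:s} from $\BLA$ together with the case-distinction axiom \eqref{B1}. The proof is entirely routine; the only nontrivial ingredient is \eqref{B1}, which supplies a case split on whether $x$ is $0$ or a successor.

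For the forward direction, I would work in $\BLA + \eqref{B1}$ and assume $p(x) = y$. By \eqref{B1} we have either $x = 0$ or $x = s(p(x))$. In the first case, \ref{ax:p:0} gives $y = p(x) = p(0) = 0$, so the left disjunct of $\mathrm{D}(x,y)$ holds. In the second case, substituting $y$ for $p(x)$ yields $x = s(y)$, so the right disjunct holds.

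For the backward direction, assume $\mathrm{D}(x,y)$ and split on its two disjuncts. If $x = 0 \wedge y = 0$, then by \ref{ax:p:0} we get $p(x) = p(0) = 0 = y$. If $s(y) = x$, then applying $p$ to both sides and using \ref{ax:p:s} yields $p(x) = p(s(y)) = y$.

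I do not expect any real obstacle here: the lemma is essentially a definitional unfolding justified by \eqref{B1}, which provides exactly the case distinction needed to reconstruct the graph of $p$ from the successor axioms. The author's proof should be the single line that these cases are immediate from the axioms listed.
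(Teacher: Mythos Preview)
Your proposal is correct and essentially identical to the paper's proof: the paper also works in $\BLA + \axiomB{1}$, uses \eqref{B1} to case-split on $x = 0$ versus $x = s(p(x))$ for the forward direction (invoking \ref{ax:p:0} and substitution of $y$ for $p(x)$ respectively), and for the backward direction splits on the disjuncts of $\mathrm{D}(x,y)$ using \ref{ax:p:0} and \ref{ax:p:s}.
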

\begin{proof}
  We work in \(\BLA + \mathrm{B1}\).
  Assume \(p(x) = y\).
  If \(x = 0\), then we have \(y = p(x) = p(0) = 0\), hence \(\mathrm{D}(x, y)\).  
  Otherwise, \(x = s(p(x))\) and therefore \(x = s(p(x)) = s(y)\).
  Now assume \(\mathrm{D}(x, y)\).
  If \(x = 0 \wedge y = 0\), then we have \(p(x) = p(0) = 0 = y\).
  If \(s(y) = x\), then \(y = p(s(y)) = p(x)\).
\end{proof}
We can now factor the symbol \(p\) into the axiom \(\axiomB{1}\) by replacing all the occurrences of \(p\) by the definition of the predecessor function.
\begin{lemma}
  \label{lem:8}
  Let \(\varphi(\vec{x})\) be an \(\exists_{1}\) \(\LLA\) formula, then there exists a \(p\)-free \(\exists_{1}\) formula \(\varphi'(\vec{x})\) such that
  \[
    \BLA + \mathrm{B1} \vdash \varphi \leftrightarrow \varphi'.
  \]  
\end{lemma}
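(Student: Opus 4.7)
The plan is to eliminate occurrences of $p$ one at a time by introducing fresh existential variables governed by the defining formula $\mathrm{D}$ from Lemma~\ref{lem:11}. Since $\varphi$ is $\exists_1$, write it as $\Exists{\vec{y}}{\psi(\vec{x},\vec{y})}$ with $\psi$ quantifier-free. The crucial observation is that $\mathrm{D}(u,v)$ is itself quantifier-free, so adding existentially-quantified conjuncts of the form $\mathrm{D}(u,v)$ preserves the $\exists_1$ shape.

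Proceeding by induction on the number of occurrences of $p$ in $\psi$, pick any innermost occurrence of a subterm of the form $p(t)$, where $t$ is already $p$-free (such an occurrence exists whenever $p$ appears at all). Introduce a fresh variable $z$ not appearing in $\varphi$, replace this specific occurrence of $p(t)$ in $\psi$ by $z$ to obtain a quantifier-free formula $\psi^{\ast}(\vec{x},\vec{y},z)$, and form
\[
  \varphi^{\ast} \;\coloneqq\; \Exists{\vec{y}}{\Exists{z}{\bigl(\mathrm{D}(t,z) \wedge \psi^{\ast}(\vec{x},\vec{y},z)\bigr)}}.
\]
The key equivalence step is that, since $p$ is a function symbol and hence $p(t)$ is the unique value with $p(t) = p(t)$, Lemma~\ref{lem:11} yields $\BLA + \axiomB{1} \vdash \Forall{z}{\bigl(\mathrm{D}(t,z) \leftrightarrow z = p(t)\bigr)}$, and from this together with the logical tautology $\Exists{z}{\bigl(z = p(t) \wedge \chi(z)\bigr)} \leftrightarrow \chi(p(t))$ we obtain $\BLA + \axiomB{1} \vdash \varphi \leftrightarrow \varphi^{\ast}$. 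Now $\varphi^{\ast}$ has strictly fewer occurrences of $p$ than $\varphi$ (the chosen occurrence is gone, and no new ones are introduced because $t$ is $p$-free and $\mathrm{D}(t,z)$ is $p$-free), so the induction hypothesis supplies an equivalent $p$-free $\exists_1$ formula $\varphi'$.

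The procedure clearly terminates, and the final formula $\varphi'$ is $\exists_1$ since we have only prefixed additional existential quantifiers and conjoined quantifier-free formulas to the quantifier-free matrix. There is no real obstacle beyond bookkeeping; the only subtlety worth stressing is the choice of an \emph{innermost} $p$-subterm, which guarantees that the $t$ appearing in $\mathrm{D}(t,z)$ is itself $p$-free and therefore the overall count of $p$-symbols strictly drops in each step, justifying the induction.
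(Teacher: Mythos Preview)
Your proof is correct and rests on the same key idea as the paper's: use Lemma~\ref{lem:11} to replace occurrences of \(p\) by the quantifier-free defining formula \(\mathrm{D}\), at the cost of fresh existentially quantified variables. The organizational difference is that the paper first passes to an \emph{unnested} \(\exists_{1}\) formula (citing Hodges), in which \(p\) occurs only in atoms of the form \(p(x) = y\), and then performs a one-shot replacement of all such atoms by \(\mathrm{D}(x,y)\); you instead do the unnesting by hand for the symbol \(p\) only, via an explicit induction on the number of \(p\)-occurrences, picking innermost subterms. Your route is slightly longer but more self-contained; the paper's route is shorter but relies on the standard unnesting normal form as a black box. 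Both arrive at the same place with the same essential content.
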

\begin{proof}
  Let \(\varphi\)  be an \(\exists_{1}(\LLA)\) formula, then there exists an unnested \(\exists_{1}(\LLA)\) formula \(\psi\) such that \(\vdash \varphi \leftrightarrow \psi\), see for example \cite[pp.\ 51--52]{hodges1997}.
  In particular, the symbol \(p\) occurs in \(\psi\) only in atoms of the form \(p(x) = y\).
  Hence, we obtain the desired formula by replacing in \(\psi\) the atomic formulas of the form \(p(x) = y\) by \(\mathrm{D}(x,y)\).
\end{proof}
In the following we will eliminate the symbol \(0\) to a certain extent from \(\exists_{1}\) formulas in one variable.
In order to simplify the arguments we will introduce some additional assumptions.
Since we work in the context of the theory \(\BLA\) we can by Lemma~\ref{lem:33} assume without loss of generality that ground terms are numerals.
Moreover, since equality is symmetric we will assume without loss of generality that atoms are oriented in such a way that whenever the atom contains a variable, then the left hand side of the atom contains a variable.

Let us start by introducing the notion of components, a class of \(\exists_{1}\) formulas that is particularly suitable to carry out the elimination of the symbol \(0\).
Moreover, components will also be of use for the arguments in Section~\ref{sec:linear-systems}.
\begin{definition}[Components]
  \label{def:13}
  A component \(\chi(\vec{x})\) is a formula of the form \(\exists \vec{y} C_{\chi}(\vec{x}, \vec{y})\), where \(C_{\chi}\) is a conjunction of literals.
\end{definition}
\begin{lemma}
  \label{lem:exists_one_formula_to_p_free_components}
  Let \(\varphi(x)\) be an \(\exists_{1}\) formula, then there exist \(p\)-free components \(\chi_{1}, \dots, \chi_{n}\) such that \(\BLA + \axiomB{1} \vdash \varphi \leftrightarrow \bigvee_{i = 1}^{n}\chi_{i}\).
\end{lemma}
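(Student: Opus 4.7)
The plan is to combine Lemma~\ref{lem:8} with a standard disjunctive-normal-form manipulation, and then use the fact that existential quantifiers distribute over disjunctions. First, applying Lemma~\ref{lem:8} to \(\varphi(x)\) yields a \(p\)-free \(\exists_{1}\) formula \(\varphi'(x)\) with \(\BLA + \axiomB{1} \vdash \varphi \leftrightarrow \varphi'\). So it suffices to transform \(\varphi'\) into a disjunction of \(p\)-free components over pure first-order logic.

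Write \(\varphi'(x)\) as \(\Exists{\vec{y}}{\psi(x,\vec{y})}\) where \(\psi\) is quantifier-free and \(p\)-free. Put \(\psi\) into disjunctive normal form, obtaining literal-conjunctions \(C_{1}(x,\vec{y}), \dots, C_{n}(x,\vec{y})\) such that \(\vdash \psi \leftrightarrow \bigvee_{i=1}^{n}C_{i}\). Since each \(C_{i}\) is \(p\)-free (DNF conversion does not introduce new function symbols) and existential quantification distributes over disjunction, we have
\[
\vdash \Exists{\vec{y}}{\psi(x,\vec{y})} \leftrightarrow \bigvee_{i=1}^{n}\Exists{\vec{y}}{C_{i}(x,\vec{y})}.
\]
Setting \(\chi_{i}(x) \coloneqq \Exists{\vec{y}}{C_{i}(x,\vec{y})}\) gives the desired \(p\)-free components with \(\BLA + \axiomB{1} \vdash \varphi \leftrightarrow \bigvee_{i=1}^{n}\chi_{i}\).

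There is no real obstacle here: the whole argument rests on Lemma~\ref{lem:8} plus purely propositional rewriting, so no genuine inductive or model-theoretic content is required. The only mildly delicate point is to note that the DNF step is carried out on the quantifier-free matrix only, so that no quantifier alternations are introduced and the resulting formulas still qualify as components in the sense of Definition~\ref{def:13}.
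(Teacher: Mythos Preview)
Your proposal is correct and follows essentially the same route as the paper: apply Lemma~\ref{lem:8} to obtain a \(p\)-free \(\exists_{1}\) formula, rewrite the quantifier-free matrix into disjunctive normal form, and then distribute the existential quantifiers over the resulting disjunction. The paper spells out the DNF conversion step-by-step (eliminating \(\rightarrow\), \(\leftrightarrow\), pushing negations inward, distributing conjunctions over disjunctions) whereas you invoke DNF directly, but the argument is the same.
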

\begin{proof}
  Apply Lemma~\ref{lem:8} to obtain a \(p\)-free \(\exists_{1}\) formula \(\varphi'\) such that \(\BLA + \axiomB{1} \vdash \varphi \leftrightarrow \varphi'\).
  Now obtain the desired components by replacing formulas of the form \(\varphi \rightarrow \psi\) and \(\varphi \leftrightarrow \psi\) respectively by \(\neg \varphi \vee \psi\) and \((\neg \varphi \vee \psi) \wedge (\neg \psi \vee \varphi) \), moving negations inward, eliminating double negations, distributing conjunctions over disjunctions, and moving existential quantifiers inwards over disjunctions.
\end{proof}
We will distinguish between three types of literals: Those where both sides contain variables, those where only one side of the equation contains a variable and those where none of the sides contain a variable.
\begin{definition}
  \label{def:27}
  Let \(l\) be a literal of the form \(u \bowtie v\) with \(\bowtie \ \in \{ =, \neq \}\), then \(l\) is: \(\UU\) if both \(u\) and \(v\) contain a variable, \(\UD\) if \(u\) contains a variable and \(v\) is  ground, and \(\DD\) if both \(u\) and \(v\) are ground.
  We will combine this notation with superscript \(+\) to indicate that the literal is positive and a superscript \(-\) to indicate that the literal is negative.
  We say that a \(\UD\) literal is simple if it is of the form \(z = \numeral{k}\) where \(z\) is a variable and \(k \in \mathbb{N}\) and complex otherwise.
\end{definition}
\begin{lemma}
  \label{lem:28}
  Let \(t\) be a term with \(\Var(t) \neq \varnothing\), then there exists a \(0\)-free term \(t'\) such that \(\BLA + \mathcal{V} \vdash t = t'\).
\end{lemma}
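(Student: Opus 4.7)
The plan is to proceed by structural induction on the term $t$, exploiting the fact that $\BLA$ already proves $u + \numeral{k} = s^{k}(u)$ for any term $u$ and any $k \in \NaturalNumbers$ (an easy meta-induction on $k$ using \eqref{Q_3} and \eqref{Q_4}), and that $\mathcal{V}$ supplies commutativity of numerals with arbitrary terms. Together these will let me push every occurrence of $0$ through $+$ and absorb it into a stack of $s$.

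Concretely, I would treat the cases as follows. If $t$ is a variable, take $t' = t$. The case $t = 0$ is excluded by the hypothesis $\Var(t) \neq \varnothing$. If $t = s(u)$ or $t = p(u)$, then $u$ must also contain a variable (since $s$ and $p$ do not manufacture variables), so the induction hypothesis gives a $0$-free $u'$ with $\BLA + \mathcal{V} \vdash u = u'$, and I set $t' \coloneqq s(u')$ or $t' \coloneqq p(u')$. The interesting case is $t = u + v$. If both $u$ and $v$ contain variables, I apply the induction hypothesis separately and take $t' \coloneqq u' + v'$. If exactly one side, say $u$, contains a variable and $v$ is ground, Lemma~\ref{lem:33} gives $k \in \NaturalNumbers$ with $\BLA \vdash v = \numeral{k}$; applying the induction hypothesis to $u$ and then the meta-lemma $\BLA \vdash u' + \numeral{k} = s^{k}(u')$, I set $t' \coloneqq s^{k}(u')$, which is $0$-free because $u'$ is. The symmetric subcase, where $u$ is ground and $v$ contains a variable, is reduced to the previous one by an application of $\axiomV{k}$, namely $\BLA + \mathcal{V} \vdash \numeral{k} + v' = v' + \numeral{k}$, after which the same $s^{k}(v')$ is $0$-free. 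The remaining subcase in which both $u$ and $v$ are ground is impossible since $t$ contains a variable.

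The main (and essentially only) obstacle is the asymmetric subcase of addition where the ground summand sits on the left: the axioms \eqref{Q_3} and \eqref{Q_4} let us collapse $u + \numeral{k}$ into $s^{k}(u)$ only when the variable-bearing term is on the left, and this is precisely why $\mathcal{V}$ was introduced. Everything else is a routine structural induction, with the meta-lemma $\BLA \vdash u + \numeral{k} = s^{k}(u)$ handled once and for all by a side induction on $k$.
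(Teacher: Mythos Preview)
Your proposal is correct and follows essentially the same route as the paper's proof: a structural induction on $t$, with the only nontrivial case being $t = u + v$ where one summand is ground, handled by evaluating the ground side to a numeral (Lemma~\ref{lem:33}), commuting via $\axiomV{k}$ when the numeral is on the left, and collapsing $u' + \numeral{k}$ to $s^{k}(u')$ using \eqref{Q_3} and \eqref{Q_4}. Your explicit isolation of the meta-lemma $\BLA \vdash u + \numeral{k} = s^{k}(u)$ is a nice touch that the paper leaves implicit.
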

\begin{proof}
  We proceed by induction on the structure of the term \(t\).
  If \(t\) is a variable, then we are done by letting \(t' = t\).
  If \(t\) is of the form \(s(u)\), then \(\Var(u) \neq \varnothing\).
  Hence, we can apply the induction hypothesis to \(u\) in order to obtain a \(0\)-free term \(u'\) such that \(\BLA + \theoryV \vdash u = u'\). Thus, \(\BLA + \theoryV \vdash t = s(u')\) and we let \(t' = s(u')\).
  If \(t\) is of the form \(p(u)\), then we proceed analogously.
  If \(t\) is of the form \(u + v\), then we need to consider several cases.
  If \(\Var(u) = \varnothing\), then \(\Var(v) \neq \varnothing\) and we have \(\BLA \vdash u = \numeral{k}\) for some \(k \in \mathbb{N}\) and therefore \(\BLA + \theoryV \vdash t = \numeral{k} + v = v + \numeral{k} = v' + \numeral{k} = s^{k}(v')\).
  If \(\Var(v) = \varnothing\), then \(\Var(u) \neq \varnothing\).
  Hence, we can apply the induction hypothesis to \(u\) in order to obtain a \(0\)-free term \(u'\) such that \(\BLA + \theoryV \vdash u = u'\).
  Since \(\Var(v) = \varnothing\), there exists \(k \in \mathbb{N}\) such that \(\BLA \vdash v = \numeral{k}\).
  By multiple applications of \eqref{Q_4} followed by an application of \eqref{Q_4} we obtain \(\BLA + \theoryV \vdash t = u + \numeral{k} = s^{k}(u) + 0 = s^{k}(u)\).
  Hence, \(t' \coloneqq s^{k}(u)\) is the desired \(0\)-free term.
  If \(u\) and \(v\) contain variables, then by the induction hypothesis we obtain \(0\)-free terms \(u'\) and \(v'\) such that \(\BLA + \theoryV \vdash u = u'\) and \(\BLA + \theoryV \vdash v = v'\).
  Hence, \(t = u' + v'\) is the desired \(0\)-free term.
\end{proof}
By Lemma~\ref{lem:34} and Lemma~\ref{lem:28}, it is straightforward to eliminate the symbol \(0\) from \(\UU\) and \(\DD\) literals.
However, eliminating the symbol \(0\) from \(\UD\) literals needs some more work.
Let us start by observing that complex \(\UD\) atoms can be split into several simple ones.
\begin{lemma}
  \label{lem:split_complex_up_down_atom}
  Let \(u(\vec{z})\) be a \(p\)-free term with \(\vec{z} = (z_{1}, \dots, z_{l})\) and \(k \in \mathbb{N}\), then
  \[
    \BLA + \mathrm{B1} \vdash u(\vec{z}) = \numeral{k} \leftrightarrow
    \bigvee_{\substack{
      0 \leq m_{1}, \dots, m_{l} \leq k \\
      \mathbb{N} \models u(m_{1}, \dots, m_{l}) = k
    }}\bigwedge_{j = 1}^{l} z_{j} = \numeral{m_{j}}.
  \]
\end{lemma}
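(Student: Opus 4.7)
The plan is to proceed by structural induction on the $p$-free term $u$, using an auxiliary splitting lemma for addition as the main tool.

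First, I would establish the following splitting lemma in $\BLA + \axiomB{1}$: for arbitrary terms $u, w$ and any $k \in \NaturalNumbers$,
\[
  u + w = \numeral{k} \leftrightarrow \bigvee_{\substack{k_1, k_2 \in \NaturalNumbers\\ k_1 + k_2 = k}} \bigl(u = \numeral{k_1} \wedge w = \numeral{k_2}\bigr).
\]
The $(\Leftarrow)$ direction is by unfolding the addition axioms \eqref{Q_3} and \eqref{Q_4}, which yield $\numeral{k_1} + \numeral{k_2} = \numeral{k_1 + k_2}$. For the $(\Rightarrow)$ direction, a straightforward meta-induction on $k$, iterating $\axiomB{1}$ applied in turn to $w, p(w), p^2(w), \dots$, shows that $\BLA + \axiomB{1}$ proves
\[
  w = \numeral{0} \vee w = \numeral{1} \vee \dots \vee w = \numeral{k} \vee w = s^{k+1}(p^{k+1}(w)).
\]
In the final disjunct, repeated application of \eqref{Q_4} gives $u + s^{k+1}(p^{k+1}(w)) = s^{k+1}(u + p^{k+1}(w))$; equating this with $\numeral{k} = s^{k}(0)$ and applying $s$-injectivity (Lemma~\ref{lem:3:2}) $k$ times produces $s(u + p^{k+1}(w)) = 0$, contradicting Lemma~\ref{lem:3:1}. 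In each surviving case $w = \numeral{k_2}$ with $k_2 \leq k$, unfolding gives $s^{k_2}(u) = s^k(0)$, whence $u = \numeral{k - k_2}$ by iterated injectivity, placing us in the disjunct indexed by $k_1 = k - k_2$.

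Given this splitting lemma, the main structural induction on $u$ is routine. The cases $u = 0$ and $u = z_j$ are immediate; in the former with $k > 0$ we invoke Lemma~\ref{lem:3:1}. For $u = s(v)$, either $k = 0$ (both sides refutable) or $k > 0$ and $s$-injectivity reduces the equation to $v = \numeral{k-1}$, to which the IH applies. For $u = u_1 + u_2$, I would apply the splitting lemma to obtain $\bigvee_{k_1 + k_2 = k}(u_1 = \numeral{k_1} \wedge u_2 = \numeral{k_2})$, apply the IH to each conjunct to produce a disjunction of variable assignments for $\Var(u_1)$ and $\Var(u_2)$, and then distribute to obtain a single disjunction of conjunctions of simple equations.

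The main obstacle I anticipate is the bookkeeping in this addition case: when $z_j \in \Var(u_1) \cap \Var(u_2)$, the distributed formula contains conjuncts $z_j = \numeral{m} \wedge z_j = \numeral{m'}$, which collapse to $\bot$ whenever $m \neq m'$ by Lemma~\ref{lem:34}, so that the surviving disjuncts correspond precisely to the consistent assignments $\vec{m}$ to $\vec{z} = \Var(u_1) \cup \Var(u_2)$ with $u_1(\vec{m}) = k_1$ and $u_2(\vec{m}) = k_2$, that is, with $u(\vec{m}) = k$. The bounds $0 \leq m_j \leq k$ in the stated disjunction are automatic, since a trivial structural induction on the $p$-free term $u$ shows $\mathbb{N} \models u(\vec{m}) \geq m_j$ for each occurring $z_j$.
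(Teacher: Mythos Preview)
Your proof is correct, but it takes a genuinely different route from the paper's. The paper does not induct on the structure of \(u\); instead it performs a single global case analysis on all the variables simultaneously. Using \(\axiomB{1}\) iterated \(k+1\) times on each \(z_{i}\), one obtains the disjunction
\[
  \bigvee_{0 \leq m_{i} \leq k} z_{i} = \numeral{m_{i}} \;\vee\; z_{i} = s^{k+1}(p^{k+1}(z_{i})),
\]
and then splits on whether every \(z_{i}\) lands in a numeral case. If so, \(u(\vec{z})\) evaluates to a numeral and Lemma~\ref{lem:34} decides the atom; if some \(z_{j}\) is in the last case, the paper picks the \emph{rightmost} such occurrence and pulls \(s^{k+1}\) through using \eqref{Q_4} (this works precisely because everything to the right of that occurrence is already a numeral), obtaining \(u(\vec{z}) = s^{k+1}(u')\) and a contradiction with \(\numeral{k}\).

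Your approach trades this single global case split for a structural recursion together with the reusable additive splitting lemma \(u + w = \numeral{k} \leftrightarrow \bigvee_{k_{1}+k_{2}=k}(u = \numeral{k_{1}} \wedge w = \numeral{k_{2}})\). This is slightly longer but more modular: your splitting lemma is a clean standalone fact, and you never need the somewhat delicate ``rightmost occurrence'' argument, since at each \(+\)-node you only ever case-analyse the right summand. The bookkeeping you flag for shared variables in the \(u_{1} + u_{2}\) case is handled correctly, and your observation that the bounds \(m_{j} \leq k\) follow from monotonicity of \(p\)-free terms is exactly right. Both arguments ultimately rest on the same ingredients (\(\axiomB{1}\), \eqref{Q_4}, injectivity), just organised differently.
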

\begin{proof}
  Work in \(\BLA + \mathrm{B1}\).
  The ``if'' direction is obvious.
  For the ``only if'' direction assume \(u(\vec{z}) = \numeral{k}\) and proceed by \(k\)-fold case analysis on the variables \(\vec{z}\).
  If \(z_{i} = \numeral{m_{i}}\) with \(0 \leq m_{i} \leq k\) for \(i = 1, \dots, l\), then we have two cases.
  If \(u(\numeral{m_{1}}, \dots, \numeral{m_{l}}) \neq \numeral{k}\), then the claim follows trivially.
  Otherwise if \(u(\numeral{m_{1}}, \dots, \numeral{m_{l}}) = \numeral{k}\), then we are done as well since \(z_{1} = \numeral{m_{1}} \wedge \dots \wedge z_{l} = \numeral{m_{l}}\) is a conjunct of the right side.
  Otherwise, there exists an \(i \in \{ 1, \dots, l\}\) and \(z_{i}'\) such that \(z_{i} = s^{k+1}z_{i}'\).
  Then let \(j\) be the index of the variable \(z_{j}\) with the rightmost occurrence such that \(z_{j} = s^{k+1}z_{j}'\) for some \(z_{j}'\).
  Then we have \(u(\vec{z}) = s^{k+1}(u'(z_{1}, \dots, z_{j-1}, z_{j}', z_{j+1}, \dots, z_{l}))\) and a term \(u'\).
  Hence, by Lemma~\ref{lem:3:1} we have \(u(\vec{z}) \neq \numeral{k}\).
\end{proof}
Furthermore, we can eliminate simple \(\UD^{-}\) literals at the expense of introducing several positive literals and an existential quantifier.
\begin{lemma}
  \label{lem:12}
  Let \(k \in \mathbb{N}\), then
  \[
    B + \axiomB{1} \vdash z \neq \numeral{k} \leftrightarrow \left( \Exists{z'}{z = s^{k + 1}z'} \vee \bigvee_{i = 0}^{k - 1}z = \numeral{i} \right).
  \]
\end{lemma}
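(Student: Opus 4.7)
The lemma is an equivalence, which I would prove by splitting into two directions, with the right-to-left direction being short and bookkeeping-heavy and the left-to-right direction being the substantive part. Both directions work purely in $\BLA + \axiomB{1}$, without any internal induction.

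For the direction $\Leftarrow$, I would assume the right-hand disjunction holds and derive $z \neq \numeral{k}$. If $z = \numeral{i}$ with $0 \leq i < k$, then an equality $\numeral{i} = \numeral{k}$ can be reduced by $i$ applications of $s$-injectivity (Lemma~\ref{lem:3:2}) to $0 = \numeral{k-i}$, and since $k - i \geq 1$ this contradicts Lemma~\ref{lem:3:1}. If $z = s^{k+1}(z')$ for some $z'$, then from $z = \numeral{k} = s^{k}(0)$ and $k$ applications of $s$-injectivity we obtain $s(z') = 0$, again contradicting Lemma~\ref{lem:3:1}.

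For the direction $\Rightarrow$, the strategy is to unfold $\axiomB{1}$ iteratively $k+1$ times along $z$. Concretely, I first establish, by external induction on $j \in \mathbb{N}$, the schema
\[
  \BLA + \axiomB{1} \vdash z = s^{j}(p^{j}(z)) \vee \bigvee_{i=0}^{j-1} z = \numeral{i}.
\]
The base case $j = 0$ is trivial. For the step from $j$ to $j+1$, apply $\axiomB{1}$ to the term $p^{j}(z)$: either $p^{j}(z) = 0$, in which case $z = s^{j}(p^{j}(z)) = s^{j}(0) = \numeral{j}$, contributing the new disjunct $z = \numeral{j}$; or $p^{j}(z) = s(p^{j+1}(z))$, in which case $z = s^{j}(s(p^{j+1}(z))) = s^{j+1}(p^{j+1}(z))$, giving the new first disjunct. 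Instantiating the schema at $j = k+1$ and using the hypothesis $z \neq \numeral{k}$ to discard the disjunct $i = k$ among $i = 0, \dots, k$ leaves either $z = \numeral{i}$ for some $i < k$ or $z = s^{k+1}(p^{k+1}(z))$, and in the latter case $z' \coloneqq p^{k+1}(z)$ is the required existential witness.

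The main obstacle is entirely bookkeeping: carrying out the meta-level induction on $j$ cleanly and ensuring that the purely syntactic identifications $s^{j}(0) = \numeral{j}$ and $s^{j}(s(t)) = s^{j+1}(t)$ are handled without tacitly appealing to any induction scheme of the theory. The argument yields, for each fixed $k$, a first-order proof in $\BLA + \axiomB{1}$ whose length grows linearly in $k$.
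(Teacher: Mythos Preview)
Your proof is correct and follows essentially the same approach as the paper: iterated case analysis on $z$ using $\axiomB{1}$, which the paper phrases tersely as ``proceed by $k$-fold case analysis on $z$'' while you spell out the external induction on $j$ and the use of Lemmas~\ref{lem:3:1} and~\ref{lem:3:2} explicitly.
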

\begin{proof}
  The ``if'' direction is obvious.
  For the ``only if'' direction assume \(z \neq \numeral{k}\) and proceed by \(k\)-fold case analysis on \(z\).
  If \(z = \numeral{i}\) with \(i < k\), then we are done.
  The case where \(z = \numeral{k}\) contradicts the assumption and therefore we are done.
  If \(z = s^{k + 1}z'\) for some \(z'\), then we are done as well.
\end{proof}
The elimination of the \(\UD\) literals from a component \(\chi(x_{1}, \dots, x_{m})\) consists of two majors steps.
In a first step we deal with all the \(\UD\) literals except the simple \(\UD\) literals of the form \(x_{i} = \numeral{k}\) with \(k \in \mathbb{N}\) and \(i \in \{ 1, \dots, m \}\).
In the second step we will deal with the remaining \(\UD\) literals by making use of the observation that the truth value of a literal of the form \(x = \numeral{k}\) with \(k \in \mathbb{N}\) becomes fixed when \(x\) is large enough.

Let us start by defining some measures that will be used to control the first step of the elimination procedure.
\begin{definition}
  \label{def:32}
  Let \(\chi(\vec{x}) = \Exists{y_{1}}{\dots\Exists{y_{l}}{C_{\chi}}}\) be a component, then \(\#^{-}(\chi)\) is the number of occurrences of negative literals in \(\chi\), \(\#_{\exists}(\chi) = l\), \(\#_{\mathrm{complex}}^{+}(\chi)\) is the number of occurrences of complex \(\UD^{+}\) literals in \(\chi\), and \(\#_{\mathrm{FV}}(\chi)\) is the number of free variables of \(\chi\).
\end{definition}
We will now provide some intermediate lemmas that allow us to eliminate a single literal.
\begin{lemma}[Elimination of \(\UD^{-}\) literals]
  \label{lem:eliminate_single_negative_up_down_literal}
  Let \(\chi(\vec{x})\) be a \(p\)-free component containing a \(\UD^{-}\) literal, then there exist \(p\)-free components \(\chi_{1}'\), \dots, \(\chi_{n}'\) such that \(\BLA + \axiomB{1} \vdash \chi \leftrightarrow \bigvee_{i = 1}^{n}\chi_{i}'\) and \(\#^{-}(\chi_{i}') < \#^{-}(\chi)\) for \(i = 1, \dots, n\).
\end{lemma}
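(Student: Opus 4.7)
The plan is to isolate a single \(\UD^{-}\) literal \(l\) in \(\chi\), prove that \(\BLA + \axiomB{1}\) shows \(l\) to be equivalent to a positive \(\exists_{1}\) formula, substitute that equivalence into \(\chi\), and distribute to obtain the required disjunction of components. Since the replacement for \(l\) contains only positive literals, each resulting component will have exactly one fewer negative literal than \(\chi\).

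Write \(\chi = \Exists{\vec{y}}{(C' \wedge l)}\), where \(C'\) is the conjunction of the remaining literals. Using the orientation convention together with Lemma~\ref{lem:33} to replace the ground right-hand side by a numeral, I may assume \(l\) has the form \(u \neq \numeral{k}\) with \(u\) a \(p\)-free non-ground term. If \(l\) is simple, so that \(u\) is itself a variable \(z\), then Lemma~\ref{lem:12} rewrites \(l\) as
\[
  \Exists{z'}{z = s^{k+1}(z')} \vee \bigvee_{i=0}^{k-1} z = \numeral{i}.
\]
Substituting this into \(\chi\), pushing \(\Exists{\vec{y}}\) and \(C' \wedge (\cdot)\) across the disjunction, and renaming \(z'\) apart from \(\vec{y}\) produces \(k+1\) components \(\chi_{1}', \dots, \chi_{k+1}'\), in each of which the single literal \(l\) has been replaced by a positive atom (possibly under a fresh existential quantifier).

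In the complex case, I apply Lemma~\ref{lem:split_complex_up_down_atom} to the \(p\)-free term \(u(\vec{z})\) to obtain \(\BLA + \axiomB{1} \vdash u(\vec{z}) = \numeral{k} \leftrightarrow \bigvee_{\alpha}\bigwedge_{j} z_{j} = \numeral{m_{j}^{\alpha}}\), and then take the negation to obtain \(l \leftrightarrow \bigwedge_{\alpha}\bigvee_{j} z_{j} \neq \numeral{m_{j}^{\alpha}}\), a conjunction of disjunctions of simple \(\UD^{-}\) literals. I then invoke Lemma~\ref{lem:12} on each inner simple literal to eliminate all negations, producing a conjunction of disjunctions of existentially quantified positive conjunctions; converting this to DNF and pulling existentials outward yields \(l \leftrightarrow \bigvee_{\gamma}\psi_{\gamma}\), where each \(\psi_{\gamma}\) is an existentially quantified conjunction of positive atoms. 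Substituting into \(\chi\) and distributing as in the simple case yields the components \(\chi_{\gamma}'\). In both cases \(p\)-freeness is preserved since none of the invoked lemmas introduces the symbol \(p\).

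The main technical obstacle will be the DNF conversion in the complex case: because each application of Lemma~\ref{lem:12} introduces a fresh existential witness, care is needed to rename all witnesses apart so that the prenex equivalences used to commute \(\exists\) with \(\wedge\) actually apply, and so that after distribution the final formulas really are components in the sense of Definition~\ref{def:13}. Once this bookkeeping is in place, the verification that \(\#^{-}(\chi_{i}') = \#^{-}(\chi) - 1\) is immediate from the observation that the replacement of \(l\) consists entirely of positive literals.
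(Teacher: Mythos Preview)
Your proposal is correct and follows essentially the same route as the paper: split the underlying atom via Lemma~\ref{lem:split_complex_up_down_atom}, push the negation inwards to obtain simple $\UD^{-}$ literals, eliminate those with Lemma~\ref{lem:12}, then prenex and distribute. The paper does not separate out your ``simple'' case, since Lemma~\ref{lem:split_complex_up_down_atom} applied to a variable $z$ already returns $z = \numeral{k}$ and the general argument subsumes it; otherwise the two arguments coincide, including your remark on renaming the fresh existential witnesses.
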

\begin{proof}
  We first apply Lemma~\ref{lem:split_complex_up_down_atom} in order to split the atom of the \(\UD^{-}\) literal.
  After that, we move the negations inwards and apply Lemma~\ref{lem:12} to all the newly introduced literals of the form \(z \neq \numeral{k}\) with \(k \in \mathbb{N}\).
  Now we move the newly introduced existential quantifiers outwards and possibly rename some bound variables.
  Finally, we distribute conjunctions over disjunctions exhaustively.
  Let \(\chi_{1}, \dots, \chi_{k}\) be the resulting components.
  Since we have introduced only existential quantifiers and positive literals, we have \(\#^{-}(\chi_{i}) < \#^{-}(\chi)\).
\end{proof}
\begin{lemma}[Elimination of complex \(\UD^{+}\) literals]
  \label{lem:eliminate_single_positive_complex_up_down_literal}
  Let \(\chi(\vec{x})\) be a \(p\)-free component containing a complex \(\UD^{+}\) literal, then there exist \(p\)-free components \(\chi_{1}'\), \dots, \(\chi_{n}'\) with \(\BLA + \axiomB{1} \vdash \chi \leftrightarrow \bigvee_{i = 1}^{n}\chi_{i}'\) such that \(\#^{-}(\chi_{i}') = \#^{-}(\chi)\), \(\#_{\exists}(\chi_{i}') = \#_{\exists}(\chi)\), and \(\#_{\complex}^{+}(\chi_{i}') < \#_{\complex}^{+}(\chi)\) for \(i = 1, \dots, n\).
\end{lemma}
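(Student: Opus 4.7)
The plan is to single out one complex $\UD^{+}$ literal in $\chi$ and apply Lemma~\ref{lem:split_complex_up_down_atom} to replace its atom by a disjunction of conjunctions of simple $\UD^{+}$ literals, then redistribute the result to recover a disjunction of components. Write $\chi(\vec{x}) = \Exists{\vec{y}}{C_{\chi}(\vec{x}, \vec{y})}$ with $C_{\chi}$ a conjunction of literals, and pick a complex $\UD^{+}$ literal $\ell$ occurring in $C_{\chi}$, say $\ell = (u(\vec{z}) = \numeral{k})$ where $\vec{z} = (z_{1}, \dots, z_{l})$ is the list of variables from $\vec{x} \cup \vec{y}$ that occur in the $p$-free term $u$. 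By Lemma~\ref{lem:split_complex_up_down_atom} we have, provably in $\BLA + \axiomB{1}$,
\[
  u(\vec{z}) = \numeral{k} \leftrightarrow \bigvee_{\vec{m} \in S} \bigwedge_{j = 1}^{l} z_{j} = \numeral{m_{j}},
\]
where $S$ is the finite subset of $\{0, \dots, k\}^{l}$ consisting of those $\vec{m}$ with $\mathbb{N} \models u(\vec{m}) = k$.

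The second step is to substitute this equivalence into $C_{\chi}$ in place of $\ell$, distribute the remaining conjunction over the outer disjunction, and then pull the existential prefix $\Exists{\vec{y}}{}$ outwards over the disjunction, yielding
\[
  \chi \leftrightarrow \bigvee_{\vec{m} \in S} \Exists{\vec{y}}{\Bigl(C_{\chi}'(\vec{x}, \vec{y}) \wedge \bigwedge_{j = 1}^{l} z_{j} = \numeral{m_{j}}\Bigr)},
\]
where $C_{\chi}'$ denotes $C_{\chi}$ with $\ell$ deleted. Each disjunct is a $p$-free component $\chi_{\vec{m}}'$.

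The three counting conditions are then read off directly. No manipulation introduces a negative literal, so $\#^{-}(\chi_{\vec{m}}') = \#^{-}(\chi)$; no existential quantifier is added or removed, so $\#_{\exists}(\chi_{\vec{m}}') = \#_{\exists}(\chi)$; and the single complex $\UD^{+}$ literal $\ell$ has been replaced by literals of the form $z_{j} = \numeral{m_{j}}$, which are simple by definition, so $\#_{\complex}^{+}(\chi_{\vec{m}}') = \#_{\complex}^{+}(\chi) - 1$. I do not anticipate any real obstacle: the argument is essentially a packaging of Lemma~\ref{lem:split_complex_up_down_atom} that respects the three invariants. The only point that needs a separate word is the degenerate case $S = \varnothing$, in which $\ell$ is inconsistent with $\BLA + \axiomB{1}$ and $\chi$ is provably equivalent to the empty disjunction; the statement is then vacuously satisfied with $n = 0$.
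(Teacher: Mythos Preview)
Your proposal is correct and follows essentially the same approach as the paper: apply Lemma~\ref{lem:split_complex_up_down_atom} to one complex \(\UD^{+}\) literal, distribute the resulting disjunction, and then commute the existential prefix with the disjunction to recover components, noting that only simple \(\UD^{+}\) literals are introduced so all three counts behave as claimed. Your treatment is in fact slightly more explicit than the paper's (you spell out the set \(S\), record the exact decrement \(\#_{\complex}^{+}(\chi_{\vec m}') = \#_{\complex}^{+}(\chi) - 1\), and handle the degenerate case \(S = \varnothing\)); the only cosmetic slip is the phrase ``pull the existential prefix outwards over the disjunction,'' where you mean to distribute \(\exists\vec y\) \emph{inwards} over \(\bigvee\), as your displayed formula correctly shows.
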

\begin{proof}
  We apply Lemma~\ref{lem:split_complex_up_down_atom} to split a complex \(\UD^{+}\) literal.
  Now obtain components \(\chi_{1}', \dots, \chi_{n}'\) by distributing conjunctions over disjunctions exhaustively and moving the existential quantifiers inwards over the disjunctions.
  Clearly we have \(\BLA + \axiomB{1} \vdash \chi \leftrightarrow \bigvee_{i = 1}^{n}\chi_{i}'\).
  Observe that this operation does not introduce any negative literals or quantifiers.
  Hence we have \(\#^{-}(\chi_{i}') = \#(\chi)\) and \(\#_{\exists}(\chi_{i}') = \#_{\exists}(\chi)\) for \(i = 1, \dots, n\).
  Moreover, only simple \(\UD^{+}\) literals are introduced, hence we have \(\#_{\complex}^{+}(\chi_{i}') < \#_{\complex}^{+}(\chi)\) for \(i = 1, \dots, n\).
\end{proof}
\begin{lemma}[Elimination of simple \(\UD^{+}\) literals]
  \label{eliminate_single_simple_positive_exists_down_literal}
  Let \(\chi(\vec{x}) = \Exists{y_{1}}{\dots\Exists{y_{l}}{C_{\chi}}}\) be a \(p\)-free component containing a literal of the form \(y_{i} = \numeral{k}\), then there exists a \(p\)-free component \(\chi'(x)\) such that \(\vdash \chi \leftrightarrow \chi'\), \(\#^{-}(\chi') = \#^{-}(\chi)\), and \(\#_{\exists}(\chi') < \#_{\exists}(\chi)\).
\end{lemma}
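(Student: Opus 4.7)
The plan is to apply the one-point rule for existential quantifiers: since the literal \(y_{i} = \numeral{k}\) is a conjunct of \(C_{\chi}\), we may eliminate the quantifier \(\Exists{y_{i}}{}\) by substituting \(\numeral{k}\) for \(y_{i}\) throughout the remaining conjunction.

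Concretely, I would first write \(C_{\chi} = (y_{i} = \numeral{k}) \wedge C'\), where \(C'\) is the conjunction of all the other literals. The standard logical equivalence
\[
  \Exists{y_{i}}{\left( y_{i} = \numeral{k} \wedge C'\right)} \leftrightarrow C'[y_{i}/\numeral{k}]
\]
is a tautology of first-order logic with equality. Pushing this equivalence under the outer existential quantifiers over \(y_{1}, \dots, y_{i-1}, y_{i+1}, \dots, y_{l}\), I would define
\[
  \chi'(\vec{x}) \coloneqq \Exists{y_{1}}{\dots\Exists{y_{i-1}}{\Exists{y_{i+1}}{\dots\Exists{y_{l}}{C'[y_{i}/\numeral{k}]}}}}
\]
and conclude \(\vdash \chi \leftrightarrow \chi'\).

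To finish, I would verify the three required properties of \(\chi'\). Since \(\numeral{k}\) is \(p\)-free and \(\chi\) is \(p\)-free, the substitution produces a \(p\)-free formula, hence \(\chi'\) is a \(p\)-free component (its matrix is still a conjunction of literals, since substitution of a term inside literals yields literals). For the counting conditions: the removed literal \(y_{i} = \numeral{k}\) is positive, and substitution of a term for a variable preserves the polarity of every remaining literal, so \(\#^{-}(\chi') = \#^{-}(\chi)\). Finally, one existential quantifier has been dropped, giving \(\#_{\exists}(\chi') = \#_{\exists}(\chi) - 1 < \#_{\exists}(\chi)\). There is essentially no obstacle here; the only thing to be careful about is that after substitution some atoms may become ground (e.g.\ turn into \(\DD\) literals) or degenerate, but this is harmless since the statement only controls \(\#^{-}\) and \(\#_{\exists}\) and leaves the form of a component intact.
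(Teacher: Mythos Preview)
Your proposal is correct and follows essentially the same approach as the paper: isolate the conjunct \(y_{i} = \numeral{k}\), apply the one-point rule to eliminate the quantifier \(\Exists{y_{i}}{}\), and observe that the negative-literal and quantifier counts behave as required. If anything, your verification of the side conditions (\(p\)-freeness, \(\#^{-}\), \(\#_{\exists}\)) is more explicit than the paper's, which simply states them as clear.
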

\begin{proof}
  Let us assume without loss of generality that \(C = y_{i} = \numeral{k} \wedge C'(\vec{x},y_{1}, \dots, y_{l})\), where \(C'\) is a conjunction of literals.
  Then, it suffices to apply the first-order equivalence
  \begin{multline*}
    \vdash \Exists{y_{i}}{\left( y_{i} = \numeral{k} \wedge C'(\vec{x},y_{1}, \dots, y_{i-1},y_{i},y_{i+1}, \dots, y_{l}) \right)} \\ \leftrightarrow C'(\vec{x},y_{1},\dots,y_{i-1},\numeral{k},y_{i+1},\dots, y_{l}).
  \end{multline*}
  Clearly we have \(\#^{-}(\chi') = \#^{-}(\chi)\) and \(\#_{\exists}(\chi') < \#_{\exists}(\chi)\).
\end{proof}
The following lemma combines the previous lemmas in order to accomplish the first step of the elimination of the \(\UD\) literals.
\begin{lemma}
  \label{lem:elimination_up_down_literals_step_1}
  Over \(\BLA + \axiomB{1} + \mathcal{V}\) every \(\exists_{1}(\LLA)\) formula \(\varphi(x_{1},\dots,x_{n})\) is equivalent to a disjunction of formulas of the form \(\bigwedge_{i \in I}x_{i} = \numeral{k_{i}} \wedge \Exists{\vec{y}}{C(\vec{x},\vec{y})}\), where \(I \subseteq [n] = \{ 1, \dots, n \}\) and \(C\) is a \(p\)-free \(0\)-free conjunction of literals that contains only those variables \(x_{i}\) such that \(i \notin I\).
\end{lemma}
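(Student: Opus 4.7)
The plan is to repeatedly apply Lemmas~\ref{lem:eliminate_single_negative_up_down_literal}, \ref{lem:eliminate_single_positive_complex_up_down_literal}, and \ref{eliminate_single_simple_positive_exists_down_literal} to the \(\UD\) literals of a \(p\)-free component, while separately peeling off all simple \(\UD^{+}\) literals on free variables into the prefix, until the body contains no \(\UD\) literal at all. A final clean-up via Lemma~\ref{lem:34} and Lemma~\ref{lem:28} then disposes of the remaining \(\DD\) literals and of the symbol \(0\).

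First, by Lemma~\ref{lem:exists_one_formula_to_p_free_components} the theory \(\BLA + \axiomB{1}\) already reduces \(\varphi\) to a disjunction of \(p\)-free components, so it suffices to normalize a single such component \(\chi(\vec{x})\). I maintain a state consisting of an accumulated prefix \(\bigwedge_{i \in I} x_{i} = \numeral{k_{i}}\) (initially with \(I = \varnothing\)) together with a residual body \(\chi'\), and I apply one of the following rules to \(\chi'\): (a) if \(\chi'\) contains a \(\UD^{-}\) literal, apply Lemma~\ref{lem:eliminate_single_negative_up_down_literal}; (b) else if it contains a complex \(\UD^{+}\) literal, apply Lemma~\ref{lem:eliminate_single_positive_complex_up_down_literal}; (c) else if it contains a simple \(\UD^{+}\) literal on an existentially bound variable, apply Lemma~\ref{eliminate_single_simple_positive_exists_down_literal}; (d) else if it contains a simple \(\UD^{+}\) literal \(x_{i} = \numeral{k_{i}}\) on a free variable not yet in \(I\), add this literal to the prefix, extending \(I\), and substitute \(\numeral{k_{i}}\) for \(x_{i}\) throughout \(\chi'\). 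Each rewrite may split the current component into a finite disjunction of components, which are processed independently.

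Termination follows from the lexicographic measure \(\left(\#^{-}(\chi'), \#_{\exists}(\chi') + \#_{\mathrm{FV}}(\chi'), \#_{\mathrm{complex}}^{+}(\chi')\right)\) on \(\mathbb{N}^{3}\): rule (a) strictly decreases \(\#^{-}\) by Lemma~\ref{lem:eliminate_single_negative_up_down_literal}; rule (b) preserves both \(\#^{-}\) and \(\#_{\exists}\) while strictly decreasing \(\#_{\mathrm{complex}}^{+}\) by Lemma~\ref{lem:eliminate_single_positive_complex_up_down_literal}; rule (c) preserves \(\#^{-}\) and strictly decreases \(\#_{\exists}\) by Lemma~\ref{eliminate_single_simple_positive_exists_down_literal}; and rule (d) preserves \(\#^{-}\) and \(\#_{\exists}\) but strictly decreases \(\#_{\mathrm{FV}}\), since the substitution removes \(x_{i}\) from the free variables of the body. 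When no rule applies, the body contains no \(\UD\) literal at all, and is therefore a conjunction of \(\UU\) and \(\DD\) literals only.

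Finally, I dispose of the \(\DD\) literals via Lemma~\ref{lem:34}: provable ones are dropped and any refutable one kills the component; in particular, any incompatibility introduced earlier by rule (d), that is, two literals \(x_{i} = \numeral{k_{i}}\) and \(x_{i} = \numeral{k_{i}'}\) with \(k_{i} \neq k_{i}'\), appears here as a \(\BLA\)-refutable ground equation by Lemmas~\ref{lem:3:2} and \ref{lem:3:1}. Since every surviving body literal is a \(\UU\) literal, Lemma~\ref{lem:28} rewrites each of its terms into a \(0\)-free equivalent over \(\BLA + \mathcal{V}\), after which the body only mentions the variables \(x_{i}\) with \(i \notin I\) together with the bound \(\vec{y}\); the body is \(p\)-free throughout because the initial component is \(p\)-free and every rewrite used preserves \(p\)-freeness. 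The main obstacle is that the substitutions performed in rules (c) and (d) can turn \(\UU\) literals into new \(\UD^{-}\), complex \(\UD^{+}\), or simple \(\UD^{+}\) literals, so the four rules interact rather than being applied in sequence; combining \(\#^{-}\), \(\#_{\exists} + \#_{\mathrm{FV}}\), and \(\#_{\mathrm{complex}}^{+}\) into a single lexicographic measure is what makes the procedure well-founded in spite of this interaction.
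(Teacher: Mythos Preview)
Your proof is correct and follows essentially the same strategy as the paper: iterate the four elimination rules on a \(p\)-free component until no \(\UD\) literal remains, then clean up with Lemmas~\ref{lem:34} and~\ref{lem:28}. The only difference is the termination measure: you use the triple \((\#^{-},\,\#_{\exists}+\#_{\mathrm{FV}},\,\#_{\mathrm{complex}}^{+})\), whereas the paper uses the quadruple \((\#^{-},\,\#_{\exists},\,\#_{\mathrm{complex}}^{+},\,\#_{\mathrm{FV}})\); your choice has the advantage that rule~(d) decreases the second component outright, so you need not track what the substitution does to \(\#_{\mathrm{complex}}^{+}\) (which can in fact increase, since substituting a numeral for a free variable may turn a \(\UU^{+}\) literal into a complex \(\UD^{+}\) one).
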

\begin{proof}
  Let \(\chi(\vec{x})\) be a \(p\)-free component, then we proceed by induction on the lexicographic order \(\prec\) on \(\mathbb{N}^{4}\) induced by \(\leq\) and show that over \(\BLA + \axiomB{1}\) the component \(\chi\) is equivalent to disjunction of formulas of the form \(\bigwedge_{i \in I}x_{i} = \numeral{k_{i}} \wedge \Exists{\vec{y}}{C(\vec{x},\vec{y})}\), where \(I \subseteq [n]\) and \(C\) is a \(p\)-free disjunction of \(\UU\) and \(\DD\) literals that contains only those variables \(x_{i}\) such that \(i \notin I\).
  Let \(\#(\chi) = (\#^{-}(\chi), \#_{\exists}(\chi), \#_{\complex}^{+}(\chi), \#_{\mathrm{FV}}(\chi))\).
  If \(\chi\) contains a \(\UD^{-}\) literal, then we apply Lemma~\ref{lem:eliminate_single_negative_up_down_literal} in order to obtain \(p\)-free components \(\chi_{1}'\), \dots, \(\chi_{n}'\) such that \(\BLA + \axiomB{1} \vdash \chi \leftrightarrow \bigvee_{i = 1}^{n}\chi_{i}'\) and \(\#^{-}(\chi_{i}') < \#^{-}(\chi)\).
  Hence \(\#(\chi_{i}') \prec \#(\chi)\) and therefore we can apply the induction hypothesis to each of \(\chi_{1}'\), \dots, \(\chi_{n}'\) in order to obtain the desired components.
  If \(\chi\) contains a complex \(\UD^{+}\) literal, then we apply Lemma~\ref{lem:eliminate_single_positive_complex_up_down_literal} in order to obtain \(p\)-free components \(\chi_{1}'\), \dots, \(\chi_{n}'\) with \(\BLA + \axiomB{1} \vdash \chi \leftrightarrow \bigvee_{i = 1}^{n}\chi_{i}'\) such that \(\#^{-}(\chi_{i}') = \#^{-}(\chi)\), \(\#_{\exists}(\chi_{i}') = \#_{\exists}(\chi)\) and \(\#_{\complex}^{+}(\chi_{i}') < \#_{\complex}^{+}(\chi)\), for \(i = 1, \dots, n\).
  Hence \(\#(\chi_{i}') \prec \#(\chi)\) for \(i = 1, \dots, n\) and therefore we can apply the induction hypothesis to \(\chi_{1}'\), \dots, \(\chi_{n}'\) in order to obtain the desired components.
  Let \(\chi(x) = \Exists{y_{1}}{\dots\Exists{y_{l}}{C_{\chi}}}\).
  If \(\chi\) contains a \(\UD\) literal of the form \(x_{i} = \numeral{k_{i}}\) with \(i \in \{ 1, \dots, n\}\), then let \(\chi = \Exists{\vec{y}}{C_{\chi}(\vec{x},\vec{y})}\) and \(\chi' = \Exists{\vec{y}}{C_{\chi}[x_{i}/\numeral{k_{i}}]}\).
  We have \(\vdash \chi \leftrightarrow x_{i} = \numeral{k_{i}} \wedge \chi'\).
  Clearly, \(\#^{-}(\chi') = \#^{-}(\chi)\), \(\#_{\exists}(\chi') = \#_{\exists}(\chi)\), and \(\#_{\complex}^{+}(\chi') \leq \#_{\complex}^{+}(\chi)\) but \(\#_{\mathrm{FV}}(\chi') = \#_{\mathrm{FV}}(\chi) - 1\).
  Hence, we may apply the induction hypothesis to the component \(\chi'\).
  If \(\chi\) contains a simple \(\UD^{+}\) literal \(y_{i} = \numeral{k}\), then we apply Lemma~\ref{eliminate_single_simple_positive_exists_down_literal} in order to obtain a \(p\)-free component \(\chi'(x)\) such that \(\BLA + \axiomB{1} \vdash \chi \leftrightarrow \chi'\), \(\#^{-}(\chi') = \#^{-}(\chi)\), and \(\#_{\exists}(\chi') < \#_{\exists}(\chi)\).
  Hence we have \(\#(\chi') \prec \#(\chi)\) and therefore we can apply the induction hypothesis in order to obtain the desired components.

  Now let \(\varphi(x_{1}, \dots, x_{n})\) be an \(\exists_{1}(\LLA)\) formula.
  By Lemma~\ref{lem:exists_one_formula_to_p_free_components} the formula \(\varphi\) is equivalent over \(\BLA + \axiomB{1}\) to a disjunction of \(p\)-free components.
  Therefore, by the procedure above the formula \(\varphi\) is equivalent over \(\BLA + \axiomB{1}\) to a disjunction of formulas of the form \(\bigwedge_{i \in I}x_{i} = \numeral{k_{i}} \wedge \Exists{\vec{y}}{C(\vec{x},\vec{y})}\), where \(I \subseteq [n]\) and \(C\) is a \(p\)-free disjunction of \(\UU\) and \(\DD\) literals containing only those variables \(x_{i}\) such that \(i \notin I\).
  Now we apply Lemma~\ref{lem:34} to eliminate the \(\DD\) literals from \(C\) and Lemma~\ref{lem:28} to eliminate \(0\) from the \(\UU\) literals of \(C\).
\end{proof}
In the next step we eliminate the remaining literals of the form \(x = \numeral{k}\).
This step relies on the observation that the truth value of these literals is fixed when \(x\) is large enough.
\begin{proposition}
  \label{pro:13}
  \label{pro:linear_arithmetic:exists_one_p_free_zero_free_via_inflation}
  Let \(\varphi(x_{1},\dots,x_{n})\) be an \(\exists_{1}\) formula, then there exists \(N \in \mathbb{N}\) such that \(\varphi(s^{N}(x_{1}), \dots, s^{N}(x_{n}))\) is equivalent over \(\BLA + \axiomB{1} + \mathcal{V}\) to a \(0\)-free, \(p\)-free, \(\exists_{1}\) formula.
\end{proposition}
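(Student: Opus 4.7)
The plan is to start from Lemma~\ref{lem:elimination_up_down_literals_step_1}, which already reduces $\varphi(\vec{x})$ to a disjunction of formulas of the form $D_j(\vec{x}) = \bigwedge_{i \in I_j} x_i = \numeral{k_{j,i}} \wedge \Exists{\vec{y}}{C_j(\vec{x}, \vec{y})}$, where each $C_j$ is a $p$-free, $0$-free conjunction of literals that does not mention the variables $x_i$ with $i \in I_j$. The only remaining obstruction to $0$-freeness and $p$-freeness is the finite collection of equations $x_i = \numeral{k_{j,i}}$ that pin certain free variables to specific numerals. The role of the inflation $s^N$ is precisely to make every such equation provably contradictory.

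Concretely, I would let $N$ be any natural number strictly greater than every $k_{j,i}$ appearing as a right-hand side of one of these literals, across all disjuncts. After the substitution $x_i \mapsto s^N(x_i)$, a conjunct $s^N(x_i) = \numeral{k_{j,i}}$ in a disjunct with $I_j \neq \varnothing$ is refutable in $\BLA$: $k_{j,i}$ applications of Lemma~\ref{lem:3:2} reduce it to $s^{N - k_{j,i}}(x_i) = 0$, which contradicts Lemma~\ref{lem:3:1} because $N - k_{j,i} \geq 1$. Hence each disjunct $D_j$ with $I_j \neq \varnothing$ becomes provably equivalent to $\bot$ over $\BLA$ and can be dropped.

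For a surviving disjunct (one with $I_j = \varnothing$) the inflation yields $\Exists{\vec{y}}{C_j(s^N(x_1), \dots, s^N(x_n), \vec{y})}$, which is $0$-free and $p$-free, since $C_j$ was already $0$-free and $p$-free and the terms $s^N(x_i)$ introduce no occurrence of $0$ or $p$. A finite disjunction of such $\exists_1$ formulas is in turn logically equivalent to a single $\exists_1$ formula by pushing disjunctions inside and pulling existentials out. In the degenerate case where no disjunct survives, $\varphi(s^N(x_1), \dots, s^N(x_n))$ is equivalent to $\bot$, which can be represented by the $0$-free, $p$-free, quantifier-free formula $s(x_1) \neq s(x_1)$ (and by $\Exists{y}{s(y) \neq s(y)}$ when $n = 0$). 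I do not expect any serious obstacle: the hard work is all concentrated in Lemma~\ref{lem:elimination_up_down_literals_step_1}, and what remains is the routine verification that the single trick of inflating by a sufficiently large $s^N$ annihilates the last remaining $\UD$ literals.
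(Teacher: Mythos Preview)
Your proposal is correct and follows essentially the same approach as the paper: invoke Lemma~\ref{lem:elimination_up_down_literals_step_1}, pick \(N\) strictly larger than all the numerals \(k_{j,i}\), observe that each disjunct with \(I_{j} \neq \varnothing\) becomes refutable in \(\BLA\), and collect the surviving \(0\)-free \(p\)-free disjuncts into a single \(\exists_{1}\) formula. You actually spell out more detail than the paper (the explicit use of Lemmas~\ref{lem:3:1} and~\ref{lem:3:2} to refute \(s^{N}(x_{i}) = \numeral{k_{j,i}}\), the treatment of the degenerate empty disjunction), and you correctly note that the surviving disjuncts carry \(s^{N}(x_{i})\) rather than \(x_{i}\), a point the paper's write-up glosses over notationally.
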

\begin{proof}
  By Lemma~\ref{lem:elimination_up_down_literals_step_1} the formula \(\varphi\) is equivalent over \(\BLA + \axiomB{1} + \mathcal{V}\) to a disjunction of the form
  \[
    \bigvee_{j = 1}^{m}\left( \bigwedge_{i \in I_{j}} x_{i} = \numeral{k}_{j,i} \wedge \Exists{\vec{y}}{C_{j}(\vec{x},\vec{y})}\right),
  \]
  where for \(j = 1, \dots, m\), \(I_{j} \subseteq [n]\) and \(C_{j}\) is a \(p\)-free \(0\)-free disjunction of literals containing only those variables \(x_{i}\) such that \(i \notin I_{j}\).
  Let \(N = 1 + \max \{ k_{i,j} \mid j = 1, \dots, m, i \in I_{j}\}\), then \(\varphi(s^{N}(x_{1}), \dots, s^{N}(x_{n}))\) is equivalent over \(\BLA + \axiomB{1} + \mathcal{V}\) to the formula
  \[
    \bigvee_{
      \substack{j = 1, \dots, m
      \\
      I_{j} = \varnothing
    }}
    \Exists{\vec{y}}{C_{j}(\vec{x},\vec{y})}.
  \]
  Finally, we obtain the desired formula by moving the \(\exists\) quantifiers outwards over the disjunction.
  \end{proof}
\subsection{Components in \(\mathbb{N}\) and \(\mathbb{Z}\)}
\label{sec:linear-systems}
In this section we will investigate some basic model-theoretic properties of \(\exists_{1}\) formulas in the structures \(\mathbb{N}\) and \(\mathbb{Z}\).
\begin{definition}
  \label{def:28}
  Let \(M\) be an \(\LLA\) structure and \(\varphi(x)\) a formula. We say that \(d \in M\) is a solution of \(\varphi\) in \(M\) if \(M \models \varphi(d)\).
\end{definition}
We will show that every \(p\)-free \(\exists_{1}\) formula with enough solutions in \(\mathbb{N}|_{\{0,s,+\}}\) has an infinite strictly descending sequence of solutions in \(\mathbb{Z}|_{\{0,s,+\}}\).
Since the structure \(\mathbb{N}|_{\{0,s,+\}}\) can be embedded into \(\mathbb{Z}|_{\{0,s,+\}}\), it is clear that if \(\mathbb{N}|_{\{0,s,+\}} \models \varphi(n)\), then \(\mathbb{Z}|_{\{0,s,+\}} \models \varphi(n)\), for all \(n \in \mathbb{N}\).

Let \(\theta(x_{1}, \dots, x_{k})\) be an atom, then it is obvious that \(\theta\) is equivalent in \(\mathbb{Z}\) to a linear equation of the form \(\sum_{i = 1}^{k}a_{i}x_{i} = b\) with integers \(a_{1}, \dots, a_{k}, b\).
Hence a conjunction of atoms \(\theta_{1}(x_{1}, \dots, x_{k}), \dots, \theta_{n}(x_{1}, \dots,x_{k})\) is equivalent over \(\mathbb{Z}\) to an inhomogeneous system of linear equations of the form
\begin{equation}
  \label{eq:2}
  A\vec{x}  = b, \tag{I}
\end{equation}
where \(A \in \mathbb{Z}^{m \times k}\) and \(b \in \mathbb{Z}^{m \times 1}\).
Now consider the corresponding homogeneous system
\begin{equation}
  \label{eq:8}
  A \vec{x} = \mathbf{0}. \tag{H}
\end{equation}
The solutions of the system \eqref{eq:8} form a submonoid \(\mathcal{H}\) of \(\mathbb{Z}^{k}\) with pointwise addition.
Furthermore, assume that \eqref{eq:2} has a particular solution \(i_{(p)}\), then the set of solutions of \eqref{eq:2} is given by \[
  \mathcal{I} = \{ h + i_{(p)} \mid h \in \mathcal{H} \}.
\]
\begin{lemma}
  \label{lem:17}
  Let \(\chi(x)\) be a component with two solutions in \(\mathbb{Z}\), then for all \(n \in \mathbb{N}\) there exists \(n' \in \mathbb{N}\) with \(n' \geq n\) such that \(\mathbb{Z} \models \chi(-n')\).
\end{lemma}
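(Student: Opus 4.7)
The plan is to read \(\chi(x) = \Exists{\vec{y}}{C_{\chi}(x,\vec{y})}\) as an affine linear constraint problem over \(\mathbb{Z}\). Since \(0^{\mathbb{Z}}\), \(s^{\mathbb{Z}}\), \(p^{\mathbb{Z}}\), and \(+^{\mathbb{Z}}\) are all affine linear maps, every \(\LLA\)-term is an affine linear function of its variables with integer coefficients. Hence every atom of \(C_{\chi}\) is equivalent over \(\mathbb{Z}\) to an equation \(L = 0\) for some affine linear form \(L\) with integer coefficients: positive literals demand that \(L\) vanish at the chosen tuple, negative literals demand that \(L\) not vanish.

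I fix two solutions \(a_{1} \neq a_{2} \in \mathbb{Z}\) together with witnesses \(\vec{b}_{1},\vec{b}_{2}\) such that \(\mathbb{Z} \models C_{\chi}(a_{i},\vec{b}_{i})\) for \(i = 1,2\). Set \(h \coloneqq (a_{2} - a_{1},\vec{b}_{2} - \vec{b}_{1})\), whose first coordinate is nonzero, and for each \(k \in \mathbb{Z}\) consider the affine combination \((x_{k},\vec{y}_{k}) \coloneqq (a_{1},\vec{b}_{1}) + k \cdot h\), so that \(x_{k} = a_{1} + k(a_{2} - a_{1})\). For each positive literal with associated form \(L\), the map \(k \mapsto L(x_{k},\vec{y}_{k})\) is affine in \(k\) and vanishes at both \(k = 0\) and \(k = 1\); hence it vanishes identically, and the literal is satisfied for every \(k\). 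For each negative literal with associated form \(L\), the map \(k \mapsto L(x_{k},\vec{y}_{k})\) is affine and nonzero at \(k = 0\), so it has at most one integer root. Since \(C_{\chi}\) has only finitely many negative literals, only finitely many \(k\) fail to satisfy some disequation.

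Therefore the set \(S \coloneqq \{ k \in \mathbb{Z} \mid \mathbb{Z} \models C_{\chi}(x_{k},\vec{y}_{k}) \}\) is cofinite in \(\mathbb{Z}\), and \(\{ x_{k} \mid k \in S \}\) is an arithmetic progression of common difference \(|a_{2} - a_{1}| > 0\) with only finitely many terms removed. In particular this set is unbounded below in \(\mathbb{Z}\). Given \(n \in \NaturalNumbers\), I pick \(k \in S\) with \(x_{k} \leq -n\) and set \(n' \coloneqq -x_{k}\); then \(n' \in \NaturalNumbers\), \(n' \geq n\), and \(\mathbb{Z} \models \chi(-n')\), as required.

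The only real obstacle is ensuring that the negative literals still hold along the constructed line. This is handled by the elementary fact that a nontrivial affine function of a single integer variable has at most one root, with nontriviality secured by evaluation at the base point \((a_{1},\vec{b}_{1})\); the positive literals are then automatic because an affine function of \(k\) vanishing at two distinct integers vanishes identically.
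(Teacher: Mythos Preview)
Your proof is correct and follows essentially the same approach as the paper: interpret the positive literals of \(C_{\chi}\) as an affine linear system, use the difference of the two witnessing tuples as a direction vector along which all positive literals remain satisfied, and observe that each negative literal excludes at most one point on this line. The only cosmetic difference is that you let the parameter \(k\) range over all of \(\mathbb{Z}\), which automatically gives solutions unbounded below regardless of the sign of \(a_{2}-a_{1}\), whereas the paper orients the direction vector by taking \(n_{1,0} < n_{2,0}\) and lets the parameter range over \(\mathbb{N}\).
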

\begin{proof}
  Let \(\chi(x_{0}) = \Exists{x_{1}}{\dots\Exists{x_{k}}{C_{\chi}(\vec{x}^{T})}}\) with \(\vec{x}^{T} = (x_{0}, x_{1}, \dots, x_{k})\).
  Let \(\mathbb{Z} \models C_{\chi}(n_{(i)}^{T})\) with \(n_{(i)}^{T} = (n_{i,0}, \dots, n_{i,k})\) and \(i \in \{ 1, 2 \}\) such that \(n_{1,0} < n_{2,0}\).

  We start by considering the positive literals of \(\chi\).
  By the above the positive literals of \(\chi\) are equivalent in \(\mathbb{Z}\) to an inhomogeneous linear system 
  \begin{equation}
    \label{eq:I}
    A\vec{x} = b, \tag{I}
  \end{equation}
  with \(A \in \mathbb{Z}^{l\times(k+1)}\) and \(b \in \mathbb{Z}^{l \times 1}\), where \(l\) is the number of positive literals of \(\chi\).
  Let us denote by \(\mathcal{I}\) the set of solutions of \(\eqref{eq:I}\) and by \(\mathcal{H}\) the set of solutions of the homogeneous system.
  Then \(n_{(1)}, n_{(2)} \in \mathcal{I}\) and therefore \(h_{0} \coloneqq n_{(1)} - n_{(2)} \in \mathcal{H}\).
  Hence \(m \cdot h_{0} + n_{(1)} \in \mathcal{I}\) for all \(m \in \mathbb{N}\).

  Now consider a negative literal \(p(x_{0}, \dots, x_{k}) \neq 0\) of \(\chi\), where \(p\) is a linear polynomial in the variables \(x_{0}, \dots, x_{k}\) with coefficients in \(\mathbb{Z}\).
  Let \(q(m) \coloneqq p( (m \cdot h_{0} + n_{(1)})^{T})\), then \(q\) is a linear polynomial in one variable and moreover by the assumptions we have \(q(0) = p(n_{(1)}^{T}) \neq 0\).
  Hence, there clearly is at most one \(k \in \mathbb{Z}\) such that \(q(k) = 0\).

  Let \(p_{1}(x_{0}, \dots, x_{k}) \neq 0\), \dots, \(p_{r}(x_{0}, \dots, x_{k}) \neq 0\) be all the negative literals of \(\chi\), then we let
  \[
    m_{0} = \max \left( \{ 0 \} \cup \bigcup_{i = 1}^{r}\left \{ m + 1 \mid m \in \mathbb{N}, q_{i}(m) = 0 \right \} \right).
  \]
  Clearly, the natural number \(m_{0}\) exists and we have \(\mathbb{Z} \models C_{\chi}((m \cdot h_{0} + n_{(1)})^{T})\) for all \(m \in \mathbb{N}\) with \(m \geq m_{0}\).
  Since \(n_{1,0} < n_{2,0}\) we are done.
\end{proof}
We summarize the results of this section in the following proposition.
\begin{proposition}
  \label{pro:2}
  Let \(\varphi(x)\) be a \(p\)-free \(\exists_{1}\) formula.
  There exists an \(n \in \mathbb{N}\) such that if \(\varphi\) has \(n\) solutions in \(\mathbb{N}\), then there exists an infinite strictly descending sequence of integers \((k_{i})_{i \in \mathbb{N}}\) with \(\mathbb{Z} \models \varphi(k_{i})\) for all \(i \in \mathbb{N}\).  
\end{proposition}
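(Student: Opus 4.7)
My plan is to reduce to \Cref{lem:17} via a disjunctive normal form for $\varphi$ and a pigeonhole argument on solutions. First, I would apply \Cref{lem:exists_one_formula_to_p_free_components} to obtain $p$-free components $\chi_{1}(x), \dots, \chi_{m}(x)$ such that $\BLA + \axiomB{1} \vdash \varphi \leftrightarrow \bigvee_{i=1}^{m} \chi_{i}$. Since $\mathbb{N} \models \BLA + \axiomB{1}$, this equivalence holds in $\mathbb{N}$ as well. I then set $n \coloneqq 2m$.

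Next, assume $\varphi$ has at least $n = 2m$ solutions in $\mathbb{N}$. By the pigeonhole principle applied to the disjunction, at least one component $\chi_{j}$ must have two distinct solutions $d_{1}, d_{2} \in \mathbb{N}$ in $\mathbb{N}$. Here the hypothesis that $\varphi$ is $p$-free becomes crucial: since $\chi_{j}$ is $p$-free, all terms and atoms appearing in $\chi_{j}$ lie in the $\{0, s, +\}$ fragment, and the natural inclusion $\mathbb{N} \hookrightarrow \mathbb{Z}$ is an embedding of $\{0, s, +\}$-structures (it would fail for $p$ because of $p^{\mathbb{N}}(0) = 0 \neq -1 = p^{\mathbb{Z}}(0)$). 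Since $\chi_{j}$ is an $\exists_{1}$ formula and $\exists_{1}$ formulas are preserved under embeddings, we obtain $\mathbb{Z} \models \chi_{j}(d_{1})$ and $\mathbb{Z} \models \chi_{j}(d_{2})$. In particular, $\chi_{j}$ has two distinct solutions in $\mathbb{Z}$.

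Now I invoke \Cref{lem:17}: for every $k \in \mathbb{N}$ there exists $k' \geq k$ with $\mathbb{Z} \models \chi_{j}(-k')$. I construct the desired infinite strictly descending sequence $(k_{i})_{i \in \mathbb{N}}$ inductively by picking $k_{0} = -k_{0}'$ for any witness $k_{0}' \geq 0$ from the lemma, and $k_{i+1} = -k_{i+1}'$ for some $k_{i+1}' \geq |k_{i}| + 1$ given by the lemma. Thus $k_{i+1} < k_{i}$ for every $i$ and $\mathbb{Z} \models \chi_{j}(k_{i})$ for every $i$. Since $\chi_{j}$ implies $\varphi$ in first-order logic, we also have $\mathbb{Z} \models \varphi(k_{i})$ for every $i$, completing the proof.

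The only mildly subtle step is the appeal to preservation of $\exists_{1}$ formulas, which is exactly why the earlier work to eliminate $p$ from $\exists_{1}$ formulas (and the explicit $p$-freeness hypothesis in the statement) was needed; everything else is bookkeeping combining \Cref{lem:exists_one_formula_to_p_free_components} and \Cref{lem:17}.
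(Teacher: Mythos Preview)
Your approach is essentially the paper's: decompose into components, pigeonhole to find one with two solutions, transfer to \(\mathbb{Z}\), then invoke \Cref{lem:17}. The paper is terser (it takes \(n = k+1\) rather than \(2m\), and leaves the construction of the descending sequence implicit), but the strategy is identical.

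There is one small slip worth flagging. You cite \Cref{lem:exists_one_formula_to_p_free_components}, which only yields \(\BLA + \axiomB{1} \vdash \varphi \leftrightarrow \bigvee_{i}\chi_{i}\). That is fine for the \(\mathbb{N}\)-direction, since \(\mathbb{N} \models \BLA + \axiomB{1}\). But in the final step you assert that ``\(\chi_{j}\) implies \(\varphi\) in first-order logic'' to conclude \(\mathbb{Z} \models \varphi(k_{i})\); this is not what the cited lemma gives you, and \(\mathbb{Z} \not\models \BLA\) (axiom \eqref{ax:p:0} fails). The fix is simple and is exactly what the paper does: since \(\varphi\) is \emph{already} \(p\)-free by hypothesis, you can obtain the components by a purely logical prenex-plus-DNF transformation, so that \(\vdash \varphi \leftrightarrow \bigvee_{i}\chi_{i}\) holds outright and the implication transfers to \(\mathbb{Z}\) without further argument. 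Invoking \Cref{lem:exists_one_formula_to_p_free_components} is overkill here and creates the spurious dependency on \(\BLA + \axiomB{1}\).
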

\begin{proof}
  Let \(\chi_{1}, \dots, \chi_{k}\) be \(p\)-free components such that \(\vdash \varphi \leftrightarrow \bigvee_{i = 1}^{k}\chi_{i}\).
  Let \(n = k + 1\) and assume that \(\varphi\) has \(n\) solutions in \(\mathbb{N}\).
  Then by the pigeonhole principle there is a component \(\chi_{i_{0}}\) with two solutions in \(\mathbb{N}\) and therefore \(\chi_{i_{0}}\) has two solutions in \(\mathbb{Z}\).
  Finally, we apply Lemma~\ref{lem:17} to \(\chi_{i_{0}}\).
\end{proof}
\subsection{A non-standard model}
\label{sec:model-theor-constr}
In this section we construct a family of non-standard structures for the language \(\LLA\) and we make use of the results from \cref{sec:syntatic_simplifications,sec:linear-systems} in order to show that these structures are models of the theory
\[
  (\BLA + \axiomB{2} + \axiomB{3}) + \RINDParameterFree{\exists_{1}(\LLA)}.
\]

Let us start by introducing some terminology about the models of this theory.
Since already the theory \([\BLA, \RINDParameterFree{\Open(\LLA)}]\) proves \(\axiomB{1}\) and \(\theoryV\), the models of \((\BLA + \axiomB{2} + \axiomB{3}) + \RINDParameterFree{\exists_{1}(\LLA)}\) are composed of a copy of the natural numbers---the standard elements---and copies of the integers, which we call the non-standard elements.
The elements of the models we construct below are pairs of the form \(n^{[m]} = (m,n) \in \mathbb{N} \times \mathbb{Z}\) such that \(m = 0\) implies \(n \in \mathbb{N}\).
If \(m = 0\), then the element is a standard element, otherwise it is non-standard and belongs to the \(m\)-th copy of the integers.
We call \(m\) the type of the element and \(n\) the value of the element.
We start by defining an operation that will allow us to relate the types of the elements.
\begin{definition}
  \label{def:29}
  The function \(\upharpoonleft : \mathbb{N} \times \mathbb{N} \to \mathbb{N}\) is given by
  \[
    n \upharpoonleft m \coloneqq
  \begin{cases}
    n & \text{if \(n \neq 0\)} \\
    m & \text{if \(n = 0\)}
    \end{cases}.
  \]
\end{definition}
\begin{definition}
  \label{def:3}
  Let \(I \in \mathbb{N}\), then the \(\LLA\) structure \(M_{I}\) consist of pairs of the form \(n^{[m]}\) with \(n \in \mathbb{Z}\), \(m \in \mathbb{N}\) and \(m \leq I\) such that if \(m = 0\) then \(n \in \mathbb{N}\).
  Furthermore, we let \(M_{I}\) interpret the non-logical symbols as follows
  \begin{gather*}
    0^{M_{I}} = 0^{[0]}, \\
    \text{\(s^{M_{I}}(n^{[m]}) = (n + 1)^{[m]}\), for  \(n^{[m]} \in M\)},\\
    p^{M_{I}}(n^{[m]}) =
    \begin{cases}
      (n \dotminus 1)^{[0]} & \text{if \(m = 0\),} \\
      (n - 1)^{[m]} & \text{otherwise}.
    \end{cases}
    \\
    n_{1}^{[m_{1}]} +^{M_{I}} n_{2}^{[m_{2}]} = (n_{1} + n_{2})^{[m_{1} \upharpoonleft m_{2}]}.
  \end{gather*}
\end{definition}
The structure \(M_{0}\) is isomorphic to the standard model \(\mathbb{N}\).
Since we are interested in constructing non-standard structures, we will consider mainly the structures \(M_{I}\) with \(I \geq 1\).
%
%
\begin{lemma}
  \label{lem:2}
  Let \(I \in \mathbb{N}\), then \(M_{I} \models \BLA + \axiomB{1} + \axiomB{3} + \theoryV\).
\end{lemma}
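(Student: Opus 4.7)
The plan is to verify each of the axioms directly from the definition of the structure $M_{I}$, handling the axioms one at a time and dispatching case splits on the type component $m$ of each element $n^{[m]}$ wherever the definition of $p^{M_{I}}$ forces this.

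First I would observe that the numeral $\numeral{k}$ is interpreted as $k^{[0]}$, which follows by an easy induction on $k$ from the interpretation of $s^{M_{I}}$. The axioms \eqref{Q_1}, \eqref{Q_3}, and \eqref{Q_4} are immediate: $s^{M_{I}}(0^{[0]}) = 1^{[0]} \neq 0^{[0]}$; for $x = n^{[m]}$ and $y = n'^{[m']}$ we have $x +^{M_{I}} 0 = (n + 0)^{[m \upharpoonleft 0]} = n^{[m]}$ since $m \upharpoonleft 0 = m$ by definition; and $x +^{M_{I}} s^{M_{I}}(y) = (n + n' + 1)^{[m \upharpoonleft m']} = s^{M_{I}}(x +^{M_{I}} y)$.

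Next I would verify \eqref{ax:p:s} and \eqref{B1} by case analysis on the type. For \eqref{ax:p:s}, given $x = n^{[m]}$: if $m \neq 0$, then $p^{M_{I}}(s^{M_{I}}(n^{[m]})) = p^{M_{I}}((n+1)^{[m]}) = n^{[m]}$; if $m = 0$, then $n \in \mathbb{N}$ so $n + 1 \geq 1$ and the truncated subtraction gives $((n+1) \dotminus 1)^{[0]} = n^{[0]}$. For \eqref{B1}, if $x = 0^{[0]}$ the first disjunct holds; otherwise, if $m \neq 0$ or ($m = 0$ and $n > 0$), one checks $s^{M_{I}}(p^{M_{I}}(n^{[m]})) = n^{[m]}$ in both subcases.

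The key technical point is \eqref{B3}. Computing both sides for $x = n_{1}^{[m_{1}]}$, $y = n_{2}^{[m_{2}]}$, $z = n_{3}^{[m_{3}]}$ reduces to showing that $\upharpoonleft$ is associative on $\mathbb{N}$, that is, $m_{1} \upharpoonleft (m_{2} \upharpoonleft m_{3}) = (m_{1} \upharpoonleft m_{2}) \upharpoonleft m_{3}$. This is straightforward: $a \upharpoonleft b$ picks out the leftmost nonzero argument (or $0$ if both are zero), so both triple expressions equal the leftmost nonzero of $m_{1}, m_{2}, m_{3}$ (or $0$ if all three vanish). This is really the only step that is not completely mechanical. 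Finally, \eqref{eq:9} follows because $\numeral{k}^{M_{I}} +^{M_{I}} n^{[m]} = (k + n)^{[0 \upharpoonleft m]} = (k + n)^{[m]} = (n + k)^{[m \upharpoonleft 0]} = n^{[m]} +^{M_{I}} \numeral{k}^{M_{I}}$, using commutativity of addition in $\mathbb{Z}$ together with the identities $0 \upharpoonleft m = m = m \upharpoonleft 0$. I do not anticipate any serious obstacle; the associativity of $\upharpoonleft$ is the one observation worth spelling out explicitly.
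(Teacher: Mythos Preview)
Your proposal is correct and takes the same direct-verification approach as the paper (whose printed proof is simply ``Routine''; the commented-out detailed proof in the source matches yours almost line by line, including the observation that \(\axiomB{3}\) reduces to associativity of \(\upharpoonleft\)). The only omission is that you never check axiom \eqref{ax:p:0}, i.e.\ \(p(0)=0\); but this is immediate from \(p^{M_I}(0^{[0]}) = (0 \dotminus 1)^{[0]} = 0^{[0]}\), so there is no real gap.
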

\begin{proof}
  Routine.
\end{proof}
The structures \(M_{I}\) with \(I \in \mathbb{N}\) and \(I > 0\) have the crucial property that \(\mathbb{Z}|_{\{ s, p , + \}}\) can be embedded into the non-standard parts of \(M_{I}\).
Hence the solutions of \(0\)-free \(\exists_{1}\) formulas in \(\mathbb{Z}\) can be transferred into the non-standard chains of \(M_{I}\).
\begin{lemma}
  \label{lem:30}
  Let \(m, I \in \mathbb{N}\) with \( 1 \leq m \leq I\), then the function \(\iota_{m}: \mathbb{Z} \rightarrow M_{I}\), \(n \mapsto n^{[m]}\) is an embedding of \(\mathbb{Z}|_{\{ s, p , + \}}\) into \(M_{I}|_{\{ s, p , + \}}\).
  In particular, if \(\varphi(x)\) is a \(0\)-free \(\exists_{1}\) formula, then \(\mathbb{Z} \models \varphi(n)\) implies \(M_{I} \models \varphi(n^{[m]})\).
\end{lemma}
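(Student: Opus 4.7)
The plan is to prove the two claims in turn: first that $\iota_m$ is an embedding of the $\{s,p,+\}$-reducts, and then to derive the preservation property for $0$-free $\exists_1$ formulas from general facts about embeddings.

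For the first claim, I would begin by noting that $\iota_m$ is well-defined because the side condition ``$m = 0$ implies $n \in \mathbb{N}$'' in the definition of $M_I$ is vacuous for $m \geq 1$; and it is injective since the second coordinate of a pair $n^{[m]}$ determines $n$. Preservation of the three function symbols is then a direct calculation against Definition~\ref{def:3}. Concretely, $\iota_m(n+1) = (n+1)^{[m]} = s^{M_I}(n^{[m]})$ by the definition of $s^{M_I}$; for the predecessor, since $m \geq 1$ we fall into the second clause of $p^{M_I}$, so $p^{M_I}(n^{[m]}) = (n-1)^{[m]} = \iota_m(n-1)$ (this is where the assumption $m \geq 1$ is essential, since on standard elements $p^{M_I}$ uses truncated subtraction and would not track $\mathbb{Z}$); and for addition, $\iota_m(n_1) +^{M_I} \iota_m(n_2) = (n_1+n_2)^{[m \upharpoonleft m]}$, and since $m \geq 1$ we have $m \upharpoonleft m = m$ by Definition~\ref{def:29}, so this equals $\iota_m(n_1+n_2)$.

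For the second claim, write $\varphi(x) = \Exists{\vec{y}}{\psi(x,\vec{y})}$ with $\psi$ quantifier-free. Since $\varphi$ is $0$-free, both $\varphi$ and $\psi$ are formulas in the reduced language $\{s, p, +\}$. Suppose $\mathbb{Z} \models \varphi(n)$, so there exists a tuple $\vec{k} \in \mathbb{Z}$ with $\mathbb{Z}|_{\{s,p,+\}} \models \psi(n,\vec{k})$. Because $\iota_m$ is an embedding of $\{s,p,+\}$-structures and quantifier-free formulas are preserved in both directions by embeddings, we obtain $M_I|_{\{s,p,+\}} \models \psi(\iota_m(n), \iota_m(\vec{k}))$, and hence $M_I \models \Exists{\vec{y}}{\psi(n^{[m]},\vec{y})}$, that is, $M_I \models \varphi(n^{[m]})$.

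There is no real obstacle here: the lemma amounts to a routine verification plus an invocation of the standard preservation property of embeddings for existential formulas. The only subtlety worth flagging is the role of the hypothesis $m \geq 1$, which is used twice—once to avoid the truncation clause in $p^{M_I}$ and once to ensure $m \upharpoonleft m = m$—and the necessity of $\varphi$ being $0$-free, which ensures that the symbol $0$, whose interpretation lives outside the image of $\iota_m$, does not appear and so cannot obstruct the transfer.
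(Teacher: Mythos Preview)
Your proof is correct and follows essentially the same approach as the paper: the paper's own proof merely says the embedding verification is ``routine'' and then invokes the standard fact that embeddings preserve $\exists_{1}$ formulas (citing Hodges), whereas you spell out the routine calculations and the preservation argument explicitly. Your remarks on where the hypotheses $m \geq 1$ and $0$-freeness are used are accurate and add clarity beyond what the paper provides.
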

\begin{proof}
  It is routine to verify that \(\iota_{m}\) is an embedding of \(\mathbb{Z}|_{\{ s, p , + \}}\) into \(M_{I}|_{\{ s, p , + \}}\).
  The second part of the claim follows from the well-known fact that embeddings preserve \(\exists_{1}\)  formulas (see for example  \cite[Theorem~2.4.1]{hodges1997})
\end{proof}
We can now show that the structures \(M_{I}\) satisfy unnested applications of the induction rule \(\RINDParameterFree{\exists_{1}(\LLA)}\).
\begin{theorem}
  \label{thm:2}
  Let \(I \geq 1\) and \(T\) be a sound \(\LLA\) theory such that \(M_{I} \models T\), then \(M_{I} \models [T,\RINDParameterFree{\exists_{1}(\LLA)}]\).
\end{theorem}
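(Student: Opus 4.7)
The plan is to argue by contradiction: assume there is an $\exists_{1}$ formula $\varphi(x)$ with $x$ its only free variable such that $T \vdash \varphi(0)$, $T \vdash \varphi(x) \rightarrow \varphi(s(x))$, but $M_{I} \not\models \varphi(a)$ for some $a = n^{[m]} \in M_{I}$. We will derive a contradiction by combining the soundness of $T$ on $\mathbb{N}$ with the descent behavior of $\exists_{1}$ formulas given by Proposition~\ref{pro:linear_arithmetic:exists_one_p_free_zero_free_via_inflation}, Proposition~\ref{pro:2}, and Lemma~\ref{lem:30}.

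First, since $\mathbb{N}$ satisfies full induction and $T$ is sound, the assumption that $\varphi$ is $T$-inductive yields $\mathbb{N} \models \Forall{x}{\varphi(x)}$, so $\varphi$ has infinitely many solutions in $\mathbb{N}$. By Lemma~\ref{lem:2} the structure $M_{I}$ is a model of $\BLA + \axiomB{1} + \theoryV$, so Proposition~\ref{pro:linear_arithmetic:exists_one_p_free_zero_free_via_inflation} gives an integer $N \in \mathbb{N}$ and a $0$-free, $p$-free $\exists_{1}$ formula $\psi(x)$ such that $\BLA + \axiomB{1} + \theoryV \vdash \varphi(s^{N}(x)) \leftrightarrow \psi(x)$. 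Consequently $\psi$ has infinitely many solutions in $\mathbb{N}$, and by Proposition~\ref{pro:2} there is an infinite strictly descending sequence of integers $(k_{j})_{j \in \mathbb{N}}$ with $\mathbb{Z} \models \psi(k_{j})$ for every $j$.

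Now split on whether $a$ is standard. If $m = 0$, then $\mathbb{N} \models \varphi(n)$ by the argument above, and because the standard part $\{ k^{[0]} \mid k \in \mathbb{N}\}$ is a substructure of $M_{I}$ isomorphic to $\mathbb{N}|_{\LLA}$, the preservation of $\exists_{1}$ formulas under extension forces $M_{I} \models \varphi(n^{[0]})$, contradicting the assumption. If $m \geq 1$, then on the non-standard chain of type $m$ the symbol $s$ is a bijection satisfying $s^{M_{I}}(p^{M_{I}}(k^{[m]})) = k^{[m]}$, so the contrapositive of the induction step (which $M_{I}$ satisfies since $M_{I} \models T$) yields, by induction on $i$, that $M_{I} \not\models \varphi((n-i)^{[m]})$ for every $i \in \mathbb{N}$.

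On the other hand, fix this same $m \geq 1$ and apply Lemma~\ref{lem:30} to embed each $k_{j}$ into the type-$m$ chain, obtaining $M_{I} \models \psi(k_{j}^{[m]})$, and hence $M_{I} \models \varphi(s^{N}(k_{j}^{[m]})) = \varphi((k_{j}+N)^{[m]})$, for all $j$. Since $k_{j} \to -\infty$, pick $j$ with $k_{j} + N \leq n$; then $k_{j} + N = n - i$ for some $i \geq 0$, giving both $M_{I} \models \varphi((n-i)^{[m]})$ and $M_{I} \not\models \varphi((n-i)^{[m]})$, the desired contradiction. The main obstacle is the coordination of the two descent arguments: one must make sure that the $p$-descent in $M_{I}$ and the $\mathbb{Z}$-descent produced by Proposition~\ref{pro:2} live in the same non-standard chain so that the two conclusions genuinely contradict each other; this is precisely why we use the same index $m$ in both parts of the argument, and why having $I \geq 1$ (guaranteeing the existence of at least one non-standard chain) is essential.
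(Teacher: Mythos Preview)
Your proof is correct and follows essentially the same approach as the paper's: reduce to a $0$-free, $p$-free formula via Proposition~\ref{pro:linear_arithmetic:exists_one_p_free_zero_free_via_inflation}, use Proposition~\ref{pro:2} to obtain arbitrarily small integer solutions, transfer them into the $m$-th chain via Lemma~\ref{lem:30}, and then use the induction step inside $M_{I}$ to connect to the given element. The only cosmetic differences are that the paper argues directly (ascending from $(k_{i_{0}}+N)^{[m]}$ to $n^{[m]}$ via $\varphi(x)\rightarrow\varphi(s(x))$) while you argue by contradiction (descending from $n^{[m]}$ via the contrapositive), and that for the standard case $m=0$ the paper invokes the $\exists_{1}$-completeness of $\BLA$ (Lemma~\ref{lem:14}) rather than the substructure-preservation argument you give; both are equally valid.
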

\begin{proof}
  Let \(\varphi(x)\) be an \(\exists_{1}\) formula and assume that \(\varphi\) is \(T\)-inductive.
  Since \(T\) is sound, we have \(\mathbb{N} \models \varphi(x)\).
  Now consider an element \(n^{[m]} \in M\).
  If \(m = 0\), then \(n \in \mathbb{N}\) and by the observation above \(\mathbb{N} \models \varphi(\numeral{n})\).
  By the \(\exists_{1}\)-completeness of \(\BLA\) we have \(B \vdash \varphi(\numeral{n})\) and therefore \(M_{I} \models \varphi(\numeral{n})\).
  Since \(M_{I} \models \numeral{n} = n^{[0]}\) we obtain \(M_{I} \models \varphi(n^{[m]})\).

  Now assume that \(m > 0\).
  By Proposition~\ref{pro:13} there exists a \(0\)-free \(p\)-free formula \(\varphi'\) and an \(N \in \mathbb{N}\) such that \(B + \axiomB{1} + \theoryV \vdash \varphi(s^{N}(x)) \leftrightarrow \varphi'(x)\).
  Hence we have \(\mathbb{N} \models \varphi'\) and therefore by Proposition~\ref{pro:2} there is an infinite strictly descending sequence of integers \((k_{i})_{i \in \mathbb{N}}\) such that \(\mathbb{Z} \models \varphi'(k_{i})\) for \(i \in \mathbb{N}\).
  By Lemma~\ref{lem:30} we obtain \(M_{I} \models \varphi'(k_{i}^{[m]})\) for \(i \in \mathbb{N}\).
  Hence there exists \(i_{0} \in \mathbb{N}\) such that \(k_{i_{0}} + N \leq n\) and \(M_{I} \models \varphi'(k_{i_{0}}^{[m]})\), thus, \(M_{I} \models \varphi((k_{i_{0}} + N)^{[m]})\).
  Since \(M_{I} \models \varphi(x) \rightarrow \varphi(s(x))\), we obtain \(M_{I} \models \varphi(k_{i_{0}} + N + k)^{[m]}\) for all \(k \in \mathbb{N}\).
  In particular, we have \(M_{I} \models \varphi(n^{[m]})\).
\end{proof}
By iterating the argument above we can show that the structures \(M_{I}\) even satisfy nested applications of \(\RINDParameterFree{\exists_{1}(\LLA)}\).
\begin{corollary}
  \label{cor:5}
  Let \(I \geq 1\) and \(T\) be a sound \(\LLA\) theory such that \(M_{I} \models T\), then \(M_{I} \models T + \RINDParameterFree{\exists_{1}(\LLA)}\).
\end{corollary}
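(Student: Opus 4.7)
The plan is to prove the corollary by induction on the nesting depth, using Theorem~\ref{thm:2} at each step. Recall from Definition~\ref{def:6} that $T + \RINDParameterFree{\exists_{1}(\LLA)} = \bigcup_{n \geq 0}[T,\RINDParameterFree{\exists_{1}(\LLA)}]_{n}$, with $[T,R]_{0} = T$ and $[T,R]_{n+1} = [[T,R]_{n},R]$. It therefore suffices to show, by induction on $n \in \NaturalNumbers$, that $[T,\RINDParameterFree{\exists_{1}(\LLA)}]_{n}$ is sound and is satisfied by $M_{I}$, since we can then apply Theorem~\ref{thm:2} with the theory $[T,\RINDParameterFree{\exists_{1}(\LLA)}]_{n}$ in place of $T$ to obtain the $(n+1)$-th level.

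The base case $n = 0$ is immediate: $[T,\RINDParameterFree{\exists_{1}(\LLA)}]_{0} = T$, which by hypothesis is sound and satisfied by $M_{I}$. For the induction step, assume $[T,\RINDParameterFree{\exists_{1}(\LLA)}]_{n}$ is sound and $M_{I} \models [T,\RINDParameterFree{\exists_{1}(\LLA)}]_{n}$. The soundness of the next level, i.e.\ $\mathbb{N} \models [T,\RINDParameterFree{\exists_{1}(\LLA)}]_{n+1}$, is obtained by observing that the induction rule $\RINDParameterFree{\exists_{1}(\LLA)}$ is sound in $\mathbb{N}$: whenever $\gamma(x)$ is an $\exists_{1}(\LLA)$ formula inductive in $[T,\RINDParameterFree{\exists_{1}(\LLA)}]_{n}$, the base case and induction step hold in $\mathbb{N}$ by the inductive soundness hypothesis, and ordinary induction in $\mathbb{N}$ yields $\mathbb{N} \models \Forall{x}{\gamma(x)}$. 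Applying Theorem~\ref{thm:2} to $[T,\RINDParameterFree{\exists_{1}(\LLA)}]_{n}$ then gives $M_{I} \models [[T,\RINDParameterFree{\exists_{1}(\LLA)}]_{n},\RINDParameterFree{\exists_{1}(\LLA)}] = [T,\RINDParameterFree{\exists_{1}(\LLA)}]_{n+1}$, completing the induction.

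There is no genuine obstacle here beyond checking that the two hypotheses of Theorem~\ref{thm:2}, namely soundness and satisfaction by $M_{I}$, are preserved under a single application of the rule. Both preservations are routine: satisfaction by $M_{I}$ is exactly the content of Theorem~\ref{thm:2}, while soundness reduces to the well-known fact that first-order logic together with unrestricted induction over $\mathbb{N}$ preserves truth in $\mathbb{N}$. Taking the union over all $n$ then yields $M_{I} \models T + \RINDParameterFree{\exists_{1}(\LLA)}$, as desired.
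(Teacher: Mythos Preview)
Your proof is correct and follows essentially the same approach as the paper: induction on the nesting depth $n$, using Theorem~\ref{thm:2} at each step to pass from $[T,\RINDParameterFree{\exists_{1}(\LLA)}]_{n}$ to $[T,\RINDParameterFree{\exists_{1}(\LLA)}]_{n+1}$. You are slightly more explicit than the paper in carrying soundness as part of the induction hypothesis and justifying its preservation, whereas the paper simply remarks that $[T,\RINDParameterFree{\exists_{1}(\LLA)}]_{j}$ is sound; this is a minor expository difference, not a different argument.
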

\begin{proof}
  We proceed by induction on \(j\) to show \([T, \RINDParameterFree{\exists_{1}(\LLA)}]_{j}\).
  For the base case we have \([T,\RINDParameterFree{\exists_{1}(\LLA)}]_{0} = T\), hence we are done.
  For the induction step assume that \(M_{I} \models [T,\RINDParameterFree{\exists_{1}(\LLA)}]_{j}\) and observe that \([T,\RINDParameterFree{\exists_{1}(\LLA)}]_{j}\) is sound.
  Now obtain \(M_{I} \models [T,\RINDParameterFree{\exists_{1}(\LLA)}]_{j+1}\) by Theorem~\ref{thm:2}.
\end{proof}
\begin{lemma}
  \label{lem:4}
  \(M_{1} \models \axiomB{2}\).
\end{lemma}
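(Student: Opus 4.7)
The plan is to verify $\axiomB{2}$ directly from the definition of $+^{M_1}$. Given elements $n_1^{[m_1]}, n_2^{[m_2]} \in M_1$, unfolding the definition of addition yields
\[
  n_1^{[m_1]} +^{M_1} n_2^{[m_2]} = (n_1 + n_2)^{[m_1 \upharpoonleft m_2]}
  \quad\text{and}\quad
  n_2^{[m_2]} +^{M_1} n_1^{[m_1]} = (n_2 + n_1)^{[m_2 \upharpoonleft m_1]}.
\]
The value coordinates agree by commutativity of addition in $\mathbb{Z}$ (or $\mathbb{N}$ if both types are $0$), so the whole claim reduces to showing $m_1 \upharpoonleft m_2 = m_2 \upharpoonleft m_1$ for all $m_1, m_2 \in \{0, 1\}$, which are the only available type values in $M_1$.

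This last equality is settled by an immediate case split on $(m_1, m_2) \in \{0,1\}^2$: if $m_1 = m_2$ there is nothing to show, and if exactly one of them is $0$, then by Definition~\ref{def:29} both $m_1 \upharpoonleft m_2$ and $m_2 \upharpoonleft m_1$ collapse to $1$. Hence $+^{M_1}$ is commutative, i.e.\ $M_1 \models \axiomB{2}$. There is no genuine obstacle; the lemma is essentially a finite verification.

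It is worth emphasizing \emph{why} the restriction to $I = 1$ is essential here, since for larger $I$ the analogous claim fails. For $I \geq 2$ the operation $\upharpoonleft$ is not commutative: indeed $1 \upharpoonleft 2 = 1 \neq 2 = 2 \upharpoonleft 1$, so that $1^{[1]} +^{M_I} 0^{[2]} = 1^{[1]}$ while $0^{[2]} +^{M_I} 1^{[1]} = 1^{[2]}$. This is why the previous lemma (\ref{lem:2}) only recorded $\axiomB{1}$, $\axiomB{3}$, and $\theoryV$ for all $M_I$---associativity of $\upharpoonleft$ survives arbitrary $I$, whereas commutativity survives only at $I = 1$---and why Theorem~\ref{thm:5} will be witnessed specifically by $M_1$.
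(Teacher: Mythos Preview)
Your proof is correct and follows exactly the same approach as the paper: reduce commutativity of $+^{M_1}$ to commutativity of $\upharpoonleft$ on $\{0,1\}$ and check the only nontrivial case $0 \upharpoonleft 1 = 1 = 1 \upharpoonleft 0$. Your additional remark explaining why the argument breaks for $I \geq 2$ is also correct and in fact anticipates the paper's Lemma~\ref{lem:19}.
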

\begin{proof}
  Clearly it suffices to show that \(b_{1} \upharpoonleft b_{2} = b_{2} \upharpoonleft b_{1}\) for \(b_{1}, b_{2} \in \{ 0 , 1 \}\).
  The only interesting case is \(b_{1} \neq b_{2}\), that is, \(0 \upharpoonleft 1 = 1 = 1 \upharpoonleft 0\).
\end{proof}
\begin{corollary}
  \label{cor:4}
  \(M_{1} \models (\BLA + \axiomB{2} + \axiomB{3}) + \RINDParameterFree{\exists_{1}(\LLA)}\).
\end{corollary}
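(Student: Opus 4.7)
The plan is to assemble \(\cor:4\) by stitching together the three immediately preceding results: Lemma~\ref{lem:2}, Lemma~\ref{lem:4}, and Corollary~\ref{cor:5}. The goal splits naturally into two parts: verify that \(M_1\) satisfies the \(\forall_1\) base theory \(\BLA + \axiomB{2} + \axiomB{3}\), and then promote this to closure under unbounded nesting of the parameter-free \(\exists_1\) induction rule.

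First I would note that Lemma~\ref{lem:2}, instantiated at \(I = 1\), already gives \(M_1 \models \BLA + \axiomB{1} + \axiomB{3} + \theoryV\); in particular \(M_1 \models \BLA + \axiomB{3}\). Combining this with Lemma~\ref{lem:4}, which handles the commutativity axiom \(\axiomB{2}\) specifically for \(M_1\) (the argument hinges on the fact that the operation \(\upharpoonleft\) is commutative on the two-element set \(\{0,1\}\), while for larger \(I\) it is not), we conclude \(M_1 \models \BLA + \axiomB{2} + \axiomB{3}\).

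Next I would observe that \(\BLA + \axiomB{2} + \axiomB{3}\) is a sound \(\LLA\) theory, since all its axioms are true in the standard model \(\mathbb{N}\) (each one is a true universal arithmetic identity). Since we have already verified \(M_1 \models \BLA + \axiomB{2} + \axiomB{3}\) and \(1 \geq 1\), the hypotheses of Corollary~\ref{cor:5} are satisfied with \(T \coloneqq \BLA + \axiomB{2} + \axiomB{3}\). Applying that corollary yields
\[
  M_1 \models (\BLA + \axiomB{2} + \axiomB{3}) + \RINDParameterFree{\exists_1(\LLA)},
\]
which is exactly the claim. There is no real obstacle here; all the substantive model-theoretic work — showing that the nested induction rule is preserved under passage to \(M_I\) — has already been done in Theorem~\ref{thm:2} and Corollary~\ref{cor:5}, and the only content specific to this final corollary is checking that the particular choice \(I = 1\) is what is needed to make \(\axiomB{2}\) hold.
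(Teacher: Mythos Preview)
Your proposal is correct and follows essentially the same approach as the paper, which simply states that the result is an immediate consequence of Lemma~\ref{lem:4} and Corollary~\ref{cor:5}. You have merely made explicit the appeal to Lemma~\ref{lem:2} (needed to verify \(M_1 \models \BLA + \axiomB{3}\)) and the soundness of the base theory, both of which the paper leaves implicit.
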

\begin{proof}
  An immediate consequence of Lemma~\ref{lem:4} and Corollary~\ref{cor:5}.
\end{proof}
Theorem~\ref{thm:5} can now finally be obtained as an immediate consequence of Corollary~\ref{cor:4}.
\begin{proof}[Proof of Theorem~\ref{thm:5}]
  By Corollary~\ref{cor:4} we can work with \(M_{1}\).
  Now observe that \(n \cdot k^{[1]} + \numeral{(m - n)k} = (nk)^{[1]} + ((m-n)k)^{[0]} = (nk + (m -n)k)^{[1]} = (mk)^{[1]} = m \cdot k^{[1]}\), but \(k^{[1]} \neq k^{[0]}\).
\end{proof}
Let us now consider whether some straightforward modifications of the background theory in Theorem~\ref{thm:5} will improve the result.
The following lemma shows that we do not strengthen the result of Theorem~\ref{thm:5} by adding any \(\forall_{1}\) consequence of \((\BLA + \axiomB{2} + \axiomB{3}) + \RINDParameterFree{\exists_{1}(\LLA)}\) to the background theory.
\begin{lemma}
  \label{lem:20}
  Let \(L\) be a first-order language, \(T\) an \(L\), \(\forall_{1}\) theory and \(U\) a theory, then
  \[
    [T + U, \RINDParameterFree{\exists_{1}(L)}]_{n} \equiv T + [U, \RINDParameterFree{\exists_{1}(L)}]_{n}, \ \text{for all \(n \in \mathbb{N}\)}.
  \]
  Furthermore, \((T + U) + \RINDParameterFree{\exists_{1}(L)} \equiv T + (U + \RINDParameterFree{\exists_{1}(L)})\).
\end{lemma}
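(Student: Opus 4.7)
The plan is to establish the first identity by induction on $n$, applying Lemma~\ref{lem:10} at each step, and then to derive the second identity as a direct consequence of the first together with the definition $T + R = \bigcup_{n \geq 0}[T, R]_{n}$ from Definition~\ref{def:6}. Throughout I abbreviate $R \coloneqq \RINDParameterFree{\exists_{1}(L)}$.

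For the induction, the base case $n = 0$ is immediate since both sides reduce literally to $T + U$. For the inductive step, assume $[T + U, R]_{n} \equiv T + [U, R]_{n}$. Because the operator $[\cdot, R]$ depends on its argument only via first-order provability, equivalent theories yield equivalent extensions, so
\[
  [T + U, R]_{n+1} = [[T + U, R]_{n}, R] \equiv [T + [U, R]_{n}, R].
\]
An application of Lemma~\ref{lem:10} with the theory $[U, R]_{n}$ playing the role of $U$ then gives
\[
  [T + [U, R]_{n}, R] \equiv T + [[U, R]_{n}, R] = T + [U, R]_{n+1},
\]
which closes the induction.

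For the second identity, the direction $(T + U) + R \vdash T + (U + R)$ amounts to deriving the axioms of $T$ and the axioms of each $[U, R]_{n}$ in $(T + U) + R$. The former is trivial since $T \subseteq T + U \subseteq (T + U) + R$, and the latter follows from the inclusion-and-equivalence chain $[U, R]_{n} \subseteq T + [U, R]_{n} \equiv [T + U, R]_{n} \subseteq (T + U) + R$ obtained from the first part. Conversely, any axiom of $(T + U) + R$ occurs in some $[T + U, R]_{n} \equiv T + [U, R]_{n} \subseteq T + (U + R)$, where the final inclusion uses $[U, R]_{n} \subseteq U + R$ by definition.

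The only subtlety, and hence the main thing to watch, is that Lemma~\ref{lem:10} must remain applicable at every induction step: its hypothesis constrains only the left factor (it must be a $\forall_{1}$ $L$-theory, which $T$ is by assumption), while the theory $[U, R]_{n}$ plugged in for $U$ is unrestricted, so the inductive application is legitimate.
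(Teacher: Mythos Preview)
Your proof is correct and follows exactly the approach sketched in the paper: induction on $n$ using Lemma~\ref{lem:10} for the first part, and then the second part as a direct consequence. You have simply spelled out the details that the paper leaves implicit, including the observation that $[\cdot, R]$ respects $\equiv$ and that Lemma~\ref{lem:10} places no constraint on the second theory argument.
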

\begin{proof}
  The first part is obtained by a straightforward induction on \(n\) and applying Lemma~\ref{lem:10}.
  The second part is an immediate consequence of the first part.
\end{proof}
%

%
%
The next natural question to ask is whether removing the formulas \(\axiomB{2}\) and \(\axiomB{3}\) in Theorem~\ref{thm:5} would weaken the result.
The following result shows that removing the axiom \(\axiomB{2}\) would indeed weaken the result.
\begin{lemma}
  \label{lem:19}
  \((\BLA + \axiomB{3}) + \RINDParameterFree{\exists_{1}(\LLA)} \not \vdash \axiomB{2}\).
\end{lemma}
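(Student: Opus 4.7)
The plan is to rerun the model-theoretic construction from the proof of Theorem~\ref{thm:5}, but this time choose $M_I$ with $I \geq 2$ rather than $I = 1$. The crucial observation is that the commutativity of $\upharpoonleft$ on $\{0,1\}$ used in Lemma~\ref{lem:4} to obtain $M_1 \models \axiomB{2}$ breaks down as soon as the set of possible types includes two non-zero values: for instance $1 \upharpoonleft 2 = 1 \neq 2 = 2 \upharpoonleft 1$. Hence the axiom $\axiomB{2}$ was genuinely needed to force $I = 1$ in the original construction, and dropping $\axiomB{2}$ should allow us to use a richer model.

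Concretely, I would work with $M_2$. By Lemma~\ref{lem:2}, $M_2 \models \BLA + \axiomB{1} + \axiomB{3} + \theoryV$, so in particular $M_2 \models \BLA + \axiomB{3}$. Since this background theory is sound (witnessed by $\mathbb{N}$) and satisfied by $M_2$, Corollary~\ref{cor:5} yields
\[
  M_2 \models (\BLA + \axiomB{3}) + \RINDParameterFree{\exists_{1}(\LLA)}.
\]
It then remains only to exhibit a counterexample to $\axiomB{2}$ in $M_2$. A direct computation using Definition~\ref{def:3} gives
\[
  0^{[1]} +^{M_2} 0^{[2]} = 0^{[1 \upharpoonleft 2]} = 0^{[1]}
  \quad \text{while} \quad
  0^{[2]} +^{M_2} 0^{[1]} = 0^{[2 \upharpoonleft 1]} = 0^{[2]},
\]
and these are distinct elements of $M_2$, so $M_2 \not\models \axiomB{2}$. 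By the soundness of first-order logic, the desired unprovability follows.

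There is no real obstacle here since all of the heavy lifting (soundness of the induction rule for unnested and then nested applications on the structures $M_I$) was already accomplished by Theorem~\ref{thm:2} and Corollary~\ref{cor:5}, and these results hold uniformly for every $I \geq 1$. The only point that warrants verification is that Corollary~\ref{cor:5} indeed applies with $T = \BLA + \axiomB{3}$ and $I = 2$, which amounts to checking that $\BLA + \axiomB{3}$ is sound and that $M_2$ is a model of it—both immediate.
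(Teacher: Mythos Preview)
Your proposal is correct and matches the paper's own proof essentially line for line: the paper also works with $M_{2}$, invokes Lemma~\ref{lem:2} for $M_{2} \models \BLA + \axiomB{3}$, uses the same counterexample $0^{[1]} + 0^{[2]} = 0^{[1]} \neq 0^{[2]} = 0^{[2]} + 0^{[1]}$, and concludes via Corollary~\ref{cor:5}.
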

\begin{proof}
  By Lemma~\ref{lem:2} we have \(M_{2} \models \BLA + \axiomB{3}\).
  Since \(0^{[1]} + 0^{[2]} = 0^{[1 \upharpoonleft 2]} = 0^{[1]} \neq 0^{[2]} = 0^{[2 \upharpoonleft 1]} = 0^{[2]} + 0^{[1]}\), we obtain \(M_{2} \models (\BLA + \axiomB{3}) + \RINDParameterFree{\exists_{1}(\LLA)} \not \vdash \axiomB{2}\) by Corollary~\ref{cor:5}.
\end{proof}
%
%
We conjecture that removing \(\axiomB{3}\) from Theorem~\ref{thm:5} would weaken the result as well.
\begin{conjecture}
  \((\BLA + \axiomB{2}) + \RINDParameterFree{\exists_{1}(\LLA)} \not \vdash \axiomB{3}\).
\end{conjecture}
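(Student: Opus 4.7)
The plan is to adapt the non-standard model construction of Section~\ref{sec:model-theor-constr} by replacing $\upharpoonleft$ with a commutative but non-associative operation on types. Concretely, I will consider the structure $M$ whose domain consists of pairs $n^{[m]}$ with $m \in \{0,1,2,3\}$ and $n \in \mathbb{Z}$ (with the convention that $n \in \mathbb{N}$ whenever $m = 0$), and a binary operation $\star$ on $\{0,1,2,3\}$ which is commutative, satisfies $m \star 0 = m$ and $m \star m = m$ for all $m$, and further satisfies $1 \star 2 = 1$, $2 \star 3 = 2$, and $1 \star 3 = 3$. Observe that $1 \star (2 \star 3) = 1 \star 2 = 1$ while $(1 \star 2) \star 3 = 1 \star 3 = 3$, so $\star$ is non-associative. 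The symbols $0, s, p$ are interpreted exactly as in Definition~\ref{def:3} and addition is interpreted by $n_1^{[m_1]} + n_2^{[m_2]} \coloneqq (n_1 + n_2)^{[m_1 \star m_2]}$.

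First I would verify that $M \models \BLA + \axiomB{1} + \theoryV + \axiomB{2}$. The axioms of $\BLA$ reduce to the same checks as in Lemma~\ref{lem:2} once one observes that $\star$ has $0$ as a neutral element; the axiom $\axiomB{1}$ depends only on the value component and therefore holds as in the original construction, and $\theoryV$ likewise reduces to $m \star 0 = 0 \star m = m$. Commutativity of $+$ in $M$ is immediate from commutativity of $\star$, so $M \models \axiomB{2}$.

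Next, since $m \star m = m$ for every type $m$, the map $\iota_m : \mathbb{Z} \to M$, $n \mapsto n^{[m]}$ is still an embedding of $\mathbb{Z}|_{\{s, p, +\}}$ into $M|_{\{s, p, +\}}$ for each $m \in \{1,2,3\}$; this is precisely the property used in Lemma~\ref{lem:30}. Hence the proofs of Theorem~\ref{thm:2} and Corollary~\ref{cor:5} go through verbatim with $M$ in place of $M_I$, yielding $M \models T + \RINDParameterFree{\exists_{1}(\LLA)}$ for every sound $\LLA$ theory $T$ satisfied by $M$. In particular, taking $T \coloneqq \BLA + \axiomB{2}$ gives $M \models (\BLA + \axiomB{2}) + \RINDParameterFree{\exists_{1}(\LLA)}$.

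Finally, to refute $\axiomB{3}$ in $M$, I would exhibit the counterexample $x \coloneqq 0^{[1]}$, $y \coloneqq 0^{[2]}$, $z \coloneqq 0^{[3]}$: then $(x + y) + z = 0^{[1 \star 2]} + 0^{[3]} = 0^{[1]} + 0^{[3]} = 0^{[3]}$, whereas $x + (y + z) = 0^{[1]} + 0^{[2 \star 3]} = 0^{[1]} + 0^{[2]} = 0^{[1]}$, and these are distinct elements of $M$. The only real obstacle is the combinatorial one of choosing $\star$: it must simultaneously be commutative, have $0$ as a neutral element, satisfy $m \star m = m$ (so that the embedding argument underpinning the induction rule survives), and fail associativity; the four-element table above is the smallest configuration I see that achieves all four requirements at once.
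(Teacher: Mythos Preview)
The paper does not prove this statement: it is left as an open \emph{conjecture}, so there is no proof in the paper to compare against. Your proposal, however, appears to resolve the conjecture correctly.

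The argument hinges on one observation that the paper's machinery makes easy to overlook: in the proof of Theorem~\ref{thm:2} (and hence Corollary~\ref{cor:5}), associativity of the type operation is never used. What is actually needed for the non-standard case is that (a) $M \models \BLA + \axiomB{1} + \mathcal{V}$ so that the syntactic normalization of Proposition~\ref{pro:13} is valid in $M$, and (b) each $\iota_m$ is a $\{s,p,+\}$-embedding of $\mathbb{Z}$, which reduces to the idempotence $m \star m = m$. Your ``rock--paper--scissors'' table on $\{0,1,2,3\}$ supplies a commutative, idempotent operation with neutral element $0$ that is not associative, and this is exactly what is required: commutativity gives $\axiomB{2}$, neutrality of $0$ gives \eqref{Q_3} and $\mathcal{V}$, idempotence preserves Lemma~\ref{lem:30}, and the failure $(1 \star 2) \star 3 = 3 \neq 1 = 1 \star (2 \star 3)$ refutes $\axiomB{3}$ at $0^{[1]},0^{[2]},0^{[3]}$. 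The remaining verifications ($\BLA$, $\axiomB{1}$, the standard part via $\exists_1$-completeness, and the iteration in Corollary~\ref{cor:5} via soundness of $\BLA + \axiomB{2}$) are unchanged from the paper's construction.

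I see no gap; you have identified precisely the combinatorial slack in the model construction that the authors presumably suspected but did not exploit.
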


%
%
We conclude this section by observing that the model-theoretic construction developed in this section does not yield a proof of Conjecture~\ref{con:2}.
\begin{lemma}
  \label{lem:15}
  Let \(I \geq 1\), then \(M_{I} \not \models \INDParameterFree{\exists_{1}(\LLA)}\).
\end{lemma}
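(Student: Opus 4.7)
The plan is to exhibit a single $\exists_{1}(\LLA)$ formula $\varphi(x)$ with only $x$ free that is inductive in $M_{I}$ but whose universal closure fails there. Take
\[
  \varphi(x) \coloneqq \Exists{y}{(y \neq 0 \wedge y + y = y \wedge x + y \neq x)},
\]
which is $\exists_{1}$ with $x$ as its only free variable. The guiding idea is that $M_{I}$ contains non-zero idempotents (the elements $0^{[b]}$ with $1 \leq b \leq I$), and these are absorbed by addition with non-standard elements but not with standard ones, so $\varphi$ will define precisely the standard elements of $M_{I}$.

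The key computation has two steps. First, since $n \upharpoonleft n = n$ for every $n \in \NaturalNumbers$, the equation $y + y = y$ in $M_{I}$ reduces to the value of $y$ being $0$, so the non-zero idempotents of $M_{I}$ are exactly the elements $0^{[b]}$ with $1 \leq b \leq I$ (a non-empty set since $I \geq 1$). Second, using $m \upharpoonleft b = m$ whenever $m \neq 0$, for a standard element $x = n^{[0]}$ the witness $y = 0^{[1]}$ gives $x + y = n^{[0 \upharpoonleft 1]} = n^{[1]} \neq n^{[0]} = x$, so $\varphi(n^{[0]})$ holds; whereas for a non-standard element $x = n^{[m]}$ with $m \geq 1$ any non-zero idempotent $y = 0^{[b]}$ yields $x + y = n^{[m \upharpoonleft b]} = n^{[m]} = x$, so $\varphi(n^{[m]})$ fails. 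Hence $\varphi$ holds in $M_{I}$ precisely at the standard elements.

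Given this characterization the failure of induction is immediate: $\varphi(0) = \varphi(0^{[0]})$ is true since $0^{[0]}$ is standard, the induction step $\Forall{x}{(\varphi(x) \rightarrow \varphi(s(x)))}$ holds because successors of standard elements are standard, yet $\Forall{x}{\varphi(x)}$ fails at the non-standard element $0^{[1]}$. Therefore $M_{I} \not\models I_{x}\varphi$, which yields $M_{I} \not\models \INDParameterFree{\exists_{1}(\LLA)}$. The main obstacle in the argument is identifying the formula rather than verifying it: by Proposition~\ref{pro:2} together with Lemma~\ref{lem:30}, any $p$-free, $0$-free $\exists_{1}$ formula with enough solutions in $\NaturalNumbers$ has solutions throughout a non-standard chain of $M_{I}$, so combined with inductivity it propagates to all of $M_{I}$. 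A separating $\varphi$ must therefore rely essentially on the constant $0$, whose interpretation $0^{[0]}$ in $M_{I}$ is distinct from the additional idempotents $0^{[b]}$ with $b \geq 1$; these extra idempotents are absent from $\NaturalNumbers$ and power the separation above.
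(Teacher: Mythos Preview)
Your proof is correct. Both you and the paper exhibit an \(\exists_{1}(\LLA)\) formula in the single free variable \(x\) that defines exactly the standard elements of \(M_{I}\), from which the failure of \(I_{x}\varphi\) is immediate. The difference lies in the witness formula. You use
\[
  \varphi(x) = \Exists{y}{(y \neq 0 \wedge y + y = y \wedge x + y \neq x)},
\]
which directly names the non-zero idempotents \(0^{[b]}\) and asks whether one of them moves \(x\). The paper instead uses the failure-of-cancellation formula
\[
  \chi(x) = \Exists{y_{1}}{\Exists{y_{2}}{\Exists{y_{3}}{\bigl(x + y_{1} \neq x + y_{2} \wedge x + (y_{3} + y_{1}) = x + (y_{3} + y_{2})\bigr)}}},
\]
witnessed for standard \(x\) by \((y_{1},y_{2},y_{3}) = (0^{[0]},0^{[1]},0^{[1]})\). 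Your formula is pleasantly shorter (one quantifier instead of three) and the verification is slightly more transparent; the paper's formula has the feature of being \(0\)-free.

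One remark on your closing paragraph: the claim that a separating formula ``must therefore rely essentially on the constant \(0\)'' is not accurate, as the paper's \(\chi\) is \(0\)-free. Your heuristic via Proposition~\ref{pro:2} and Lemma~\ref{lem:30} only constrains \(p\)-free, \(0\)-free formulas that have \emph{many solutions in \(\mathbb{N}\)}; the paper's \(\chi\) escapes through the other door, having \emph{no} solutions in \(\mathbb{N}\) (or in \(\mathbb{Z}\)), since addition is cancellative there. This does not affect the correctness of your proof, only the surrounding commentary.
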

\begin{proof}
  Let \(\chi(x) = \Exists{y_{1}}{\Exists{y_{2}}{\Exists{y_{3}}{\theta(x, y_{1}, y_{2}, y_{3})}}}\) with
  \[
    \theta(x, y_{1}, y_{2}, y_{3}) \coloneqq x + y_{1} \neq x + y_{2} \wedge x + (y_{3} + y_{1}) = x + (y_{3} + y_{2}).
  \]
  We will show that \(M_{I} \not \models I_{x} \chi(x)\).
  We first show that \(M_{I} \models \chi(n^{[0]})\) for all \(n \in \mathbb{N}\).
  For this it suffices to observe \(n^{[0]} + 0^{[0]} = n^{[0]}\), \(n^{[0]} + 0^{[1]} = n^{[1]}\) and \(n^{[0]} + (0^{[1]} + 0^{[0]}) = n^{[0]} + 0^{[1]} = n^{[1]} = n^{[0]} + (0^{[1]} + 0^{[1]})\).
  Hence \(M_{I} \models \theta(n^{[0]}, 0^{[0]}, 0^{[1]}, 0^{[1]})\).
  Now we will show that \(M_{I} \not \models \chi(n^{[m]})\) for \(n \in \mathbb{Z}\) and \(m > 0\).
  Let \(k_{1}^{[m_{1}]}\), \(k_{2}^{[m_{2}]}\), \(l^{[h]} \in M_{I}\) and assume that
  \begin{gather}
    n^{[m]} + k_{1}^{[m_{1}]} \neq n^{[m]} + k_{2}^{[m_{2}]},  \label{eq:5} \tag{*} \\
    n^{[m]} + (l^{[h]} + k_{1}^{[m_{1}]}) = n^{[m]} + (l^{[h]} + k_{2}^{[m_{2}]}). \label{eq:6} \tag{\(\dagger\)}
  \end{gather}
  Since \(m > 0\), we have \(m \upharpoonleft u = m\) for all \(u \in \mathbb{Z}\).
  Hence by \eqref{eq:5} we obtain \(n + k_{1} \neq n + k_{2}\).
  By \eqref{eq:6} we obtain \(n + l + k_{1} = n + l + k_{2}\), thus \(k_{1} = k_{2}\).
  Therefore \(n + k_{1} = n + k_{2}\).
  Contradiction!

  By the above we thus have \(M_{I} \models \chi(0)\).
  Now let \(n^{[m]} \in M_{I}\). If \(m = 0\), then we have \(M_{I} \models \chi((n + 1)^{[0]})\) hence \(M_{I} \models \chi(n^{[0]}) \rightarrow \chi((n+1)^{[0]})\).
  If \(m > 0\), then we have \(M_{I} \not \models \chi(n^{[m]})\) hence \(M_{I} \models \chi(n^{[m]}) \rightarrow \chi((n + 1)^{[m]})\).
  Thus \(M_{I} \not \models I_{x}\chi(x)\).
\end{proof}
On the other hand it may be interesting to observe that already unnested application of the parameter-free induction rule for \(\exists_{1}\) formulas contain the parameter-free induction schema for quantifier-free formulas.
\begin{lemma}
  \label{lem:9}
  Let \(L\) be a language, then \([\varnothing, \RINDParameterFree{\exists_{1}(L)}] \vdash \INDParameterFree{\Open(L)}\).
\end{lemma}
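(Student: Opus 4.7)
The plan is to deduce each axiom $I_x\varphi$ of $\INDParameterFree{\Open(L)}$ by a single unnested application of the rule $\RINDParameterFree{\exists_{1}(L)}$. Fix a quantifier-free $\varphi(x) \in \Open(L)$ whose only free variable is $x$. Define
\[
  \psi(x) \coloneqq \neg \varphi(0) \vee \Exists{y}{(\varphi(y) \wedge \neg \varphi(s(y)))} \vee \varphi(x),
\]
which is logically equivalent to $\bigl(\varphi(0) \wedge \Forall{y}{(\varphi(y) \rightarrow \varphi(s(y)))}\bigr) \rightarrow \varphi(x)$, and (after pulling $\exists y$ outside the disjunction, since $y$ is fresh in the other disjuncts) is an $\exists_1(L)$ formula whose only free variable is $x$. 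In particular $\psi$ is an admissible induction formula for the parameter-free rule.

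Next I would verify that $\psi$ is $\varnothing$-inductive. The base case $\vdash \psi(0)$ is immediate, because specializing $x$ to $0$ makes the last disjunct equal to $\varphi(0)$, turning $\psi(0)$ into the tautology $\neg\varphi(0) \vee \cdots \vee \varphi(0)$. For the step $\vdash \psi(x) \rightarrow \psi(s(x))$: assume $\psi(x)$, and note that the first two disjuncts of $\psi(x)$ are exactly the first two disjuncts of $\psi(s(x))$, so we may assume neither holds; then $\Forall{y}{(\varphi(y) \rightarrow \varphi(s(y)))}$ and $\varphi(x)$ both hold, so instantiating the universal statement at $y := x$ gives $\varphi(s(x))$, and thus $\psi(s(x))$.

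Applying one instance of $\RINDParameterFree{\exists_{1}(L)}$ to $\psi$ yields
\[
  [\varnothing, \RINDParameterFree{\exists_{1}(L)}] \vdash \Forall{x}{\psi(x)}.
\]
Rewriting $\psi$ in its implicational form and shifting the universal quantifier past the implication (whose antecedent does not contain $x$ free) gives exactly $I_x\varphi$. Since the universal closure of $I_x\varphi$ coincides with $I_x\varphi$ itself (as $\varphi$ has no free variables other than $x$), this produces the desired axiom of $\INDParameterFree{\Open(L)}$. There is no real obstacle here; the only subtlety is choosing $\psi$ so that it is (i) $\exists_1$, (ii) has only $x$ free, and (iii) is a first-order tautology in both the base and step position, which is precisely the classical trick of absorbing the hypotheses of the induction axiom into the induction formula.
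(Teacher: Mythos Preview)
Your proof is correct and takes essentially the same approach as the paper: both absorb the induction hypotheses into a single $\exists_1$ formula, namely (up to logical equivalence and prenexing) $\Exists{y}{\bigl((\varphi(0) \wedge (\varphi(y) \rightarrow \varphi(s(y)))) \rightarrow \varphi(x)\bigr)}$, verify that it is $\varnothing$-inductive, and then unwind the conclusion to $I_x\varphi$. Your presentation of the step case (case-split on the first two disjuncts, then instantiate at $y\coloneqq x$) is exactly the paper's argument phrased in disjunctive rather than implicational form.
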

\begin{proof}
  Let \(\varphi(x)\) be a quantifier-free formula.
  Let \(\psi(x,y)\) be given by
  \[
    \left(\varphi(0) \wedge (\varphi(y) \rightarrow \varphi(s(y)) \right) \rightarrow \varphi(x).
  \]
  By shifting the existential quantifier inward, it is straightforward to see that \(\vdash \Forall{x}{\Exists{y}{\psi(x,y)}} \leftrightarrow I_{x}\varphi\).
  Moreover, we clearly have \(\vdash \Exists{y}{\psi(0,y)}\).
  Now work in \(\varnothing\) and assume \(\psi(x,y_{0})\).
  Assume furthermore \(\varphi(0)\), \(\varphi(y_{0}) \rightarrow \varphi(s(y_{0}))\) and \(\varphi(x) \rightarrow \varphi(s(x))\).
  By the assumptions we obtain \(\varphi(x)\) and moreover \(\varphi(s(x))\).
  Hence, we have \(\Exists{y}{\psi(s(x),y)}\).
  Therefore, \(\vdash \Exists{y}{\psi(x,y)} \rightarrow \Exists{y}{\psi(s(x),y)}\).
  Thus, \([\varnothing, \RINDParameterFree{\exists_{1}(L)}] \vdash \Forall{x}{\Exists{y}{\psi(x,y)}}\).
\end{proof}
\section{Conclusion}
\label{sec:conclusion}
Clause set cycles are a formalism introduced by the authors of this article in \cite{hetzl2020} for the purpose of giving an upper bound on the strength on a family of \ac{AITP} systems based on the extension of a saturation theorem prover by a cycle detection mechanism, such as the n-clause calculus \cite{kersani2013,kersani2014}.
In this article we have extended the analysis of clause set cycles that was begun in \cite{hetzl2020} by providing a logical characterization of refutation by a clause set cycle and concrete clause sets that are not refutable by a clause set cycle but that are refuted by induction for quantifier-free formulas.

In Section~\ref{sec:clause-set-cycles} we have identified several logical features of clause set cycles.
Identifying these features has enabled us to give a characterization of the notion of refutation by a clause set cycle in terms of a logical theory.
The characterization allows us to think of clause set cycles essentially as unnested applications of the parameter-free \(\exists_{1}\) \(\eta\) induction rule.
In the light of this logical characterization we were able to reduce the task of finding clause sets that are not refuted by a clause set cycle to an independence problem.

Based on this characterization we have shown two unprovability results for clause set cycles.
The first result (Corollary~\ref{cor:3}) exploits the fact that refutations by a clause set cycle only make use of \(\eta\)-instances of the inductive lemmas.
In particular, we have shown that even the full induction schema subject to the \(\eta\)-restriction does not prove some atoms that can already be obtained by an unnested application of the open parameter-free induction rule.
This shows that the \(\eta\)-restriction is very limiting.
However, our second unprovability result (Corollary~\ref{cor:6}) does not rely on the \(\eta\)-restriction and thus shows that \ac{AITP} systems based on clause set cycles have more limitations.
In Section~\ref{sec:idemp-line-arithm} we have developed the underlying independence result (Theorem~\ref{thm:5}).
This independence result shows us that the unprovability persists even when the induction rule is nested.
We conjecture that this unprovability phenomenon is due to the absence of induction parameters and therefore also persists when the induction rule is replaced by the induction schema.
This second unrefutability result shows that clause set cycles fail to capture induction arguments that involve very simple generalizations.

The results in this article together with the results in \cite{hetzl2020} explain much about the situation of \ac{AITP} systems based on clause set cycles in the logical landscape.
We have summarized the current results as well as some conjectures in Figure~\ref{fig:1}.
The figure depicts the refutational strength of various induction systems.
The set of clause sets refuted by a system is described by an arc.
The name of the system is inscribed near the top of the corresponding arc.
The systems range over all first-order languages.
The system \(\systemCSC\) denotes refutation by a clause set cycle and \(\systemNCC\) denotes the n-clause calculus as described in \cite{hetzl2020}.
The points \(\{\bullet_{i} \mid i = 1, 2, 3, 4, 5\}\) represent clause sets whose positions are confirmed by the results in this article and \cite{hetzl2020}.
In particular \(\bullet_{1}\) corresponds to the clause set that witnesses \cite[Corollary~5.8]{hetzl2020}, \(\bullet_{2}\) corresponds to the clause set constructed in Section~\ref{sec:unrefutability_instance_restriction}, and \(\bullet_{3}\) corresponds to the clause sets constructed in Section~\ref{sec:unrefutability_rule_and_parameters}.
the points \(\bullet_{4}\), \(\bullet_{5}\) correspond to some the clause sets mentioned in Theorem~\ref{thm:separation_nesting_of_induction_rule}.
The inclusion of \(\INDParameterFree{\Open}\) in \([\varnothing,\RINDParameterFree{\exists_{1}}]\) is shown by Lemma~\ref{lem:9}.

The dashed arc corresponding to the system \(\INDParameterFree{\exists_{1}}\) is positioned according to Conjecture~\ref{con:2}.
The points \(\{ \ast_{i} \mid i = 1, 2\}\) of Figure~\ref{fig:1} represent clause sets that we conjecture to be at the respective positions.
In particular, the point \(\ast_{1}\) corresponds to the clause set mentioned in Conjecture~\ref{con:axiom_stronger_than_rule}.
We would like to clarify the status of the point \(\ast_{1}\) and the dashed arc corresponding to the system \(\INDParameterFree{\exists_{1}}\), as this would contribute to the understanding of the role of induction parameters and induction rules in automated inductive theorem proving.
%
%



Due to the recent advances in saturation-based theorem proving, the research on automated inductive theorem proving has recently increasingly focused on the integration of induction into saturation-based theorem provers \cite{cruanes2015,cruanes2017,kersani2013,kersani2014,wand2017,echenim2020,reger2019,hajdu2020}.
We plan to carry out similar investigations for all these methods in order to develop a more global and unified view of induction in saturation-based theorem proving.
In particular these investigations will give rise to the analysis of the interaction of the induction principle with various mechanisms of saturation-based provers such as Skolemization, splitting, term orderings, and redundancy criteria.


The point \(\ast_{2}\) in Figure~\ref{fig:1} gives rise to a more general topic that is worth mentioning separately.
On the one hand it is computationally expensive for \ac{AITP} systems to carry out even a small number of inductions, and on the other hand the space of all possible induction formulas is very large.
Hence \ac{AITP} systems rely on heuristics to find induction formulas such as restricting the overall shape of the considered induction formulas and drawing syntactical material for induction from the formulas generated during the proof search.
For example, the n-clause calculus as described in \cite{kersani2013,kersani2014} only makes use of clause set cycles that appear as a subset of the clauses that are generated by the underlying saturation-based system.
Such heuristics will not succeed in cases where a sufficiently non-analytic induction is required.
Our technique for analyzing \ac{AITP} systems as logical theories can deal with such heuristics only to a limited extent.
For example, the notion of refutation by a clause set cycle completely ignores the fact that the n-clause calculus draws clause set cycles only from the generated clauses.
Once the logical strength of most inductive theorem provers is known precisely enough it will likely be necessary to investigate the fine grained analyticity properties of the provers in order to get a better understanding of the consequences of restricting the degree of analyticity.
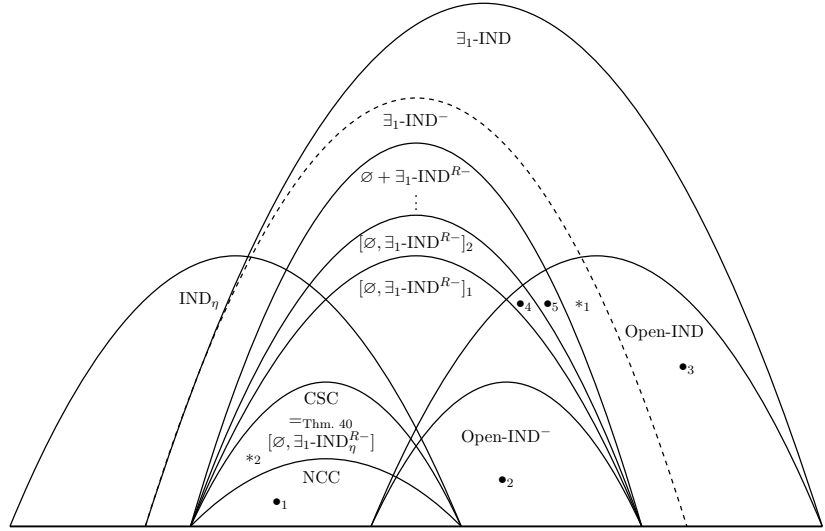
\begin{figure}[ht]
  \centering
  \scalebox{.6}{
    \begin{tikzpicture}
  \draw[very thick] (7,0) -- (-11,0);
  %
  %
  \begin{scope}[shift={(-4,0)}]    
    \draw[thick] (-3,0) parabola bend (0,1.5) (3,0);
    \node[align=center] at (-.1,1.1) {NCC};
  \end{scope}
  %
  %
  \begin{scope}[shift={(-4,0)}]
    \draw[thick] (-3,0) parabola bend (0,3.2) (3,0);
  \end{scope}
  %
  %
  \begin{scope}[shift={(-0.5,0)}]
    \draw[thick] (-7.5,0) parabola bend (0,11.6) (7.5,0);
    \node at (0,10.8) {\(\IND{\exists_{1}}\)};
  \end{scope}
  %
  %
  \begin{scope}[shift={(2,0)}]
    \draw[thick] (-5,0) parabola bend (0,6) (5,0);
    \node at (1.5,4.3) {\(\IND{\Open}\)};
  \end{scope}
  %
  %
  \begin{scope}[shift={(-4,0)}]
    \node[align=center] at (-.1,2.3) {
      \(\mathrm{CSC}\) \\
      \(=_{\text{Thm.~\ref{thm:ref_csc_logical_characterization_clause_sets}}}\) \\
      \([\varnothing,\RINDParameterFreeEta{\exists_{1}}]\)};
  \end{scope}
  %
  %
  \begin{scope}[shift={(0,0)}]
    \draw[thick] (-3,0) parabola bend (0,3.2) (3,0);
    \node at (0,2) {\(\INDParameterFree{\Open}\)};
  \end{scope}
  %
  %
  \begin{scope}[shift={(-6,0)}]
    \draw[thick] (-5,0) parabola bend (0,6) (5,0);
    \node at (-.8, 5) {\(\mathrm{IND}_{\eta}\)};
  \end{scope}
  %
  %
  \begin{scope}[shift={(-2,0)}]
    \draw[thick] (-5,0) parabola bend (0,6) (5,0);
    \node at (0, 5.3) {\([\varnothing, \RINDParameterFree{\exists_{1}}]_{1}\)};
  \end{scope}
  %
  %
  \begin{scope}[shift={(-2,0)}]
    \draw[thick] (-5,0) parabola bend (0,8.5) (5,0);
    \node[align=center] at (0, 7.5) {\(\varnothing + \RINDParameterFree{\exists_{1}}\) \\ \vdots};
  \end{scope}
  \begin{scope}[shift={(-2,0)}]
    \node at (0,6.3) {\([\varnothing,\RINDParameterFree{\exists_{1}}]_{2}\)};
    \draw[thick] (-5,0) parabola bend (0,6.9) (5,0);
  \end{scope}
  %
  %
  \node[] at (-5, .5) {\(\bullet_{1}\)};
  \node[] at (0,1) {\(\bullet_{2}\)};
  \node[] at (4,3.5) {\(\bullet_{3}\)};      
  %
  %
  %
  \begin{scope}[shift={(-2,0)}]
    \draw[dashed, thick] (-6,0) parabola bend (0,9.5) (6,0);
    \node at (0,9.0) {\(\INDParameterFree{\exists_{1}}\)};
  \end{scope}
  %
  %
  \node at (1.7, 4.9) {\(\ast_{1}\)};
  \node at (1, 4.9) {\(\bullet_{5}\)};
  \node at (0.4,4.9) {\(\bullet_{4}\)};
  \node at (-5.6,1.5) {\(\ast_{2}\)};
  %
  %
\end{tikzpicture}

  }
  \caption{
    Overview of the clausal refutational strength of various induction systems.}
  \label{fig:1}
\end{figure}
\paragraph{Acknowledgments.}
We thank Emil Je{\v{r}}{\'a}bek for pointing out to us the similarity of clause set cycles with unnested induction rules.
Moreover, we thank the anonymous reviewers whose feedback helped to improve the article significantly.

\begin{acronym}
  \acro{AITP}{automated inductive theorem proving}
\end{acronym}

\bibliography{./bibliography/bibliography.bib}
\bibliographystyle{alpha}

\end{document}